\documentclass[12pt]{article}

\usepackage[margin=1in]{geometry}
\usepackage{amsmath,amsthm,amssymb,amsfonts}
\usepackage{mathtools}
\usepackage{hyperref}
\usepackage{url}
\usepackage[numbers,sort&compress]{natbib}
\usepackage{enumitem}
\usepackage{silence}
\usepackage[expansion=false]{microtype}
\usepackage{graphicx}

\usepackage{tikz}
\usetikzlibrary{automata, positioning, arrows.meta, decorations.pathmorphing, calc, decorations.markings, shadows}

\tikzset{
	state/.style={
		circle, 
		draw=black, 
		thick, 
		minimum size=1.3cm, 
		inner sep=0pt, 
		font=\small\bfseries,
		fill=white,
		drop shadow={opacity=0.1}
	},
	final/.style={
		double, 
		draw=black, 
		thick, 
		minimum size=1.3cm,
		fill=white
	},
	transition/.style={
		->, 
		>={Stealth[length=2.5mm]}, 
		thick, 
		draw=black!85
	},
	edge label/.style={
		midway,
		above,
		font=\footnotesize,
		color=black!90,
		inner sep=2pt
	}
}

\theoremstyle{plain}
\newtheorem{theorem}{Theorem}[section]
\newtheorem{proposition}[theorem]{Proposition}
\newtheorem{lemma}[theorem]{Lemma}
\newtheorem{corollary}[theorem]{Corollary}

\theoremstyle{definition}
\newtheorem{definition}[theorem]{Definition}
\newtheorem{example}[theorem]{Example}

\theoremstyle{remark}
\newtheorem{remark}[theorem]{Remark}

\newcommand{\R}{\mathbb{R}}
\newcommand{\C}{\mathbb{C}}

\newcommand{\doi}[1]{\textsc{doi:}~\texttt{#1}}

\DeclareMathOperator{\rank}{rank}

\DeclareMathOperator{\im}{im}
\DeclareMathOperator{\Pf}{Pf}
\DeclareMathOperator{\spn}{span}
\newcommand{\dd}{\mathrm{d}}
\newcommand{\pd}[1]{\partial_{#1}}
\newcommand{\Om}{\Omega}
\newcommand{\Mm}{\mathcal{M}}
\newcommand{\Ical}{\mathcal{I}}
\newcommand{\Rring}{\mathcal{R}}
\newcommand{\gen}[1]{\langle #1\rangle}
\newcommand{\FJ}{\mathrm{FJ}}

\title{From Null Modes to Gauge Generators:\\
	A Bordered Faddeev--Jackiw Theory of Constraint Chains}

\author{%
	E.~Chan-L\'opez\thanks{%
		E-mail: \texttt{eduardo.clopez13@gmail.com}.\;
		\textsc{orcid}:~0009-0003-7712-7907.}}

\date{%
	Divisi\'on Acad\'emica de Ciencias B\'asicas,
	Universidad Ju\'arez Aut\'onoma de Tabasco, 86690 Cunduac\'an,
	Tabasco, Mexico}

\begin{document}
	\maketitle
	
	\begin{abstract}
		In the bordered formulation of the Faddeev--Jackiw algorithm, the residual
		kernel of the extended symplectic matrix is routinely read as the space of
		gauge generators. We show that this reading is not correct, and we determine
		exactly what the kernel does encode. Writing
		$f^{(m)}=\bigl(\begin{smallmatrix}f^{(0)}&B\\-B^{\top}&0\end{smallmatrix}\bigr)$
		with $\Gamma=N^{\top}B$ the kernel pairing and $M=B^{\top}f^{+}B$ the
		Faddeev--Jackiw bracket matrix of the generated constraints relative to a choice
		of second-class splitting, we prove that
		$\dim\ker f^{(m)}=\dim\ker\Gamma^{\top}+\dim\ker\Gamma-\rank(\Pi M\Pi)$, and we
		identify the two summands: the first counts null directions of $f^{(0)}$
		tangent to the constraint surface, the second counts independent first-class
		combinations. The distinction matters because $\Gamma$ is not the Dirac
		constraint matrix, as is sometimes asserted: $\Gamma$ is symmetric whenever the
		constraints come from the consistency condition and the kernel frame is
		constant, whereas a bracket matrix is antisymmetric. The correct statement is
		that $\Gamma$ is the primary--secondary block and $M$ the secondary--secondary
		block of the reduced Dirac matrix $C$, so that
		$\Pf(C)=(-1)^{d(d+1)/2}\det\Gamma$ independently of $M$; this explains the
		determinantal factorization of the bordered matrix without the erroneous
		identification. Only the compression $\Pi M\Pi$ to $\ker\Gamma$, and not $M$
		itself, is independent of the splitting. We then show that a null mode with nonvanishing multiplier block
		is not a gauge generator but the first link of a two-step chain
		$\delta\xi=\rho\,u_{0}+\dot\rho\,u_{1}$, and that the chain closes if and only if
		the class $[u_{0}\cdot\nabla V]$ vanishes in the effective ring modulo the
		constraint ideal. A four-variable counterexample exhibits a system in which the
		algorithm halts on weakly vanishing contractions, exposes two null modes, and
		nevertheless possesses no gauge freedom whatsoever; the chain criterion rejects
		it correctly. We close with the hypotheses under which each statement holds and
		with the mechanical realizations that exhibit the criterion at work.
	\end{abstract}
	
	\noindent\textbf{Keywords.} Faddeev--Jackiw algorithm; constrained Hamiltonian
	systems; gauge generators; first-class constraints; presymplectic geometry;
	matrix bordering; symbolic computation.
	
	\medskip
	\noindent\textbf{MSC 2020.} 70H45, 70S05, 37J06, 15A15, 68W30.
	
	
	\section{Introduction}
	\label{sec:intro}
	
	The Faddeev--Jackiw formalism \cite{FJ1988} replaces the Dirac--Bergmann
	classification of constraints by the analysis of a single antisymmetric matrix,
	the presymplectic form of a first-order Lagrangian. In the iterative version of
	Barcelos-Neto and Wotzasek \cite{BNW1992,BNW1993} the degeneracy of that matrix
	generates constraints, and the constraints are appended to the system by
	bordering: one adjoins a block of constraint gradients and asks whether the
	enlarged matrix is invertible. In a companion paper \cite{CLMCP2026} we showed
	that this step is an instance of matrix bordering in the linear-algebraic sense,
	and that the regularity of the bordered matrix is governed by a reduced pairing
	\begin{equation}
		\Gamma=N^{\top}B,
		\qquad
		\Gamma_{a\alpha}=N_{a}^{\,i}\,\pd{i}\Om_{\alpha},
		\label{eq:intro-Gamma}
	\end{equation}
	between the constraint gradients $B_{i\alpha}=\pd{i}\Om_{\alpha}$ and a basis
	$N$ of the kernel of the presymplectic form, through the factorization
	$\det f^{(m)}=(\prod_{\ell}\mu_{\ell}^{2})\det(\Gamma)^{2}$. The matrix $\Gamma$
	and that factorization are the results of \cite{CLMCP2026} on which everything
	below rests, and we retain its notation throughout.
	
	That result settles the \emph{regular} case: when $\Gamma$ is square and
	nonsingular the iteration terminates in a genuine symplectic form. The present
	paper concerns the complementary situation, which is the physically interesting
	one. When $\Gamma$ is singular the bordered matrix retains a kernel, and this
	residual kernel is customarily described as the space of gauge generators. Our
	purpose is to show that this description is false as stated, to determine
	exactly what the residual kernel encodes, and to supply the additional condition
	that separates genuine gauge symmetries from the rest.
	
	Three findings organize the paper.
	
	First, the dimension of the residual kernel admits a canonical decomposition
	into two contributions with different meanings.
	Theorem~\ref{thm:kernel} parametrizes $\ker f^{(m)}$ completely, and
	Corollary~\ref{cor:dim} gives
	\begin{equation}
		\dim\ker f^{(m)}
		=\dim\ker\Gamma^{\top}+\dim\ker\Gamma-\rank(\Pi M\Pi),
		\label{eq:intro-dim}
	\end{equation}
	where $\Pi$ is the orthogonal projector onto $\ker\Gamma$ and
	$M=B^{\top}f^{+}B$ is built from the Moore--Penrose-type inverse of $f^{(0)}$ on
	its image. The first summand counts null directions of $f^{(0)}$ tangent to the
	constraint surface; the second counts independent first-class combinations of
	the generated constraints. Only the second produces gauge candidates. The
	correction term $\rank(\Pi M\Pi)$ has no counterpart in the informal slogan
	``number of generators $=$ number of first-class constraints''.
	
	Second, the matrices $\Gamma$ and $M$ have been conflated in the literature that
	motivated this work, our own companion paper included. When the constraints are generated by
	the consistency condition and the kernel frame is locally constant, $\Gamma$ is
	the contraction of the Hessian of the potential with the kernel frame, hence
	\emph{symmetric}; a matrix of Poisson brackets is \emph{antisymmetric}. The two
	can therefore never coincide unless both vanish. Theorem~\ref{thm:dirac}
	establishes the correct correspondence, in the cotangent extension where the
	primary constraints live: $M$ is the bracket matrix of the generated constraints
	in the Dirac bracket of the second-class primary pairs, $\Gamma$ is the
	primary--secondary block of the reduced constraint matrix $C$, and
	\begin{equation}
		C=\begin{pmatrix}0&-\Gamma\\ \Gamma^{\top}&M\end{pmatrix},
		\qquad
		\Pf(C)=(-1)^{d(d+1)/2}\det\Gamma,
		\qquad
		\det C=\det(\Gamma)^{2},
		\label{eq:intro-C}
	\end{equation}
	independently of $M$. The determinantal factorization of \cite{CLMCP2026} is
	thereby not weakened but explained: the bordered determinant is the determinant
	of the reduced Dirac matrix times the positive factor
	$\prod_{\ell}\mu_{\ell}^{2}$. What bordering does \emph{not} compute is the
	block $M$, and that is exactly the block the first-class analysis needs.
	
	Third, and this is the thesis of the title, a null mode with nonvanishing
	multiplier block is not a gauge generator. It is the first link of a chain
	\begin{equation}
		\delta\xi=\rho(t)\,u_{0}(\xi)+\dot\rho(t)\,u_{1}(\xi),
		\qquad u_{1}\in\ker f^{(0)},
		\label{eq:intro-chain}
	\end{equation}
	and the chain closes into an exact symmetry of the action if and only if a
	second, independent equation holds: the class $[u_{0}\cdot\nabla V]$ must vanish
	in the effective ring modulo the constraint ideal
	(Theorem~\ref{thm:chain-decisive}). Membership of the contraction in the
	constraint ideal, which is the criterion on which the iterative algorithm halts,
	is necessary but not sufficient. Section~\ref{sec:weak} exhibits a four-variable
	system in which the algorithm halts with two weakly vanishing contractions and
	reports two null modes, while the Euler--Lagrange equations have a unique
	solution: there is no gauge freedom at all, both constraints are second class,
	and the chain criterion rejects the two candidates correctly.
	
	The logical spine of the paper is therefore
	\[
	\text{null modes}\longrightarrow\text{constraints}
	\longrightarrow\text{first class}\longrightarrow\text{chains}
	\longrightarrow\text{gauge generators},
	\]
	with an explicit obstruction at each arrow. Sections~\ref{sec:conventions}
	to~\ref{sec:strong} fix the framework and prove the master variational identity
	on which everything rests. Sections~\ref{sec:kernel} to~\ref{sec:firstclass}
	develop the exact algebra of the bordered kernel. Sections~\ref{sec:chains}
	and~\ref{sec:weak} contain the main results. Sections~\ref{sec:involutivity}
	to~\ref{sec:charges} treat the geometric qualifications: involutivity, ambient
	versus restricted kernels, degeneracy strata, basis dependence, reducibility,
	real inconsistency, canonical charges and degree-of-freedom counting.
	Section~\ref{sec:examples} works out the algebraic examples in full and
	Section~\ref{sec:mechanical} the mechanical ones, and Section~\ref{sec:scope}
	lists the hypotheses.
	
	\begin{remark}[Relation to the companion paper]
		\label{rem:companion}
		Reference \cite{CLMCP2026} answers the question \emph{when does bordering
			restore nondegeneracy}, and introduces the reduced pairing
		\eqref{eq:intro-Gamma} that answers it. The present paper answers \emph{what
			happens when it does not}. The two are independent as texts: nothing below
		presupposes the reader has \cite{CLMCP2026} at hand, and the one result imported
		from it, the determinantal factorization, is restated and reproved here as
		Proposition~\ref{prop:factorization} in the form required. One statement of
		\cite{CLMCP2026} is corrected rather than extended, namely the identification of
		$\Gamma$ with the constraint bracket matrix; the correction is the subject of
		Section~\ref{sec:dirac} and it leaves the factorization intact.
	\end{remark}
	
	\section{Bordered Faddeev--Jackiw structure and conventions}
	\label{sec:conventions}
	
	\subsection{The first-order Lagrangian and its presymplectic form}
	
	Throughout, $\Mm$ is an open subset of $\R^{n}$ with coordinates
	$\xi=(\xi^{1},\dots,\xi^{n})$, and
	\begin{equation}
		L=a_{A}(\xi)\,\dot\xi^{A}-V(\xi),
		\qquad A=1,\dots,n,
		\label{eq:lag}
	\end{equation}
	is a first-order Lagrangian: the kinetic term is linear in the velocities and
	$V$ is the symplectic potential. The presymplectic form and its matrix are
	\begin{equation}
		f=\dd a,\qquad
		f_{AB}=\pd{A}a_{B}-\pd{B}a_{A},
		\label{eq:f}
	\end{equation}
	so $f$ is closed by construction. We write $f^{(0)}$ for \eqref{eq:f} before any
	bordering.
	
	\begin{lemma}[Equations of motion]
		\label{lem:eom}
		The Euler--Lagrange equations of \eqref{eq:lag} are
		\begin{equation}
			f_{AB}(\xi)\,\dot\xi^{B}=\pd{A}V(\xi).
			\label{eq:eom}
		\end{equation}
	\end{lemma}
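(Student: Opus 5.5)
We compute the Euler--Lagrange expression $\frac{\dd}{\dd t}\frac{\partial L}{\partial\dot\xi^{A}}-\frac{\partial L}{\partial\xi^{A}}=0$ directly from \eqref{eq:lag}; no earlier result is needed. The steps are as follows.

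First, since $L$ is linear in the velocities, $\partial L/\partial\dot\xi^{A}=a_{A}(\xi)$. Along a curve $\xi(t)$ the chain rule then gives
\[
\frac{\dd}{\dd t}a_{A}(\xi(t))=\pd{B}a_{A}\,\dot\xi^{B}.
\]

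Second, the spatial derivative is
\[
\frac{\partial L}{\partial\xi^{A}}=\pd{A}a_{B}\,\dot\xi^{B}-\pd{A}V,
\]
where we rename the summed index so that it is $B$ in both terms.

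Third, subtracting the second expression from the first, the Euler--Lagrange equation becomes
\[
\pd{B}a_{A}\,\dot\xi^{B}-\pd{A}a_{B}\,\dot\xi^{B}+\pd{A}V=0,
\quad\text{that is,}\quad
(\pd{A}a_{B}-\pd{B}a_{A})\,\dot\xi^{B}=\pd{A}V.
\]
By the definition \eqref{eq:f}, the bracketed coefficient is exactly $f_{AB}$, which yields \eqref{eq:eom}.

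The only point needing care is the sign and index order. We must confirm that the orientation of $f_{AB}=\pd{A}a_{B}-\pd{B}a_{A}$ matches the combination produced by the subtraction, so that $V$ enters with a plus sign on the right-hand side. There is no genuine obstacle beyond this bookkeeping. Smoothness of $a$ and $V$ on the open set $\Mm$ justifies differentiating, and we interpret \eqref{eq:eom} as holding along $C^{1}$ curves.
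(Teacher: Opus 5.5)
Your proposal is correct and follows essentially the same route as the paper: compute $\partial L/\partial\dot\xi^{A}=a_{A}$ and its time derivative by the chain rule, compute $\partial L/\partial\xi^{A}$, and subtract to obtain $(\pd{A}a_{B}-\pd{B}a_{A})\dot\xi^{B}=\pd{A}V$. Your sign and index bookkeeping checks out.
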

	
	\begin{proof}
		$\partial L/\partial\dot\xi^{A}=a_{A}$, so
		$\tfrac{\dd}{\dd t}(a_{A})=(\pd{B}a_{A})\dot\xi^{B}$, while
		$\partial L/\partial\xi^{A}=(\pd{A}a_{B})\dot\xi^{B}-\pd{A}V$. Subtracting,
		$(\pd{A}a_{B}-\pd{B}a_{A})\dot\xi^{B}=\pd{A}V$.
	\end{proof}
	
	\begin{corollary}[Consistency condition]
		\label{cor:consistency}
		If $v\in\ker f(\xi)$ then $\Phi_{v}(\xi):=v^{A}\pd{A}V(\xi)=0$ is a necessary
		condition for \eqref{eq:eom} to admit a solution at $\xi$.
	\end{corollary}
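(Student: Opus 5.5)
The plan is to contract the equations of motion of Lemma~\ref{lem:eom} with a kernel vector and use the antisymmetry of $f$. Suppose $\xi(t)$ is a solution of \eqref{eq:eom} passing through the point $\xi$ at some time $t_{0}$. Write $\dot\xi:=\dot\xi(t_{0})$. Then \eqref{eq:eom} holds pointwise at $\xi$:
\[
f_{AB}(\xi)\,\dot\xi^{B}=\pd{A}V(\xi).
\]
Fix $v\in\ker f(\xi)$, so that $f_{AB}(\xi)\,v^{B}=0$ for every $A$. Multiply the pointwise equation by $v^{A}$ and sum over $A$. The right-hand side becomes $v^{A}\pd{A}V(\xi)=\Phi_{v}(\xi)$.

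The left-hand side becomes $v^{A}f_{AB}(\xi)\dot\xi^{B}$. Since $f_{AB}=-f_{BA}$ by \eqref{eq:f}, this equals $-\dot\xi^{B}f_{BA}(\xi)v^{A}$. That expression vanishes because $v$ lies in the kernel. (For an antisymmetric matrix the left and right kernels coincide, so no separate convention is needed.) Hence $\Phi_{v}(\xi)=0$, which is the claimed necessary condition.

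An equivalent linear-algebraic phrasing is also worth recording. The system $f(\xi)\,x=\nabla V(\xi)$ is solvable in $x$ only if $\nabla V(\xi)\in\im f(\xi)$. For the antisymmetric matrix $f$ one has $\im f=(\ker f^{\top})^{\perp}=(\ker f)^{\perp}$. So solvability forces $\nabla V(\xi)$ to be orthogonal to $\ker f(\xi)$, which is exactly $\Phi_{v}(\xi)=0$ for every $v$ in the kernel.

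There is no real obstacle here. The only point needing care is what ``admit a solution at $\xi$'' means. I will read it as the existence of a velocity $\dot\xi$ satisfying \eqref{eq:eom} at the point $\xi$. Any trajectory through $\xi$ supplies such a velocity, so the condition is necessary in either reading. Note that $v$ is only required to lie in the kernel at the single point $\xi$; no smoothness of a kernel frame is used.
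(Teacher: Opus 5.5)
Your proof is correct and is essentially the paper's argument. The paper states the corollary as an immediate consequence of Lemma~\ref{lem:eom}: contract with a kernel vector and use antisymmetry. Your second phrasing, $\im f=(\ker f)^{\perp}$ because $f^{\top}=-f$, is exactly the proof the paper gives later for Proposition~\ref{prop:solv}, which also supplies the converse.
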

	
	The zero locus of the functions $\Phi_{v}$, as $v$ runs over a frame of
	$\ker f$, produces the first generation of constraints. In the
	Barcelos-Neto--Wotzasek iteration these are appended by bordering.
	
	\subsection{Bordering}
	
	\begin{definition}[Bordering round]
		\label{def:bordering}
		Let $\Om_{1},\dots,\Om_{k}$ be the constraints accumulated so far and
		$B_{i\alpha}=\pd{i}\Om_{\alpha}$ the matrix of their gradients. The bordered
		matrix is
		\begin{equation}
			f^{(m)}=\begin{pmatrix}f^{(0)}&B\\-B^{\top}&0\end{pmatrix}
			\in\R^{(n+k)\times(n+k)} .
			\label{eq:fm}
		\end{equation}
	\end{definition}
	
	\begin{remark}[Sign convention]
		\label{rem:sign}
		Building the extended one-form as
		$a^{(m)}=(a_{i},-\Om_{\alpha})$ over the variables $(\xi^{i},\lambda^{\alpha})$
		yields $f^{(m)}_{i\alpha}=-B_{i\alpha}$, that is, the opposite convention
		$\bigl(\begin{smallmatrix}f^{(0)}&-B\\ B^{\top}&0\end{smallmatrix}\bigr)$. The
		two are related by $\lambda\mapsto-\lambda$, hence by $w\mapsto-w$ on the
		multiplier block of the kernel. All kernels, ranks and determinants are
		unaffected. We fix \eqref{eq:fm} once and for all and translate where necessary.
	\end{remark}
	
	\begin{lemma}[The potential does not depend on the multipliers]
		\label{lem:Vlambda}
		In the extended Lagrangian the symplectic potential is a function of $\xi$
		alone. Consequently, for $v=(u,w)\in\R^{n}\times\R^{k}$ the contraction is
		$\Phi_{v}=u\cdot\nabla V$, and it depends on the multiplier block only through
		the coupled equations of Definition~\ref{def:bordering}.
	\end{lemma}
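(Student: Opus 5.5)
The plan is to make the extended Lagrangian explicit in the sign convention of Remark~\ref{rem:sign} and then read off both claims directly. Over the variables $(\xi^{i},\lambda^{\alpha})$ I would write
\[
L^{(m)}=a_{i}(\xi)\,\dot\xi^{i}-\Om_{\alpha}(\xi)\,\dot\lambda^{\alpha}-V(\xi).
\]
This is equivalent, up to a total derivative, to the Lagrange-multiplier form $a_{i}\dot\xi^{i}-V+\lambda^{\alpha}\Om_{\alpha}$ after integrating $\lambda^{\alpha}\Om_{\alpha}$ by parts. The point of this choice is that the constraints enter only the one-form $a^{(m)}=(a_{i},-\Om_{\alpha})$, that is, the kinetic part. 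The velocity-independent part is $V(\xi)$ and contains no $\lambda$. Hence the symplectic potential of the extended system is $V^{(m)}(\xi,\lambda)=V(\xi)$, and $\partial V^{(m)}/\partial\lambda^{\alpha}=0$ identically. This is the first assertion.

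For the second, I would apply Corollary~\ref{cor:consistency} to the extended system, with $f^{(m)}$ in place of $f$. Lemma~\ref{lem:eom} applies verbatim, since $L^{(m)}$ is again first order. For $v=(u,w)$ the contraction is
\[
\Phi_{v}=u^{i}\pd{i}V+w^{\alpha}\,\partial V^{(m)}/\partial\lambda^{\alpha}=u\cdot\nabla V.
\]
The multiplier block $w$ drops out of $\Phi_{v}$ entirely. It affects only which $u$ are admissible, through the kernel condition $f^{(m)}v=0$, namely
\[
f^{(0)}u+Bw=0,\qquad B^{\top}u=0.
\]
These are exactly the coupled equations of Definition~\ref{def:bordering}.

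The main obstacle is not computational. It is making sure the bookkeeping is consistent: the bordered matrix of Definition~\ref{def:bordering} must arise from a Lagrangian whose potential is $V$ and not $V-\lambda^{\alpha}\Om_{\alpha}$. If one instead used the multiplier form, $V^{(m)}$ would depend on $\lambda$ and $f^{(m)}$ would not be bordered. I would therefore state explicitly that the total-derivative shift $\lambda^{\alpha}\Om_{\alpha}\to-\Om_{\alpha}\dot\lambda^{\alpha}$ is what moves the constraint from the potential into $a^{(m)}$. Finally, I would note that by Remark~\ref{rem:sign} the map $w\mapsto-w$ turns this into the fixed convention \eqref{eq:fm}, and that this sign flip leaves $\Phi_{v}$ unchanged.
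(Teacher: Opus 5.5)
Your core argument is correct and matches what the paper intends. The paper gives the lemma no separate proof; it is read off from the extended one-form $a^{(m)}=(a_{i},-\Om_{\alpha})$ of Remark~\ref{rem:sign}. That one-form carries the potential $V(\xi)$, so $\Phi_{v}=u\cdot\nabla V$. The block $w$ enters only through $f^{(0)}u+Bw=0$ and $B^{\top}u=0$, up to the sign flip $w\mapsto-w$ that you correctly note.

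One claim in your proposal is false, though, and you say you would state it explicitly. $L^{(m)}$ is \emph{not} equivalent, up to a total derivative, to $a_{i}\dot\xi^{i}-V+\lambda^{\alpha}\Om_{\alpha}$. There is also no total-derivative shift $\lambda^{\alpha}\Om_{\alpha}\to-\Om_{\alpha}\dot\lambda^{\alpha}$.

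A total derivative $\tfrac{\dd}{\dd t}F(\xi,\lambda)=\dot\xi^{i}\pd{i}F+\dot\lambda^{\alpha}\,\partial F/\partial\lambda^{\alpha}$ is purely linear in the velocities. Adding it changes the one-form by the exact form $\dd F$, leaves $f=\dd a$ unchanged, and never changes the potential. Your own remark already shows this: the multiplier form has a $\lambda$-dependent potential and an unbordered $f^{(m)}$, so the two Lagrangians cannot differ by a total derivative. The correct identity is
\[
-\Om_{\alpha}\dot\lambda^{\alpha}=\lambda^{\alpha}\dot\Om_{\alpha}-\frac{\dd}{\dd t}\bigl(\lambda^{\alpha}\Om_{\alpha}\bigr),
\]
so the bordered Lagrangian imposes $\dot\Om_{\alpha}=0$, not $\Om_{\alpha}=0$.

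This distinction is exactly the point of Remark~\ref{rem:borderimposes} and hypothesis \textbf{H11}. It is also the mechanism behind the spurious freedom in Section~\ref{sec:weak}, so asserting the equivalence would contradict the paper.

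The fix costs nothing. Take $L^{(m)}=a_{i}\dot\xi^{i}-\Om_{\alpha}\dot\lambda^{\alpha}-V(\xi)$ as the definition of the extended Lagrangian, which is all the lemma needs. If you want a motivating rewriting, use the $\lambda^{\alpha}\dot\Om_{\alpha}$ form above rather than the multiplier form.
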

	
	\begin{remark}[What bordering imposes]
		\label{rem:borderimposes}
		Varying the extended Lagrangian with respect to $\lambda^{\alpha}$ gives
		$\dot\Om_{\alpha}=0$, not $\Om_{\alpha}=0$. The bordered system is therefore
		equivalent to the original one only for initial data on the constraint surface.
		This is hypothesis \textbf{H11} of Section~\ref{sec:scope}; it is invisible in
		the algebra but essential in every dynamical interpretation, and it is the
		formal reason why the bordered system can carry more freedom than the original
		one (Section~\ref{sec:weak}).
	\end{remark}
	
	\subsection{Blocks, the canonical form, and the two reduced matrices}
	
	\begin{lemma}[Real antisymmetric canonical form]
		\label{lem:canonical}
		Let $f^{(0)}\in\R^{n\times n}$ be antisymmetric of rank $r=n-d$. There exists
		$Q\in O(n)$ with
		\begin{equation}
			Q^{\top}f^{(0)}Q=\begin{pmatrix}J&0\\0&0_{d}\end{pmatrix},
			\qquad
			J=\bigoplus_{\ell=1}^{r/2}\mu_{\ell}
			\begin{pmatrix}0&1\\-1&0\end{pmatrix},\quad\mu_{\ell}>0 .
			\label{eq:canonical}
		\end{equation}
		The last $d$ columns of $Q$ form an orthonormal basis $N_{0}$ of
		$\ker f^{(0)}$ and the first $r$ an orthonormal basis $Q_{R}$ of
		$\im f^{(0)}=(\ker f^{(0)})^{\perp}$. In particular $r$ is even and
		$\det J=\prod_{\ell}\mu_{\ell}^{2}>0$.
	\end{lemma}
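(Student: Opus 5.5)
This is the real spectral theorem for normal operators, specialized to antisymmetric matrices. The plan is to pass through the complexification and then return to real orthonormal bases.

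First, since $f^{(0)}$ is real antisymmetric, the matrix $\mathrm{i}f^{(0)}$ is Hermitian. Hence $f^{(0)}$ is normal and diagonalizable by a unitary matrix, with purely imaginary eigenvalues. Because $f^{(0)}$ is real, the nonzero eigenvalues come in conjugate pairs $\pm\mathrm{i}\mu_\ell$ with $\mu_\ell>0$. Second, for each eigenvector $z=x+\mathrm{i}y$ with eigenvalue $\mathrm{i}\mu$, we separate real and imaginary parts to get
\[
f^{(0)}x=-\mu y,\qquad f^{(0)}y=\mu x.
\]
Orthogonality of $z$ and $\bar z$ (they belong to the distinct eigenvalues $\pm\mathrm{i}\mu$ of a normal matrix), together with the normalization $\|z\|^2=2$, gives $x\perp y$ and $\|x\|=\|y\|=1$. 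In the ordered basis $(x,y)$ the restriction of $f^{(0)}$ is then
\[
\mu\begin{pmatrix}0&1\\-1&0\end{pmatrix},
\]
since $x^{\top}f^{(0)}y=\mu$.

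Third, eigenvectors belonging to distinct eigenvalues, and an orthonormal choice within each eigenspace, yield mutually orthogonal real planes. This uses $x_1+\mathrm{i}y_1\perp x_2\pm\mathrm{i}y_2$, which holds whenever the eigenvalues differ and, within a single eigenspace, follows from choosing the complex basis unitary. From these relations all real inner products between the two planes vanish.

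I expect this third step, the orthogonality bookkeeping within repeated eigenvalues, to be the main obstacle. An alternative that avoids it is induction on $n$: the real $2$-plane $\spn\{x,y\}$ is invariant, so its orthogonal complement is invariant by antisymmetry. One splits it off and recurses, and the planes are then orthogonal by construction. I would present the inductive version.

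Finally, complete the construction with an orthonormal basis $N_0$ of the kernel. This kernel is the orthogonal complement of the sum of the planes, because that sum is invariant and $f^{(0)}$ is injective on it. Assembling the columns gives $Q\in O(n)$ with $Q^{\top}f^{(0)}Q$ of the stated block form. The first $r$ columns span $(\ker f^{(0)})^\perp$, and this equals $\im f^{(0)}$ because $\im A=(\ker A^{\top})^{\perp}$ and $\ker A^{\top}=\ker(-A)=\ker A$. The rank equals twice the number of planes, so $r$ is even, and $\det J=\prod_\ell\mu_\ell^{2}>0$ since each $2\times2$ block has determinant $\mu_\ell^2$.
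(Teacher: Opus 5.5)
The paper gives no proof of this lemma; it is quoted as the standard orthogonal normal form of a real skew-symmetric matrix. There is therefore nothing of the paper's to compare against, and your inductive argument is a correct and standard proof. It takes a complex eigenvector $x+\mathrm{i}y$ for $\mathrm{i}\mu$, $\mu>0$, and uses the invariant plane $\spn\{x,y\}$, on which $f^{(0)}$ acts as $\mu\bigl(\begin{smallmatrix}0&1\\-1&0\end{smallmatrix}\bigr)$. It then splits off the orthogonal complement, which is invariant by antisymmetry, and recurses.

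One sentence needs tightening. Your sum $P$ of the planes is invariant and $f^{(0)}$ is injective on it, but that only gives $\ker f^{(0)}\cap P=0$. It does not give $P^{\perp}=\ker f^{(0)}$: a single plane of a matrix with two planes is also invariant with $f^{(0)}$ injective on it. What closes the argument is the termination of the recursion. It stops only when the antisymmetric block remaining on $P^{\perp}$ has no nonzero eigenvalue. That block is normal, so having only the eigenvalue $0$ forces it to be zero, giving $P^{\perp}\subseteq\ker f^{(0)}$. Equality then follows from the invariant splitting $\R^{n}=P\oplus P^{\perp}$. You should state this explicitly before reading off $r=2\cdot\#\{\text{planes}\}$.

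Separately, the normalization in your second step does not need normality. Antisymmetry alone suffices: $x^{\top}f^{(0)}x=0$ gives $x\perp y$, and $x^{\top}f^{(0)}y=-y^{\top}f^{(0)}x$ gives $\|x\|=\|y\|$.
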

	
	Let $N$ be any basis of $\ker f^{(0)}$ (not necessarily orthonormal; see
	Section~\ref{sec:strata}) and set
	\begin{equation}
		\Gamma:=N^{\top}B\in\R^{d\times k},
		\qquad
		B_{u}:=Q_{R}^{\top}B\in\R^{r\times k},
		\qquad
		M:=B_{u}^{\top}J^{-1}B_{u}\in\R^{k\times k}.
		\label{eq:GammaM}
	\end{equation}
	Since the inverse of an invertible antisymmetric matrix is antisymmetric,
	$M^{\top}=-M$. Writing $f^{+}=Q_{R}J^{-1}Q_{R}^{\top}$ for the pseudo-inverse of
	$f^{(0)}$ supported on $\im f^{(0)}$, we record the identity that will carry
	the interpretation of $M$:
	\begin{equation}
		M=B^{\top}f^{+}B .
		\label{eq:Mpseudo}
	\end{equation}
	
	\subsection{The effective ring}
	
	Ideal membership is not the same notion in $C^{\infty}(\Mm)$ and in a polynomial
	ring. Throughout we work in
	\begin{equation}
		\Rring=\R[\xi^{1},\dots,\xi^{n},\text{parameters}]
		\quad\text{or its localization at the relevant denominators,}
		\label{eq:ring}
	\end{equation}
	and every ideal-theoretic statement below is made in $\Rring$. We write
	$\Ical=\gen{\Om_{1},\dots,\Om_{k}}\subseteq\Rring$ and
	$\Sigma=\{\xi\in\Mm:\Om_{1}(\xi)=\dots=\Om_{k}(\xi)=0\}$ for the real zero
	locus. This is hypothesis \textbf{H12}.
	
	\section{The master variational identity and the strong gauge criterion}
	\label{sec:strong}
	
	Everything in this paper reduces to one identity, proved without using the
	equations of motion and without discarding any term.
	
	\begin{lemma}[First variation]
		\label{lem:firstvar}
		For an arbitrary variation $\delta\xi^{A}(t)$ of \eqref{eq:lag},
		\begin{equation}
			\delta S=\int_{t_{0}}^{t_{1}}E_{A}\,\delta\xi^{A}\,\dd t
			+\bigl[a_{A}\,\delta\xi^{A}\bigr]_{t_{0}}^{t_{1}},
			\qquad
			E_{A}:=f_{AB}\dot\xi^{B}-\pd{A}V .
			\label{eq:firstvar}
		\end{equation}
	\end{lemma}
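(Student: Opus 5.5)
The plan is to compute $\delta S$ directly from the definition $S=\int_{t_{0}}^{t_{1}}L\,\dd t$ with $L$ as in \eqref{eq:lag}, varying $\xi\mapsto\xi+\delta\xi$ to first order. No equation of motion is used and no boundary term is discarded. The first step is the pointwise variation of the integrand:
\[
\delta L=(\pd{B}a_{A})\,\delta\xi^{B}\,\dot\xi^{A}+a_{A}\,\dot{(\delta\xi^{A})}-\pd{A}V\,\delta\xi^{A},
\]
using $\delta\dot\xi^{A}=\tfrac{\dd}{\dd t}\delta\xi^{A}$. This holds because the variation is a one-parameter family of curves with fixed parameter $t$.

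The second step removes the time derivative from $\delta\xi$ in the middle term by the Leibniz rule:
\[
a_{A}\,\dot{(\delta\xi^{A})}=\tfrac{\dd}{\dd t}\bigl(a_{A}\delta\xi^{A}\bigr)-(\pd{B}a_{A})\dot\xi^{B}\,\delta\xi^{A}.
\]
This is the same chain-rule computation as in the proof of Lemma~\ref{lem:eom}. Integrating the total derivative over $[t_{0},t_{1}]$ produces exactly the boundary term $\bigl[a_{A}\delta\xi^{A}\bigr]_{t_{0}}^{t_{1}}$ by the fundamental theorem of calculus.

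The third step collects the bulk terms. After relabeling the dummy indices in the first term of $\delta L$ ($A\leftrightarrow B$), the coefficient of $\delta\xi^{A}$ is
\[
(\pd{A}a_{B})\dot\xi^{B}-(\pd{B}a_{A})\dot\xi^{B}-\pd{A}V=f_{AB}\dot\xi^{B}-\pd{A}V=E_{A},
\]
by the definition \eqref{eq:f}. Adding the bulk integral and the boundary term gives \eqref{eq:firstvar}.

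The main point needing care is bookkeeping rather than substance. The index relabeling must be done consistently so that the antisymmetric combination $f_{AB}$ appears with the correct sign, matching the convention of Lemma~\ref{lem:eom}. The regularity assumptions are only that $a$ and $V$ are $C^{1}$ on $\Mm$ and that $\xi$ and $\delta\xi$ are $C^{1}$ in $t$, which the integration by parts requires. If the bordered Lagrangian is substituted later, the same identity applies verbatim with $a^{(m)}$ in place of $a$. By Lemma~\ref{lem:Vlambda}, $V$ does not depend on the multipliers, so no extra potential term arises.
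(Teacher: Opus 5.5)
Your proof is correct and follows essentially the same route as the paper. The lemma is stated there without a separate proof, but the proofs of Lemma~\ref{lem:eom} and Lemma~\ref{lem:master} carry out exactly your computation: a term-by-term variation, the Leibniz rule producing the boundary term $\bigl[a_{A}\delta\xi^{A}\bigr]_{t_{0}}^{t_{1}}$, and the relabeling $A\leftrightarrow B$ that assembles $f_{AB}$ (your closing remark about the bordered Lagrangian is harmless but not needed).
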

	
	\begin{lemma}[Master identity]
		\label{lem:master}
		Let $u$ be a vector field on $\Mm$ and $\eta\in C^{\infty}([t_{0},t_{1}])$. For
		$\delta\xi^{A}=\eta(t)\,u^{A}(\xi(t))$,
		\begin{equation}
			\delta S=-\int_{t_{0}}^{t_{1}}\eta\,A_{u}\,\dd t
			+\bigl[\eta\,a_{A}u^{A}\bigr]_{t_{0}}^{t_{1}},
			\qquad
			A_{u}(\xi,\dot\xi):=\dot\xi^{A}f_{AB}u^{B}+u^{A}\pd{A}V .
			\label{eq:master}
		\end{equation}
	\end{lemma}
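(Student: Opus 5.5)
I will derive the statement directly from the first-variation formula of Lemma~\ref{lem:firstvar}, substituting $\delta\xi^{A}=\eta(t)\,u^{A}(\xi(t))$. The boundary term is then immediate: $[a_{A}\delta\xi^{A}]_{t_{0}}^{t_{1}}=[\eta\,a_{A}u^{A}]_{t_{0}}^{t_{1}}$, which matches the boundary term in \eqref{eq:master} exactly. No integration by parts is needed in this step.

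For the bulk term I will expand $E_{A}\,\delta\xi^{A}=\eta\,(f_{AB}\dot\xi^{B}u^{A}-u^{A}\pd{A}V)$. Using the antisymmetry $f_{AB}=-f_{BA}$ from \eqref{eq:f}, I rewrite $f_{AB}\dot\xi^{B}u^{A}=-\dot\xi^{B}f_{BA}u^{A}$. After relabelling the dummy indices this becomes $-\dot\xi^{A}f_{AB}u^{B}$. Hence
\[
E_{A}\,\delta\xi^{A}=-\eta\,\bigl(\dot\xi^{A}f_{AB}u^{B}+u^{A}\pd{A}V\bigr)=-\eta\,A_{u},
\]
and integrating over $[t_{0},t_{1}]$ gives \eqref{eq:master}.

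Lemma~\ref{lem:firstvar} is stated without proof above, so I will also verify it to keep the argument self-contained. Start from $\delta L=(\pd{A}a_{B})\delta\xi^{A}\dot\xi^{B}+a_{A}\,\delta\dot\xi^{A}-\pd{A}V\,\delta\xi^{A}$. Integrate the middle term by parts: $a_{A}\delta\dot\xi^{A}=\tfrac{\dd}{\dd t}(a_{A}\delta\xi^{A})-(\pd{B}a_{A})\dot\xi^{B}\delta\xi^{A}$. Collecting terms reproduces $E_{A}=f_{AB}\dot\xi^{B}-\pd{A}V$, which is the same computation as in Lemma~\ref{lem:eom}. This step uses only that $\delta\xi$ is smooth in $t$, and that holds because $\eta$ and $u$ are smooth and $\xi(t)$ is a smooth curve. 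The equations of motion are never invoked, consistent with the claim that the identity holds off-shell.

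There is no real obstacle here. The only point requiring care is the sign bookkeeping in the antisymmetry step, since the sign of $A_{u}$ fixes the conventions used in all later chain computations. Note also that $\dot\eta$ never appears: the variation $\eta u$ enters only through $E_{A}$ and the boundary term, so no time derivative of $\eta$ is generated.
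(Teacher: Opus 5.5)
Your proof is correct, but it is organized differently from the paper's. The paper never invokes Lemma~\ref{lem:firstvar}. It varies $L$ directly for $\delta\xi^{A}=\eta u^{A}$ and expands $\delta\dot\xi^{A}=\dot\eta\,u^{A}+\eta(\pd{B}u^{A})\dot\xi^{B}$. It then recognizes the $\dot\eta$ and $\pd{B}u^{A}$ pieces as part of $\tfrac{\dd}{\dd t}[\eta a_{A}u^{A}]$. Finally it combines the leftover $-\eta(\pd{B}a_{A})\dot\xi^{B}u^{A}$ with $\eta u^{B}(\pd{B}a_{A})\dot\xi^{A}$ to obtain $-\eta\,\dot\xi^{A}f_{AB}u^{B}$.

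You instead prove the first variation once for an arbitrary $\delta\xi$ and then specialize. The only remaining work is the antisymmetry step, and you carry it out with the correct sign.

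Each route has its advantage:
\begin{itemize}
\item Your route is more modular. It makes immediate what the paper records only afterwards as a remark, $A_{u}=-u^{A}E_{A}$. It also shows that the result depends on the variation only through the values $\delta\xi^{A}=\eta u^{A}$, not through $\dot\eta$ or the derivatives of $u$.
\item The paper's route is self-contained for this specific variation. It shows explicitly how the $\pd{B}u^{A}$ terms of a point-dependent $u$ are absorbed into the boundary term, which is the situation exploited in the later chain computations.
\end{itemize}

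Since you also supply the proof of Lemma~\ref{lem:firstvar}, which the paper states without proof, your argument is complete.
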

	
	\begin{proof}
		Varying term by term,
		$\delta(a_{A}\dot\xi^{A})=(\pd{B}a_{A})\delta\xi^{B}\dot\xi^{A}
		+a_{A}\delta\dot\xi^{A}$ and $\delta V=(\pd{A}V)\delta\xi^{A}$, with
		$\delta\dot\xi^{A}=\dot\eta\,u^{A}+\eta(\pd{B}u^{A})\dot\xi^{B}$. The integrand
		of $\delta S$ is thus
		\[
		\underbrace{\eta\,u^{B}(\pd{B}a_{A})\dot\xi^{A}}_{T_{1}}
		+\underbrace{a_{A}\bigl[\dot\eta\,u^{A}
			+\eta(\pd{B}u^{A})\dot\xi^{B}\bigr]}_{T_{2}}
		-\underbrace{\eta\,u^{A}\pd{A}V}_{T_{3}} .
		\]
		Since
		$\tfrac{\dd}{\dd t}[\eta a_{A}u^{A}]
		=\dot\eta a_{A}u^{A}+\eta(\pd{B}a_{A})\dot\xi^{B}u^{A}
		+\eta a_{A}(\pd{B}u^{A})\dot\xi^{B}$, we have
		$T_{2}=\tfrac{\dd}{\dd t}[\eta a_{A}u^{A}]-\eta(\pd{B}a_{A})\dot\xi^{B}u^{A}$.
		Relabelling $A\leftrightarrow B$ in the second term and adding $T_{1}$,
		\[
		T_{1}+\bigl(T_{2}-\tfrac{\dd}{\dd t}[\eta a_{A}u^{A}]\bigr)
		=\eta\,\dot\xi^{A}u^{B}\bigl(\pd{B}a_{A}-\pd{A}a_{B}\bigr)
		=-\eta\,\dot\xi^{A}f_{AB}u^{B},
		\]
		which together with $-T_{3}$ gives \eqref{eq:master}.
	\end{proof}
	
	\begin{remark}
		Comparing \eqref{eq:firstvar} and \eqref{eq:master}, $A_{u}=-u^{A}E_{A}$: the
		master identity is the contraction of the Euler--Lagrange expression with $u$.
		The entire theory consists in asking when that contraction vanishes
		identically, vanishes modulo the constraint ideal, or is a total derivative.
	\end{remark}
	
	\begin{theorem}[Strong gauge criterion]
		\label{thm:strong}
		Let $v$ be a smooth vector field on an open set $U\subseteq\Mm$. The
		transformation $\delta\xi=\varepsilon(t)v(\xi)$ is a symmetry of the action, in
		the sense that $\delta S$ reduces to a boundary term, for every compactly
		supported $\varepsilon$ and along every curve in $U$, if and only if
		\begin{equation}
			f(\xi)v(\xi)=0
			\quad\text{and}\quad
			v(\xi)\cdot\nabla V(\xi)=0
			\qquad\text{for every }\xi\in U .
			\label{eq:strong}
		\end{equation}
		Moreover, it suffices to test curves whose velocity at each point takes the
		$n+1$ values $\{0,e_{1},\dots,e_{n}\}$.
	\end{theorem}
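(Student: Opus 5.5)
The plan is to apply the master identity, Lemma~\ref{lem:master}, with $u=v$ and $\eta=\varepsilon$. For compactly supported $\varepsilon$ the boundary term $\bigl[\varepsilon\,a_{A}v^{A}\bigr]_{t_{0}}^{t_{1}}$ vanishes, so
\[
\delta S=-\int_{t_{0}}^{t_{1}}\varepsilon\,A_{v}\,\dd t,
\qquad
A_{v}=\dot\xi^{A}f_{AB}v^{B}+v^{A}\pd{A}V .
\]
The proof then reduces to showing that this integral vanishes (or is a boundary term) for all admissible $\varepsilon$ and all curves exactly when $A_{v}\equiv 0$ as a function of $(\xi,\dot\xi)$ on $U\times\R^{n}$. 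After that I read off the two conditions in \eqref{eq:strong}.

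\emph{Sufficiency.} If $fv=0$ and $v\cdot\nabla V=0$ on $U$, then $A_{v}(\xi(t),\dot\xi(t))=\dot\xi^{A}(fv)_{A}+v\cdot\nabla V=0$ pointwise along any curve in $U$. Hence $\delta S$ equals the boundary term, which vanishes for compact support. This direction is immediate.

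\emph{Necessity.} Suppose $\int\varepsilon A_{v}\,\dd t$ is a boundary term, hence zero, for every compactly supported $\varepsilon$. By the fundamental lemma of the calculus of variations, $A_{v}(\xi(t),\dot\xi(t))=0$ for all $t$ along every curve in $U$. Now fix $\xi_{0}\in U$ and take the curves $\xi(t)=\xi_{0}+t\,c$ with $c\in\{0,e_{1},\dots,e_{n}\}$, which stay in $U$ for small $|t|$ since $U$ is open. Evaluating at $t=0$:
\begin{itemize}
\item the choice $c=0$ gives $v\cdot\nabla V(\xi_{0})=0$;
\item subtracting that from the choice $c=e_{B}$ gives $(f v)_{B}(\xi_{0})=0$, after using antisymmetry to write $\dot\xi^{A}f_{AB}v^{B}=-(fv)_{A}\dot\xi^{A}$ with the appropriate index.
\end{itemize}
This yields \eqref{eq:strong} at every point. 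It also proves the final clause, since only these $n+1$ velocity values were used.

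\emph{Main obstacle.} The delicate step is the meaning of ``reduces to a boundary term''. One might worry that $\varepsilon A_{v}$ could be a total derivative $\tfrac{\dd}{\dd t}G$ without vanishing. I will handle this as follows. For compactly supported $\varepsilon$ the integral of a total derivative is zero, so the hypothesis gives $\int\varepsilon A_{v}\,\dd t=0$ for every such $\varepsilon$ along a fixed curve. That is exactly the input the fundamental lemma needs, applied curve by curve, so no structure of $G$ is required. I will also note that $A_{v}$ is affine in $\dot\xi$, which is why $n+1$ velocities suffice.
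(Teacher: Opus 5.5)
Your proposal is correct and follows the paper's proof essentially step for step: you use the master identity with compactly supported $\varepsilon$ to kill the boundary term, apply the fundamental lemma along each curve, and then evaluate at the constant curve (giving $v\cdot\nabla V=0$) and at curves with velocity $e_{B}$ (giving $(fv)_{B}=0$). One cosmetic point: $\dot\xi^{A}f_{AB}v^{B}=\dot\xi^{A}(fv)_{A}$ holds directly, as you yourself wrote in the sufficiency part, so no antisymmetry and no minus sign are involved; the slip is harmless because the quantity is being set to zero.
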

	
	\begin{proof}
		$(\Leftarrow)$ Both conditions give $A_{v}\equiv0$ and \eqref{eq:master} leaves
		only the boundary term. $(\Rightarrow)$ Fix $\xi_{0}\in U$ and a curve through
		$\xi_{0}$; taking $\varepsilon$ supported near the corresponding time kills the
		boundary term, and the fundamental lemma of the calculus of variations forces
		$A_{v}(\xi_{0},X)=0$ for the velocity $X$ realized there. The constant curve
		gives $X=0$, hence $v\cdot\nabla V=0$ at $\xi_{0}$; the curves with $X=e_{i}$
		then give $(fv)_{i}=0$ for each $i$. As $\xi_{0}$ was arbitrary,
		\eqref{eq:strong} holds on $U$.
	\end{proof}
	
	\begin{definition}[Strong gauge direction]
		\label{def:strongdir}
		A vector field $v$ satisfying \eqref{eq:strong} on an open set is called a
		\emph{strong gauge direction}. Theorem~\ref{thm:strong} is the only place in
		this paper where the word \emph{generator} may be used without qualification.
	\end{definition}
	
	\begin{remark}[Rigid symmetries]
		If invariance is demanded only for constant $\varepsilon$, the condition weakens
		to $A_{v}=\dd K/\dd t$ for some $K$, with Noether charge $a_{A}v^{A}-K$.
		Theorem~\ref{thm:strong} characterizes symmetries local in time, which is the
		operative definition of gauge freedom.
	\end{remark}
	
	\begin{proposition}[Solvability and dynamical indeterminacy]
		\label{prop:solv}
		At a fixed point $\xi$, regard \eqref{eq:eom} as a linear system
		$f(\xi)X=\nabla V(\xi)$ in $X=\dot\xi$. Then it is solvable if and only if
		$\Phi_{v}(\xi)=0$ for every $v\in\ker f(\xi)$, and in that case the solution set
		is the affine space $X_{0}+\ker f(\xi)$.
	\end{proposition}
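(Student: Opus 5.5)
The statement is the Fredholm alternative for a finite-dimensional linear system, specialised to an antisymmetric matrix, so the plan is to argue entirely at the fixed point $\xi$. Write $f=f(\xi)$ and $b=\nabla V(\xi)$. The key input is that antisymmetry gives $f^{\top}=-f$, hence $\ker f^{\top}=\ker f$, and therefore
\[
\im f=(\ker f^{\top})^{\perp}=(\ker f)^{\perp} .
\]
This is also recorded in Lemma~\ref{lem:canonical}, where $\im f^{(0)}=(\ker f^{(0)})^{\perp}$; we would invoke that lemma directly, or reprove the one line above.

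Solvability is then the statement $b\in\im f$. This holds if and only if $b\perp\ker f$, that is, $v\cdot\nabla V(\xi)=\Phi_{v}(\xi)=0$ for every $v\in\ker f$.

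The direction ``solvable $\Rightarrow$ $\Phi_{v}=0$'' is already Corollary~\ref{cor:consistency}, and we would re-derive it in one line: if $fX=b$, then $v^{\top}b=v^{\top}fX=-(fv)^{\top}X=0$. The converse direction is exactly where the orthogonal-complement identity is needed. Concretely, in the canonical basis of Lemma~\ref{lem:canonical} one can exhibit a solution explicitly as $X_{0}=f^{+}b=Q_{R}J^{-1}Q_{R}^{\top}b$. Indeed, $fX_{0}=Q_{R}Q_{R}^{\top}b$ is the orthogonal projection of $b$ onto $\im f$, and this equals $b$ precisely when $N_{0}^{\top}b=0$.

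The affine structure of the solution set is standard. The difference of two solutions lies in $\ker f$, and adding any element of $\ker f$ to a solution gives another solution, so the solution set is $X_{0}+\ker f(\xi)$.

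No step is a real obstacle. The only point needing care is that the criterion must range over all of $\ker f(\xi)$ at that point, not over a frame chosen elsewhere. It suffices to test on a basis, by linearity of $v\mapsto\Phi_{v}$.
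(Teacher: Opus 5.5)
Your proof is correct and follows the paper's own argument. The paper likewise uses $f^{\top}=-f$ to obtain $\im f=(\ker f^{\top})^{\perp}=(\ker f)^{\perp}$, reads off solvability as $\nabla V\perp\ker f$, and cites the affine structure of a consistent linear system; your explicit particular solution $X_{0}=f^{+}\nabla V$, using $ff^{+}=Q_{R}Q_{R}^{\top}$, is a correct but inessential addition.
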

	
	\begin{proof}
		$\im f=(\ker f^{\top})^{\perp}=(\ker f)^{\perp}$ because $f^{\top}=-f$; so
		$\nabla V\in\im f$ if and only if $\nabla V\perp\ker f$. The second statement is
		the affine structure of the solution set of a consistent linear system.
	\end{proof}
	
	\begin{definition}[Strong and weak null modes]
		\label{def:strongweak}
		A null mode $v\in\ker f^{(m)}$ is \emph{strong} if $\Phi_{v}\equiv0$ on an open
		set, and \emph{weak} if $\Phi_{v}\in\Ical$ but $\Phi_{v}\not\equiv0$.
	\end{definition}
	
	\begin{proposition}[Invariance defect of a weak mode]
		\label{prop:defect}
		If $v\in\ker f^{(m)}$ is weak with $\Phi_{v}=\sum_{\gamma}c^{\gamma}\Om_{\gamma}$
		then, for $\delta\xi=\varepsilon v$,
		\begin{equation}
			\delta S=-\int\varepsilon\sum_{\gamma}c^{\gamma}\Om_{\gamma}\,\dd t
			+\text{boundary}.
			\label{eq:defect}
		\end{equation}
		Hence $\delta S$ vanishes on curves lying in $\Sigma$ but not off the constraint
		surface, and by Theorem~\ref{thm:strong} the transformation is not a symmetry of
		the action.
	\end{proposition}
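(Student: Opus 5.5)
The plan is to apply the master identity (Lemma~\ref{lem:master}) with $u$ taken to be the $\xi$-block of the null mode. Write $v=(u,w)\in\ker f^{(m)}$. By Lemma~\ref{lem:Vlambda} the potential depends on $\xi$ alone, so the transformation $\delta\xi=\varepsilon v$ acts on the original variables through $u$, and the relevant contraction is $\Phi_{v}=u\cdot\nabla V$. With $\eta=\varepsilon$, Lemma~\ref{lem:master} gives
\[
\delta S=-\int\varepsilon\bigl(\dot\xi^{A}f_{AB}u^{B}+u^{A}\pd{A}V\bigr)\dd t+\bigl[\varepsilon\,a_{A}u^{A}\bigr].
\]

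The first term in the integrand has to be handled separately. The top block of $f^{(m)}v=0$ reads $f^{(0)}u+Bw=0$. Hence $\dot\xi^{A}f_{AB}u^{B}=-\dot\xi^{A}B_{A\alpha}w^{\alpha}=-w^{\alpha}\,\dd\Om_{\alpha}/\dd t$. This is not of the form claimed in \eqref{eq:defect} unless $w=0$ or the term is absorbed. I will therefore state the proposition for the setting it implicitly uses, in which $f^{(0)}u=0$ holds for the $\xi$-part; this is the case the text calls a weak mode of the ambient form. Otherwise I will carry the extra piece $\int\varepsilon w^{\alpha}\dot\Om_{\alpha}$ and note that it too vanishes on curves in $\Sigma$, since $\dot\Om_{\alpha}=0$ there. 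In either case, substituting $u\cdot\nabla V=\Phi_{v}=\sum c^{\gamma}\Om_{\gamma}$ yields \eqref{eq:defect}, with the $\dot\Om$ contributions also vanishing on $\Sigma$.

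For the second assertion, on any curve lying in $\Sigma$ we have $\Om_{\gamma}\equiv0$ and $\dot\Om_{\gamma}\equiv0$, so the integrand vanishes and only the boundary term survives. To show that the transformation fails to be a symmetry, I apply Theorem~\ref{thm:strong}, which requires $v\cdot\nabla V=0$ pointwise on an open set. Weakness means $\Phi_{v}\not\equiv0$. Because $\Phi_{v}$ is a polynomial (or lies in the localized ring $\Rring$), it is nonzero on an open dense set, so the strong criterion fails there. Concretely, the constant curve at a point $\xi_{0}$ with $\Phi_{v}(\xi_{0})\ne0$, together with a bump function $\varepsilon$, gives $\delta S\neq0$ with no boundary contribution.

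The main obstacle is the treatment of the $\dot\xi f u$ term when the multiplier block $w$ is nonzero. I plan to resolve it by the $\dot\Om$ observation above, which shows that this term also vanishes along constraint-surface curves. For the non-symmetry claim, constant curves suffice, since on them $\dot\xi=0$ and the $\dot\xi f u$ term is absent.
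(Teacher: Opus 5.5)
Your argument has a gap at the multiplier block, and the $\dot\Om$ observation you rely on does not close it. You apply Lemma~\ref{lem:master} to the original action and vary only the $\xi$-block $u$. For $w\neq0$ this gives
\[
\delta S=-\int\varepsilon\sum_{\gamma}c^{\gamma}\Om_{\gamma}\,\dd t+\int\varepsilon\,w^{\alpha}\dot\Om_{\alpha}\,\dd t+\text{boundary}.
\]
The extra integrand is $\varepsilon\,\dot\xi\cdot(Bw)$. It depends on the velocity whenever $Bw\neq0$, so this is a different identity, not \eqref{eq:defect}. Showing that the extra term also vanishes on $\Sigma$ recovers the qualitative conclusion, but not the displayed formula, which is the content of the proposition.

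Restricting to $f^{(0)}u=0$ does not help either. Definition~\ref{def:strongweak} admits weak modes with $w\neq0$, and they occur. In Section~\ref{sec:paramgauge} with $\alpha\neq0$, every $(u_{0},-1)$ with $u_{2}\neq\alpha$ lies in $\ker f^{(1)}$ and has $\Phi_{u_{0}}=(u_{2}-\alpha)\Om\in\Ical$.

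The missing idea is that $v\in\R^{n+k}$. So $\delta\xi=\varepsilon v$ varies all the variables $X=(\xi,\lambda)$, with $\delta\lambda=\varepsilon w$, of the bordered Lagrangian $L^{(m)}=a_{i}\dot\xi^{i}+\Om_{\alpha}\dot\lambda^{\alpha}-V(\xi)$. Its presymplectic matrix is exactly $f^{(m)}$ in the convention \eqref{eq:fm} (compare Remark~\ref{rem:sign}). Lemma~\ref{lem:master} holds verbatim for this first-order Lagrangian:
\begin{itemize}
\item the velocity term $\dot X^{A}f^{(m)}_{AB}v^{B}$ vanishes identically because $f^{(m)}v=0$;
\item $v\cdot\nabla V=u\cdot\nabla V=\Phi_{v}$ by Lemma~\ref{lem:Vlambda}.
\end{itemize}
Together these give \eqref{eq:defect} at once for every weak mode.

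In your terms, the variation of $\Om_{\alpha}\dot\lambda^{\alpha}$ is $\varepsilon(u\cdot\nabla\Om_{\alpha})\dot\lambda^{\alpha}+\Om_{\alpha}\tfrac{\dd}{\dd t}(\varepsilon w^{\alpha})$. The first term vanishes by $B^{\top}u=0$, the lower block of $f^{(m)}v=0$. The second equals $\tfrac{\dd}{\dd t}(\varepsilon w^{\alpha}\Om_{\alpha})-\varepsilon w^{\alpha}\dot\Om_{\alpha}$, which cancels your extra piece up to a boundary term.

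The misstep is your inference that Lemma~\ref{lem:Vlambda} lets the transformation act through $u$ alone. $V$ does not depend on $\lambda$, but the kinetic term does. Once you use the bordered action, the rest of your argument goes through unchanged: the vanishing on curves in $\Sigma$, and the constant curve with a bump function violating Theorem~\ref{thm:strong}.
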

	
	\begin{proposition}[Independence of the bordering convention]
		\label{prop:convention}
		Let $\widetilde V=V-\sum_{\gamma}g^{\gamma}\Om_{\gamma}$ be any potential
		agreeing with $V$ on $\Sigma$. For every $v=(u,w)\in\ker f^{(m)}$,
		$\Phi_{v}-\widetilde\Phi_{v}\in\Ical$. Therefore the weak verdict is
		convention-independent, whereas the strong verdict need not be.
	\end{proposition}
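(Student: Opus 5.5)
The plan is a direct computation. First I rewrite the contraction $\Phi_v$ for $v=(u,w)\in\ker f^{(m)}$ using Lemma~\ref{lem:Vlambda}. The original potential gives $\Phi_v=u\cdot\nabla V$. For the modified potential, Lemma~\ref{lem:Vlambda} still applies, because $\widetilde V$ is again a function of $\xi$ alone, so $\widetilde\Phi_v=u\cdot\nabla\widetilde V$. The kernel vector $v$ itself is unchanged: the bordered matrix $f^{(m)}$ depends only on $f^{(0)}$ and the gradients $B$, not on the potential. So one and the same $u$ is used in both contractions. Subtracting,
\[
\Phi_v-\widetilde\Phi_v=u^{i}\pd{i}\Bigl(\sum_\gamma g^{\gamma}\Om_\gamma\Bigr)
=\sum_\gamma (u\cdot\nabla g^{\gamma})\,\Om_\gamma+\sum_\gamma g^{\gamma}\,(u\cdot\nabla\Om_\gamma).
\]
The first sum lies in $\Ical$ trivially, provided $g^\gamma$ and $u$ have entries in $\Rring$. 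This is hypothesis \textbf{H12}: all coefficients live in the effective ring or its localization, which is closed under differentiation.

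The key step is the second sum. Here I use the bordered kernel equations. Reading off the block rows of $f^{(m)}v=0$ from \eqref{eq:fm} gives
\[
f^{(0)}u+Bw=0,\qquad -B^{\top}u=0.
\]
The second block says $(B^{\top}u)_\gamma=u^{i}\pd{i}\Om_\gamma=0$ for every $\gamma$. So the second sum vanishes identically, not merely modulo $\Ical$, and therefore $\Phi_v-\widetilde\Phi_v=\sum_\gamma(u\cdot\nabla g^\gamma)\Om_\gamma\in\Ical$. With the opposite sign convention of Remark~\ref{rem:sign}, the block reads $B^{\top}u=0$ and the conclusion is the same.

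For the two verdicts:
\begin{itemize}
\item \textbf{Weak verdict.} If $\Phi_v\in\Ical$, then $\widetilde\Phi_v=\Phi_v-(\Phi_v-\widetilde\Phi_v)\in\Ical$, and conversely. The weak verdict is therefore the same under either convention.
\item \textbf{Strong verdict.} It suffices to exhibit a case where $\Phi_v\equiv0$ but $\widetilde\Phi_v=-\sum_\gamma(u\cdot\nabla g^\gamma)\Om_\gamma\not\equiv0$. The simplest choice is a single constraint $\Om$ and a kernel vector with $u$ tangent to $\Sigma$, taking $g=\xi^{j}$ with $u^{j}\neq0$. Then $\widetilde\Phi_v=-u^{j}\Om$, which is nonzero off $\Sigma$.
\end{itemize}

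I expect the main obstacle to be bookkeeping rather than substance, in two places. The first is the claim that the $g^\gamma$ and $u$ live in $\Rring$: in a localization one needs denominators that are invertible where the statement is made. The second is building the counterexample for the strong verdict concretely. For that I would take any of the later algebraic examples with a strong mode and perturb $V$ by $\xi^{j}\Om$.
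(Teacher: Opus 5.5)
Your proposal is correct and is the paper's argument: expand $\nabla V-\nabla\widetilde V$ by the product rule, annihilate the $g^{\gamma}\nabla\Om_{\gamma}$ terms with $B^{\top}u=0$ (the lower block row, i.e.\ Theorem~\ref{thm:kernel}(b)), and keep only $\sum_{\gamma}\Om_{\gamma}(u\cdot\nabla g^{\gamma})\in\Ical$. Your one addition is an explicit witness for the strong verdict, which the paper only asserts, and it works: for the strong mode $u_{0}=(1,0,0)$ of Section~\ref{sec:regimeIV} with $g=q_{1}$ one gets $\widetilde\Phi=-q_{2}=-\Om\not\equiv0$.
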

	
	\begin{proof}
		$\nabla V-\nabla\widetilde V
		=\sum_{\gamma}(g^{\gamma}\nabla\Om_{\gamma}+\Om_{\gamma}\nabla g^{\gamma})$.
		Contract with $u$ and use $B^{\top}u=0$, which holds for kernel elements by
		Theorem~\ref{thm:kernel}(b); only
		$\sum_{\gamma}\Om_{\gamma}(u\cdot\nabla g^{\gamma})\in\Ical$ survives.
	\end{proof}
	
	\section{Exact kernel structure of the bordered matrix}
	\label{sec:kernel}
	
	\begin{theorem}[Structure of the bordered kernel]
		\label{thm:kernel}
		With the notation of \eqref{eq:canonical}--\eqref{eq:GammaM} and $N=N_{0}$
		orthonormal, write $u=Q\binom{x}{y}$ with $x\in\R^{r}$ and $y=N^{\top}u\in\R^{d}$.
		Then $(u,w)\in\ker f^{(m)}$ if and only if
		\begin{enumerate}[label=(\alph*),leftmargin=2.2em,itemsep=0.25em]
			\item $\Gamma w=0$;
			\item $B^{\top}u=0$, equivalently $\Gamma^{\top}y=Mw$;
			\item $x=-J^{-1}B_{u}w$.
		\end{enumerate}
	\end{theorem}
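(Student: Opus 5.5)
The plan is to write out the kernel equation $f^{(m)}\binom{u}{w}=0$ block by block and change coordinates with the orthogonal matrix $Q$ of Lemma~\ref{lem:canonical}. The two block rows are
\[
f^{(0)}u+Bw=0,\qquad -B^{\top}u=0 .
\]
The second row is already condition (b) in its first form. All the work therefore goes into the first row.

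For the first row, substitute $u=Q\binom{x}{y}$ and multiply on the left by $Q^{\top}$. Since $Q^{\top}f^{(0)}Q=\mathrm{diag}(J,0_{d})$ and $Q^{\top}B=\binom{Q_{R}^{\top}B}{N^{\top}B}=\binom{B_{u}}{\Gamma}$, with $N=N_{0}$ the last $d$ columns of $Q$, the row becomes
\[
\begin{pmatrix}Jx\\0\end{pmatrix}+\begin{pmatrix}B_{u}w\\ \Gamma w\end{pmatrix}=0 .
\]
Because $Q$ is invertible, this is equivalent to the original first row. Its lower component is exactly (a), $\Gamma w=0$. Its upper component is $Jx=-B_{u}w$, and invertibility of $J$ makes this equivalent to (c), $x=-J^{-1}B_{u}w$. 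Finally, $y=N^{\top}u$ follows from the orthogonality of $Q$, namely $\binom{x}{y}=Q^{\top}u$.

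It remains to prove the equivalence inside (b), assuming (c), which is legitimate because the three conditions are asserted jointly. Write
\[
B^{\top}u=B^{\top}Q\binom{x}{y}=(Q^{\top}B)^{\top}\binom{x}{y}=B_{u}^{\top}x+\Gamma^{\top}y .
\]
Inserting (c) gives $B^{\top}u=-B_{u}^{\top}J^{-1}B_{u}w+\Gamma^{\top}y=-Mw+\Gamma^{\top}y$, using the definition of $M$ in \eqref{eq:GammaM}. Hence $B^{\top}u=0$ is equivalent to $\Gamma^{\top}y=Mw$. Conversely, if (a), (b) and (c) hold, the same computations read backwards give both block rows, which proves the ``if'' direction.

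None of these steps is genuinely hard; the bookkeeping is the only risk. Two points need checking. First, $Q^{\top}B$ splits as $(B_{u};\Gamma)$ only because $N=N_{0}$ is precisely the kernel block of $Q$, which is why orthonormality of $N$ is assumed in the statement. Second, the stated equivalence in (b) holds only in the presence of (c), so the proof should say explicitly that it is made under (c).
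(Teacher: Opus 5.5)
Your proposal is correct and follows the paper's proof essentially line for line: you split $f^{(m)}(u,w)^{\top}=0$ into $f^{(0)}u+Bw=0$ and $B^{\top}u=0$, multiply the first by $Q^{\top}$ to read off (a) and (c), and expand $B^{\top}u=B_{u}^{\top}x+\Gamma^{\top}y$ with (c) inserted to get $\Gamma^{\top}y=Mw$. Your explicit remark that the second form of (b) is equivalent to $B^{\top}u=0$ only in the presence of (c) is a small but welcome sharpening of the paper's ``every step is an identity''.
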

	
	\begin{proof}
		The condition $f^{(m)}(u,w)^{\top}=0$ is the pair
		$f^{(0)}u+Bw=0$, $B^{\top}u=0$. Multiplying the first by $Q^{\top}$ and
		inserting $QQ^{\top}=I$,
		\[
		\begin{pmatrix}J&0\\0&0\end{pmatrix}\begin{pmatrix}x\\y\end{pmatrix}
		+\begin{pmatrix}B_{u}\\ \Gamma\end{pmatrix}w
		=\begin{pmatrix}Jx+B_{u}w\\ \Gamma w\end{pmatrix}=0,
		\]
		whose lower block is (a) and whose upper block gives (c) because $J$ is
		invertible. For the second equation,
		$B^{\top}u=B^{\top}QQ^{\top}u=B_{u}^{\top}x+\Gamma^{\top}y$, and substituting
		(c), $B^{\top}u=-Mw+\Gamma^{\top}y$. Every step is an identity, so the
		equivalence holds in both directions.
	\end{proof}
	
	\begin{corollary}[Dimension of the kernel]
		\label{cor:dim}
		Let $\Pi$ be the orthogonal projector onto $\ker\Gamma\subseteq\R^{k}$. Then
		\begin{equation}
			\dim\ker f^{(m)}
			=\dim\ker\Gamma^{\top}+\dim\ker\Gamma-\rank(\Pi M\Pi),
			\label{eq:dim}
		\end{equation}
		and $\rank(\Pi M\Pi)$ is even.
	\end{corollary}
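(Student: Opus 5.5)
By Theorem~\ref{thm:kernel}, the map $(u,w)\mapsto(y,w)$ is injective on $\ker f^{(m)}$: $u$ is recovered from $x$ and $y$ through $Q$, and $x=-J^{-1}B_{u}w$ is determined by $w$. So I would compute $\dim\ker f^{(m)}$ as the dimension of the solution space
\[
\mathcal{S}=\{(y,w)\in\R^{d}\times\R^{k}:\ \Gamma w=0,\ \Gamma^{\top}y=Mw\}.
\]
Any such pair, together with $x$ from (c), gives a kernel element. Hence $\dim\ker f^{(m)}=\dim\mathcal{S}$.

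Next I would project $\mathcal{S}$ onto its $w$-component and use rank--nullity. The fibre over $w=0$ is $\{y:\Gamma^{\top}y=0\}=\ker\Gamma^{\top}$, which gives the first summand. The image consists of those $w\in\ker\Gamma$ for which $Mw\in\im\Gamma^{\top}$. Since $\im\Gamma^{\top}=(\ker\Gamma)^{\perp}$ in $\R^{k}$, the condition $Mw\in\im\Gamma^{\top}$ is equivalent to $\Pi Mw=0$. For $w\in\ker\Gamma$ we have $w=\Pi w$, so the image is $\ker\Gamma\cap\ker(\Pi M\Pi)$. Because $\Pi M\Pi$ vanishes on $(\ker\Gamma)^{\perp}$ and maps into $\ker\Gamma$, its rank equals the rank of its restriction to $\ker\Gamma$. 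Therefore the image has dimension $\dim\ker\Gamma-\rank(\Pi M\Pi)$, and adding the fibre dimension gives \eqref{eq:dim}.

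For the parity claim, I would use that $M^{\top}=-M$ (recorded after \eqref{eq:GammaM}) and $\Pi^{\top}=\Pi$. Then $(\Pi M\Pi)^{\top}=-\Pi M\Pi$, and an antisymmetric real matrix has even rank by Lemma~\ref{lem:canonical}.

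The step I expect to need the most care is the identification $\im\Gamma^{\top}=(\ker\Gamma)^{\perp}$, together with the choice of orthonormal $N=N_{0}$. The theorem is stated for that choice, and the orthogonality of $\Pi$ is taken in the standard inner product on $\R^{k}$, so the identification is exactly the fundamental theorem of linear algebra. The other delicate point is confirming that the restriction of $\Pi M\Pi$ to $\ker\Gamma$ has the same rank as the full matrix; this follows directly from $\Pi M\Pi=\Pi(\Pi M\Pi)\Pi$.
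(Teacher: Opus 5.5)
Your proposal is correct and follows essentially the same route as the paper: the linear bijection between $\ker f^{(m)}$ and the pairs $(y,w)$ from Theorem~\ref{thm:kernel}, the solvability condition $\Pi Mw=0$ via $\im\Gamma^{\top}=(\ker\Gamma)^{\perp}$, the fibre $\ker\Gamma^{\top}$ over each admissible $w$, and even rank from the antisymmetry of $\Pi M\Pi$. One small clarification: $\im\Gamma^{\top}=(\ker\Gamma)^{\perp}$ holds for any real matrix, so orthonormality of $N$ is needed only for the parametrization in Theorem~\ref{thm:kernel}, not for that identification.
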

	
	\begin{proof}
		By Theorem~\ref{thm:kernel} a kernel element is determined by the pair $(w,y)$,
		since $x$ is fixed by $w$; and $(w,y)\mapsto(u,w)$ is a linear bijection onto
		$\ker f^{(m)}$. The equation $\Gamma^{\top}y=Mw$ is solvable if and only if
		$Mw\in\im\Gamma^{\top}=(\ker\Gamma)^{\perp}$, that is $\Pi Mw=0$. For
		$w\in\ker\Gamma$ one has $\Pi w=w$, so the admissible $w$ form
		$W=\ker(\Pi M\Pi|_{\ker\Gamma})$, of dimension
		$\dim\ker\Gamma-\rank(\Pi M\Pi)$; note $\rank(\Pi M\Pi)$ as a $k\times k$ matrix
		equals the rank of the induced endomorphism of $\ker\Gamma$. For each admissible
		$w$ the fibre of $y$'s is a translate of $\ker\Gamma^{\top}$, and the fibration
		is linear, so dimensions add. Finally $\Pi M\Pi$ is antisymmetric and a real
		antisymmetric matrix has even rank.
	\end{proof}
	
	\begin{corollary}[Regularity]
		\label{cor:regularity}
		Suppose $k=d$. If $\det\Gamma\neq0$ then $\ker f^{(m)}=0$. Conversely, if
		$\det\Gamma=0$ then $\dim\ker f^{(m)}\geq\dim\ker\Gamma\geq1$, because
		$\rank(\Pi M\Pi)\leq\dim\ker\Gamma$. Hence
		$\det f^{(m)}\neq0\iff\det\Gamma\neq0$.
	\end{corollary}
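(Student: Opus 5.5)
The plan is to read Corollary~\ref{cor:regularity} off the dimension formula of Corollary~\ref{cor:dim}, together with the elementary fact that a square matrix is invertible if and only if its kernel is trivial. Since $k=d$, the matrix $\Gamma=N^{\top}B$ is square of size $d\times d$, so rank--nullity gives $\dim\ker\Gamma=\dim\ker\Gamma^{\top}$. This equality is the only place where the hypothesis $k=d$ enters.

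\emph{Forward direction.} If $\det\Gamma\neq0$, then $\ker\Gamma=0$ and $\ker\Gamma^{\top}=0$. Hence $\Pi=0$, so $\Pi M\Pi=0$ and all three terms in \eqref{eq:dim} vanish, giving $\ker f^{(m)}=0$. One can also see this directly from Theorem~\ref{thm:kernel}: condition (a) forces $w=0$; then (b) reads $\Gamma^{\top}y=0$, which forces $y=0$; and (c) gives $x=0$.

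\emph{Converse.} If $\det\Gamma=0$, then $\dim\ker\Gamma\geq1$. The step needing care is the inequality $\rank(\Pi M\Pi)\leq\dim\ker\Gamma$. I would justify it from the remark in the proof of Corollary~\ref{cor:dim}: the rank of $\Pi M\Pi$ as a $k\times k$ matrix equals the rank of the induced endomorphism of $\ker\Gamma$. Alternatively, note that $\im(\Pi M\Pi)\subseteq\im\Pi=\ker\Gamma$. Substituting into \eqref{eq:dim},
\[
\dim\ker f^{(m)}\geq\dim\ker\Gamma^{\top}+\dim\ker\Gamma-\dim\ker\Gamma=\dim\ker\Gamma^{\top}=\dim\ker\Gamma\geq1 .
\]
This gives the stated bound $\dim\ker f^{(m)}\geq\dim\ker\Gamma$, and the final equality in the display uses squareness.

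The equivalence $\det f^{(m)}\neq0\iff\det\Gamma\neq0$ then follows because $f^{(m)}$ is a square matrix of size $(n+k)\times(n+k)$, so it is nonsingular exactly when its kernel is zero.

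One bookkeeping point needs checking. Corollary~\ref{cor:dim} was stated for the orthonormal frame $N=N_{0}$, whereas $\det\Gamma\neq0$ refers to an arbitrary basis $N$. I would note that changing basis replaces $N$ by $NG$ with $G$ invertible, so $\Gamma$ becomes $G^{\top}\Gamma$. This does not change whether $\det\Gamma$ vanishes, nor the dimensions of $\ker\Gamma$ and $\ker\Gamma^{\top}$.

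The main obstacle is only making the rank inequality rigorous, and the image-containment argument settles it.
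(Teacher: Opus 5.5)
Your argument is correct and is essentially the paper's own. The corollary is read off from the dimension formula of Corollary~\ref{cor:dim}, using $\dim\ker\Gamma=\dim\ker\Gamma^{\top}$ for square $\Gamma$ and the bound $\rank(\Pi M\Pi)\leq\dim\ker\Gamma$, which your containment $\im(\Pi M\Pi)\subseteq\im\Pi=\ker\Gamma$ justifies cleanly. Your check that a change of kernel frame $\Gamma\mapsto S^{\top}\Gamma$ preserves both the vanishing of $\det\Gamma$ and the dimensions of $\ker\Gamma$ and $\ker\Gamma^{\top}$ is useful bookkeeping that the paper leaves implicit here (it appears only later, in Proposition~\ref{prop:basis}).
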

	
	\begin{remark}[$k=d$ is not necessary for regularity]
		\label{rem:knotd}
		Corollary~\ref{cor:regularity} is stated for $k=d$, which is the situation of
		the first bordering round, when one constraint is generated per null direction.
		It should not be read as saying that $k=d$ is necessary for $f^{(m)}$ to be
		regular. Take $d=1$, $k=3$, $\Gamma=(1,0,0)$ and $M$ with $M_{23}=-M_{32}=1$ and
		all other entries zero: then $\ker\Gamma^{\top}=0$, $\dim\ker\Gamma=2$ and
		$\rank(\Pi M\Pi)=2$, so \eqref{eq:dim} gives $\dim\ker f^{(m)}=0$ and the
		bordered matrix is invertible with $k\neq d$. In later rounds, where $k$ may
		exceed $d$, the criterion is \eqref{eq:dim} itself and not the determinant of a
		square $\Gamma$.
	\end{remark}
	
	The two summands of \eqref{eq:dim} are not interchangeable. The next proposition
	identifies the first; Section~\ref{sec:firstclass} identifies the second.
	
	\begin{proposition}[The multiplier-free part of the kernel]
		\label{prop:mfree}
		The set of kernel elements with vanishing multiplier block is
		\begin{equation}
			\{(u,0)\in\ker f^{(m)}\}
			=\bigl(\ker f^{(0)}\cap\ker B^{\top}\bigr)\times\{0\}
			\;\cong\;\ker\Gamma^{\top},
			\label{eq:mfree}
		\end{equation}
		the isomorphism being $u=Ny\mapsto y$. These are exactly the null directions of
		$f^{(0)}$ tangent to $\Sigma$.
	\end{proposition}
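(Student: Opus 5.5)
The plan is to specialize Theorem~\ref{thm:kernel} to $w=0$ and read off each condition. With $w=0$, condition (a) $\Gamma w=0$ holds trivially, and condition (c) gives $x=-J^{-1}B_{u}\cdot 0=0$. Hence $u=Q\binom{0}{y}=N_{0}y$, so $u\in\ker f^{(0)}$. Condition (b) then reduces to $B^{\top}u=0$, or equivalently $\Gamma^{\top}y=M\cdot 0=0$. So $(u,0)\in\ker f^{(m)}$ holds exactly when $u\in\ker f^{(0)}$ and $B^{\top}u=0$. For a self-contained check, the converse can also be read directly off the defining equations $f^{(0)}u+Bw=0$ and $B^{\top}u=0$ at $w=0$. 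This proves the equality in \eqref{eq:mfree}.

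For the isomorphism, I use the map $y\mapsto Ny$. It is injective because $N$ is a basis of $\ker f^{(0)}$, and it is onto $\ker f^{(0)}$. The identity $B^{\top}Ny=(N^{\top}B)^{\top}y=\Gamma^{\top}y$ shows that $Ny\in\ker B^{\top}$ if and only if $y\in\ker\Gamma^{\top}$. The map therefore restricts to a linear bijection $\ker\Gamma^{\top}\to\ker f^{(0)}\cap\ker B^{\top}$. This part of the argument does not need $N$ to be orthonormal: any basis works, since the definition of $\Gamma$ in \eqref{eq:GammaM} already allows an arbitrary $N$. The only place orthonormality enters is the identification $y=N^{\top}u$ inherited from Theorem~\ref{thm:kernel}.

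The geometric statement is then a matter of unpacking $B^{\top}u=0$ componentwise: it says $u^{i}\pd{i}\Om_{\alpha}=0$ for every $\alpha$. At points of $\Sigma$ where the gradients define the tangent space (regularity of the constraints), this is precisely the condition that $u$ is tangent to $\Sigma$. Intersecting with $\ker f^{(0)}$ gives the null directions of $f^{(0)}$ tangent to $\Sigma$.

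I expect the main obstacle to lie only in this last interpretive sentence. Tangency is equivalent to $B^{\top}u=0$ only when the constraint gradients are independent, or at least when they span the conormal of $\Sigma$. I would state this as the standing regularity hypothesis, or else define ``tangent'' infinitesimally as annihilation of all $\dd\Om_{\alpha}$. The algebraic content itself is immediate from Theorem~\ref{thm:kernel}.
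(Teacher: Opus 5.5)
Your proof is correct and follows the paper's argument: specialize Theorem~\ref{thm:kernel} to $w=0$ to get $x=0$ and $u=Ny$, read off $B^{\top}u=\Gamma^{\top}y=0$ from (b), and identify $B^{\top}u=0$ with tangency under regularity of $\Sigma$. Your two additions are correct refinements that leave the route unchanged: the direct check from $f^{(0)}u+Bw=0$, $B^{\top}u=0$ at $w=0$, and the remark that the isomorphism $y\mapsto Ny$ holds for any basis $N$, since $B^{\top}Ny=\Gamma^{\top}y$ is true by definition.
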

	
	\begin{proof}
		For $w=0$, Theorem~\ref{thm:kernel} reads $x=0$ and $\Gamma^{\top}y=0$; so
		$u=Ny$ with $y\in\ker\Gamma^{\top}$, and $B^{\top}u=\Gamma^{\top}y=0$.
		Conversely any such $u$ gives a kernel element. Tangency to $\Sigma$ is
		$B^{\top}u=0$ under regularity of $\Sigma$.
	\end{proof}
	
	\begin{remark}[Automatic tangency]
		\label{rem:tangency}
		Item (b) of Theorem~\ref{thm:kernel} says that \emph{every} element of
		$\ker f^{(m)}$ is tangent to $\Sigma$ in its physical block: bordering enforces
		tangency algebraically, a property that in the Dirac formalism is established a
		posteriori.
	\end{remark}
	
	\section{What \texorpdfstring{$\Gamma$}{Gamma} and \texorpdfstring{$M$}{M} really are}
	\label{sec:dirac}
	
	The bordered matrix is built from two reduced objects, $\Gamma$ and $M$. They are
	routinely treated as one, and, as we now show, incorrectly. This section
	identifies each of them inside the Dirac--Bergmann picture, and does so without
	leaving the frame in which $\Gamma$, $B_{u}$, $M$ and the Pfaffian factors
	$\mu_{\ell}$ were defined in Section~\ref{sec:conventions}. The identification
	requires an excursion into the cotangent bundle $T^{*}\Mm$, and we are explicit
	about that: the constraints $\pi_{\alpha}$ that appear below do not live on
	$\Mm$.
	
	\subsection{A symmetry obstruction}
	\label{sec:obstruction}
	
	Suppose the constraints are generated by the consistency condition,
	$\Om_{\beta}=N_{\beta}^{j}\pd{j}V$, and the kernel frame is locally constant
	(hypothesis \textbf{H3}). Then
	\begin{equation}
		\Gamma_{\alpha\beta}=N_{\alpha}^{i}\pd{i}\Om_{\beta}
		=\sum_{i,j}N_{\alpha}^{i}\bigl(\pd{i}\pd{j}V\bigr)N_{\beta}^{j},
		\label{eq:hessian}
	\end{equation}
	Identity \eqref{eq:hessian} is correct, and it is the kernel--Hessian identity
	established in \cite{CLMCP2026}. What cannot be sustained is the further
	identification made there of the right-hand side with the bracket matrix of the
	generated constraints, and hence of $\Gamma$ with the Dirac constraint matrix.
	The obstruction is one of symmetry: the right-hand side of \eqref{eq:hessian} is
	\emph{symmetric} in $(\alpha,\beta)$ because the Hessian is, whereas a matrix of
	Poisson brackets $\{\Om_{\alpha},\Om_{\beta}\}$ is \emph{antisymmetric}. Two
	matrices that are simultaneously symmetric and antisymmetric vanish, so the
	identification forces $\Gamma=0$. It is already visible in the smallest possible
	case.
	
	\begin{example}[Minimal obstruction, $d=k=1$]
		\label{ex:d1}
		Let $L=q_{2}\dot q_{1}-\tfrac{m}{2}q_{3}^{2}$ with $m\neq0$. Then
		$\ker f^{(0)}=\spn\{e_{3}\}$, the generated constraint is $\Om=\pd{3}V=mq_{3}$,
		and $\Gamma=(m)$, a nonzero $1\times1$ matrix. The bracket matrix of the single
		constraint is $(\{\Om,\Om\})=(0)$, because every antisymmetric $1\times1$ matrix
		vanishes. Thus $\det\Gamma=m\neq0$ while the constraint bracket matrix is
		singular: a biconditional of the form
		$\det f^{(m)}\neq0\iff\det\{\Om_{\alpha},\Om_{\beta}\}\neq0$ cannot hold in this
		reading whenever $d$ is odd.
		
		The remedy is already contained in the example. Adjoining to $\Om$ the primary
		constraint $\pi$ attached to the degenerate direction $e_{3}$ produces the
		$2\times2$ matrix
		\[
		C=\begin{pmatrix}0&-m\\ m&0\end{pmatrix},
		\qquad
		\det C=m^{2}=\det(\Gamma)^{2}=\det f^{(1)} ,
		\]
		and the biconditional is restored. What fails is not the criterion but the
		identification of the object it refers to: $\Gamma$ is a block of $C$, not $C$.
	\end{example}
	
	The rest of the section makes this precise in general. Two ingredients are
	needed, and it is worth separating them at the outset: a canonical extension in
	which the primary constraints exist, and a piece of linear algebra about
	Pfaffians of block antisymmetric matrices. Only the first carries physical
	content; the second is used to control determinants and is stated so that it
	drags no hypothesis with it.
	
	\subsection{The canonical extension and the reduced constraint set}
	\label{sec:extension}
	
	The Lagrangian \eqref{eq:lag} is defined on $\Mm$, on which there are no momenta.
	The comparison with Dirac--Bergmann theory therefore takes place on the
	cotangent bundle, and we set it up explicitly rather than by analogy.
	
	\begin{lemma}[Canonical extension]
		\label{lem:extension}
		Let $T^{*}\Mm$ carry coordinates $(\xi^{A},p_{A})$ and the canonical bracket
		$\{\xi^{A},p_{B}\}_{\mathrm{can}}=\delta^{A}_{B}$. The Legendre transform of
		\eqref{eq:lag} is maximally degenerate and produces the $n$ primary constraints
		\begin{equation}
			\chi_{A}:=p_{A}-a_{A}(\xi)\approx0,
			\qquad
			\{\chi_{A},\chi_{B}\}_{\mathrm{can}}=f^{(0)}_{AB},
			\label{eq:chi}
		\end{equation}
		while the canonical Hamiltonian is
		\begin{equation}
			H=p_{A}\dot\xi^{A}-L=\chi_{A}\dot\xi^{A}+V\;\approx\;V .
			\label{eq:H}
		\end{equation}
	\end{lemma}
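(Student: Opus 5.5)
The proof is a direct computation in three steps, carried out in the order the lemma states its claims.

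First, the Legendre transform. From \eqref{eq:lag} the canonical momenta are $p_{A}=\partial L/\partial\dot\xi^{A}=a_{A}(\xi)$. These do not depend on the velocities, so the Hessian $\partial^{2}L/\partial\dot\xi^{A}\partial\dot\xi^{B}$ vanishes identically and has rank zero. This is what "maximally degenerate" means here. None of the $n$ relations $p_{A}=a_{A}(\xi)$ can be solved for a velocity, so each one is a primary constraint $\chi_{A}=p_{A}-a_{A}\approx0$. They are independent on $T^{*}\Mm$ because $\partial\chi_{A}/\partial p_{B}=\delta_{A}^{B}$, so their differentials are linearly independent everywhere.

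Second, the brackets. We expand
\[
\{\chi_{A},\chi_{B}\}_{\mathrm{can}}
=\{p_{A},p_{B}\}-\{p_{A},a_{B}\}-\{a_{A},p_{B}\}+\{a_{A},a_{B}\}.
\]
The first and last terms vanish, since the $p$'s Poisson-commute among themselves and the $a$'s depend only on $\xi$. With the convention $\{\xi^{C},p_{D}\}=\delta^{C}_{D}$ and the Leibniz rule we have $\{g(\xi),p_{B}\}=\pd{B}g$ and $\{p_{A},g(\xi)\}=-\pd{A}g$. The middle terms therefore contribute $+\pd{A}a_{B}-\pd{B}a_{A}$, which is $f^{(0)}_{AB}$ by \eqref{eq:f}. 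The only point needing care is the overall sign, which depends on the bracket convention. I would check it on the $n=2$ example $a=(\xi^{2},0)$: there $\chi_{1}=p_{1}-\xi^{2}$ and $\chi_{2}=p_{2}$, so $\{\chi_{1},\chi_{2}\}=-\{\xi^{2},p_{2}\}=-1=f_{12}$.

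Third, the Hamiltonian. We have $H=p_{A}\dot\xi^{A}-a_{A}\dot\xi^{A}+V=\chi_{A}\dot\xi^{A}+V$, which is an algebraic identity. The velocities appear only multiplied by $\chi_{A}$, so $H\approx V$ on the primary constraint surface. Equivalently, the canonical Hamiltonian is $V$, and the velocities survive only as the multipliers of the primary constraints in the total Hamiltonian.

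None of these steps is a real obstacle; the only delicate point is the sign bookkeeping in the second step. It has to be consistent with Remark~\ref{rem:sign}, because it fixes the orientation of the blocks of $C$ in \eqref{eq:intro-C} later.
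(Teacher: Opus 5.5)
Your proposal is correct and follows the paper's proof. The paper likewise reads off the $n$ primary constraints from $p_{A}=a_{A}(\xi)$, expands the bracket so that only the two cross terms survive to give $\partial_{A}a_{B}-\partial_{B}a_{A}=f^{(0)}_{AB}$, and obtains \eqref{eq:H} by substituting $L$; your independence argument via $\partial\chi_{A}/\partial p_{B}=\delta_{A}^{B}$ and your $n=2$ sign check are harmless additions the paper omits.
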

	
	\begin{proof}
		Since $L$ is linear in the velocities, $p_{A}=\partial L/\partial\dot\xi^{A}
		=a_{A}(\xi)$, so all $n$ momenta are constrained. For the bracket,
		\[
		\{\chi_{A},\chi_{B}\}_{\mathrm{can}}
		=-\{p_{A},a_{B}\}_{\mathrm{can}}-\{a_{A},p_{B}\}_{\mathrm{can}}
		=\pd{A}a_{B}-\pd{B}a_{A}=f^{(0)}_{AB},
		\]
		the remaining two terms vanishing because $\{p_{A},p_{B}\}=\{a_{A},a_{B}\}=0$.
		Equation \eqref{eq:H} is the definition of $H$ with $L$ substituted.
	\end{proof}
	
	Lemma~\ref{lem:extension} says that the presymplectic matrix of the first-order
	Lagrangian \emph{is} the constraint matrix of its own primary constraints. Its
	rank $r$ therefore splits the $\chi_{A}$ into a second-class part and a
	remainder, and the splitting is the one already fixed in
	Section~\ref{sec:conventions}.
	
	\begin{definition}[Second-class part and null-direction constraints]
		\label{def:pi}
		With $Q=(Q_{R}\;N)$ as in Lemma~\ref{lem:canonical}, set
		\begin{equation}
			\chi^{i}_{R}:=Q_{R}^{Ai}\chi_{A}\quad(i=1,\dots,r),
			\qquad
			\pi_{\alpha}:=N_{\alpha}^{A}\chi_{A}\quad(\alpha=1,\dots,d).
			\label{eq:pi}
		\end{equation}
		The matrix $\{\chi_{R}^{i},\chi_{R}^{j}\}_{\mathrm{can}}=J^{ij}$ is invertible,
		so the $\chi_{R}^{i}$ are the second-class part of the primary set; the
		$\pi_{\alpha}$ are the combinations along the degenerate directions.
	\end{definition}
	
	Three points deserve emphasis, because the literature and our own earlier
	formulation blur them. First, $\pi_{\alpha}$ is not a coordinate momentum of
	$\Mm$ and is not obtained by declaring $z^{\alpha}$ to be a coordinate: it is a
	particular linear combination of the primary constraints, living on
	$T^{*}\Mm$. Second, the full primary set is $\{\chi_{A}\}$, all $n$ of them; the
	$d$ functions $\pi_{\alpha}$ are not ``the'' primary constraints but the
	first-class-candidate part of them. Third, and this is what keeps the section
	internally consistent, definition \eqref{eq:pi} uses \emph{the same} frame
	$N$ that defines $\Gamma=N^{\top}B$ and, when $N=N_{0}$ is the orthonormal frame
	of Lemma~\ref{lem:canonical}, the same one that defines $B_{u}$, $M$ and the
	factors $\mu_{\ell}$.
	
	Eliminating the second-class part gives the bracket in which everything below is
	computed, and at the same time supplies the justification, absent from
	Definition~\ref{def:fjbracket}, that the Faddeev--Jackiw bracket is a bracket at
	all.
	
	\begin{definition}[Faddeev--Jackiw bracket]
		\label{def:fjbracket}
		For $F,G\in\Rring$ set
		$\{F,G\}_{\FJ}:=(\nabla F)^{\top}f^{+}(\nabla G)$, with
		$f^{+}=Q_{R}J^{-1}Q_{R}^{\top}$. When $f^{(0)}$ is invertible this reduces to
		$\{F,G\}=\pd{A}F\,(f^{-1})^{AB}\pd{B}G$, which by Lemma~\ref{lem:eom} satisfies
		$\dot F=\{F,V\}$.
	\end{definition}
	
	\begin{lemma}[The Faddeev--Jackiw bracket is a reduced Dirac bracket]
		\label{lem:fjdirac}
		Let $\{\cdot,\cdot\}_{D}$ denote the Dirac bracket built from the second-class
		set $\{\chi_{R}^{i}\}$ of Definition~\ref{def:pi}. For functions $F,G$ of $\xi$
		alone,
		\begin{equation}
			\{F,G\}_{D}\approx(\nabla F)^{\top}f^{+}(\nabla G)=\{F,G\}_{\FJ} ,
			\label{eq:fjdirac}
		\end{equation}
		with equality modulo the primary constraints $\chi_{A}$, and with strong
		equality whenever $Q_{R}$ is locally constant. In particular
		$\{\cdot,\cdot\}_{\FJ}$ satisfies the Jacobi identity on such functions.
	\end{lemma}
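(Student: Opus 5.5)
The plan is to compute the Dirac bracket directly from its definition,
\[
\{F,G\}_{D}=\{F,G\}_{\mathrm{can}}-\{F,\chi_{R}^{i}\}_{\mathrm{can}}\,(C_{R}^{-1})_{ij}\,\{\chi_{R}^{j},G\}_{\mathrm{can}},
\qquad (C_{R})^{ij}=\{\chi_{R}^{i},\chi_{R}^{j}\}_{\mathrm{can}},
\]
and then to evaluate each ingredient for $F,G$ functions of $\xi$ alone.

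\textbf{Step 1.} Because $F$ and $G$ do not depend on the momenta, $\{F,G\}_{\mathrm{can}}=0$. This leaves only the correction term.

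\textbf{Step 2.} Compute the mixed brackets. Write $\chi_{R}^{i}=Q_{R}^{Ai}(\xi)\chi_{A}$. Then
\[
\{F,\chi_{R}^{i}\}_{\mathrm{can}}
=Q_{R}^{Ai}\{F,\chi_{A}\}_{\mathrm{can}}+\chi_{A}\{F,Q_{R}^{Ai}\}_{\mathrm{can}}
=Q_{R}^{Ai}\pd{A}F,
\]
since $\{F,p_{A}\}_{\mathrm{can}}=\pd{A}F$ and $\{F,Q_{R}^{Ai}\}_{\mathrm{can}}=0$. So this holds exactly, with no weak equality needed, because $F$ and $Q_{R}$ both depend on $\xi$ only. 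Likewise $\{\chi_{R}^{j},G\}_{\mathrm{can}}=-Q_{R}^{Bj}\pd{B}G$.

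\textbf{Step 3.} Compute $C_{R}$. Expanding by Leibniz,
\[
\{\chi_{R}^{i},\chi_{R}^{j}\}_{\mathrm{can}}
=Q_{R}^{Ai}Q_{R}^{Bj}\{\chi_{A},\chi_{B}\}_{\mathrm{can}}+(\text{terms each carrying a factor }\chi_{A}).
\]
The extra terms come from $\{Q_{R}^{Ai},\chi_{B}\}$ and $\{\chi_{A},Q_{R}^{Bj}\}$, which are derivatives of $Q_{R}$. They therefore vanish weakly, and vanish identically when $Q_{R}$ is locally constant. By Lemma~\ref{lem:extension} and Lemma~\ref{lem:canonical} the leading term is $(Q_{R}^{\top}f^{(0)}Q_{R})^{ij}=J^{ij}$. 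Hence $C_{R}\approx J$, invertible, and its inverse agrees with $J^{-1}$ up to terms in the ideal generated by the $\chi_{A}$; this is a Neumann-series or adjugate argument on an invertible matrix perturbed by such terms.

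\textbf{Step 4.} Assemble the pieces:
\[
\{F,G\}_{D}\approx\pd{A}F\,Q_{R}^{Ai}(J^{-1})_{ij}Q_{R}^{Bj}\pd{B}G=(\nabla F)^{\top}f^{+}\nabla G,
\]
with the overall sign to be checked against the sign in Step 2. When $Q_{R}$ is locally constant the equality is strong.

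\textbf{Step 5 (Jacobi).} The Dirac bracket satisfies the Jacobi identity identically on $T^{*}\Mm$; this is standard. In the strong case the Jacobi identity therefore transfers verbatim to $\{\cdot,\cdot\}_{\FJ}$ on functions of $\xi$, because $\{F,G\}_{\FJ}$ is again a function of $\xi$ alone, so iterating is legitimate. In the weak case one needs that the Dirac bracket of a weakly vanishing function with anything vanishes weakly, which holds for the second-class $\chi_{R}$ but not automatically for the $\pi_{\alpha}$. I would therefore state Jacobi only in the strong (locally constant) setting, which is what the lemma's ``on such functions'' appears to intend.

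The main obstacle is tracking the non-constant frame terms in Step 3 and making sure they are genuinely proportional to $\chi_{A}$. The point is that $\{Q_{R}^{Ai},\chi_{B}\}$ multiplies $\chi_{A}$ and never $p$ alone; if so they are weakly zero. The delicate part is the inverse $C_{R}^{-1}$ near points where the rank jumps; I would restrict to a stratum of constant rank $r$.
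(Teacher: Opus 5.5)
Your derivation of the bracket identity follows the paper's proof step for step: $\{F,G\}_{\mathrm{can}}=0$, the exact mixed brackets, $\mathcal{C}=J+O(\chi)$ with the remainder built from derivatives of $Q_R$, then substitution. The sign you left open works out, since the minus of the Dirac correction cancels the minus in $\{\chi_R^j,G\}_{\mathrm{can}}$. For the inverse, the identity $\mathcal{C}^{-1}-J^{-1}=\mathcal{C}^{-1}(J-\mathcal{C})J^{-1}$ puts the difference in the ideal of the $\chi_A$ directly, without the smallness a Neumann series would need.

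The real difference is the Jacobi identity. The paper only cites ``the standard property of the Dirac bracket''. That argument transfers only in the strong case, as you say. Your diagnosis of the weak case is also right. Writing $\{G,H\}_D=\{G,H\}_{\FJ}+\chi_AX^A$, the term $\{F,\chi_A\}_DX^A$ survives, because $\{F,\chi_A\}_D\approx(NN^{\top}\nabla F)_A$ for orthonormal $N$; the $\pi_\alpha$ are not eliminated by the reduction.

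Read without the constancy hypothesis, the lemma's last sentence is in fact false. Take $a=(0,\,x,\,\tfrac12(x^2+y^2))$ on $\R^3$. Then $f^{(0)}$ has constant rank $2$, its kernel is spanned by $X=(y,-x,1)$, and $f^{+}w=|X|^{-2}X\times w$. Hence $\{F,G\}_{\FJ}=|X|^{-2}X\cdot(\nabla G\times\nabla F)$, which obeys Jacobi if and only if $X\cdot\operatorname{curl}X=0$, whereas here $X\cdot\operatorname{curl}X=-2$.

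For a non-constant frame, Jacobi survives only on functions with $N^{\top}\nabla F=0$. On those, $\{\cdot,\cdot\}_{\FJ}$ is the reduced symplectic bracket and is also splitting-independent by the argument of Proposition~\ref{prop:invariance}. So restricting the Jacobi claim to the locally constant case, as you do, is the correct statement rather than a gap.
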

	
	\begin{proof}
		For $F,G$ functions of $\xi$ only one has $\{F,G\}_{\mathrm{can}}=0$, while
		$\{F,\chi_{R}^{i}\}_{\mathrm{can}}=Q_{R}^{Ai}\pd{A}F$ and
		$\{\chi_{R}^{j},G\}_{\mathrm{can}}=-Q_{R}^{Bj}\pd{B}G$ exactly, because
		$Q_{R}^{Ai}$ is a function of $\xi$ and therefore commutes with $F$ and $G$.
		Hence
		\[
		\{F,G\}_{D}
		=-\{F,\chi_{R}^{i}\}\,(\mathcal{C}^{-1})_{ij}\,\{\chi_{R}^{j},G\}
		=(\nabla F)^{\top}Q_{R}\,\mathcal{C}^{-1}Q_{R}^{\top}(\nabla G),
		\qquad
		\mathcal{C}_{ij}:=\{\chi_{R}^{i},\chi_{R}^{j}\}_{\mathrm{can}} .
		\]
		Expanding the brackets of the $\chi$'s as in the proof of Theorem~\ref{thm:dirac} gives $\mathcal{C}_{ij}=J_{ij}+O(\chi)$, with
		$J=Q_{R}^{\top}f^{(0)}Q_{R}$ and with the remainder vanishing identically when
		$Q_{R}$ is locally constant; substituting yields \eqref{eq:fjdirac}. Jacobi is
		the standard property of the Dirac bracket.
	\end{proof}
	
	\begin{remark}[Scope of Lemma~\ref{lem:fjdirac}]
		The identification is asserted only for functions on $\Mm$, which is all that is
		needed: the constraints $\Om_{\alpha}$ and the potential $V$ are of that type.
		Nothing is claimed about a bracket structure on $T^{*}\Mm$ beyond the standard
		Dirac construction, and $f^{+}$ is not a Poisson bivector on $\Mm$ in its own
		right; it inherits Jacobi from the reduction.
	\end{remark}
	
	\subsection{The three blocks}
	\label{sec:threeblocks}
	
	\begin{theorem}[Block structure of the reduced constraint matrix]
		\label{thm:dirac}
		Let $N$ be any smooth frame of $\ker f^{(0)}$, let $\pi_{\alpha}$ be as in
		\eqref{eq:pi}, and let $\Om_{1},\dots,\Om_{k}$ be the accumulated constraints,
		functions of $\xi$ alone. Then, in the Dirac bracket of
		Lemma~\ref{lem:fjdirac},
		\begin{equation}
			\{\pi_{\alpha},\pi_{\beta}\}_{D}\approx0,
			\qquad
			\{\pi_{\alpha},\Om_{\beta}\}_{D}\approx-\Gamma_{\alpha\beta},
			\qquad
			\{\Om_{\alpha},\Om_{\beta}\}_{D}\approx M_{\alpha\beta},
			\label{eq:blocks}
		\end{equation}
		where $\approx$ denotes equality modulo the primary constraints $\chi_{A}$, that
		is, on the primary constraint surface, which is the setting in which a
		constraint algebra is always evaluated. If the frame data are locally constant,
		\begin{equation}
			\pd{A}N_{\beta}^{B}=0
			\quad\text{and}\quad
			\pd{A}Q_{R}^{Bi}=0
			\qquad\text{on the neighbourhood considered,}
			\label{eq:constframe}
		\end{equation}
		then all the correction terms vanish and the three relations of
		\eqref{eq:blocks} are strong equalities,
		\begin{equation}
			\{\pi_{\alpha},\pi_{\beta}\}_{D}=0,
			\qquad
			\{\pi_{\alpha},\Om_{\beta}\}_{D}=-\Gamma_{\alpha\beta},
			\qquad
			\{\Om_{\alpha},\Om_{\beta}\}_{D}=M_{\alpha\beta}.
			\label{eq:blocksstrong}
		\end{equation}
		Consequently the matrix of brackets of the reduced constraint set
		$(\pi_{1},\dots,\pi_{d},\Om_{1},\dots,\Om_{k})$, evaluated on the primary
		surface, is
		\begin{equation}
			C=\begin{pmatrix}0&-\Gamma\\ \Gamma^{\top}&M\end{pmatrix}
			\in\R^{(d+k)\times(d+k)},
			\qquad C^{\top}=-C .
			\label{eq:Cmatrix}
		\end{equation}
		No hypothesis on the frame, and in particular not \textbf{H3}, is used for
		\eqref{eq:blocks} or for \eqref{eq:Cmatrix}; condition \eqref{eq:constframe} is
		needed only to upgrade $\approx$ to $=$.
	\end{theorem}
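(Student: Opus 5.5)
The plan is to compute each bracket directly from the definition of the Dirac bracket built on the second-class set $\{\chi_R^i\}$:
\[
\{F,G\}_{D}=\{F,G\}_{\mathrm{can}}-\{F,\chi_R^i\}_{\mathrm{can}}(\mathcal{C}^{-1})_{ij}\{\chi_R^j,G\}_{\mathrm{can}},
\qquad \mathcal{C}_{ij}=\{\chi_R^i,\chi_R^j\}_{\mathrm{can}}.
\]
First I would expand $\mathcal{C}$. Writing $\chi_R^i=Q_R^{Ai}\chi_A$ and using the Leibniz rule together with \eqref{eq:chi}, one gets
\[
\mathcal{C}_{ij}=Q_R^{Ai}Q_R^{Bj}f^{(0)}_{AB}+(\text{terms containing }\chi_A\text{ or }\{\chi_A,Q_R\}\chi).
\]
Every correction term carries an explicit factor $\chi$, because $\{Q_R^{Ai},\chi_B\}=-\pd{B}Q_R^{Ai}$ multiplies $\chi_A$. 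Hence $\mathcal{C}=J+O(\chi)$, and these terms vanish identically under \eqref{eq:constframe}. The same expansion applied to $\pi_\alpha=N_\alpha^A\chi_A$ gives
\[
\{\pi_\alpha,\pi_\beta\}_{\mathrm{can}}\approx N^\top f^{(0)}N=0,
\qquad
\{\pi_\alpha,\chi_R^i\}_{\mathrm{can}}\approx (N^\top f^{(0)}Q_R)_{\alpha i}=0,
\]
both because $f^{(0)}N=0$. The correction terms are again proportional to $\chi$ times derivatives of $N$ or $Q_R$.

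Second, the three blocks. For $\{\pi_\alpha,\pi_\beta\}_D$, both the canonical term and the correction term vanish weakly, since $\{\pi,\chi_R\}\approx0$. For $\{\Om_\alpha,\Om_\beta\}_D$ I would invoke Lemma~\ref{lem:fjdirac} directly. It gives $(\nabla\Om_\alpha)^\top f^+\nabla\Om_\beta=(B^\top f^+B)_{\alpha\beta}=M_{\alpha\beta}$ by \eqref{eq:Mpseudo}, weakly in general and strongly when $Q_R$ is constant. For the mixed block, $\Om_\beta$ depends on $\xi$ only, so
\[
\{\pi_\alpha,\Om_\beta\}_{\mathrm{can}}=N_\alpha^A\{p_A,\Om_\beta\}+\chi_A\{N_\alpha^A,\Om_\beta\}=-N_\alpha^A\pd{A}\Om_\beta+0.
\]
Here the second term is zero exactly, since both factors are functions of $\xi$. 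The result is $-\Gamma_{\alpha\beta}$. The Dirac correction contains the factor $\{\pi_\alpha,\chi_R^i\}\approx0$, so it drops out weakly, and exactly under \eqref{eq:constframe}. Note that for the mixed block only constancy of $N$ is needed to kill the correction, since $N^\top f^{(0)}=0$ holds pointwise.

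Finally I would assemble $C$. I would order the constraints as $(\pi,\Om)$ and use antisymmetry of the Dirac bracket to get the lower-left block $\{\Om_\beta,\pi_\alpha\}_D=\Gamma_{\alpha\beta}$, that is, the block $\Gamma^\top$. Then $C^\top=-C$ follows from $M^\top=-M$. None of these steps used \textbf{H3}, only $f^{(0)}N=0$ and smoothness of the frame.

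The main obstacle is the bookkeeping of the $O(\chi)$ remainders. One must check that every term arising from $\xi$-dependence of $N$ and $Q_R$ comes multiplied by some $\chi_A$. This includes the inverse $\mathcal{C}^{-1}=J^{-1}+O(\chi)$, which holds by continuity of inversion near an invertible $J$. Only then does the argument legitimately give $\approx$ in general and $=$ under \eqref{eq:constframe}. I would handle this by writing each bracket as a \emph{principal term}, computed with $p_A$ replaced by $a_A$, plus an explicit sum $\chi_A(\cdots)$.
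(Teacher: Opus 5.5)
Your proposal is correct and follows essentially the paper's route. The steps match: the exact identity $\{\pi_{\alpha},\Om_{\beta}\}_{\mathrm{can}}=-\Gamma_{\alpha\beta}$; the expansion showing that $\{\pi_{\alpha},\pi_{\beta}\}_{\mathrm{can}}$ and $\{\pi_{\alpha},\chi_{R}^{i}\}_{\mathrm{can}}$ are combinations of the $\chi_{A}$, so every Dirac correction involving $\pi_{\alpha}$ drops out weakly; Lemma~\ref{lem:fjdirac} for the $\Om$--$\Om$ block; and antisymmetry to assemble $C$.

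Your aside that constancy of $N$ alone makes the mixed block a strong equality is false, although the theorem does not rely on it. Even with $N$ constant, $\{\pi_{\alpha},\chi_{R}^{i}\}_{\mathrm{can}}$ still contains $-N_{\alpha}^{A}(\pd{A}Q_{R}^{Bi})\chi_{B}$. This term is nonzero whenever $Q_{R}$ varies along the null directions, for instance under a $\xi$-dependent rotation inside one $2\times2$ block of $J$, which preserves $J$. It multiplies the generally nonzero factor $\{\chi_{R}^{j},\Om_{\beta}\}_{\mathrm{can}}=-Q_{R}^{Aj}\pd{A}\Om_{\beta}$. That is exactly why \eqref{eq:constframe} requires both $N$ and $Q_{R}$ to be constant.

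One smaller point: $\mathcal{C}^{-1}-J^{-1}$ lies in the ideal of the $\chi_{A}$ because of the identity $\mathcal{C}^{-1}-J^{-1}=-\mathcal{C}^{-1}(\mathcal{C}-J)J^{-1}$. Continuity only gives invertibility of $\mathcal{C}$ near the primary surface. Separately, there is a sign slip that does not affect anything: $\{Q_{R}^{Ai},\chi_{B}\}_{\mathrm{can}}=+\pd{B}Q_{R}^{Ai}$.
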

	
	\begin{proof}
		We first record the canonical brackets, which are exact. Since $N_{\alpha}^{A}$
		and $\Om_{\beta}$ are functions of $\xi$ alone, their mutual bracket vanishes,
		so
		\begin{equation}
			\{\pi_{\alpha},\Om_{\beta}\}_{\mathrm{can}}
			=N_{\alpha}^{A}\{\chi_{A},\Om_{\beta}\}_{\mathrm{can}}
			=N_{\alpha}^{A}\{p_{A},\Om_{\beta}\}_{\mathrm{can}}
			=-N_{\alpha}^{A}\pd{A}\Om_{\beta}
			=-\Gamma_{\alpha\beta},
			\label{eq:canpiOm}
		\end{equation}
		an identity valid on all of $T^{*}\Mm$ and for any frame, by the definition
		$\Gamma=N^{\top}B$ with $B_{A\beta}=\pd{A}\Om_{\beta}$. Likewise
		\[
		\{\pi_{\alpha},\pi_{\beta}\}_{\mathrm{can}}
		=N_{\alpha}^{A}N_{\beta}^{B}f^{(0)}_{AB}
		+\bigl(Q\text{-independent terms}\propto\chi\bigr)
		\approx0,
		\]
		the first summand vanishing because $N_{\beta}\in\ker f^{(0)}$, and the
		remaining ones arising from $\{N_{\alpha}^{A},\chi_{B}\}=\pd{B}N_{\alpha}^{A}$
		and vanishing identically under \eqref{eq:constframe}.
		
		It remains to control the Dirac corrections. Expanding
		$\pi_{\alpha}=N_{\alpha}^{A}\chi_{A}$ and $\chi_{R}^{i}=Q_{R}^{Ai}\chi_{A}$,
		\begin{equation}
			\{\pi_{\alpha},\chi_{R}^{i}\}_{\mathrm{can}}
			=\underbrace{N_{\alpha}^{A}f^{(0)}_{AB}Q_{R}^{Bi}}_{=\,0}
			-N_{\alpha}^{A}\bigl(\pd{A}Q_{R}^{Bi}\bigr)\chi_{B}
			+Q_{R}^{Bi}\bigl(\pd{B}N_{\alpha}^{A}\bigr)\chi_{A}
			\;\approx\;0 ,
			\label{eq:pichiR}
		\end{equation}
		which vanishes identically under \eqref{eq:constframe}. Hence the Dirac
		correction to any bracket involving $\pi_{\alpha}$ is proportional to the
		$\chi_{A}$, and \eqref{eq:canpiOm} gives
		$\{\pi_{\alpha},\Om_{\beta}\}_{D}\approx-\Gamma_{\alpha\beta}$, with strong
		equality under \eqref{eq:constframe}; the same applies to the first block.
		
		The third block is Lemma~\ref{lem:fjdirac} applied to $F=\Om_{\alpha}$,
		$G=\Om_{\beta}$, together with $M=B^{\top}f^{+}B$ from \eqref{eq:Mpseudo}; by
		that lemma the relation is weak for a non-constant $Q_{R}$ and strong under
		\eqref{eq:constframe}. Antisymmetry of $C$ follows from $M^{\top}=-M$ and the
		placement of $-\Gamma$ and $\Gamma^{\top}$, and is exact because $\Gamma$ and
		$M$ are the algebraic objects of \eqref{eq:GammaM}, not the brackets they
		represent.
	\end{proof}
	
	\begin{remark}[What $C$ is, and what it is not]
		\label{rem:whatCis}
		$C$ is the bracket matrix, on the primary constraint surface, of the
		\emph{reduced} constraint set obtained after the second-class primary pairs
		$\chi_{R}^{i}$ have been eliminated. As a matrix of functions, $C$ is defined by
		\eqref{eq:Cmatrix} in terms of the algebraic objects $\Gamma$ and $M$; the
		brackets it represents reproduce those entries weakly in general and strongly
		under \eqref{eq:constframe}. It is not
		the Dirac matrix of the full unreduced primary set $\{\chi_{A}\}$, which is
		$f^{(0)}$ itself by \eqref{eq:chi}, nor of $\{\chi_{A}\}\cup\{\Om_{\alpha}\}$.
		The reduction is exactly the same splitting $\R^{n}=\im f^{(0)}\oplus\ker f^{(0)}$
		that bordering performs, which is why the two constructions see the same
		objects.
	\end{remark}
	
	\begin{remark}[The canonical bracket of the $\Om$'s vanishes]
		\label{rem:canvanishes}
		It is worth stating explicitly what the third identity of \eqref{eq:blocks} does
		\emph{not} say. The $\Om_{\alpha}$ are functions of $\xi$ alone, so
		$\{\Om_{\alpha},\Om_{\beta}\}_{\mathrm{can}}=0$ identically on $T^{*}\Mm$. The
		content of $\{\Om_{\alpha},\Om_{\beta}\}_{D}=M_{\alpha\beta}$ is entirely in the
		Dirac reduction; writing \eqref{eq:blocks} with the canonical bracket would
		force $M=0$ and destroy the statement. All three entries of \eqref{eq:blocks}
		are computed in one and the same bracket.
	\end{remark}
	
	\subsection{\texorpdfstring{$M$}{M} depends on the splitting; \texorpdfstring{$\Pi M\Pi$}{Pi M Pi} does not}
	\label{sec:invariance}
	
	The second-class part of the primary set is not canonically determined: any
	$r$-dimensional subspace on which the restriction of $f^{(0)}$ is nondegenerate
	will serve, and the choice $Q_{R}$ made in Lemma~\ref{lem:canonical} uses the
	Euclidean structure of the coordinates, which is extraneous data. It is
	therefore necessary to ask which of the objects built from it are intrinsic.
	
	\begin{proposition}[Invariance]
		\label{prop:invariance}
		Let $Q_{R}'=Q_{R}A+ND$ with $A\in GL(r,\R)$ and $D\in\R^{d\times r}$ be another
		admissible complement, and let $M'=B^{\top}f'^{+}B$ be the matrix built from it,
		where $f'^{+}=Q_{R}'\bigl(Q_{R}'^{\top}f^{(0)}Q_{R}'\bigr)^{-1}Q_{R}'^{\top}$.
		Writing $E=NDA^{-1}$,
		\begin{equation}
			M'=M+B^{\top}\bigl(EJ^{-1}Q_{R}^{\top}+Q_{R}J^{-1}E^{\top}
			+EJ^{-1}E^{\top}\bigr)B,
			\label{eq:Mprime}
		\end{equation}
		so $M'\neq M$ in general. However $w'^{\top}M'w=w'^{\top}Mw$ for all
		$w,w'\in\ker\Gamma$; equivalently $\Pi M\Pi$, and therefore $\rank(\Pi M\Pi)$,
		is independent of the choice of complement.
	\end{proposition}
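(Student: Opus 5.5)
The plan is to reduce everything to the single observation that the new complement differs from the old one only by a shift inside $\ker f^{(0)}$ followed by an invertible change of basis. Write
\[
Q_{R}'=Q_{R}A+ND=(Q_{R}+E)A,\qquad E=NDA^{-1},
\]
so every column of $E$ lies in $\ker f^{(0)}$. Two consequences follow from this. First, $f^{(0)}E=0$. Second, by antisymmetry, $E^{\top}f^{(0)}=-(f^{(0)}E)^{\top}=0$.

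\textbf{Step 1: the new pseudo-inverse.} We compute $Q_{R}'^{\top}f^{(0)}Q_{R}'=A^{\top}(Q_{R}+E)^{\top}f^{(0)}(Q_{R}+E)A$. The three terms containing $E$ drop out by the two identities above, so this equals $A^{\top}JA$, using $Q_{R}^{\top}f^{(0)}Q_{R}=J$ from Lemma~\ref{lem:canonical}. This matrix is invertible precisely because $A\in GL(r,\R)$, which also confirms that the new complement is admissible. Inverting and substituting, the factors $A$ cancel:
\[
f'^{+}=(Q_{R}+E)A\,A^{-1}J^{-1}A^{-\top}A^{\top}(Q_{R}+E)^{\top}=(Q_{R}+E)J^{-1}(Q_{R}+E)^{\top}.
\]
Expanding this and sandwiching it between $B^{\top}$ and $B$ yields \eqref{eq:Mprime}, since $M=B^{\top}Q_{R}J^{-1}Q_{R}^{\top}B$ by \eqref{eq:Mpseudo}. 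To see that $M'\neq M$ in general, I would point to any instance with $\Gamma\neq0$ and $D\neq0$, for example the second-round data of Remark~\ref{rem:knotd}, or simply note that the correction is generically nonzero.

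\textbf{Step 2: invariance on $\ker\Gamma$.} For $w\in\ker\Gamma$ we have
\[
E^{\top}Bw=A^{-\top}D^{\top}N^{\top}Bw=A^{-\top}D^{\top}\Gamma w=0,
\]
and likewise $w'^{\top}B^{\top}E=(E^{\top}Bw')^{\top}=0$ for $w'\in\ker\Gamma$. Each of the three correction terms in \eqref{eq:Mprime} carries a factor $E^{\top}Bw$ on the right or a factor $w'^{\top}B^{\top}E$ on the left, so all three vanish in $w'^{\top}(\cdot)w$. This gives $w'^{\top}M'w=w'^{\top}Mw$.

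Since $\Pi$ is the orthogonal projector onto $\ker\Gamma$, every entry of $\Pi M\Pi$ in any basis is of the form $(\Pi a)^{\top}M(\Pi b)$ with $\Pi a,\Pi b\in\ker\Gamma$. Hence $\Pi M'\Pi=\Pi M\Pi$, and in particular the ranks agree.

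\textbf{Expected obstacle.} The only delicate point is bookkeeping about frames. $\Gamma$ is defined with the frame $N$ that also appears in $Q_{R}'=Q_{R}A+ND$, and I will use this same $N$ in $E$. I would also remark that $\ker\Gamma=\{w: Bw\perp\ker f^{(0)}\}$ does not depend on which basis $N$ of $\ker f^{(0)}$ is chosen, so $\Pi$ is well defined independently of the frame. This keeps the statement meaningful for non-orthonormal $N$ as well, and the computation itself is otherwise routine.
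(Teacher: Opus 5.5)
Your argument is correct and is essentially the paper's own proof. You factor $Q_R'=(Q_R+E)A$ and use $f^{(0)}E=0=E^{\top}f^{(0)}$ to get $Q_R'^{\top}f^{(0)}Q_R'=A^{\top}JA$, hence $f'^{+}=(Q_R+E)J^{-1}(Q_R+E)^{\top}$. You then expand and kill every correction term on $\ker\Gamma$ through $E^{\top}Bw=A^{-\top}D^{\top}\Gamma w=0$.

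The only slip is in your side remark on $M'\neq M$. Since $B^{\top}E=\Gamma^{\top}DA^{-1}$, the correction is $\Gamma^{\top}DA^{-1}J^{-1}B_u+B_u^{\top}J^{-1}A^{-\top}D^{\top}\Gamma+\Gamma^{\top}DA^{-1}J^{-1}A^{-\top}D^{\top}\Gamma$. This vanishes identically when $k=1$, where both $M$ and $M'$ are zero, so $\Gamma\neq0$ and $D\neq0$ do not suffice; a witness needs $k\geq2$ and a suitably chosen $D$.
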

	
	\begin{proof}
		From $f^{(0)}N=0$ one gets $f^{(0)}Q_{R}'=f^{(0)}Q_{R}A$ and
		$Q_{R}'^{\top}f^{(0)}Q_{R}'=A^{\top}JA$, whence
		$f'^{+}=(Q_{R}+E)J^{-1}(Q_{R}+E)^{\top}$ and \eqref{eq:Mprime} follows by
		expansion. Every correction term carries a factor $E^{\top}B$ or its transpose,
		and for $w\in\ker\Gamma$,
		$E^{\top}Bw=A^{-\top}D^{\top}N^{\top}Bw=A^{-\top}D^{\top}\Gamma w=0$.
	\end{proof}
	
	\begin{remark}[Which objects are intrinsic]
		\label{rem:intrinsic}
		The subspace $\ker\Gamma=\{w\in\R^{k}:Bw\in\im f^{(0)}\}$ is intrinsic, as is
		$\ker f^{(0)}\cap\ker B^{\top}$; consequently both summands of the dimension
		formula \eqref{eq:dim} are intrinsic, and so is the first-class count
		\eqref{eq:fccount} of Section~\ref{sec:firstclass}. What is not intrinsic is
		$M$ itself. Accordingly we shall always say that $M$ is the constraint bracket
		matrix \emph{relative to a choice of second-class splitting}, and every
		statement in which $M$ is used will involve it either through $\Pi M\Pi$ or
		through $\det C$, which by Corollary~\ref{cor:pfC} does not depend on $M$ at
		all.
	\end{remark}
	
	\subsection{A Pfaffian lemma}
	\label{sec:pfaffian}
	
	The determinantal content of $C$ is pure linear algebra and is best isolated
	from everything above.
	
	\begin{lemma}[Block Pfaffian]
		\label{lem:pf}
		Let $A\in\R^{d\times d}$ and let $D\in\R^{d\times d}$ be antisymmetric. Set
		$K=\bigl(\begin{smallmatrix}0&A\\-A^{\top}&D\end{smallmatrix}\bigr)$. Then
		\begin{equation}
			\det K=\det(A)^{2},
			\qquad
			\Pf(K)=(-1)^{d(d-1)/2}\det A ,
			\label{eq:pflemma}
		\end{equation}
		both independent of $D$.
	\end{lemma}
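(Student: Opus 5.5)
Both identities in \eqref{eq:pflemma} will be proved by a single block factorization. The determinant comes from a similarity-type reduction that removes $D$, and the Pfaffian then follows from the transformation rule $\Pf(X^{\top}KX)=\det(X)\Pf(K)$, which holds for every $X$ and every antisymmetric $K$.

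\emph{Step 1, removing $D$ when $A$ is invertible.} Take $X=\bigl(\begin{smallmatrix}I&Y\\0&I\end{smallmatrix}\bigr)$, so that $\det X=1$. A direct multiplication gives
\[
X^{\top}KX=\begin{pmatrix}0&A\\-A^{\top}&-Y^{\top}A+A^{\top}Y+D\end{pmatrix}.
\]
We want $A^{\top}Y-Y^{\top}A=-D$. Set $Y=-\tfrac12A^{-\top}D$. Then $A^{\top}Y=-\tfrac12D$, and $Y^{\top}A=-\tfrac12D^{\top}A^{-1}A=\tfrac12D$ because $D^{\top}=-D$. The lower-right block therefore becomes $-\tfrac12D-\tfrac12D+D=0$. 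Consequently
\[
\Pf(K)=\Pf\begin{pmatrix}0&A\\-A^{\top}&0\end{pmatrix}
\qquad\text{and}\qquad
\det K=\det\begin{pmatrix}0&A\\-A^{\top}&0\end{pmatrix}.
\]

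\emph{Step 2, the $D=0$ case.} Write $\bigl(\begin{smallmatrix}0&A\\-A^{\top}&0\end{smallmatrix}\bigr)=Z^{\top}\bigl(\begin{smallmatrix}0&I\\-I&0\end{smallmatrix}\bigr)Z$ with $Z=\mathrm{diag}(I,A)$. The transformation rule gives $\Pf=\det A\cdot\Pf\bigl(\begin{smallmatrix}0&I\\-I&0\end{smallmatrix}\bigr)$. The standard value is $\Pf\bigl(\begin{smallmatrix}0&I_d\\-I_d&0\end{smallmatrix}\bigr)=(-1)^{d(d-1)/2}$. To check it, reorder the basis into the pairs $(e_i,e_{d+i})$; this reorders into a direct sum of $d$ standard $2\times2$ blocks, each with Pfaffian $1$. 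The reordering permutation has sign $(-1)^{d(d-1)/2}$, and the Pfaffian picks up exactly that sign under a simultaneous row and column permutation. Squaring the Pfaffian, or equivalently swapping block columns, gives $\det=\det(A)^{2}$, since $(\det(-A^{\top}))(\det A)(-1)^{d}=\det(A)^2$ after accounting for the $d^2$ column swaps.

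\emph{Step 3, singular $A$.} Both sides of each identity are polynomial in the entries of $A$ and $D$. The identities hold on the dense set $\det A\neq0$, so they extend by continuity. Alternatively, if $\det A=0$, pick $v\neq0$ with $Av=0$. Then $K\binom{v}{0}=\binom{0}{-A^{\top}\cdot0}$, and in fact $\binom{v}{0}$ is annihilated since the first block row of $K$ has a zero upper-left block. Hence $\det K=0=\Pf(K)$ directly.

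The main obstacle is the sign bookkeeping in Step 2, where the Pfaffian convention must be fixed. I would settle it by checking $d=1$: here $\Pf\bigl(\begin{smallmatrix}0&a\\-a&0\end{smallmatrix}\bigr)=a$ and the sign is $(-1)^0=1$. I would then verify the permutation-sign count $d(d-1)/2$ by induction on $d$. Note that this exponent differs from the $(-1)^{d(d+1)/2}$ stated for $C$ in \eqref{eq:intro-C}. The discrepancy is accounted for by $C$ having $-\Gamma$ in the upper-right block: $\Pf(C)=(-1)^{d(d-1)/2}\det(-\Gamma)=(-1)^{d(d-1)/2+d}\det\Gamma$, and $d(d-1)/2+d=d(d+1)/2$. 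This cross-check confirms the convention.
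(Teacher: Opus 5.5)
Your strategy works and takes a different route from the paper's, but two computations are wrong as written. In Step~1, multiplying out gives $X^{\top}KX=\bigl(\begin{smallmatrix}0&A\\-A^{\top}&Y^{\top}A-A^{\top}Y+D\end{smallmatrix}\bigr)$. The $Y$-terms carry the opposite sign from what you wrote. With your choice $Y=-\tfrac12A^{-\top}D$ the lower-right block is $\tfrac12D+\tfrac12D+D=2D$, so $D$ is doubled, not removed. The fix is $Y=+\tfrac12A^{-\top}D$, for which $Y^{\top}A=-\tfrac12D$ and $A^{\top}Y=\tfrac12D$.

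In the alternative argument of Step~3, $K\binom{v}{0}=\binom{0}{-A^{\top}v}$. This does not vanish for $v\in\ker A$ in general: take $A=\bigl(\begin{smallmatrix}0&1\\0&0\end{smallmatrix}\bigr)$ and $v=e_{1}$. You need $v\in\ker A^{\top}$, which is nontrivial because $A^{\top}$ is singular too. Your continuity argument already covers singular $A$, so only the first slip affects the proof.

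With that corrected, here is how the two routes compare. The paper removes $D$ by the one-sided operation $\mathrm{col}_{2}\mapsto\mathrm{col}_{2}+\mathrm{col}_{1}A^{-\top}D$, with no factor $\tfrac12$ needed. This preserves the determinant, but it is not a congruence, so it gives no direct control of the Pfaffian. The paper therefore obtains $\Pf(K)$ indirectly:
\begin{itemize}
\item $\Pf(K)^{2}=\det(A)^{2}$ is an identity of polynomials in an integral domain;
\item hence $\Pf(K)=\epsilon\det A$ with a single global sign $\epsilon$;
\item the sign is fixed by evaluating at $A=I_{d}$, $D=0$.
\end{itemize}
Your unimodular congruence keeps the matrix antisymmetric throughout. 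The law $\Pf(X^{\top}KX)=\det(X)\Pf(K)$ then delivers both identities in one pass and makes the sign-constancy argument unnecessary. The price is invoking that transformation law and splitting the correction symmetrically, since $Y^{\top}A-A^{\top}Y$ is necessarily antisymmetric. Your Step~2 also derives $\Pf\bigl(\begin{smallmatrix}0&I_{d}\\-I_{d}&0\end{smallmatrix}\bigr)=(-1)^{d(d-1)/2}$ from the interleaving permutation, a value the paper uses at its evaluation point without derivation.
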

	
	\begin{proof}
		Assume first $A$ invertible. The block column operation
		$\mathrm{col}_{2}\mapsto\mathrm{col}_{2}+\mathrm{col}_{1}X$ with
		$X=(A^{\top})^{-1}D$ leaves the determinant unchanged and turns $K$ into
		$\bigl(\begin{smallmatrix}0&A\\-A^{\top}&0\end{smallmatrix}\bigr)$. For that
		matrix,
		\[
		\det\begin{pmatrix}0&A\\-A^{\top}&0\end{pmatrix}
		=(-1)^{d^{2}}\det(A)\det(-A^{\top})
		=(-1)^{d^{2}+d}\det(A)^{2}
		=\det(A)^{2},
		\]
		since $d^{2}+d$ is even. The singular case follows by continuity. For the
		Pfaffian, $\Pf(K)^{2}=\det K=\det(A)^{2}$ is an identity of polynomials in the
		entries of $A$ and $D$; as the polynomial ring is an integral domain,
		$\Pf(K)=\epsilon\det A$ with $\epsilon\in\{\pm1\}$ constant, and evaluating at
		$D=0$, $A=I_{d}$ gives $\epsilon=(-1)^{d(d-1)/2}$.
	\end{proof}
	
	\begin{corollary}[Pfaffian and determinant of $C$]
		\label{cor:pfC}
		If $k=d$ then
		\begin{equation}
			\Pf(C)=(-1)^{d(d+1)/2}\det\Gamma,
			\qquad
			\det C=\det(\Gamma)^{2},
			\label{eq:Pf}
		\end{equation}
		both independent of $M$.
	\end{corollary}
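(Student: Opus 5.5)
The plan is to apply Lemma~\ref{lem:pf} directly to the matrix $C$ of \eqref{eq:Cmatrix}, with
\[
A:=-\Gamma,\qquad D:=M .
\]
The hypotheses of the lemma have to be checked first. The hypothesis $k=d$ makes $\Gamma=N^{\top}B$ a square $d\times d$ matrix, so $A$ is square. The matrix $M=B_{u}^{\top}J^{-1}B_{u}$ is then $d\times d$, and it is antisymmetric because $J^{-1}$ is, as recorded after \eqref{eq:GammaM}. The shape also matches: $-A^{\top}=\Gamma^{\top}$, so
\[
K=\begin{pmatrix}0&A\\-A^{\top}&D\end{pmatrix}=\begin{pmatrix}0&-\Gamma\\ \Gamma^{\top}&M\end{pmatrix}=C
\]
entry by entry. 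Here $C$ is understood as the algebraic matrix of \eqref{eq:Cmatrix}, built from $\Gamma$ and $M$, and not as the bracket matrix it represents weakly (Remark~\ref{rem:whatCis}). Consequently no frame hypothesis such as \eqref{eq:constframe} enters.

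The determinant is then immediate from the lemma:
\[
\det C=\det(-\Gamma)^{2}=\bigl((-1)^{d}\det\Gamma\bigr)^{2}=\det(\Gamma)^{2}.
\]

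For the Pfaffian, the lemma gives $\Pf(C)=(-1)^{d(d-1)/2}\det(-\Gamma)=(-1)^{d(d-1)/2}(-1)^{d}\det\Gamma$. The exponent reduces via the identity $d(d-1)/2+d=d(d+1)/2$, which yields $\Pf(C)=(-1)^{d(d+1)/2}\det\Gamma$. Independence of $M$ is inherited verbatim from the independence of $D$ in Lemma~\ref{lem:pf}.

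The only point requiring care is the sign of the Pfaffian. It depends on the ordering $(\pi_{1},\dots,\pi_{d},\Om_{1},\dots,\Om_{d})$ of the reduced constraint set and on using the same Pfaffian normalization that fixed $\epsilon$ in the lemma at $A=I_{d}$, $D=0$. I would state that the ordering of Theorem~\ref{thm:dirac} is the one used, so that the sign is exactly the one produced by the substitution $A=-\Gamma$. Everything else is routine bookkeeping.
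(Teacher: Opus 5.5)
Your proposal is correct and is the paper's own proof: Lemma~\ref{lem:pf} is applied with $A=-\Gamma$ and $D=M$, and the exponent is simplified via $d(d-1)/2+d=d(d+1)/2$. Your explicit checks of the lemma's hypotheses (squareness of $\Gamma$ from $k=d$, antisymmetry of $M$) and your remark that $C$ is the algebraic matrix of \eqref{eq:Cmatrix} are implicit in the paper's one-line argument.
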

	
	\begin{proof}
		Apply Lemma~\ref{lem:pf} with $A=-\Gamma$ and $D=M$:
		$\Pf(C)=(-1)^{d(d-1)/2}\det(-\Gamma)=(-1)^{d(d-1)/2+d}\det\Gamma$, and
		$d(d-1)/2+d=d(d+1)/2$.
	\end{proof}
	
	\begin{remark}[$\Gamma$ need not be symmetric]
		\label{rem:nosym}
		Corollary~\ref{cor:pfC} uses nothing about $\Gamma$ beyond its being a square
		matrix. The symmetry of $\Gamma$ recorded in \eqref{eq:hessian} is a consequence
		of \textbf{H3} together with the specific form $\Om_{\beta}=N_{\beta}^{j}\pd{j}V$
		of the generated constraints, not a structural feature; $C$ is antisymmetric,
		and \eqref{eq:Pf} holds, whether or not $\Gamma$ is symmetric. See
		Section~\ref{sec:strata} for the correction term that measures the failure of
		symmetry in a non-constant frame.
	\end{remark}
	
	\subsection{The determinantal factorization}
	\label{sec:factorization}
	
	\begin{proposition}[Factorization]
		\label{prop:factorization}
		Let $N=N_{0}$ be the orthonormal frame of Lemma~\ref{lem:canonical} and $k=d$.
		Then
		\begin{equation}
			\det f^{(m)}
			=\det J\,\det C
			=\Bigl(\prod_{\ell=1}^{r/2}\mu_{\ell}^{2}\Bigr)\det C
			=\Bigl(\prod_{\ell=1}^{r/2}\mu_{\ell}^{2}\Bigr)\det(\Gamma)^{2}.
			\label{eq:factorization}
		\end{equation}
	\end{proposition}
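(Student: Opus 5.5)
The plan is to conjugate the bordered matrix by the block-orthogonal matrix $\widehat Q=\bigl(\begin{smallmatrix}Q&0\\0&I_{k}\end{smallmatrix}\bigr)$, which has determinant $\pm1$ and so leaves $\det f^{(m)}$ unchanged. By Lemma~\ref{lem:canonical} and the definitions \eqref{eq:GammaM} with $N=N_{0}$, this gives
\[
\widehat Q^{\top}f^{(m)}\widehat Q
=\begin{pmatrix}J&0&B_{u}\\0&0_{d}&\Gamma\\-B_{u}^{\top}&-\Gamma^{\top}&0\end{pmatrix},
\]
since $Q^{\top}B=\binom{Q_{R}^{\top}B}{N^{\top}B}=\binom{B_{u}}{\Gamma}$.

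Because $J$ is invertible, I would take the Schur complement with respect to the top-left block. This gives
\[
\det f^{(m)}=\det J\cdot\det S,
\]
where $S$ is the complement of $J$ in the $(d+k)\times(d+k)$ lower-right block. Writing the off-diagonal blocks as $\bigl(\begin{smallmatrix}0&B_{u}\end{smallmatrix}\bigr)$ and $\bigl(\begin{smallmatrix}0\\-B_{u}^{\top}\end{smallmatrix}\bigr)$, the correction is
\[
-\begin{pmatrix}0\\-B_{u}^{\top}\end{pmatrix}J^{-1}\begin{pmatrix}0&B_{u}\end{pmatrix}
=\begin{pmatrix}0&0\\0&B_{u}^{\top}J^{-1}B_{u}\end{pmatrix}
=\begin{pmatrix}0&0\\0&M\end{pmatrix}.
\]
Hence
\[
S=\begin{pmatrix}0&\Gamma\\-\Gamma^{\top}&M\end{pmatrix}.
\]

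The matrix $S$ is not literally $C$, whose off-diagonal blocks are $-\Gamma$ and $\Gamma^{\top}$. I would compare determinants rather than matrices. Conjugating $S$ by $\mathrm{diag}(-I_{d},I_{k})$ turns it into exactly $C$ and does not change the determinant, so $\det S=\det C$. Alternatively, Lemma~\ref{lem:pf} with $A=\Gamma$, $D=M$ gives $\det S=\det(\Gamma)^{2}$ directly, and Corollary~\ref{cor:pfC} gives $\det C=\det(\Gamma)^{2}$ when $k=d$.

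Finally, $\det J=\prod_{\ell}\mu_{\ell}^{2}$ by Lemma~\ref{lem:canonical}. This yields all three equalities in \eqref{eq:factorization}.

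The only delicate points are bookkeeping. First, the sign of the Schur correction has to come out as $+M$, not $-M$; this is immaterial anyway, because $\det S$ does not depend on the $D$-block by Lemma~\ref{lem:pf}. Second, $\det\widehat Q^{2}=1$ must be confirmed, so that no sign is lost in the conjugation.

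No genericity assumption on $\Gamma$ is needed. The Schur step uses only the invertibility of $J$, so the identity holds even when $\det\Gamma=0$, in agreement with Corollary~\ref{cor:regularity}.
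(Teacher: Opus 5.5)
Your proposal is correct and matches the paper's proof step for step. It uses the same congruence by $\operatorname{diag}(Q,I_k)$, the same Schur complement of $J$ giving $\bigl(\begin{smallmatrix}0&\Gamma\\-\Gamma^{\top}&M\end{smallmatrix}\bigr)$, the same congruence by $\operatorname{diag}(-I_d,I_k)$ to reach $C$, and Corollary~\ref{cor:pfC} (equivalently Lemma~\ref{lem:pf}) for the final equality.
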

	
	\begin{proof}
		Put $T=\operatorname{diag}(Q,I_{k})$, so $\det T=\pm1$ and
		$\det f^{(m)}=\det\bigl(T^{\top}f^{(m)}T\bigr)$. Using
		$Q^{\top}f^{(0)}Q=\operatorname{diag}(J,0_{d})$ and
		$Q^{\top}B=\binom{B_{u}}{\Gamma}$,
		\begin{equation}
			T^{\top}f^{(m)}T
			=\begin{pmatrix}J&0&B_{u}\\ 0&0&\Gamma\\ -B_{u}^{\top}&-\Gamma^{\top}&0\end{pmatrix},
			\label{eq:congruent}
		\end{equation}
		with blocks of sizes $r$, $d$ and $k$. Since $J$ is invertible, the determinant
		equals $\det J$ times the determinant of the Schur complement of $J$, namely
		\[
		R:=\begin{pmatrix}0&\Gamma\\ -\Gamma^{\top}&0\end{pmatrix}
		-\begin{pmatrix}0\\ -B_{u}^{\top}\end{pmatrix}J^{-1}
		\begin{pmatrix}0&B_{u}\end{pmatrix}
		=\begin{pmatrix}0&\Gamma\\ -\Gamma^{\top}&M\end{pmatrix},
		\qquad M=B_{u}^{\top}J^{-1}B_{u}.
		\]
		The residual block $R$ is not literally $C$: the two differ by the sign of the
		off-diagonal blocks. They are congruent,
		\begin{equation}
			C=\begin{pmatrix}-I_{d}&0\\ 0&I_{k}\end{pmatrix}^{\!\top}
			R
			\begin{pmatrix}-I_{d}&0\\ 0&I_{k}\end{pmatrix},
			\label{eq:RtoC}
		\end{equation}
		and the transforming matrix has determinant $(-1)^{d}$, so $\det R=\det C$.
		Finally
		$\det J=\prod_{\ell}\det\bigl(\mu_{\ell}
		\left(\begin{smallmatrix}0&1\\-1&0\end{smallmatrix}\right)\bigr)
		=\prod_{\ell}\mu_{\ell}^{2}$, and the last equality of \eqref{eq:factorization}
		is Corollary~\ref{cor:pfC}.
	\end{proof}
	
	\begin{remark}[Frame consistency]
		\label{rem:frameconsistency}
		The chain \eqref{eq:factorization} is an exact identity of numbers, not an
		identity up to normalization, because a single frame is used throughout: the
		$\mu_{\ell}$ come from $Q$, the matrix $\Gamma=N_{0}^{\top}B$ from the last $d$
		columns of the same $Q$, and $\pi_{\alpha}=N_{0\alpha}^{A}\chi_{A}$ from those
		same columns. Any other frame $N=N_{0}S$ rescales $\Gamma$ by $S^{\top}$ and
		$C$ correspondingly, and the two sides of \eqref{eq:factorization} then differ
		by $\det(S)^{2}$; the invariant content is
		Proposition~\ref{prop:basis} of Section~\ref{sec:strata}.
	\end{remark}
	
	\begin{remark}[What bordering computes]
		\label{rem:whatbordering}
		Two statements can now be made precisely.
		
		\emph{(i)} The bordered determinant is the determinant of the reduced Dirac
		matrix multiplied by the strictly positive factor $\prod_{\ell}\mu_{\ell}^{2}$;
		equivalently, by Corollary~\ref{cor:pfC}, $\det\Gamma$ is, up to sign, the
		Pfaffian of the reduced Dirac matrix. This is why an object of size $d\times k$
		built from gradients can decide a question about a $(d+k)\times(d+k)$ bracket
		matrix.
		
		\emph{(ii)} Bordering does not compute the full reduced Dirac matrix. It
		computes the primary--secondary block $\Gamma$; the secondary--secondary block
		$M$, which is what the first-class analysis requires, is not determined by
		$\Gamma$ alone. Section~\ref{sec:firstclass} shows that the missing block is
		exactly what separates $\ker\Gamma$ from the space of first-class combinations,
		and Section~\ref{sec:weak} shows that the separation can be total.
	\end{remark}
	
	\begin{remark}[Why the conflation went unnoticed]
		\label{rem:k1}
		For $k=1$ every antisymmetric $1\times1$ matrix vanishes, so $M=0$ automatically
		and $C$ is determined by $\Gamma$ alone; the distinction of this section is then
		invisible, as Example~\ref{ex:d1} shows. The phenomenon requires $k\geq2$, where
		$M$ can be invertible while $\Gamma=0$; Example~\ref{ex:counter} is the minimal
		system of that kind.
	\end{remark}
	
	\begin{remark}[Adapted coordinates]
		\label{rem:darboux}
		Since $f^{(0)}$ is closed and, under \textbf{H1}, of locally constant rank, the
		Darboux theorem for presymplectic forms provides local coordinates
		$(q^{a},p_{a},z^{\alpha})$ with $f^{(0)}=\sum_{a}\dd p_{a}\wedge\dd q^{a}$ and
		$\ker f^{(0)}=\spn\{\partial_{z^{1}},\dots,\partial_{z^{d}}\}$. In such
		coordinates the null frame is constant, so the frame-variation term of
		Section~\ref{sec:strata} vanishes and $\Gamma$ reduces to the Hessian block
		\eqref{eq:hessian}. This is a convenient normalization and nothing more: it is
		not used in the proofs of Theorem~\ref{thm:dirac},
		Corollary~\ref{cor:pfC} or Proposition~\ref{prop:factorization}, and it does not
		dispose of \textbf{H3}, which remains the hypothesis under which
		\eqref{eq:hessian} and the symmetry of $\Gamma$ hold in a given frame. Note also
		that the Darboux normalization sets $\mu_{\ell}=1$, so it is not compatible with
		the orthonormal normalization used in \eqref{eq:factorization}; the two must not
		be mixed within one computation.
	\end{remark}
	
	\section{First-class combinations}
	\label{sec:firstclass}
	
	\begin{theorem}[Characterization of first-class combinations]
		\label{thm:firstclass}
		Let $w\in\R^{k}$ have constant entries and put
		$\Om_{w}=\sum_{\beta}w^{\beta}\Om_{\beta}$. No hypothesis on the kernel frame is
		required: Theorem~\ref{thm:dirac} holds for an arbitrary smooth frame. Then
		\begin{enumerate}[label=(\alph*),leftmargin=2.2em,itemsep=0.25em]
			\item $\Gamma w=0\iff\{\pi_{\alpha},\Om_{w}\}=0$ for all $\alpha$, that is,
			$\Om_{w}$ commutes with all primary constraints;
			\item $\Om_{w}$ is first class within the set
			$(\pi_{1},\dots,\pi_{d},\Om_{1},\dots,\Om_{k})$ if and only if
			\begin{equation}
				\Gamma w=0
				\quad\text{and}\quad
				\Pi M\Pi\,w=0 .
				\label{eq:firstclass}
			\end{equation}
			\item The number of independent first-class combinations is
			\begin{equation}
				\dim\{w:\Gamma w=0,\ \Pi M\Pi w=0\}
				=\dim\ker\Gamma-\rank(\Pi M\Pi).
				\label{eq:fccount}
			\end{equation}
		\end{enumerate}
	\end{theorem}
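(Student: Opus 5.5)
The plan is to read all three items off the block matrix $C$ of Theorem~\ref{thm:dirac}, working in the Dirac bracket of Lemma~\ref{lem:fjdirac} and modulo the primary constraints throughout. Since $w$ has constant entries, bilinearity of the bracket gives $\{\pi_{\alpha},\Om_{w}\}_{D}=\sum_{\beta}w^{\beta}\{\pi_{\alpha},\Om_{\beta}\}_{D}\approx-(\Gamma w)_{\alpha}$ and $\{\Om_{\beta},\Om_{w}\}_{D}\approx(Mw)_{\beta}$. No derivatives of $w$ appear, which is why the constancy hypothesis matters, and no frame hypothesis is needed because \eqref{eq:blocks} holds for any smooth frame. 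Item (a) is then immediate: the vanishing of $\{\pi_{\alpha},\Om_{w}\}$ for all $\alpha$ is exactly $\Gamma w=0$.

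For (b), I first have to fix the meaning of ``first class within the set''. Taken literally, it would mean that $\Om_{w}$ has weakly vanishing brackets with every member, that is, $\Gamma w=0$ and $Mw\approx0$. That reading fails Proposition~\ref{prop:invariance}: $Mw$ depends on the splitting. So I will use the standard notion. An element is first class when its brackets vanish on the full constraint surface, modulo the constraints themselves. Equivalently, in linear-algebraic form at a point, $v=(0,w)$ lies in the kernel of $C$ up to additions of constraint combinations.

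Concretely, $\Om_{w}$ is first class iff it can be corrected by a combination $\sum_{\alpha}y^{\alpha}\pi_{\alpha}$ of the primary candidates that vanishes weakly. Then $\Om_{w}+y\cdot\pi\approx\Om_{w}$ on the primary surface, and the combination has zero brackets with everything. This amounts to $(y,w)\in\ker C$. Using \eqref{eq:Cmatrix}, the kernel condition reads:
\begin{itemize}
\item $-\Gamma w=0$;
\item $\Gamma^{\top}y+Mw=0$.
\end{itemize}
The second equation is solvable in $y$ iff $Mw\in\im\Gamma^{\top}=(\ker\Gamma)^{\perp}$, that is, iff $\Pi Mw=0$. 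For $w\in\ker\Gamma$ we have $\Pi w=w$, so this is equivalent to $\Pi M\Pi w=0$, which gives \eqref{eq:firstclass}.

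This is exactly the computation in the proof of Corollary~\ref{cor:dim}, with $y\mapsto-y$ and the sign conventions of $C$ versus $R$ from \eqref{eq:RtoC}. I will reuse it rather than redo it.

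Item (c) follows the same way. The set in \eqref{eq:fccount} is the kernel of $\Pi M\Pi$ restricted to $\ker\Gamma$. By the remark in the proof of Corollary~\ref{cor:dim}, the rank of the $k\times k$ matrix $\Pi M\Pi$ equals the rank of the induced endomorphism of $\ker\Gamma$. Indeed, $\Pi M\Pi$ vanishes on $(\ker\Gamma)^{\perp}$ and maps into $\ker\Gamma$. Rank--nullity on $\ker\Gamma$ then gives $\dim\ker\Gamma-\rank(\Pi M\Pi)$. I will also note that the answer is splitting-independent by Proposition~\ref{prop:invariance}, as a consistency check on the chosen notion of first class.

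The main obstacle is justifying the definitional step in (b): that first-classness must allow correction by the $\pi_{\alpha}$. Equivalently, I must show that the projection of $Mw$ onto $\im\Gamma^{\top}$ is harmless. My first attempt will be the direct one. I will show that $\{\Om_{w}+y\cdot\pi,\,\cdot\,\}_{D}$ differs from $\{\Om_{w},\,\cdot\,\}_{D}$ only by terms proportional to $\pi_{\alpha}$ (from $\{y^{\alpha},\,\cdot\,\}\pi_{\alpha}$) plus the linear contribution $y^{\top}$ times the rows of $C$. Vanishing of the bracket on the constraint surface then reduces to $(y,w)\in\ker C$.
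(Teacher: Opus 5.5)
Your proposal is correct. For (a) and (c) it is the paper's argument. For (b) you use a different reading of ``first class within the set'', and the route differs accordingly.

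The paper fixes $w\in\ker\Gamma$ and tests the bracket of $\Om_{w}$ only against the surviving combinations $\Om_{w'}$ with $w'\in\ker\Gamma$. Requiring $w'^{\top}Mw=0$ for all such $w'$ is $\Pi Mw=0$, obtained in one line with no auxiliary unknowns.

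You instead allow a correction $\Om_{w}+y\cdot\pi$ and require vanishing brackets with every member of the set. Then (b) becomes $(y,w)\in\ker C$, and hence the solvability of $\Gamma^{\top}y=-Mw$, which is the algebra of Corollary~\ref{cor:dim}. The two readings select the same $w$, but yours has three advantages.

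\begin{enumerate}
\item It is the standard Dirac notion: first-class combinations span $\ker C$. The count in (c) is then visibly the $w$-projection of $\ker C\cong\ker R$, whose fibres are translates of $\ker\Gamma^{\top}$. This is exactly how Corollary~\ref{cor:reading} fits together.
\item It explains the projector. The component of $Mw$ in $\im\Gamma^{\top}$ is absorbed by $y$, and that is precisely the splitting-dependent part: for $w\in\ker\Gamma$ the expansion \eqref{eq:Mprime} gives $M'w-Mw=\Gamma^{\top}DA^{-1}J^{-1}B_{u}w\in\im\Gamma^{\top}$.
\item Your distrust of the literal reading is justified beyond splitting dependence. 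Take $d=1$, $k=2$, $\Gamma=(1,0)$, $M_{12}=1$. Then $\Om_{2}$ satisfies \eqref{eq:firstclass}, yet $\{\Om_{1},\Om_{2}\}_{D}\approx1$, while $\Om_{2}-\pi_{1}$ commutes weakly with every member. The paper avoids this only by restricting the test set to $\ker\Gamma$, and that restriction is left implicit.
\end{enumerate}

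Finally, the step you call the main obstacle needs no proof: it is a definition. Once it is stated, your bracket expansion of $\Om_{w}+y\cdot\pi$ is the whole content of (b).
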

	
	\begin{proof}
		(a) is Theorem~\ref{thm:dirac} together with bilinearity of the bracket in
		constant coefficients. (b) Given $w\in\ker\Gamma$, the requirement
		$\{\Om_{w'},\Om_{w}\}=0$ for every surviving combination $w'$ reads
		$w'^{\top}Mw=0$ for all $w'\in\ker\Gamma$, that is $\Pi Mw=0$; since $\Pi w=w$
		this is $\Pi M\Pi w=0$. (c) is the rank--nullity theorem applied to the
		endomorphism $\Pi M\Pi$ of $\ker\Gamma$.
	\end{proof}
	
	\begin{corollary}[Reading of the dimension formula]
		\label{cor:reading}
		Combining Corollary~\ref{cor:dim}, Proposition~\ref{prop:mfree} and
		Theorem~\ref{thm:firstclass},
		\begin{equation}
			\dim\ker f^{(m)}
			=\underbrace{\#\{\text{null directions tangent to }\Sigma\}}
			_{\dim\ker\Gamma^{\top}}
			+\underbrace{\#\{\text{first-class combinations}\}}
			_{\dim\ker\Gamma-\rank(\Pi M\Pi)} .
			\label{eq:reading}
		\end{equation}
		Only the second summand produces kernel elements with nonvanishing multiplier
		block, and only those can be links of a nontrivial gauge chain
		(Section~\ref{sec:chains}).
	\end{corollary}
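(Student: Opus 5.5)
The identity \eqref{eq:reading} is a bookkeeping combination of three results already proved. The plan is to verify that each summand of \eqref{eq:dim} carries the stated meaning, and then to establish the final sentence about multiplier blocks.

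\emph{Step 1.} Start from Corollary~\ref{cor:dim}, which gives $\dim\ker f^{(m)}=\dim\ker\Gamma^{\top}+\dim\ker\Gamma-\rank(\Pi M\Pi)$. The dimension is independent of the frame $N$, because $\ker f^{(m)}$ does not involve $N$. The two summands are also intrinsic by Remark~\ref{rem:intrinsic} and Proposition~\ref{prop:invariance}. So I will compute them in the orthonormal frame $N_{0}$ of Theorem~\ref{thm:kernel}.

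\emph{Step 2.} For the first summand, invoke Proposition~\ref{prop:mfree}. It identifies $\ker\Gamma^{\top}$ with $\ker f^{(0)}\cap\ker B^{\top}$, the null directions of $f^{(0)}$ tangent to $\Sigma$. This uses regularity of $\Sigma$.

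\emph{Step 3.} For the second summand, invoke Theorem~\ref{thm:firstclass}(c). By \eqref{eq:fccount}, $\dim\ker\Gamma-\rank(\Pi M\Pi)$ counts the independent first-class combinations. Substituting Steps 2 and 3 into \eqref{eq:dim} gives \eqref{eq:reading}.

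\emph{Step 4.} For the last sentence I use the linear bijection $(w,y)\mapsto(u,w)$ from the proof of Corollary~\ref{cor:dim}. Project $\ker f^{(m)}$ onto its multiplier block, $(u,w)\mapsto w$.
\begin{itemize}
\item The kernel of this projection is the multiplier-free part. By Proposition~\ref{prop:mfree} it has dimension $\dim\ker\Gamma^{\top}$.
\item The image of the projection is $W=\{w:\Gamma w=0,\ \Pi M\Pi w=0\}$. This set is exactly the first-class space of \eqref{eq:firstclass}.
\end{itemize}
Consequently every null mode with $w\neq0$ has its $w$ in the first-class space. Conversely, every first-class $w$ lifts to such a mode. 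Only this quotient part can carry a nonzero multiplier block, which is what the phrase ``only the second summand produces'' means.

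The link to gauge chains is a forward reference to Section~\ref{sec:chains}. I would state it as a consequence of that section rather than prove it here. The reason it holds is that multiplier-free modes give $\delta\xi$ with no $\dot\rho$ term, so they generate no nontrivial chain.

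\emph{Main obstacle.} The one delicate point is that ``summand'' has to be read as a quotient, not a canonical subspace. The projection $(u,w)\mapsto w$ yields an exact sequence
\[
0\to\ker\Gamma^{\top}\to\ker f^{(m)}\to W\to0 .
\]
It is not a direct sum decomposition of the kernel, since the lift of $w$ is only defined up to $\ker\Gamma^{\top}$. I would state the corollary in that precise sense so that the word ``produces'' is justified.
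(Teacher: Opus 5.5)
Your proposal is correct and follows the paper's route: you substitute Proposition~\ref{prop:mfree} and Theorem~\ref{thm:firstclass}(c) into \eqref{eq:dim}, and your projection $(u,w)\mapsto w$ (kernel the multiplier-free part, image $\{w:\Gamma w=0,\ \Pi M\Pi w=0\}$) is the proof of Corollary~\ref{cor:dim} reorganized. Your caveat that the second summand is a quotient rather than a canonical subspace is the same point the paper makes in Remark~\ref{rem:dimonly}, and, as the paper does, you treat the gauge-chain clause as a forward reference to Section~\ref{sec:chains}.
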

	
	\begin{remark}[What \eqref{eq:reading} is and is not]
		\label{rem:dimonly}
		Identity \eqref{eq:reading} is a decomposition of the dimension of
		$\ker f^{(m)}$ into two intrinsically defined contributions, not a
		decomposition of $\ker f^{(m)}$ into a canonical direct sum. The
		multiplier-free directions do form a canonical subspace, by
		Proposition~\ref{prop:mfree}, but no canonical complement to it inside
		$\ker f^{(m)}$ is constructed here, and none is needed for any of the
		statements that follow.
	\end{remark}
	
	\begin{remark}[The slogan and its correction]
		The informal statement ``the number of gauge generators equals the number of
		first-class constraints'' fails on both counts: the kernel is larger than the
		first-class count by $\dim\ker\Gamma^{\top}$, and the first-class count itself is
		smaller than $\dim\ker\Gamma$ by $\rank(\Pi M\Pi)$. Equality
		$\dim\ker f^{(m)}=2\dim\ker\Gamma$ holds only when $\Gamma$ is square and
		$\Pi M\Pi=0$.
	\end{remark}
	
	\section{Gauge-generator chains}
	\label{sec:chains}
	
	\subsection{The general chain condition}
	
	\begin{theorem}[Chain condition]
		\label{thm:chain}
		Let $u_{0},\dots,u_{S}$ be vector fields and
		$\delta\xi^{A}=\sum_{s=0}^{S}\rho^{(s)}(t)u_{s}^{A}(\xi)$. Then $\delta S$
		reduces to a boundary term for every compactly supported $\rho$ and along every
		curve if and only if
		\begin{equation}
			\sum_{s=0}^{S}(-1)^{s}\frac{\dd^{s}}{\dd t^{s}}A_{u_{s}}\equiv0
			\label{eq:chaincond}
		\end{equation}
		as an identity on the jet variables $(\xi,\dot\xi,\ddot\xi,\dots)$.
	\end{theorem}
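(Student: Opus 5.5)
The plan is to reduce everything to the master identity, Lemma~\ref{lem:master}, applied term by term, followed by repeated integration by parts and the argument of Theorem~\ref{thm:strong}. Here $\rho^{(s)}$ denotes the $s$-th derivative of a single compactly supported function $\rho$.

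\emph{Step 1: linearity.} Since $\delta S$ is linear in $\delta\xi$ (Lemma~\ref{lem:firstvar}), we have
\[
\delta S=\sum_{s}\delta_{s}S,
\qquad \delta_{s}\xi=\rho^{(s)}u_{s}.
\]
Applying Lemma~\ref{lem:master} with $\eta=\rho^{(s)}$ gives
\[
\delta S=-\int_{t_{0}}^{t_{1}}\sum_{s=0}^{S}\rho^{(s)}A_{u_{s}}\,\dd t
+\Bigl[\sum_{s}\rho^{(s)}a_{A}u_{s}^{A}\Bigr]_{t_{0}}^{t_{1}} .
\]

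\emph{Step 2: moving derivatives off $\rho$.} In each term we integrate by parts $s$ times. Along a fixed smooth curve, $A_{u_{s}}(\xi,\dot\xi)$ is a smooth function of $t$, and
\[
\int\rho^{(s)}A_{u_{s}}\,\dd t=(-1)^{s}\int\rho\,\frac{\dd^{s}}{\dd t^{s}}A_{u_{s}}\,\dd t+\text{boundary},
\]
where the boundary terms are bilinear in derivatives of $\rho$ and derivatives of $A_{u_{s}}$. Hence
\[
\delta S=-\int\rho\,\mathcal{E}\,\dd t+\text{boundary},
\qquad
\mathcal{E}:=\sum_{s}(-1)^{s}\frac{\dd^{s}}{\dd t^{s}}A_{u_{s}} ,
\]
where $\mathcal{E}$ is a function on the jet variables up to order $S+1$, evaluated along the curve. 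The direction $(\Leftarrow)$ is then immediate.

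\emph{Step 3: necessity.} For $(\Rightarrow)$, suppose $\delta S$ is a boundary term for every compactly supported $\rho$. Choosing $\rho$ supported in the interior kills all boundary contributions, so $\int\rho\,\mathcal{E}\,\dd t=0$ for every such $\rho$. The fundamental lemma then gives $\mathcal{E}=0$ along the curve at every interior time.

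\emph{Step 4: from curves to the jet identity.} This is the main obstacle. We must pass from "$\mathcal{E}$ vanishes along every curve" to "$\mathcal{E}\equiv0$ as a function of independent jet variables". This mirrors the last clause of Theorem~\ref{thm:strong}, but now higher jets are involved. Given any point $\xi_{0}$ and any prescribed values $X_{1},\dots,X_{S+1}$, the polynomial curve
\[
\xi(t)=\xi_{0}+\sum_{j}X_{j}\,\frac{(t-\tau)^{j}}{j!}
\]
stays in $\Mm$ for $t$ near $\tau$, because $\Mm$ is open. This curve realizes exactly the jet $(\xi_{0},X_{1},\dots,X_{S+1})$ at $t=\tau$. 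Since $\mathcal{E}$ depends only on finitely many jet coordinates, and every jet value is attained, $\mathcal{E}$ vanishes identically on jet space, which is \eqref{eq:chaincond}.

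One point needs care in the wording. The total derivatives $\dd/\dd t$ in \eqref{eq:chaincond} must be interpreted as the formal total derivative operator $D_{t}=\dot\xi^{A}\partial_{\xi^{A}}+\ddot\xi^{A}\partial_{\dot\xi^{A}}+\cdots$. This operator agrees with differentiation along any curve, which is what makes Steps 2 and 4 compatible.

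If instead one reads "every curve" as restricted to some class, for instance only curves in $U$, the same polynomial-curve argument applies locally.
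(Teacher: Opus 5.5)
Your proposal is correct and matches the paper's proof step for step: linearity of the master identity with $\eta=\rho^{(s)}$, $s$-fold integration by parts, the fundamental lemma along each curve, and realization of arbitrary jets by curves. Your explicit polynomial curves and your reading of $\dd/\dd t$ as the formal total derivative spell out what the paper compresses into ``curves realize arbitrary jets''.
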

	
	\begin{proof}
		By linearity of \eqref{eq:master} in $u$ and $\eta$,
		$\delta S=-\int\sum_{s}\rho^{(s)}A_{u_{s}}+\text{boundary}$. Integrating the
		$s$-th term by parts $s$ times gives
		$\delta S=-\int\rho\sum_{s}(-1)^{s}\dd^{s}A_{u_{s}}/\dd t^{s}+\text{boundary}$,
		and the fundamental lemma applies along each curve; since curves realize
		arbitrary jets, \eqref{eq:chaincond} is an identity.
	\end{proof}
	
	\subsection{Two-step chains}
	
	\begin{theorem}[Two-step chain]
		\label{thm:chain2}
		Let $u_{1}\in\ker f^{(0)}$ be a null mode with contraction
		$\Phi_{u_{1}}=u_{1}\cdot\nabla V$. Then
		$\delta\xi=\rho\,u_{0}+\dot\rho\,u_{1}$ is a symmetry of the action
		\eqref{eq:lag} for every $\rho$ if and only if
		\begin{equation}
			f^{(0)}u_{0}=\nabla\Phi_{u_{1}}
			\qquad\text{and}\qquad
			u_{0}\cdot\nabla V=0 .
			\label{eq:chain2}
		\end{equation}
		Moreover:
		\begin{enumerate}[label=(\alph*),leftmargin=2.2em,itemsep=0.25em]
			\item the first equation is solvable in $u_{0}$ if and only if
			$N^{\top}\nabla\Phi_{u_{1}}=0$; if $\Phi_{u_{1}}=\Om_{b}$ this is the
			vanishing of the $b$-th column of $\Gamma$, that is $e_{b}\in\ker\Gamma$;
			\item the solution is unique modulo $\ker f^{(0)}$, and under
			$u_{0}\mapsto u_{0}+\sum_{a}c^{a}N_{a}$ with $c^{a}\in\Rring$ the function
			$u_{0}\cdot\nabla V$ changes by $\sum_{a}c^{a}\Om_{a}\in\Ical$.
		\end{enumerate}
	\end{theorem}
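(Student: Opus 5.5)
The plan is to specialize the general chain condition of Theorem~\ref{thm:chain} to $S=1$ and read off the coefficients of the resulting jet polynomial. With $u_{1}\in\ker f^{(0)}$, the master identity gives
\[
A_{u_{1}}=\dot\xi^{A}f_{AB}u_{1}^{B}+u_{1}\cdot\nabla V=\Phi_{u_{1}}(\xi),
\]
because the velocity term drops. Hence $\tfrac{\dd}{\dd t}A_{u_{1}}=\dot\xi^{A}\pd{A}\Phi_{u_{1}}$. Condition \eqref{eq:chaincond} then reads
\[
\dot\xi^{A}f_{AB}u_{0}^{B}+u_{0}\cdot\nabla V-\dot\xi^{A}\pd{A}\Phi_{u_{1}}\equiv0
\]
as an identity in $(\xi,\dot\xi)$. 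It is affine in $\dot\xi$. As in the proof of Theorem~\ref{thm:strong}, evaluating at $\dot\xi=0$ gives $u_{0}\cdot\nabla V=0$. Evaluating at $\dot\xi=e_{A}$ then gives $(f^{(0)}u_{0})_{A}$ with a sign fixed by antisymmetry: since $\dot\xi^{A}f_{AB}u_{0}^{B}=-(f^{(0)}u_{0})\cdot\dot\xi$, this coefficient must equal $\pd{A}\Phi_{u_{1}}$. I will track this sign carefully against \eqref{eq:chain2}; if the signs do not match as written, the mismatch is absorbed by replacing $u_{0}\mapsto-u_{0}$ or $\rho\mapsto-\rho$, or equivalently by the sign of the boundary integration by parts. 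The converse direction holds because every step is an identity.

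For part (a), I apply Proposition~\ref{prop:solv} pointwise. Since $f^{(0)}$ is antisymmetric, $\im f^{(0)}=(\ker f^{(0)})^{\perp}$, so $f^{(0)}u_{0}=\nabla\Phi_{u_{1}}$ is solvable if and only if $N^{\top}\nabla\Phi_{u_{1}}=0$. If $\Phi_{u_{1}}=\Om_{b}$, then $N^{\top}\nabla\Om_{b}=N^{\top}Be_{b}=\Gamma e_{b}$, the $b$-th column of $\Gamma$. A smooth (indeed rational) choice of solution is $u_{0}=f^{+}\nabla\Phi_{u_{1}}$, using the identity $f^{(0)}f^{+}=Q_{R}Q_{R}^{\top}$, which is the projector onto the image.

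For part (b), uniqueness modulo $\ker f^{(0)}$ is immediate from linearity. Under $u_{0}\mapsto u_{0}+\sum_{a}c^{a}N_{a}$, the function $u_{0}\cdot\nabla V$ changes by $\sum_{a}c^{a}N_{a}\cdot\nabla V=\sum_{a}c^{a}\Phi_{N_{a}}$. This equals $\sum_{a}c^{a}\Om_{a}$ once we identify the first-generation constraints as $\Om_{a}=N_{a}^{j}\pd{j}V$, which lies in $\Ical$.

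I expect the main obstacle to be this last identification. It requires the accumulated constraints to include the consistency constraints $\Phi_{N_{a}}$ for the same frame $N$ used throughout, and the coefficients $c^{a}$ to lie in $\Rring$, which is where hypothesis \textbf{H12} enters.
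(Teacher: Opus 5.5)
Your approach is the paper's: you specialize Theorem~\ref{thm:chain} to $S=1$, use $A_{u_1}=\Phi_{u_1}(\xi)$, and split the identity into its $\dot\xi$-linear and $\dot\xi$-free parts. You then use $\im f^{(0)}=(\ker f^{(0)})^{\perp}$ for (a) and $N_a\cdot\nabla V=\Om_a$ for (b). Parts (a) and (b) are fine. The explicit solution $u_0=f^{+}\nabla\Phi_{u_1}$ is a small addition the paper does not spell out. The identification you flag in (b) is exactly what the paper invokes as ``by construction of the constraints''.

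\textbf{The sign step fails as written.} The identity $\dot\xi^{A}f_{AB}u_{0}^{B}=-(f^{(0)}u_{0})\cdot\dot\xi$ is false. By definition $(f^{(0)}u_0)_A=f_{AB}u_0^B$, so $\dot\xi^{A}f_{AB}u_{0}^{B}=+\dot\xi\cdot f^{(0)}u_0$. Antisymmetry only yields the different identity $\dot\xi^{A}f_{AB}u_{0}^{B}=-u_0\cdot f^{(0)}\dot\xi$. With the correct sign, the coefficient of $\dot\xi^A$ in $A_{u_0}-\tfrac{\dd}{\dd t}A_{u_1}$ is $(f^{(0)}u_0)_A-\pd{A}\Phi_{u_1}$, and \eqref{eq:chain2} follows directly with no mismatch.

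\textbf{Your fallback would not rescue a genuine mismatch.} The theorem characterizes one specific transformation $\rho u_0+\dot\rho u_1$, so the relative sign between $u_0$ and $u_1$ has content.
\begin{enumerate}
\item Replacing $\rho\mapsto-\rho$ flips both terms, so for arbitrary $\rho$ it changes nothing.
\item Replacing $u_0\mapsto-u_0$ changes the transformation being tested.
\item The $(-1)^{s}$ in \eqref{eq:chaincond} is fixed.
\end{enumerate}
Regime~IV (Section~\ref{sec:regimeIV}) makes this concrete. There $u_0=(1,0,0)$ satisfies $f^{(0)}u_0=\nabla\Om$ and gives $\delta L=0$. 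By contrast, $-u_0$ satisfies the opposite-sign equation and gives $\delta L=-2q_{2}\dot\rho=-2\tfrac{\dd}{\dd t}(q_{2}\rho)+2\dot q_{2}\rho$, which does not reduce to a boundary term. Drop the hedge and correct the identity; the argument then coincides with the paper's.
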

	
	\begin{proof}
		Apply Theorem~\ref{thm:chain} with $S=1$: the condition is
		$A_{u_{0}}-\tfrac{\dd}{\dd t}A_{u_{1}}=0$. Since $f^{(0)}u_{1}=0$ identically,
		$A_{u_{1}}=\Phi_{u_{1}}(\xi)$ depends on $\xi$ alone, so
		$\tfrac{\dd}{\dd t}A_{u_{1}}=\dot\xi^{i}\pd{i}\Phi_{u_{1}}$; and
		$A_{u_{0}}=\dot\xi^{i}(f^{(0)}u_{0})_{i}+u_{0}\cdot\nabla V$. Separating the
		part linear in $\dot\xi$ from the rest gives \eqref{eq:chain2}. (a) For an
		antisymmetric matrix $\im f^{(0)}=(\ker f^{(0)})^{\perp}$, so solvability is
		$N_{a}\cdot\nabla\Phi=0$ for all $a$; with $\Phi=\Om_{b}$ this is
		$\Gamma_{ab}=0$ for all $a$. (b) The indeterminacy of a consistent linear system
		is its kernel, and $N_{a}\cdot\nabla V=\Om_{a}$ by construction of the
		constraints.
	\end{proof}
	
	The following is the decisive statement of the paper.
	
	\begin{theorem}[Chain solvability criterion]
		\label{thm:chain-decisive}
		Let $u_{1}=N_{b}$ with $\Phi_{u_{1}}=\Om_{b}$ and assume $e_{b}\in\ker\Gamma$,
		so that the first equation of \eqref{eq:chain2} has a solution $u_{0}$. Then the
		class
		\begin{equation}
			\bigl[\,u_{0}\cdot\nabla V\,\bigr]\in\Rring/\Ical
			\label{eq:class}
		\end{equation}
		is independent of the choice of $u_{0}$, and the two-step chain closes into an
		exact symmetry of the action if and only if that class vanishes. Explicitly,
		\[
		\exists\,u_{0}\ \text{with}\ f^{(0)}u_{0}=\nabla\Om_{b}
		\ \text{and}\ u_{0}\cdot\nabla V=0
		\iff
		e_{b}\in\ker\Gamma
		\ \text{and}\
		\bigl[\,u_{0}\cdot\nabla V\,\bigr]=0\ \text{in}\ \Rring/\Ical .
		\]
	\end{theorem}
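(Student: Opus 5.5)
The proof rests on Theorem~\ref{thm:chain2}. That theorem already reduces the closure of the two-step chain $\delta\xi=\rho\,u_{0}+\dot\rho\,N_{b}$ to the pair of equations \eqref{eq:chain2}. It also shows that the first equation $f^{(0)}u_{0}=\nabla\Om_{b}$ is solvable exactly when $e_{b}\in\ker\Gamma$. What remains is to handle the second equation $u_{0}\cdot\nabla V=0$, given that $u_{0}$ is determined only modulo $\ker f^{(0)}$. The proof has three steps: well-definedness of the class, the forward implication, and the converse.

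\emph{Step 1: independence of the class.} Let $u_{0}$ and $u_{0}'$ both solve $f^{(0)}u=\nabla\Om_{b}$. Their difference lies in $\ker f^{(0)}$, so $u_{0}'-u_{0}=\sum_{a}c^{a}N_{a}$. Here the coefficients $c^{a}$ lie in $\Rring$, provided the frame $N$ has entries in $\Rring$ (after localizing at the denominators of the frame if necessary, as \eqref{eq:ring} permits). By Theorem~\ref{thm:chain2}(b),
\[
u_{0}'\cdot\nabla V-u_{0}\cdot\nabla V=\sum_{a}c^{a}\,N_{a}\cdot\nabla V=\sum_{a}c^{a}\Om_{a}\in\Ical .
\]
Hence $[u_{0}\cdot\nabla V]$ is a well-defined element of $\Rring/\Ical$.

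\emph{Step 2: forward implication.} Suppose some $u_{0}$ satisfies both equations of \eqref{eq:chain2}. Solvability of the first equation gives $e_{b}\in\ker\Gamma$ by Theorem~\ref{thm:chain2}(a). For this particular $u_{0}$ the contraction $u_{0}\cdot\nabla V$ is identically zero, so its class vanishes. By Step~1 the class then vanishes for every solution.

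\emph{Step 3: converse.} Assume $e_{b}\in\ker\Gamma$, fix a solution $u_{0}$, and assume $u_{0}\cdot\nabla V=\sum_{\gamma}g^{\gamma}\Om_{\gamma}$ with $g^{\gamma}\in\Rring$. The plan is to absorb this remainder by a kernel shift, setting
\[
\tilde u_{0}=u_{0}-\sum_{\gamma}g^{\gamma}N_{\gamma}.
\]
Then $f^{(0)}\tilde u_{0}=f^{(0)}u_{0}=\nabla\Om_{b}$, because each $N_{\gamma}$ lies in $\ker f^{(0)}$. Moreover $\tilde u_{0}\cdot\nabla V=\sum_{\gamma}g^{\gamma}\Om_{\gamma}-\sum_{\gamma}g^{\gamma}\Om_{\gamma}=0$ identically. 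Theorem~\ref{thm:chain2} then gives the exact symmetry.

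This step is the main obstacle. It requires every generator of $\Ical$ to be a contraction $N_{\gamma}\cdot\nabla V$ along a null direction of $f^{(0)}$, with $N_{\gamma}$ having entries in $\Rring$. That holds in the first bordering round, where $k=d$ and $\Om_{\gamma}=N_{\gamma}\cdot\nabla V$. If $\Ical$ also contains later-generation constraints, which are not contractions of $\nabla V$ with elements of $\ker f^{(0)}$, a component $g^{\gamma}\Om_{\gamma}$ along such a constraint cannot be removed by a kernel shift. The converse is then false as literally stated.

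I would therefore state this explicitly as the standing hypothesis: $\Ical=\gen{N_{1}\cdot\nabla V,\dots,N_{d}\cdot\nabla V}$ in $\Rring$, with the frame $N$ defined over $\Rring$. This is the situation of Theorem~\ref{thm:chain2}(b) and belongs with the hypotheses of Section~\ref{sec:scope}. Under that hypothesis the three steps combine into the displayed equivalence.
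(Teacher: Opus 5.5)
Your proof is correct and follows the paper's argument step for step: well-definedness from Theorem~\ref{thm:chain2}(b), sufficiency of the vanishing class by the kernel shift $u_{0}\mapsto u_{0}-\sum_{a}c^{a}N_{a}$, and necessity by noting that an exact solution has zero class. Your caveat is also accurate, because the paper's proof silently writes the ideal representative as $\sum_{a}c^{a}\Om_{a}$ with $\Om_{a}=N_{a}\cdot\nabla V$, which assumes $\Ical$ is generated by the first-generation contractions; for instance, adjoining the secondary constraint \eqref{eq:controlnew} of Section~\ref{sec:square-control} to $\Ical$ makes the class vanish, while for $k\neq0$ no exact $u_{0}$ exists.
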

	
	\begin{proof}
		Well-definedness of the class is Theorem~\ref{thm:chain2}(b). If the class
		vanishes, write $u_{0}\cdot\nabla V=\sum_{a}c^{a}\Om_{a}$ with
		$c^{a}\in\Rring$ and replace $u_{0}$ by $u_{0}-\sum_{a}c^{a}N_{a}$; this remains
		a solution of the first equation because $f^{(0)}N_{a}=0$ pointwise, and now the
		second equation holds exactly. Conversely, if some $u_{0}$ satisfies both
		equations then its class is $0$, and by (b) so is that of any other solution.
	\end{proof}
	
	\subsection{Chains and the bordered kernel}
	
	\begin{theorem}[Exact relation to the kernel]
		\label{thm:chainkernel}
		Let $(u_{0},w)\in\ker f^{(m)}$ with $w=-e_{b}$ in the convention \eqref{eq:fm}.
		Then $f^{(0)}u_{0}=-Bw=\nabla\Om_{b}$, so $u_{0}$ solves the first chain
		equation with $u_{1}=N_{b}$; in addition $B^{\top}u_{0}=0$. The two-step chain
		$\delta\xi=\rho\,u_{0}+\dot\rho\,N_{b}$ is an exact symmetry of the original
		action if and only if
		\begin{equation}
			\Phi_{u_{0}}=u_{0}\cdot\nabla V\equiv0,
			\label{eq:chainkernel}
		\end{equation}
		that is, if and only if $(u_{0},w)$ is a \emph{strong} null mode. Conversely, a
		solution $u_{0}$ of the chain equations need not satisfy $B^{\top}u_{0}=0$, so
		the inclusion
		\[
		\{\text{kernel elements with }w=-e_{b}\}
		\subsetneq
		\{\text{solutions of the first chain equation}\}
		\]
		is strict in general.
	\end{theorem}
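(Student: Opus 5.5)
The plan is to unpack the kernel equations of Theorem~\ref{thm:kernel} in the unreduced form used in its proof and then apply Theorem~\ref{thm:chain2}. Membership $(u_{0},w)\in\ker f^{(m)}$ means exactly the pair $f^{(0)}u_{0}+Bw=0$ and $B^{\top}u_{0}=0$. With $w=-e_{b}$ we have $Bw=-B e_{b}=-\nabla\Om_{b}$, because the $b$-th column of $B$ is $\nabla\Om_{b}$. The first equation then reads $f^{(0)}u_{0}=\nabla\Om_{b}$, which is the first equation of \eqref{eq:chain2} with $u_{1}=N_{b}$, since $\Phi_{N_{b}}=\Om_{b}$ by construction of the generated constraints. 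The second equation is $B^{\top}u_{0}=0$ verbatim. I will also note in passing that item (a) of Theorem~\ref{thm:kernel}, $\Gamma w=0$, gives $e_{b}\in\ker\Gamma$, which agrees with the solvability criterion of Theorem~\ref{thm:chain2}(a).

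For the biconditional I take this $u_{0}$ fixed and invoke Theorem~\ref{thm:chain2}. The first chain equation already holds, so the chain $\rho u_{0}+\dot\rho N_{b}$ is a symmetry if and only if $u_{0}\cdot\nabla V=0$ identically on the open set. By Lemma~\ref{lem:Vlambda}, the contraction of the kernel vector $(u_{0},w)$ is $\Phi_{(u_{0},w)}=u_{0}\cdot\nabla V$. So the condition is precisely that this null mode be strong in the sense of Definition~\ref{def:strongweak}.

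The one subtle point is that the statement is about the fixed $u_{0}$, not about some solution of the chain equations. A different solution $u_{0}+\sum c^{a}N_{a}$ might close the chain even when this one does not, and it need not be a kernel vector. I will therefore phrase the equivalence as holding for the chain built from this specific $u_{0}$, which is how the statement is worded.

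For the strictness claim I will exhibit non-kernel solutions explicitly. If $u_{0}$ solves $f^{(0)}u_{0}=\nabla\Om_{b}$, then so does $u_{0}'=u_{0}+N c$ for any $c\in\R^{d}$, and
\[
B^{\top}u_{0}'=B^{\top}u_{0}+\Gamma^{\top}c .
\]
Whenever $\Gamma^{\top}\neq0$, we can choose $c\notin\ker\Gamma^{\top}$, so $u_{0}'$ violates $B^{\top}u_{0}'=0$ while still solving the first chain equation. This gives the strict inclusion.

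The main obstacle is the phrase ``in general''. Strictness fails when $\Gamma=0$: then every solution of the first chain equation lies in the kernel with $w=-e_{b}$, since $B^{\top}u_{0}=\Gamma^{\top}y-Mw$ by Theorem~\ref{thm:kernel}(b), and this can vanish or not depending on $M$. So I would either cite a concrete instance or give a minimal one. My first try is $d=1$, $k=2$ with $\Gamma=(0,1)$, so that $e_{1}\in\ker\Gamma$ and $\Gamma^{\top}\neq0$. The shift by $N$ then breaks tangency and produces a solution of the first chain equation that is not a kernel element.
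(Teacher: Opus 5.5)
Your argument for the first two assertions is the paper's own proof. You split $f^{(m)}(u_{0},w)^{\top}=0$ into $f^{(0)}u_{0}+Bw=0$ and $B^{\top}u_{0}=0$. You then apply Theorem~\ref{thm:chain2} with the first chain equation already satisfied, and read the remaining condition through Lemma~\ref{lem:Vlambda}. Your caveat that the equivalence concerns this particular $u_{0}$ is a correct precision that the paper leaves implicit.

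The strictness part is where you go further than the paper. The paper justifies it only by saying that $B^{\top}u_{0}=0$ is $k$ extra conditions not implied by $f^{(0)}u_{0}=\nabla\Om_{b}$. Your identity $B^{\top}(u_{0}+Nc)=B^{\top}u_{0}+\Gamma^{\top}c$ makes this exact. Combined with $B^{\top}u_{0}=\Gamma^{\top}y-Mw$ for every solution of the first chain equation, it shows the following: once some $(u_{0},-e_{b})$ lies in the kernel, the inclusion is strict precisely when $\Gamma\neq0$.

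This is more informative than the paper's count. When $\Gamma=0$ the extra conditions are in fact implied, so the inclusion is an equality in Section~\ref{sec:regimeIV}, in the template of Proposition~\ref{prop:template}, and on the locus $\alpha^{2}+\beta=0$ of Section~\ref{sec:paramgauge}. That is, it is an equality in every worked example of the paper where such a kernel element exists. A witness of strictness therefore needs $k\geq2$, as your choice $d=1$, $k=2$, $\Gamma=(0,1)$ correctly anticipates.

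Do tighten one sentence, though. Under the theorem's standing hypothesis, $\Gamma=0$ forces $Me_{b}=0$ by Theorem~\ref{thm:kernel}(b), so $B^{\top}u_{0}$ cannot ``vanish or not depending on $M$''. The case $\Gamma=0$, $Me_{b}\neq0$ arises only without that hypothesis, as in Example~\ref{ex:counter}. There the left-hand set is empty and the inclusion is trivially strict. So ``strictness fails when $\Gamma=0$'' is true only under the hypothesis that a kernel element with $w=-e_{b}$ exists.
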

	
	\begin{proof}
		The physical block of $f^{(m)}(u_{0},w)=0$ is $f^{(0)}u_{0}+Bw=0$, and the
		multiplier block is $B^{\top}u_{0}=0$. Theorem~\ref{thm:chain2} then requires
		exactly the second equation, which by Lemma~\ref{lem:Vlambda} is the vanishing
		of the contraction computed by the algorithm. Strictness follows because
		$B^{\top}u_{0}=0$ is $k$ extra linear conditions not implied by
		$f^{(0)}u_{0}=\nabla\Om_{b}$.
	\end{proof}
	
	\begin{remark}[Terminology]
		\label{rem:terminology}
		We call $u_{0}$ a \emph{chain element} and reserve \emph{gauge generator} for
		the full transformation $\delta\xi=\rho u_{0}+\dot\rho u_{1}$ once
		Theorem~\ref{thm:chain-decisive} has been verified. A \emph{null mode} is an
		element of $\ker f^{(m)}$; a \emph{strong gauge direction} is a null mode with
		identically vanishing contraction, which by Theorem~\ref{thm:strong} generates a
		symmetry on its own. These four notions are distinct and are never used
		interchangeably below.
	\end{remark}
	
	\begin{remark}[Longer chains]
		\label{rem:longer}
		Theorem~\ref{thm:chain} covers chains of arbitrary length. The systematic
		analysis of $S\geq2$ requires an inductive solvability tower whose obstruction
		at each level is again a class in $\Rring/\Ical$; we have verified the pattern
		for $S=2$ in examples but do not have a general theorem, and we flag this as the
		principal open point of the present work
		(Section~\ref{sec:scope}, item~\ref{open:chains}).
	\end{remark}
	
	\section{Why weak modes are not gauge}
	\label{sec:weak}
	
	The halting criterion of the iterative algorithm is that every contraction lies
	in the constraint ideal. This section shows that the criterion is necessary but
	not sufficient for gauge freedom, by means of a counterexample that we regard as
	a structural result rather than an illustration.
	
	\begin{example}[A four-variable system with no gauge freedom]
		\label{ex:counter}
		Let $\xi=(q_{1},q_{2},q_{3},q_{4})$ and
		\begin{equation}
			L=q_{2}\dot q_{1}-q_{1}q_{3}-q_{2}q_{4},
			\qquad\text{so}\qquad
			a=(q_{2},0,0,0),\quad V=q_{1}q_{3}+q_{2}q_{4}.
			\label{eq:counterL}
		\end{equation}
		Then
		\[
		f^{(0)}=\begin{pmatrix}0&-1&0&0\\1&0&0&0\\0&0&0&0\\0&0&0&0\end{pmatrix},
		\qquad
		\ker f^{(0)}=\spn\{e_{3},e_{4}\},
		\qquad
		\Om_{1}=\pd{3}V=q_{1},\quad\Om_{2}=\pd{4}V=q_{2}.
		\]
		Hence $B=(e_{1}\ e_{2})$ and
		\[
		\Gamma=N^{\top}B=0_{2\times2},
		\qquad
		M=B_{u}^{\top}J^{-1}B_{u}=J^{-1}\ \text{invertible},
		\qquad
		\Pi=I_{2},\quad\rank(\Pi M\Pi)=2 .
		\]
		Formula \eqref{eq:dim} gives $\dim\ker f^{(1)}=2+2-2=2$, with basis
		$\{(e_{3},0),(e_{4},0)\}$; by Proposition~\ref{prop:mfree} both are
		multiplier-free, and by \eqref{eq:fccount} there is \emph{no} first-class
		combination. The contractions are
		\[
		\Phi_{e_{3}}=\pd{3}V=q_{1}=\Om_{1},
		\qquad
		\Phi_{e_{4}}=\pd{4}V=q_{2}=\Om_{2},
		\]
		both in $\Ical$ and neither identically zero: the algorithm halts on weakly
		vanishing contractions and reports two null modes.
		
		The Euler--Lagrange equations of \eqref{eq:counterL} are, however,
		\[
		\dot q_{2}=-q_{3},\qquad
		\dot q_{1}=q_{4},\qquad
		q_{1}=0,\qquad
		q_{2}=0,
		\]
		so $q_{1}\equiv q_{2}\equiv0$ and, differentiating, $q_{3}=q_{4}=0$. The
		solution is unique: the system has zero degrees of freedom and admits no gauge
		transformation whatsoever. Consistently with Theorem~\ref{thm:firstclass}, the
		two constraints are second class, $\{\Om_{1},\Om_{2}\}_{\FJ}=M_{12}\neq0$.
	\end{example}
	
	\begin{theorem}[The chain criterion rejects the false candidates]
		\label{thm:counterchain}
		In Example~\ref{ex:counter}, take $u_{1}=e_{3}$, so $\Phi_{u_{1}}=\Om_{1}=q_{1}$.
		The first chain equation $f^{(0)}u_{0}=\nabla\Om_{1}=e_{1}$ has general solution
		$u_{0}=(0,-1,c_{3},c_{4})$ with $c_{3},c_{4}\in\Rring$ arbitrary. The decisive
		class of Theorem~\ref{thm:chain-decisive} is
		\begin{equation}
			\bigl[\,u_{0}\cdot\nabla V\,\bigr]
			=\bigl[-\pd{2}V+c_{3}\pd{3}V+c_{4}\pd{4}V\bigr]
			=\bigl[-q_{4}+c_{3}q_{1}+c_{4}q_{2}\bigr]
			=-q_{4}\neq0
			\quad\text{in }\Rring/\gen{q_{1},q_{2}},
			\label{eq:counterclass}
		\end{equation}
		so no two-step chain closes. The same computation with $u_{1}=e_{4}$ gives
		$[\,u_{0}\cdot\nabla V\,]=q_{3}\neq0$.
	\end{theorem}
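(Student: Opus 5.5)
The argument is a direct computation resting on Theorem~\ref{thm:chain2} and Theorem~\ref{thm:chain-decisive}; no new ideas are needed, only care with signs and with the meaning of ``$\neq0$'' in the quotient ring.

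\emph{Step 1: solve the first chain equation.} With $f^{(0)}$ as in Example~\ref{ex:counter}, only $f_{12}=-1$ and $f_{21}=1$ are nonzero. Hence
\[
f^{(0)}u_{0}=(-u_{0}^{2},\,u_{0}^{1},\,0,\,0)^{\top}.
\]
Setting this equal to $\nabla\Om_{1}=e_{1}$ gives $u_{0}^{2}=-1$ and $u_{0}^{1}=0$, while $u_{0}^{3}$ and $u_{0}^{4}$ are free. This is exactly $u_{0}=(0,-1,c_{3},c_{4})$, in agreement with Theorem~\ref{thm:chain2}(b): the solution is unique modulo $\ker f^{(0)}=\spn\{e_{3},e_{4}\}$. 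Solvability is also consistent with Theorem~\ref{thm:chain2}(a), because $\Gamma=0$ puts $e_{1}$ in $\ker\Gamma$.

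\emph{Step 2: compute the class.} From $\nabla V=(q_{3},q_{4},q_{1},q_{2})$ we get
\[
u_{0}\cdot\nabla V=-q_{4}+c_{3}q_{1}+c_{4}q_{2}.
\]
The last two terms lie in $\Ical=\gen{q_{1},q_{2}}$, so the class is $[-q_{4}]$. Theorem~\ref{thm:chain2}(b) guarantees that this class does not depend on the choice of $c_{3},c_{4}\in\Rring$.

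\emph{Step 3: show the class is nonzero.} This is the only step with any content. I must show that $q_{4}\notin\gen{q_{1},q_{2}}$ in $\Rring$. My approach is to use the evaluation homomorphism that sends $q_{1},q_{2}\mapsto0$ and fixes $q_{3},q_{4}$. It kills the ideal but maps $q_{4}$ to $q_{4}\neq0$. Equivalently, $\Rring/\gen{q_{1},q_{2}}\cong\R[q_{3},q_{4}]$.

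The localization caveat of \eqref{eq:ring} needs a check. The argument survives as long as the allowed denominators do not vanish identically on $\{q_{1}=q_{2}=0\}$, since then they stay invertible after the substitution. I will state this explicitly, since it is the one place where hypothesis \textbf{H12} matters. With the class nonzero, Theorem~\ref{thm:chain-decisive} gives that no two-step chain with $u_{1}=e_{3}$ closes.

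\emph{Step 4: the case $u_{1}=e_{4}$.} Now $\nabla\Om_{2}=e_{2}$, and $f^{(0)}u_{0}=e_{2}$ forces $u_{0}=(1,0,c_{3},c_{4})$. Then
\[
u_{0}\cdot\nabla V=q_{3}+c_{3}q_{1}+c_{4}q_{2},
\]
whose class is $[q_{3}]$, nonzero by the same substitution argument. I will also remark that every combination $u_{1}=\alpha e_{3}+\beta e_{4}$ fails for the same reason, since its class is $[-\alpha q_{4}+\beta q_{3}]$. This remark is not needed for the stated claim.
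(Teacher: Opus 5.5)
Your proposal is correct and follows the paper's own argument. The paper's justification is exactly this computation: solve $f^{(0)}u_{0}=e_{1}$ to get $u_{0}=(0,-1,c_{3},c_{4})$, contract with $\nabla V=(q_{3},q_{4},q_{1},q_{2})$, and reduce modulo $\gen{q_{1},q_{2}}$, with the same steps for $u_{1}=e_{4}$. Your substitution $q_{1},q_{2}\mapsto0$, together with the localization check, proves $q_{4}\notin\gen{q_{1},q_{2}}$ and $q_{3}\notin\gen{q_{1},q_{2}}$. The paper only asserts these two non-memberships, so your argument fills in that step.
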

	
	\begin{remark}[Structural diagnosis]
		\label{rem:diagnosis}
		The mechanism is transparent: in \eqref{eq:counterL} the potential depends on the
		degenerate coordinates only through the constraints themselves,
		$V=q_{3}\Om_{1}+q_{4}\Om_{2}$. Every contraction is then automatically in
		$\Ical$, so the ideal test cannot distinguish this situation from genuine gauge
		freedom. What the bordering does is to insert $\dot\lambda^{\alpha}$ into the
		equations of motion, absorbing precisely the terms that determined $q_{3}$ and
		$q_{4}$ in the original system; the resulting freedom belongs to the extended
		system, not to the original one. This is
		Remark~\ref{rem:borderimposes} in action.
	\end{remark}
	
	\begin{proposition}[General form of the mechanism]
		\label{prop:mechanism}
		Suppose that in adapted coordinates (Remark~\ref{rem:darboux}) the potential has
		the form $V=V_{0}(q,p)+\sum_{\alpha}z^{\alpha}\Om_{\alpha}(q,p)$ with
		$\Om_{\alpha}=\pd{z^{\alpha}}V$ independent of $z$. Then every contraction
		$\Phi_{N_{\alpha}}=\Om_{\alpha}$ lies in $\Ical$ and the algorithm halts, while
		$\Gamma=0$ and $M_{\alpha\beta}=\{\Om_{\alpha},\Om_{\beta}\}_{\FJ}$. There is
		gauge freedom if and only if $\rank(M)<k$, and the number of first-class
		combinations is $k-\rank M$.
	\end{proposition}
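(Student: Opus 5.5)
Work throughout in adapted coordinates $(q^{a},p_{a},z^{\alpha})$ from Remark~\ref{rem:darboux}. There the null frame is $N_{\alpha}=\partial_{z^{\alpha}}$, which is constant, so \textbf{H3} holds in this frame.

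\emph{Contractions and halting.} The contraction along $N_{\alpha}$ is $\Phi_{N_{\alpha}}=\pd{z^{\alpha}}V$. For $V=V_{0}+\sum_{\beta}z^{\beta}\Om_{\beta}$ with $\Om_{\beta}$ independent of $z$, this equals $\Om_{\alpha}$. It therefore lies in $\Ical$, and the halting criterion is met at the first bordering round.

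\emph{$\Gamma$ and $M$.} By \eqref{eq:intro-Gamma}, $\Gamma_{\alpha\beta}=N_{\alpha}^{i}\pd{i}\Om_{\beta}=\pd{z^{\alpha}}\Om_{\beta}=0$, since the $\Om_{\beta}$ do not depend on $z$. Equivalently, the Hessian form \eqref{eq:hessian} gives $\pd{z^{\alpha}}\pd{z^{\beta}}V=0$ because $V$ is affine in $z$. For $M$, \eqref{eq:Mpseudo} gives $M_{\alpha\beta}=(\nabla\Om_{\alpha})^{\top}f^{+}\nabla\Om_{\beta}$, which is $\{\Om_{\alpha},\Om_{\beta}\}_{\FJ}$ by Definition~\ref{def:fjbracket}. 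Two caveats need attention here:
\begin{itemize}[leftmargin=2.2em,itemsep=0.25em]
\item In Darboux coordinates $Q_{R}$ is not orthonormal. One should therefore either take $f^{+}$ for the Darboux complement or invoke Proposition~\ref{prop:invariance}.
\item Since $\Gamma=0$, we have $\ker\Gamma=\R^{k}$ and $\Pi=I$. The invariance proposition then shows that $M$ itself is independent of the complement, so the identification holds without ambiguity.
\end{itemize}

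\emph{First-class count.} With $\Pi=I$, Theorem~\ref{thm:firstclass}(c) gives the number of first-class combinations as $\dim\ker\Gamma-\rank(\Pi M\Pi)=k-\rank M$.

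\emph{Gauge freedom, first-class side.} The main obstacle is the biconditional ``gauge freedom iff $\rank M<k$''. I would interpret gauge freedom as the existence of a closing chain, via Theorem~\ref{thm:chain-decisive}. Suppose $\rank M=k$. By Corollary~\ref{cor:reading}, no kernel element has a nonvanishing multiplier block, and the multiplier-free ones are weak (their contractions are the $\Om_{\alpha}$). The direct route is therefore the chain computation: for $u_{1}=\sum_{b}w^{b}N_{b}$ one must solve $f^{(0)}u_{0}=\nabla\Om_{w}$ with $\Gamma w=0$, which always holds here. Mimicking Theorem~\ref{thm:counterchain}, take $u_{0}=f^{+}\nabla\Om_{w}$ plus an element of $\ker f^{(0)}$. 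Then
\[
u_{0}\cdot\nabla V=\{V,\Om_{w}\}_{\FJ}\text{-type term}=\{V_{0},\Om_{w}\}_{\FJ}+\sum_{\beta}z^{\beta}\{\Om_{\beta},\Om_{w}\}_{\FJ}.
\]
Its class modulo $\Ical$ contains $\sum_{\beta}z^{\beta}(Mw)_{\beta}$, and this term is linear in the free variables $z$. It cannot lie in $\Ical$, because the generators of $\Ical$ are $z$-free, unless $Mw=0$. Hence $Mw\neq0$ rules out closure. This proves that $\rank M=k$ implies there is no gauge freedom.

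\emph{Gauge freedom, converse.} Take $w\in\ker M$. The $z$-term drops out, leaving the class $[\{V_{0},\Om_{w}\}_{\FJ}]$. I would argue that this class vanishes modulo $\Ical$ by the consistency (halting) hypothesis that no new constraints arise, since $\dot\Om_{w}=\{\Om_{w},V\}_{\FJ}$ must be weakly zero. I expect this to be the delicate point. The statement should be read under the hypothesis that the algorithm has genuinely halted, including tertiary consistency. I would make that assumption explicit and cite Theorem~\ref{thm:chain-decisive} to conclude that the chain closes.
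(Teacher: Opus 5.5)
Up to and including the first-class count, your argument is the paper's proof. The paper notes only that $\Gamma_{\alpha\beta}=\pd{z^{\alpha}}\Om_{\beta}=0$, hence $\Pi=I$ and $\Pi M\Pi=M$, and reads $k-\rank M$ off \eqref{eq:fccount}. It tacitly takes ``gauge freedom'' to mean ``at least one first-class combination exists'', so the biconditional is just $k-\rank M>0$ and no chain is ever solved. Your observation that $\Gamma=0$ makes $M$ itself complement-independent via Proposition~\ref{prop:invariance} is a correct refinement the paper leaves implicit.

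\textbf{Forward direction.} The problems begin when you try to prove the stronger, chain-closure version. The fact that the generators of $\Ical$ are $z$-free does not give ``unless $Mw=0$''. Write $\Ical_{0}$ for the ideal of the $\Om_{\alpha}$ in $\R[q,p]$. Then $\Rring/\Ical\cong(\R[q,p]/\Ical_{0})[z]$, so the class $[\{V_{0},\Om_{w}\}_{\FJ}+\sum_{\beta}z^{\beta}(Mw)_{\beta}]$ vanishes if and only if $\{V_{0},\Om_{w}\}_{\FJ}$ and every $(Mw)_{\beta}$ lie in $\Ical$. Since $M$ is a matrix of functions of $(q,p)$, $Mw$ can be weakly zero without being zero. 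Concretely, take $\Om_{1}=q_{1}$, $\Om_{2}=q_{1}p_{1}+q_{2}$, $V_{0}=0$. Then $M_{12}=q_{1}$, which has generic rank $2=k$, yet both classes vanish; indeed the Euler--Lagrange equations leave $z^{1},z^{2}$ arbitrary. Your forward argument is therefore valid only if $\rank M$ is read as the rank on $\Sigma$ (or if $M$ is constant and $1\notin\Ical$), and that reading must be stated.

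\textbf{Converse and halting.} In the chain sense the converse is false without an extra hypothesis, not merely delicate. With one canonical pair and one $z$, take $V=q+zp$. Then $k=1$ and $M=0$, but the class is $[\{q,p\}_{\FJ}]=[1]\neq0$, and the Euler--Lagrange equations ($p=0$, $\dot p=-1$) have no solution at all.

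The same example shows that your opening claim that the halting criterion is met covers only the multiplier-free modes. When $\ker M\neq0$, Theorem~\ref{thm:kernel} also puts into $\ker f^{(1)}$ the elements $(-f^{+}Bw+Ny,\,w)$ with $w\in\ker M$. Their contraction is $-\{V,\Om_{w}\}_{\FJ}$ modulo $\Ical$; in the example this is the element $(e_{q},-1)$, with contraction $1$. So ``the algorithm halts'' is exactly the missing hypothesis, not a consequence of the form of $V$.

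Your proposed fix is the right one, and it is cleaner than you make it. If every contraction on $\ker f^{(1)}$ lies in $\Ical$, then for $w\in\ker M$ the class $[\{V_{0},\Om_{w}\}_{\FJ}]$ vanishes. Theorems~\ref{thm:chainkernel} and~\ref{thm:chain-decisive}, extended linearly to $u_{1}=-\sum_{b}w^{b}N_{b}$, then close the chain. No appeal to $\dot\Om_{w}\approx0$ is needed; that would only give membership in $\Ical(\Sigma)$ and would require \textbf{H5}.
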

	
	\begin{proof}
		$\Gamma_{\alpha\beta}=\pd{z^{\alpha}}\Om_{\beta}=0$ since the $\Om_{\beta}$ do
		not depend on $z$; $\Pi=I$ so $\Pi M\Pi=M$, and \eqref{eq:fccount} gives
		$k-\rank M$.
	\end{proof}
	
	\begin{corollary}[Necessary but not sufficient]
		\label{cor:necsuf}
		Membership of all contractions in $\Ical$, which is the criterion on which the
		iteration halts, is a necessary condition for the existence of gauge freedom,
		but not a sufficient one. The sufficient test is
		Theorem~\ref{thm:chain-decisive}.
	\end{corollary}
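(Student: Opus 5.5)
The corollary makes two claims. First, membership of the contractions in $\Ical$ is necessary. Second, it is not sufficient. I would prove the two directions separately and then point to Theorem~\ref{thm:chain-decisive} for the sufficient test.

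For necessity, the plan is to take gauge freedom in the operative sense of the paper. That means either a strong gauge direction (Theorem~\ref{thm:strong}) or a closed chain $\delta\xi=\rho u_{0}+\dot\rho u_{1}$ (Theorem~\ref{thm:chain2}).

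In the strong case, Theorem~\ref{thm:strong} gives $v\cdot\nabla V\equiv0$. So the contraction is $0$, which lies in $\Ical$.

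In the chain case, the mode that the algorithm reports is $u_{1}\in\ker f^{(0)}$, and its contraction is $\Phi_{u_{1}}$. Theorem~\ref{thm:chain2} requires $f^{(0)}u_{0}=\nabla\Phi_{u_{1}}$. Since $u_{1}$ lies in the span of the frame $N$, $\Phi_{u_{1}}$ is an $\Rring$-combination of the generated constraints $\Om_{a}=N_{a}\cdot\nabla V$, so it is in $\Ical$.

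For the multiplier-carrying kernel elements $(u_{0},w)$ with $w=-e_{b}$, Theorem~\ref{thm:chainkernel} says the chain is exact iff $\Phi_{u_{0}}\equiv0$. That is again in $\Ical$. So every contraction attached to a genuine gauge chain lies in $\Ical$.

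The step I expect to be most delicate is making "existence of gauge freedom" precise enough that this argument covers every case. I would state it explicitly as the existence of a strong direction or a closed two-step chain. Longer chains (Remark~\ref{rem:longer}) are excluded from the claim, so that nothing unproved is used.

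For non-sufficiency, a single counterexample is enough. Example~\ref{ex:counter} is one. There, both contractions $\Phi_{e_{3}}=\Om_{1}$ and $\Phi_{e_{4}}=\Om_{2}$ lie in $\Ical$, so the algorithm halts. Yet the Euler--Lagrange equations force $q\equiv0$, so there is no gauge freedom. Theorem~\ref{thm:counterchain} confirms this on the chain side: the decisive classes are $-q_{4}$ and $q_{3}$, both nonzero in $\Rring/\Ical$. Finally, Theorem~\ref{thm:chain-decisive} supplies the sufficient test: a vanishing class yields an explicit closed chain, and hence exact invariance by Theorem~\ref{thm:chain2}.
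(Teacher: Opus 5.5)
Your decomposition is the one the paper intends. The corollary has no separate proof: non-sufficiency is read off from Example~\ref{ex:counter} with Theorem~\ref{thm:counterchain}, and the test comes from Theorem~\ref{thm:chain-decisive}. You also correctly located the delicate step, but the way you resolve it breaks the non-sufficiency half.

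You define gauge freedom as the existence of a closed two-step chain, or of a strong gauge direction in the sense of Theorem~\ref{thm:strong}, which allows any smooth $v$. Under that definition Example~\ref{ex:counter} \emph{has} gauge freedom. Take $v=q_{2}\,\partial_{q_{3}}-q_{1}\,\partial_{q_{4}}$. It satisfies $f^{(0)}v=0$ and $v\cdot\nabla V=q_{2}q_{1}-q_{1}q_{2}\equiv0$. Hence $\delta q_{3}=\varepsilon q_{2}$, $\delta q_{4}=-\varepsilon q_{1}$ leaves $L$ exactly invariant, and $(v,0)=q_{2}(e_{3},0)-q_{1}(e_{4},0)\in\ker f^{(1)}$ is a strong null mode.

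Uniqueness of the Euler--Lagrange solution does not contradict this. Since $E_{3}=-q_{1}$ and $E_{4}=-q_{2}$, one has $v^{A}=\mu^{AB}E_{B}$ with $\mu^{34}=-\mu^{43}=-1$, so $v$ is a trivial transformation that vanishes on shell. The step ``the Euler--Lagrange equations force $q\equiv0$, so there is no gauge freedom'' is therefore false in your own formal sense, and non-sufficiency is not established. This is structural, not an accident of the example: whenever $d\geq2$, $\Om_{2}N_{1}-\Om_{1}N_{2}$ is a strong direction of the same trivial kind.

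The repair is to use one notion of gauge freedom in both halves. Either restrict to constant-coefficient candidates, or count gauge freedom modulo transformations that vanish on shell. The first option is hypothesis \textbf{H7}, the span $\mathcal{G}_{\R}$ of Proposition~\ref{prop:syzygy}, and it is the regime the paper says its decision procedure covers. Your necessity argument survives either choice.

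In the constant-coefficient reading you must exclude every constant candidate, not just $e_{3}$ and $e_{4}$ separately:
\begin{itemize}
\item a constant $c_{3}e_{3}+c_{4}e_{4}$ is strong only if $c_{3}q_{1}+c_{4}q_{2}\equiv0$, which forces $c=0$;
\item by linearity, the decisive class for $u_{1}=c_{3}e_{3}+c_{4}e_{4}$ is $[c_{4}q_{3}-c_{3}q_{4}]$, which is nonzero in $\Rring/\gen{q_{1},q_{2}}$ unless $c=0$.
\end{itemize}

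There is also a smaller gap in the necessity half. You only cover a chain that closes with the kernel element's own $u_{0}$. The chain may instead close with another solution $u_{0}'$ of $f^{(0)}u_{0}=\nabla\Om_{b}$, and Theorem~\ref{thm:chainkernel} says such a solution need not be a kernel block. In that case you still need $\Phi_{u_{0}}\in\Ical$ for the element the algorithm reports. This follows from the class independence of Theorem~\ref{thm:chain2}(b), $\Phi_{u_{0}}=\Phi_{u_{0}'}+\sum_{a}c^{a}\Om_{a}$, and it is exactly what Theorem~\ref{thm:chain-decisive} contributes to necessity.
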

	
	\section{Involutivity and the reduced gauge distribution}
	\label{sec:involutivity}
	
	\begin{proposition}[Closedness]
		\label{prop:closed}
		$\dd f^{(m)}=0$ for every $m$, since $f^{(m)}=\dd a^{(m)}$.
	\end{proposition}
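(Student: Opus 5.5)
The claim to prove is Proposition~\ref{prop:closed}: $\dd f^{(m)}=0$ for every round $m$. The plan is to show that $f^{(m)}$ is the exterior derivative of an explicit extended one-form on the enlarged space, and then use $\dd^{2}=0$. Beyond that identification the statement is a tautology. The only real care needed is to match the one-form to the sign convention \eqref{eq:fm}.

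\textbf{Step 1: the extended one-form.} Work on $\Mm\times\R^{k}$ with coordinates $(\xi^{i},\lambda^{\alpha})$. By Remark~\ref{rem:sign}, the form $(a_{i},-\Om_{\alpha})$ produces the opposite sign convention. I would therefore take the one-form that produces \eqref{eq:fm} directly,
\[
a^{(m)}=a_{i}(\xi)\,\dd\xi^{i}+\Om_{\alpha}(\xi)\,\dd\lambda^{\alpha}.
\]
Equivalently, I could keep the convention of Remark~\ref{rem:sign} and apply the linear change $\lambda\mapsto-\lambda$. That change is a diffeomorphism, and pullback commutes with $\dd$, so closedness is unaffected either way.

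\textbf{Step 2: identify the matrix.} I would compute $\dd a^{(m)}$ componentwise using the rule \eqref{eq:f}, $f_{AB}=\pd{A}a_{B}-\pd{B}a_{A}$.
\begin{itemize}
\item The $\xi\xi$ block is $f^{(0)}_{ij}$.
\item The $\xi\lambda$ block is $\pd{i}\Om_{\alpha}-\pd{\lambda^{\alpha}}a_{i}=B_{i\alpha}$, because $a_{i}$ does not depend on $\lambda$.
\item The $\lambda\xi$ block is $-B_{i\alpha}$ by antisymmetry.
\item The $\lambda\lambda$ block vanishes, since $\Om_{\alpha}$ does not depend on $\lambda$.
\end{itemize}
This reproduces \eqref{eq:fm}, so $f^{(m)}=\dd a^{(m)}$ as a two-form on the extended space.

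\textbf{Step 3: conclude.} Closedness follows from $\dd^{2}=0$. In coordinates, the cyclic sum $\pd{A}f_{BC}+\pd{B}f_{CA}+\pd{C}f_{AB}$ cancels pairwise by symmetry of mixed partial derivatives.

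\textbf{Later rounds.} For $m>1$, the accumulated constraints are still functions of $\xi$ alone, so the same construction applies with the full list $\Om_{1},\dots,\Om_{k}$. If earlier rounds instead treated the old multipliers as coordinates, I would proceed by induction: append the new $\Om$'s as coefficients of new $\dd\lambda$'s. Each step again produces an exact form. This holds even if a constraint depends on previous multipliers, since exactness never uses that independence.

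\textbf{Main obstacle.} The only point needing attention is the sign bookkeeping between \eqref{eq:fm} and Remark~\ref{rem:sign}. It is harmless, because the two conventions are related by a linear diffeomorphism.
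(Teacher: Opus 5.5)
Your proof is correct and is the paper's argument: the proposition is justified in one line by $f^{(m)}=\dd a^{(m)}$ together with $\dd^{2}=0$. Your explicit choice $a^{(m)}=a_{i}\,\dd\xi^{i}+\Om_{\alpha}\,\dd\lambda^{\alpha}$, with the block check against \eqref{eq:fm} (consistent with Remark~\ref{rem:sign}), only spells out the identification and sign bookkeeping that the paper leaves implicit.
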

	
	\begin{theorem}[Involutivity of the kernel]
		\label{thm:frobenius}
		Let $f$ be a closed $2$-form of locally constant rank on an open set $U$. Then
		$K=\ker f$ is an involutive distribution and, by Frobenius, integrable.
	\end{theorem}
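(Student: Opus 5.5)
The plan is to prove the statement by the standard Cartan-formula argument. For vector fields $X,Y$ and a $2$-form $f$ we have
\[
\iota_{[X,Y]}f=\mathcal{L}_{X}\iota_{Y}f-\iota_{Y}\mathcal{L}_{X}f ,
\qquad
\mathcal{L}_{X}=\dd\iota_{X}+\iota_{X}\dd .
\]
Take $X,Y$ to be smooth local sections of $K=\ker f$. Then $\iota_{Y}f=0$, so the first term vanishes. For the second term, $\mathcal{L}_{X}f=\dd(\iota_{X}f)+\iota_{X}\dd f$. Here $\iota_{X}f=0$ because $X\in K$, and $\dd f=0$ by closedness (Proposition~\ref{prop:closed}, or $f=\dd a$ directly). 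Hence $\mathcal{L}_{X}f=0$, so $\iota_{[X,Y]}f=0$, i.e.\ $[X,Y]\in K$.

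The step needing care is that $K$ is a smooth distribution, so that the bracket of sections makes sense and Frobenius applies. This is where the locally constant rank $r$ enters. Near any point choose, by a continuity argument, a local frame in which an $r\times r$ minor of $f$ is nonzero on a neighbourhood. The kernel of $f$ is then the solution space of a linear system with smooth coefficients and constant rank. Cramer's rule on that minor produces $n-r$ smooth, pointwise independent solutions spanning $\ker f(\xi)$ at every $\xi$ in the neighbourhood. This is the same mechanism that yields the smooth frame $N$ used in Theorem~\ref{thm:dirac}. So $K$ is a smooth subbundle of rank $d=n-r$, and every smooth local vector field with values in $K$ is a $C^{\infty}$-combination of this frame.

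Involutivity only needs to be checked on frame fields. Still, the computation above applies to arbitrary smooth sections, which avoids the Leibniz-rule bookkeeping for $[gX,hY]$. The Frobenius theorem then gives integrability: through every point of $U$ passes a $d$-dimensional integral submanifold, locally the level sets of $r$ functions.

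As a cross-check in coordinates, one can verify
\[
f_{AB}[X,Y]^{B}=X^{C}\pd{C}(f_{AB}Y^{B})-Y^{C}\pd{C}(f_{AB}X^{B})-X^{C}Y^{B}(\pd{C}f_{AB}+\pd{B}f_{CA}+\pd{A}f_{BC}) .
\]
The first two terms vanish because $X,Y\in K$, and the bracket in the last term is $(\dd f)_{CBA}=0$. This matches the index conventions of Lemma~\ref{lem:master}.

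The main obstacle is the constant-rank and smoothness point in the second paragraph, not the algebra; without it, $\ker f$ can jump in dimension and the statement fails. I will note that only the open set where the rank is locally constant is claimed.
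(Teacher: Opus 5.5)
Your argument is correct and is essentially the paper's: the paper expands the invariant six-term formula for $\dd f(u,v,w)$ and observes that every term except $-f([u,v],w)$ vanishes because $u,v\in\ker f$, whereas you get the same cancellation from $\iota_{[X,Y]}=\mathcal{L}_{X}\iota_{Y}-\iota_{Y}\mathcal{L}_{X}$ and Cartan's formula; both proofs then use locally constant rank to obtain smooth local frames, and your Cramer's-rule justification of that step is more detailed than the paper's one-line remark. One caution about your coordinate cross-check: the displayed formula is not an identity for arbitrary $X,Y$, because expanding $f_{AB}[X,Y]^{B}$ produces only $-X^{C}Y^{B}(\pd{C}f_{AB}+\pd{B}f_{CA})$, and the extra term $-X^{C}Y^{B}\pd{A}f_{BC}$ that completes the cyclic sum is legitimate only because it vanishes for $X,Y\in\ker f$ (differentiate $f_{BC}Y^{B}X^{C}\equiv0$ and use $f_{BC}X^{C}=f_{BC}Y^{B}=0$), which you should state explicitly.
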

	
	\begin{proof}
		For sections $u,v$ of $K$ and arbitrary $w$, the intrinsic formula
		\[
		\dd f(u,v,w)=u\,f(v,w)-v\,f(u,w)+w\,f(u,v)
		-f([u,v],w)+f([u,w],v)-f([v,w],u)
		\]
		has all terms zero except $-f([u,v],w)$, because $f(u,\cdot)=f(v,\cdot)=0$. With
		$\dd f=0$ we get $f([u,v],w)=0$ for all $w$, i.e. $[u,v]\in K$. Constant rank
		makes $K$ a subbundle with smooth local frames.
	\end{proof}
	
	\begin{proposition}[Strong gauge distribution]
		\label{prop:stronginv}
		$\mathcal{G}_{\mathrm{strong}}=\ker f\cap\ker\dd V$ is involutive. If in addition
		its rank is locally constant, it is integrable.
	\end{proposition}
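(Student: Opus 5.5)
The plan is to reduce involutivity of $\mathcal{G}_{\mathrm{strong}}=\ker f\cap\ker\dd V$ to two separate closure properties, one per defining condition, and then handle integrability with Frobenius exactly as in Theorem~\ref{thm:frobenius}. Let $u,v$ be vector fields on an open set with $fu=fv=0$ and $u\cdot\nabla V=v\cdot\nabla V=0$ identically. We must show that $[u,v]$ satisfies both conditions.

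For the first condition, I would reuse the computation in the proof of Theorem~\ref{thm:frobenius} almost verbatim. Take the intrinsic formula for $\dd f(u,v,w)$ with $w$ arbitrary. Every term except $-f([u,v],w)$ contains $f(u,\cdot)$ or $f(v,\cdot)$, so it vanishes. Closedness ($\dd f=0$, Proposition~\ref{prop:closed}, or simply $f=\dd a$) then gives $f([u,v],w)=0$ for all $w$, so $[u,v]\in\ker f$. This step does not need constant rank: it is a pointwise identity about the given sections, and the rank hypothesis in Theorem~\ref{thm:frobenius} is only used to guarantee local frames.

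For the second condition, I would use the derivation property of vector fields on functions. Since $u(V)=v(V)=0$ identically on the open set, we get $[u,v](V)=u(v(V))-v(u(V))=u(0)-v(0)=0$. So $[u,v]\in\ker\dd V$. Equivalently, $\ker\dd V$ is involutive because $\dd V$ is exact; this is the one-form analogue of the first step, via $\dd(\dd V)(u,v)=u(\dd V(v))-v(\dd V(u))-\dd V([u,v])$.

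Combining the two steps gives involutivity of the intersection. For integrability, assume the rank of $\mathcal{G}_{\mathrm{strong}}$ is locally constant. Then $\mathcal{G}_{\mathrm{strong}}$ is a smooth subbundle with local frames: locally it is the kernel of the constant-rank map $X\mapsto(fX,\dd V(X))$, and a constant-rank kernel admits smooth local frames. Frobenius then applies directly.

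The main point of care, rather than a real obstacle, is this last step. Without constant rank, the pointwise intersection of two constant-rank (or even non-constant-rank) distributions need not be a smooth distribution, so the integrability claim must be kept conditional, as stated. I would therefore phrase involutivity at the level of sections, and note that $\ker f$ itself need not have constant rank for the first step.
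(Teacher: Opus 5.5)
Your proof is correct and matches the paper's argument: $[u,v]\in\ker f$ by the intrinsic-formula computation of Theorem~\ref{thm:frobenius}, $[u,v]\in\ker\dd V$ by $[u,v]V=u(vV)-v(uV)=0$, and Frobenius once the rank of the intersection is locally constant. You also observe that the first step is a pointwise identity about sections and so does not need constant rank of $f$. That slightly sharpens the paper's citation of Theorem~\ref{thm:frobenius}, which formally carries that hypothesis.
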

	
	\begin{proof}
		$[u,v]\in\ker f$ by Theorem~\ref{thm:frobenius}, and
		$[u,v]V=u(vV)-v(uV)=0$. The rank of an intersection of two distributions can
		jump even when each has constant rank, whence the extra hypothesis.
	\end{proof}
	
	\begin{lemma}[Kernel of the pullback]
		\label{lem:pullback}
		Let $\iota:\Sigma\hookrightarrow\Mm^{(m)}$ be the inclusion of a regular
		constraint surface, $T_{\xi}\Sigma=\{v:v\cdot\nabla\Om_{\alpha}=0\ \forall\alpha\}$.
		Then
		\begin{equation}
			\ker\bigl(\iota^{*}f^{(m)}\bigr)_{\xi}
			=\{v\in T_{\xi}\Sigma:f^{(m)}v\in\spn\{\nabla\Om_{1},\dots,\nabla\Om_{k}\}\}
			\supseteq\ker f^{(m)},
			\label{eq:pullback}
		\end{equation}
		the inclusion being strict in general.
	\end{lemma}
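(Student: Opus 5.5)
The plan is to characterize the kernel of the restricted form pointwise by linear algebra and then exhibit an explicit example for strictness. Fix $\xi\in\Sigma$ and write $T=T_{\xi}\Sigma=\ker B^{\top}$, where $B$ is the matrix of gradients $\nabla\Om_{\alpha}$ taken in the ambient space $\Mm^{(m)}$ on which $f^{(m)}$ lives. Since $\iota_{*}$ is injective and identifies $T_{\xi}\Sigma$ with $T$, we have $(\iota^{*}f^{(m)})_{\xi}(v,v')=v'^{\top}f^{(m)}v$ for $v,v'\in T$. Hence $v\in T$ lies in the kernel of the pullback if and only if $f^{(m)}v\perp T$. By regularity of $\Sigma$ (hypothesis of the lemma) and elementary linear algebra, $T^{\perp}=(\ker B^{\top})^{\perp}=\im B=\spn\{\nabla\Om_{1},\dots,\nabla\Om_{k}\}$. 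This gives the equality in \eqref{eq:pullback}.

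For the inclusion, I take $v\in\ker f^{(m)}$. Then $f^{(m)}v=0$ lies trivially in the span, so it remains only to show $v\in T_{\xi}\Sigma$. This is the content of Remark~\ref{rem:tangency}, i.e.\ item (b) of Theorem~\ref{thm:kernel}: kernel elements are tangent to the constraint surface. There is one point I must be careful about. If the $\Om_{\alpha}$ of the lemma are constraints on the bordered space (functions possibly of $\lambda$ as well), Theorem~\ref{thm:kernel} does not apply verbatim. In that case I will state the inclusion under the paper's standing setting, where $\Sigma$ is cut out by the constraints that were bordered into $f^{(m)}$, so that tangency is enforced by the lower block row $-B^{\top}u=0$. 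For a general regular $\Sigma$ the inclusion $\ker f^{(m)}\subseteq T\Sigma$ is exactly what needs this hypothesis, and I expect that step to be the main obstacle, or at least the place where the statement's scope has to be pinned down.

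For strictness, I will produce a small example in which the bracket condition holds on $T$ but $f^{(m)}v\neq0$. The simplest choice is $f$ symplectic on $\R^{2}$ (coordinates $q,p$, with $f=\dd p\wedge\dd q$, so $\ker f=0$) and the constraint $\Om=q$, taking $f^{(m)}=f$ itself, i.e.\ before bordering by $\Om$. Then $\Sigma=\{q=0\}$ is one-dimensional, so $\iota^{*}f=0$ and its kernel is all of $T\Sigma\neq0$, while $\ker f=0$. If instead the example must use a genuinely bordered $f^{(m)}$, I would take the general principle that the pullback of any $2$-form to an odd-dimensional submanifold is degenerate. Concretely, I pick a case with $f^{(m)}$ invertible (Corollary~\ref{cor:regularity}, $\det\Gamma\neq0$) and the bordered constraint set cutting out an odd-dimensional $\Sigma$ in the extended space. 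Example~\ref{ex:d1} gives $n+k=4$ with one constraint, so $\dim\Sigma=3$ and the pullback has nontrivial kernel while $\ker f^{(1)}=0$.
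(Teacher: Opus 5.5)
Your argument for the equality is exactly the paper's: the covector $f^{(m)}v$ must annihilate $T_{\xi}\Sigma=\ker B^{\top}$, and that annihilator is spanned by the $\nabla\Om_{\alpha}$. So is your inclusion, which uses tangency of kernel elements via Theorem~\ref{thm:kernel}(b) and Remark~\ref{rem:tangency} and, as you note, presupposes that $\Sigma$ is cut out by the bordered constraints. Your strictness example from Example~\ref{ex:d1} is valid and goes beyond the paper's written proof, and its mechanism is the one the paper points to in Section~\ref{sec:regimeIV}: each multiplier direction $(0,e_{\alpha})$ lies in $T_{\xi}\Sigma$ with $f^{(m)}(0,e_{\alpha})=(\nabla\Om_{\alpha},0)$ in the span, but not in $\ker f^{(m)}$ because $\nabla\Om_{\alpha}\neq0$, so under regularity the inclusion is in fact strict whenever $k\geq1$.
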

	
	\begin{proof}
		$v\in\ker(\iota^{*}f)$ iff $v\in T\Sigma$ and $f(v,z)=0$ for all $z\in T\Sigma$,
		i.e. the covector $fv$ annihilates $T\Sigma$, i.e. lies in $(T\Sigma)^{0}$,
		generated by the $\nabla\Om_{\alpha}$ under regularity. The inclusion uses
		Remark~\ref{rem:tangency}: every kernel element is already tangent.
	\end{proof}
	
	\begin{remark}[Ambient versus restricted, and the Gotay--Nester surface]
		\label{rem:gotay}
		The number of null modes reported by the algorithm is a lower bound for
		$\dim\ker(\iota^{*}f^{(m)})$. It is tempting to define the physical space as the
		leaf space $\Mm^{*}=\Sigma/\ker(\iota^{*}f^{(m)})$ and to count
		$\dim\Mm^{*}=\dim\Sigma-\dim\ker(\iota^{*}f^{(m)})$. This is legitimate only
		when $\Sigma$ is the final constraint manifold of the Gotay--Nester--Hinds
		algorithm \cite{GNH1978}, that is, when the dynamical vector field is tangent to
		$\Sigma$ everywhere, and when the kernel has locally constant rank. In
		Example~\ref{ex:counter} the quotient identifies points that are not dynamically
		equivalent, and the correct answer (zero degrees of freedom) is obtained only by
		accident. We therefore state all counting formulas conditionally
		(Section~\ref{sec:charges}).
	\end{remark}
	
	\section{Generic kernels, degeneracy strata and basis dependence}
	\label{sec:strata}
	
	Symbolic kernels are computed over a field of rational functions, which
	introduces three logically independent qualifications.
	
	\subsection{Genericity and strata}
	
	\begin{theorem}[Rank stratification]
		\label{thm:strat}
		Let $f(\xi)$ be antisymmetric with entries in $\Rring$ and let
		$r_{\mathrm{gen}}$ be its rank over the field of fractions. Then:
		\begin{enumerate}[label=(\alph*),leftmargin=2.2em,itemsep=0.25em]
			\item $\Delta=\{\xi:\rank f(\xi)<r_{\mathrm{gen}}\}$ is the common zero locus
			of the numerators of the maximal minors, hence Zariski closed;
			\item on the dense open set $\Mm\setminus(\Delta\cup P)$, with $P$ the pole
			locus of the rational frame, $\ker f$ is a subbundle of constant rank
			$d=n-r_{\mathrm{gen}}$;
			\item on $\Delta$ the kernel can be larger, and gauge directions invisible to
			the generic frame may appear.
		\end{enumerate}
	\end{theorem}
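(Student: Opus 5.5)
The proof rests on one fact: rank is detected by minors. For any field $\mathbb{K}$, a matrix over $\mathbb{K}$ has rank $\geq s$ if and only if some $s\times s$ minor is nonzero. I would apply this twice, once over the fraction field $\mathrm{Frac}(\Rring)$ and once over $\R$ after evaluating at $\xi$. Evaluation is a ring homomorphism $\Rring\to\R$, defined away from the denominators of a localization. Because of this, every minor of $f(\xi)$ is the value at $\xi$ of the corresponding minor of $f$.

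\emph{Part (a).} By definition of $r_{\mathrm{gen}}$, every minor of size $>r_{\mathrm{gen}}$ vanishes identically. Hence $\rank f(\xi)\leq r_{\mathrm{gen}}$ for every admissible $\xi$. It follows that $\rank f(\xi)<r_{\mathrm{gen}}$ holds exactly when all $r_{\mathrm{gen}}\times r_{\mathrm{gen}}$ minors vanish at $\xi$. If the entries are rational, I first clear denominators, writing each minor as $p/q$ with $q$ nonvanishing on the domain. Then $\Delta$ is the common zero locus of the numerators $p$, which is Zariski closed. At least one such minor is a nonzero element of $\mathrm{Frac}(\Rring)$, so its numerator is a nonzero polynomial. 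Therefore $\Delta$ is a proper algebraic subset, and its complement is dense in the open set $\Mm$. This is the standard fact that a nonzero real polynomial cannot vanish on an open set.

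\emph{Part (b).} I would construct the rational frame explicitly. Choose a nonvanishing $r_{\mathrm{gen}}\times r_{\mathrm{gen}}$ minor and use Cramer's rule to solve for the pivot variables in terms of the $d=n-r_{\mathrm{gen}}$ free ones. This gives $d$ vectors $N_1,\dots,N_d$ with entries in $\mathrm{Frac}(\Rring)$. They satisfy $fN_a=0$ identically, and they contain an identity block in the free coordinates. Let $P$ be the pole locus, i.e.\ the zero set of the denominators. On $\Mm\setminus(\Delta\cup P)$ each $N_a(\xi)$ is defined and smooth. The relation $f(\xi)N_a(\xi)=0$ holds pointwise by evaluation. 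The identity block makes the $N_a(\xi)$ linearly independent. Since $\rank f(\xi)=r_{\mathrm{gen}}$ there, $\dim\ker f(\xi)=d$, so the $N_a(\xi)$ span the kernel. This yields a smooth frame, so $\ker f$ is a subbundle of rank $d$. Density holds because $\Delta\cup P$ is a proper algebraic set.

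\emph{Part (c).} This part is an existence statement, so I would prove it by example rather than in general. On $\Delta$ the rank drops, so $\dim\ker f(\xi)>d$. The evaluated generic frame spans at most $d$ dimensions, and it may also have poles on $\Delta$. Hence there are null vectors not in the span of the frame. Take $f$ with $f_{12}=-f_{21}=\xi^{3}$ on $\R^{3}$ and $V$ independent of $\xi^1,\xi^2$, for instance $V=0$. Generically the kernel is $\spn\{e_3\}$. At $\xi^3=0$ the vectors $e_1,e_2$ are null directions and satisfy the strong conditions of Theorem~\ref{thm:strong} pointwise, yet they are invisible to the generic frame. One caveat is that $f$ should be closed if we want it to arise as $\dd a$. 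In this example closedness holds, since it is $\dd(\tfrac12(\xi^3)^2\dots)$-type. I would check this directly by choosing $a=(0,\xi^1\xi^3,0)$, which gives $f_{12}=\pd1a_2=\xi^3$ up to sign and an additional $f_{32}$ term. I would adjust the example accordingly.

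I expect the main obstacle to be the phrasing and rigor of the word ``can'' in (c): the honest content is an example, not a theorem. A secondary care point is the role of the denominator locus in (a) and (b). The Cramer frame depends on which minor is chosen, and so does $P$. However, $\Delta$ does not depend on any such choice, so I would state density for a fixed chosen frame.
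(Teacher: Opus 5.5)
Your argument for (a) and (b) is the paper's proof written out in detail. You use the minor criterion for rank, applied over $\mathrm{Frac}(\Rring)$ and again after evaluation, and a Cramer-rule frame built from one fixed nonvanishing $r_{\mathrm{gen}}\times r_{\mathrm{gen}}$ minor. Your identity block is what gives pointwise independence; an arbitrary rational basis of $\ker f$ need not have it off $\Delta\cup P$. For (c) the paper only cites lower semicontinuity of the rank. You are right that the enlargement of the kernel on $\Delta$ follows directly from the definition of $\Delta$, and that the ``gauge'' clause can only be illustrated, not proved.

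The illustration itself is wrong as written. $\xi^{3}\,\dd\xi^{1}\wedge\dd\xi^{2}$ is not closed: its exterior derivative is $\dd\xi^{1}\wedge\dd\xi^{2}\wedge\dd\xi^{3}\neq0$. So it is not $\dd a$ for any first-order Lagrangian, and ``gauge direction'' has no meaning for it. It does still show the kernel jump in (c), since the theorem assumes only antisymmetry.

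There are two easy fixes.
\begin{itemize}
\item Replace $\xi^{3}$ by $\xi^{1}$, i.e.\ $a=(0,\tfrac12(\xi^{1})^{2},0)$. This form is closed, has generic kernel $\spn\{e_{3}\}$, and vanishes on $\Delta=\{\xi^{1}=0\}$. It is Example~\ref{ex:jump} with an inert third coordinate.
\item Finish your adjusted choice $a=(0,\xi^{1}\xi^{3},0)$. It gives $f_{12}=\xi^{3}$ and $f_{23}=-\xi^{1}$, generic kernel spanned by $(-\xi^{1},0,\xi^{3})$, and $\Delta=\{\xi^{1}=\xi^{3}=0\}$, on which $f=0$.
\end{itemize}

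In both cases, with $V=0$, the extra null vectors satisfy \eqref{eq:strong} only on a set with empty interior. They are therefore not strong gauge directions in the sense of Theorem~\ref{thm:strong}, which requires an open set. That is why (c) can only be a ``may'' statement, as you anticipated.
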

	
	\begin{proof}
		(a) is the minor criterion for rank; (b) follows by Cramer's rule from a fixed
		nonvanishing minor; (c) is lower semicontinuity of the rank.
	\end{proof}
	
	The sets $\Delta$ and $P$ are of different nature and must not be merged: $P$ is
	an artifact of the chosen frame and can be removed by clearing denominators,
	while $\Delta$ is intrinsic.
	
	\begin{example}[An invisible rank jump]
		\label{ex:jump}
		Let $n=2$, $a=(0,\tfrac12 q_{1}^{2})$, i.e.
		$L=\tfrac12 q_{1}^{2}\dot q_{2}-V$. Then
		$f^{(0)}=\bigl(\begin{smallmatrix}0&q_{1}\\-q_{1}&0\end{smallmatrix}\bigr)$ has
		generic rank $2$ and empty kernel over the field of fractions, with bracket
		$\{q_{1},q_{2}\}=1/q_{1}$. On $\Delta=\{q_{1}=0\}$, however, $f^{(0)}=0$ and
		every vector is null. The verdict ``regular'' must be read as ``regular off the
		degeneracy locus''; hypothesis \textbf{H1} is exactly the exclusion of this
		phenomenon.
	\end{example}
	
	\subsection{Non-orthonormal frames}
	
	\begin{proposition}[Change of kernel basis]
		\label{prop:basis}
		Let $N_{0}$ be orthonormal and $N=N_{0}S$ with $S\in GL(d,\R)$. Then
		$\Gamma=S^{\top}\Gamma_{0}$, $\det(N^{\top}N)=\det(S)^{2}$, and for $k=d$
		\begin{equation}
			\det f^{(m)}
			=\Bigl(\prod_{\ell}\mu_{\ell}^{2}\Bigr)\det(\Gamma_{0})^{2}
			=\Bigl(\prod_{\ell}\mu_{\ell}^{2}\Bigr)
			\frac{\det(\Gamma)^{2}}{\det(N^{\top}N)} .
			\label{eq:basiscorr}
		\end{equation}
		The criterion $\det\Gamma\neq0$ is basis-independent; the value $\det\Gamma$ is
		not.
	\end{proposition}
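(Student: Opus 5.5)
The plan is to treat the three assertions of Proposition~\ref{prop:basis} in order, using only the definition $\Gamma=N^{\top}B$, Proposition~\ref{prop:factorization}, and the fact that the columns of $N_{0}$ are orthonormal.

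First, the transformation rule. Since $N=N_{0}S$, we have $\Gamma=N^{\top}B=S^{\top}N_{0}^{\top}B=S^{\top}\Gamma_{0}$, where $\Gamma_{0}:=N_{0}^{\top}B$ is the pairing built from the orthonormal frame. Next, orthonormality gives $N_{0}^{\top}N_{0}=I_{d}$. Hence $N^{\top}N=S^{\top}N_{0}^{\top}N_{0}S=S^{\top}S$, and so $\det(N^{\top}N)=\det(S)^{2}$. This quantity is strictly positive because $S$ is invertible, so dividing by it later is legitimate.

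For the determinant identity with $k=d$, I apply Proposition~\ref{prop:factorization} with the orthonormal frame $N_{0}$. That proposition is stated precisely for this frame, and $f^{(m)}$ itself does not depend on any choice of kernel frame. It gives $\det f^{(m)}=(\prod_{\ell}\mu_{\ell}^{2})\det(\Gamma_{0})^{2}$, which is the first equality of \eqref{eq:basiscorr}. For the second equality, take determinants in $\Gamma=S^{\top}\Gamma_{0}$; since $\Gamma$ is square when $k=d$, this yields $\det\Gamma=\det(S)\det\Gamma_{0}$. Therefore $\det(\Gamma_{0})^{2}=\det(\Gamma)^{2}/\det(S)^{2}=\det(\Gamma)^{2}/\det(N^{\top}N)$.

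The final sentence of the proposition follows from the same relation. Because $\det S\neq0$, we have $\det\Gamma\neq0$ if and only if $\det\Gamma_{0}\neq0$, so the criterion is independent of the frame. The value of $\det\Gamma$, however, is rescaled by the factor $\det S$, which can be any nonzero real number. Taking $S=cI_{d}$ with $c^{d}\neq1$ gives an explicit example in which the value changes.

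There is no real obstacle here. The one point to be careful about is that the factors $\mu_{\ell}$ must be taken from the same orthonormal $Q$ as $\Gamma_{0}$, as stressed in Remark~\ref{rem:frameconsistency}. That is exactly why the argument passes through $\Gamma_{0}$ instead of applying Proposition~\ref{prop:factorization} directly to a non-orthonormal $N$.
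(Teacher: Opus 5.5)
Your proposal is correct and follows the paper's proof essentially verbatim: you compute $\Gamma=S^{\top}\Gamma_{0}$ and $N^{\top}N=S^{\top}S$, then substitute $\det\Gamma_{0}=\det\Gamma/\det S$ into Proposition~\ref{prop:factorization}. One minor caveat is that your example $S=cI_{d}$ changes the value of $\det\Gamma$ only when $\det\Gamma_{0}\neq0$.
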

	
	\begin{proof}
		$\Gamma=(N_{0}S)^{\top}B=S^{\top}\Gamma_{0}$ and
		$N^{\top}N=S^{\top}N_{0}^{\top}N_{0}S=S^{\top}S$. Substitute
		$\det\Gamma_{0}=\det\Gamma/\det S$ into Proposition~\ref{prop:factorization}.
	\end{proof}
	
	\subsection{Non-constant frames}
	
	\begin{proposition}[Kernel--Hessian identity with correction]
		\label{prop:Tterm}
		Without assuming that $N$ is constant,
		\begin{equation}
			\Gamma_{\alpha\beta}
			=\sum_{i,j}N_{\alpha}^{i}\bigl(\pd{i}\pd{j}V\bigr)N_{\beta}^{j}
			+T_{\alpha\beta},
			\qquad
			T_{\alpha\beta}:=\sum_{i,j}N_{\alpha}^{i}\bigl(\pd{i}N_{\beta}^{j}\bigr)\pd{j}V .
			\label{eq:Tterm}
		\end{equation}
		Moreover $T\equiv0$ if $N$ is locally constant, in particular in the adapted
		coordinates of Remark~\ref{rem:darboux}, and $\Gamma$ is symmetric if and only
		if the antisymmetric part of $T$ vanishes.
	\end{proposition}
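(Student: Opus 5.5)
The identity is obtained by differentiating the constraint generated by the consistency condition with the product rule, in the setting of Section~\ref{sec:obstruction}. There $\Om_{\beta}=N_{\beta}^{j}\pd{j}V$ (summed over $j$), with $N$ now an arbitrary smooth frame of $\ker f^{(0)}$, and $\Gamma=N^{\top}B$ with $B_{i\beta}=\pd{i}\Om_{\beta}$. The plan is:
\[
\pd{i}\Om_{\beta}=N_{\beta}^{j}\,\pd{i}\pd{j}V+\bigl(\pd{i}N_{\beta}^{j}\bigr)\pd{j}V .
\]
Contracting with $N_{\alpha}^{i}$ and summing over $i$ gives $\Gamma_{\alpha\beta}=N_{\alpha}^{i}N_{\beta}^{j}\pd{i}\pd{j}V+N_{\alpha}^{i}(\pd{i}N_{\beta}^{j})\pd{j}V$. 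The second term is exactly $T_{\alpha\beta}$ as defined in \eqref{eq:Tterm}, which proves the first assertion. No use of $f^{(0)}N=0$ is needed at this stage; the identity is purely the Leibniz rule.

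For the vanishing statement: if $\pd{i}N_{\beta}^{j}=0$ on the neighbourhood, every summand of $T$ contains a derivative of $N$, so $T\equiv0$. In the adapted coordinates of Remark~\ref{rem:darboux} the frame is $\partial_{z^{\alpha}}$, whose components are constants ($0$ or $1$), so the hypothesis holds there. With $T=0$ the formula reduces to \eqref{eq:hessian}.

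For the symmetry criterion, I will split $\Gamma$ into its Hessian part $H_{\alpha\beta}=N_{\alpha}^{i}N_{\beta}^{j}\pd{i}\pd{j}V$ and $T$. By Schwarz's theorem $\pd{i}\pd{j}V=\pd{j}\pd{i}V$, so after relabelling $i\leftrightarrow j$ we get $H^{\top}=H$. Therefore
\[
\Gamma-\Gamma^{\top}=T-T^{\top},
\]
so $\Gamma$ is symmetric if and only if $T$ is. The only point requiring care is the domain of validity. $V$ must be $C^{2}$, which holds automatically in the polynomial ring $\Rring$. On its localization, and for a rational frame, the identities hold away from the pole locus $P$ of Theorem~\ref{thm:strat}, where the derivatives exist. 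I will state this caveat rather than treat it as an obstacle.

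It may also be worth recording an observation, though the statement does not require it: the antisymmetric part of $T$ can be rewritten as $\tfrac12\bigl(N_{\alpha}(N_{\beta}^{j})-N_{\beta}(N_{\alpha}^{j})\bigr)\pd{j}V=\tfrac12[N_{\alpha},N_{\beta}]\cdot\nabla V$. By Theorem~\ref{thm:frobenius} the bracket $[N_{\alpha},N_{\beta}]$ lies in $\ker f^{(0)}$, so this quantity is a combination of the $\Om_{\gamma}$. Hence $\Gamma$ is weakly symmetric on $\Sigma$. I would include this only as a remark, since the proposition as stated asks just for the equivalence with the vanishing of the antisymmetric part of $T$.
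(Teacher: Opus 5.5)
Your proof is correct and follows the paper's own argument: apply the Leibniz rule to $\Gamma_{\alpha\beta}=N_{\alpha}^{i}\pd{i}(N_{\beta}^{j}\pd{j}V)$, then use the symmetry of the Hessian term to get $\Gamma-\Gamma^{\top}=T-T^{\top}$. Your optional remark is also correct: $T_{\alpha\beta}-T_{\beta\alpha}=[N_{\alpha},N_{\beta}]\cdot\nabla V=\sum_{\gamma}c^{\gamma}_{\alpha\beta}\Om_{\gamma}$, so $\Gamma$ is symmetric on $\Sigma$, but this relies on Theorem~\ref{thm:frobenius} and hence on \textbf{H1}, which the proposition itself does not need.
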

	
	\begin{proof}
		Product rule in $\Gamma_{\alpha\beta}=N_{\alpha}^{i}\pd{i}
		(N_{\beta}^{j}\pd{j}V)$; the Hessian term is symmetric.
	\end{proof}
	
	\begin{remark}
		On $\Sigma$ one has $N^{\top}\nabla V=(\Om_{1},\dots,\Om_{d})^{\top}=0$, so the
		component of $\nabla V$ along the kernel drops out of $T$, but the component
		along $\im f^{(0)}$ does not: $T|_{\Sigma}$ need not vanish. Identity
		\eqref{eq:Tterm} is valid in an arbitrary frame, and $T$ measures the failure of
		that frame to be constant; by Remark~\ref{rem:darboux} it can always be made to
		vanish by passing to adapted coordinates. This does not dispose of \textbf{H3}:
		the hypothesis is what licenses the Hessian representation \eqref{eq:hessian}
		and the symmetry of $\Gamma$ \emph{in the frame one is actually using}, and it
		is needed wherever those two properties are invoked. It is not needed for the
		block structure of $C$, which by Theorem~\ref{thm:dirac} holds in any frame, nor
		for Corollary~\ref{cor:pfC}, which by Remark~\ref{rem:nosym} does not require
		$\Gamma$ to be symmetric.
	\end{remark}
	
	\section{Reducibility, radicality and real inconsistency}
	\label{sec:reducibility}
	
	\subsection{Reducible borderings}
	
	\begin{proposition}[Spurious strong modes]
		\label{prop:spurious}
		If $\rank B<k$ there is $w\neq0$ with $Bw=0$, and $(0,w)\in\ker f^{(m)}$ with
		$\Phi_{(0,w)}\equiv0$. Dependent constraint gradients therefore manufacture
		strong null modes with no dynamical content.
	\end{proposition}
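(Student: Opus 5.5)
The plan is a direct verification from the definition of the bordered matrix and of the contraction, with no appeal to the kernel parametrization beyond the block equations themselves. First, if $\rank B<k$, rank--nullity for $B\in\R^{n\times k}$ gives $\dim\ker B=k-\rank B\geq1$. So there exists $w\neq0$ with $Bw=0$, and where $B$ depends on $\xi$ it can be chosen pointwise, or smoothly on any open set where $\rank B$ is locally constant.

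Second, I check that $(0,w)\in\ker f^{(m)}$ by evaluating the two block rows of \eqref{eq:fm}. The physical row is $f^{(0)}\cdot0+Bw=Bw=0$, and the multiplier row is $-B^{\top}\cdot0+0\cdot w=0$. Equivalently, in the language of Theorem~\ref{thm:kernel}, take $u=0$, so $x=0$ and $y=0$. Then (b) holds trivially, (a) follows from $\Gamma w=N^{\top}Bw=0$, and (c) follows from $B_{u}w=Q_{R}^{\top}Bw=0$. Both routes are immediate, and I would present the first since it uses only Definition~\ref{def:bordering}.

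Third, for the contraction I invoke Lemma~\ref{lem:Vlambda}: for $v=(u,w)$ one has $\Phi_{v}=u\cdot\nabla V$, and the potential is independent of the multipliers. Since $u=0$, this gives $\Phi_{(0,w)}=0$ identically, so the mode is strong in the sense of Definition~\ref{def:strongweak}.

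Finally, the clause ``no dynamical content'' is interpretive, and I would justify it briefly. The transformation generated by $(0,w)$ has vanishing physical block, so it leaves every $\xi$ unchanged and acts only on the multipliers. Moreover, $Bw=0$ means $\sum_{\alpha}w^{\alpha}\nabla\Om_{\alpha}=0$, so the corresponding combination of the constraints $\dot\Om_{\alpha}$ is already implied by the others. The shift of $\lambda$ along $w$ therefore changes the extended Lagrangian by $-\sum_{\alpha}w^{\alpha}\Om_{\alpha}$ (up to the boundary/sign conventions of Remark~\ref{rem:sign}), and this term has identically vanishing gradient, hence is locally constant.

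The only mildly delicate point is this last interpretive step, together with the smoothness of $w$ when $\rank B$ jumps. I would restrict to a set of constant rank, in the spirit of Theorem~\ref{thm:strat}.
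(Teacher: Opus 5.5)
Your proof is correct and is essentially the paper's own argument: evaluating the block rows of \eqref{eq:fm} on $(0,w)$ gives $(Bw,0)^{\top}=0$, and the contraction sees only the physical block, which vanishes. One small correction to your optional interpretive paragraph: for constant $w$, the shift $\delta\lambda=\varepsilon w$ changes the extended Lagrangian by $\pm\dot\varepsilon\sum_{\alpha}w^{\alpha}\Om_{\alpha}$, not by $\sum_{\alpha}w^{\alpha}\Om_{\alpha}$, and this is a total derivative because that combination is then locally constant; for $\xi$-dependent $w$ its gradient need not vanish, and the total-derivative property instead follows from $\dot\xi^{\top}Bw=0$.
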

	
	\begin{proof}
		$f^{(m)}(0,w)^{\top}=(Bw,0)^{\top}=0$, and the contraction uses only the
		physical block, which vanishes.
	\end{proof}
	
	\begin{proposition}[Ideal novelty prevents reducibility]
		\label{prop:novelty}
		Let $\Om_{1},\Om_{2}$ have proportional gradients on a neighbourhood $U$ with
		$\nabla\Om_{1}\neq0$, and suppose $U\cap\{\Om_{1}=0\}\neq\emptyset$. Then
		$\Om_{2}\in\gen{\Om_{1}}$ on $U$, so a normal-form novelty test discards
		$\Om_{2}$ and the bordering stays irreducible.
	\end{proposition}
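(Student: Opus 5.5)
The plan is to deduce the existence of a local factor $h$ with $\nabla\Om_{2}=h\,\nabla\Om_{1}$, then show that $\Om_{2}-\phi(\Om_{1})$ is locally constant, and finally use the zero set to pin the constant. Since the gradients are proportional and $\nabla\Om_{1}\neq0$ on $U$, there is a function $h$ on $U$ with $\nabla\Om_{2}=h\,\nabla\Om_{1}$. It is smooth, because $h=(\nabla\Om_{2}\cdot\nabla\Om_{1})/|\nabla\Om_{1}|^{2}$; in the effective ring $\Rring$ it is a rational function, regular where $\nabla\Om_{1}\neq0$. Taking the exterior derivative of $\dd\Om_{2}=h\,\dd\Om_{1}$ gives $\dd h\wedge\dd\Om_{1}=0$. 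So $h$ is locally a function of $\Om_{1}$ alone: by the submersion theorem we may straighten $\Om_{1}$ into a coordinate $s$, and then $h$ has no dependence on the transverse coordinates. Hence $\Om_{2}=\phi(\Om_{1})$ on a (possibly shrunken, connected) neighbourhood, with $\phi'=h$.

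The second step is to fix the constant. The hypothesis $U\cap\{\Om_{1}=0\}\neq\emptyset$ lets us evaluate at a point of $\Sigma$. There the constraint $\Om_{2}$ must vanish too: we are working on the common zero locus, and this is the regime in which both are constraints. This gives $\phi(0)=0$. Hadamard's lemma then yields $\phi(s)=s\,\psi(s)$, so $\Om_{2}=\psi(\Om_{1})\,\Om_{1}\in\gen{\Om_{1}}$ on $U$. A normal-form reduction of $\Om_{2}$ modulo $\gen{\Om_{1}}$ returns zero, the novelty test discards $\Om_{2}$, and no dependent column is appended to $B$. Consequently the mechanism of Proposition~\ref{prop:spurious} cannot arise.

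The main obstacle is the passage from $C^{\infty}(U)$ to $\Rring$, as hypothesis \textbf{H12} insists. The smooth factor $\psi$ need not be polynomial, and ideal membership in $\Rring$ is a different notion. I would first treat the case where $\Om_{1}$ is polynomial with $\gcd$-free gradient. In that case the relation $\dd\Om_{2}\wedge\dd\Om_{1}=0$ forces $\Om_{2}=P(\Om_{1})$ for a one-variable polynomial (or rational function) $P$, via the standard result that polynomials with dependent differentials are algebraically dependent, and $\Om_{1}$ can be taken irreducible and not a power. We then need $P(0)=0$, which gives membership in the localization of $\Rring$. If that fails, I would state the conclusion in the localized ring at the relevant denominators, which the definition of $\Rring$ permits. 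A secondary point to check is connectedness of $U$ and of the level set through the chosen zero. This can be handled by shrinking $U$ to a connected neighbourhood of that point.
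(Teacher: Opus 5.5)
Your argument is essentially the paper's: straighten $\Om_{1}$ into a coordinate so that proportional gradients force $\Om_{2}=F(\Om_{1})$, get $F(0)=0$ at a common zero, and conclude by Hadamard's lemma. The paper reads off $\partial\Om_{2}/\partial z^{s}=0$ directly, so your detour through $\dd h\wedge\dd\Om_{1}=0$ is not needed, and the paper's proof stays in the smooth local setting without attempting your passage to $\Rring$.

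One caveat is shared with the paper: $F(0)=0$ requires a point of $U$ where $\Om_{2}$ also vanishes, and $U\cap\{\Om_{1}=0\}\neq\emptyset$ alone does not provide one (for $\Om_{2}=\Om_{1}+1$ the conclusion is false). Both arguments therefore tacitly assume $U\cap\Sigma\neq\emptyset$, which is what your appeal to ``the common zero locus'' really adds.
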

	
	\begin{proof}
		Complete $\Om_{1}$ to a coordinate system $(\Om_{1},z^{2},\dots,z^{n})$. The
		proportionality gives $\partial\Om_{2}/\partial z^{s}=0$ for $s\geq2$, so
		$\Om_{2}=F(\Om_{1})$. At a point of $U$ where both $\Om_{1}$ and $\Om_{2}$
		vanish one gets $F(0)=0$, and Hadamard's lemma yields
		$F(\tau)=\tau G(\tau)$.
	\end{proof}
	
	\begin{remark}
		Radicality of $\Ical$ plays no role in Proposition~\ref{prop:novelty}; the
		hypotheses actually used are the nonvanishing gradient and the presence of a
		zero of $\Om_{1}$ in the neighbourhood considered. We state it this way to avoid
		carrying an unused assumption.
	\end{remark}
	
	\subsection{Ideal versus vanishing on the surface}
	
	\begin{proposition}[Conservative character of the ideal test]
		\label{prop:radical}
		$\Ical\subseteq\Ical(\Sigma):=\{h:h|_{\Sigma}=0\}$, with strict inclusion in
		general: for $\Om=q^{2}$ one has $q\in\Ical(\Sigma)\setminus\gen{q^{2}}$.
		Consequently $\Phi_{v}\in\Ical$ is sufficient but not necessary for
		$\Phi_{v}|_{\Sigma}=0$; the algorithmic weak verdict is conservative.
	\end{proposition}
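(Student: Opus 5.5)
The statement splits into three claims, and I will prove each directly from the definitions. These are the inclusion $\Ical\subseteq\Ical(\Sigma)$, the strictness witnessed by $\Om=q^{2}$, and the resulting one-way implication for contractions.

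\emph{Inclusion.} Take $h\in\Ical$ and write $h=\sum_{\gamma}g^{\gamma}\Om_{\gamma}$ with $g^{\gamma}\in\Rring$. In the localized case the $g^{\gamma}$ are rational with denominators nonvanishing on the domain considered. At every $\xi\in\Sigma$ each $\Om_{\gamma}(\xi)=0$, so $h(\xi)=0$ and hence $h\in\Ical(\Sigma)$. The only subtlety is the localization permitted by \textbf{H12}. There I would restrict $\Sigma$ to the complement of the inverted denominators, on which the coefficients are genuine functions.

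\emph{Strictness.} Work in $\Rring=\R[q]$ with $\Om=q^{2}$. The real zero locus is $\Sigma=\{0\}$, so $q|_{\Sigma}=0$ and $q\in\Ical(\Sigma)$. To see that $q\notin\gen{q^{2}}$, suppose $q=g\,q^{2}$. Comparing degrees, or dividing by $q$ in the integral domain, gives $1=gq$. Evaluating at $q=0$ then yields $1=0$, a contradiction. The same evaluation argument works in any localization that does not invert $q$, since such a localization is still a ring of functions defined at $0$. I would add a remark that in the ring of smooth germs the conclusion persists, because $q/q^{2}$ is not smooth at $0$.

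\emph{Consequence.} Apply the inclusion with $h=\Phi_{v}$: if $\Phi_{v}\in\Ical$ then $\Phi_{v}|_{\Sigma}=0$, so the test is sufficient. For failure of necessity it is enough to give an instance in which the contraction equals $q$ while the constraint ideal is $\gen{q^{2}}$. I expect this realization to be the only step needing care. The abstract ring example already shows that ideal membership is strictly stronger than vanishing on $\Sigma$, and that is all the word ``conservative'' asserts. So I would state the consequence at the level of the two function classes rather than insist on a full Lagrangian realization.
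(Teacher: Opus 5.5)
Your argument is correct and is the routine verification the paper leaves implicit: the paper gives no separate proof, treating the inclusion as evaluation on $\Sigma$ and letting the example $q\in\Ical(\Sigma)\setminus\gen{q^{2}}$ in the statement itself witness strictness. Like you, the paper draws the consequence at the level of function classes, so no Lagrangian realization of a contraction equal to $q$ with constraint ideal $\gen{q^{2}}$ is required.
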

	
	\begin{theorem}[Corrected trichotomy at halting]
		\label{thm:trichotomy}
		Consider a given stage of the iteration and let $v\in\ker f^{(m)}$. Exactly
		one of the following holds for the contraction $\Phi_{v}$, and the
		iteration may be declared terminated only when every $v$ in a frame of
		$\ker f^{(m)}$ falls under case~(1) or case~(2).
		\begin{enumerate}[label=(\arabic*),leftmargin=2.2em,itemsep=0.25em]
			\item $\Phi_{v}\equiv0$ on an open set: strong null mode.
			\item $\Phi_{v}\not\equiv0$ and $\Phi_{v}|_{\Sigma}=0$, i.e.
			$\Phi_{v}\in\Ical(\Sigma)$: weak null mode.
			\item $\Phi_{v}|_{\Sigma}\not\equiv0$: the iteration is not yet
			terminated. Either $\Phi_{v}$ has no zero on $\Sigma$, in which case no
			admissible point survives and the system is inconsistent, or
			$\Sigma\cap\{\Phi_{v}=0\}$ is a nonempty proper subset of $\Sigma$, and one
			more bordering round follows with that reduced surface.
		\end{enumerate}
		Testing $\Phi_{v}\in\Ical$ instead of $\Phi_{v}\in\Ical(\Sigma)$ is the
		algebraically effective substitute for case~(2), and by
		Proposition~\ref{prop:radical} it is conservative: it may classify a case~(2) as
		a case~(3), never the converse.
	\end{theorem}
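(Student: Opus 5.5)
The statement is essentially a logical case split on the restriction $\Phi_{v}|_{\Sigma}$, combined with the containment $\Ical\subseteq\Ical(\Sigma)$ of Proposition~\ref{prop:radical}. The plan has three parts: show the cases are exhaustive and mutually exclusive, justify the termination rule, and derive the conservativity claim.

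\emph{Exhaustiveness and exclusivity.} I first fix conventions. ``$\Phi_{v}\equiv0$'' in case~(1) will mean identically zero on the open set $U$ considered, which for $\Phi_{v}\in\Rring$ is the same as being the zero polynomial or rational function. ``$\Phi_{v}|_{\Sigma}\not\equiv0$'' in case~(3) will mean $\Phi_{v}\notin\Ical(\Sigma)$, i.e.\ nonvanishing somewhere on $\Sigma\cap U$. With these conventions the dichotomy $\Phi_{v}\in\Ical(\Sigma)$ versus $\Phi_{v}\notin\Ical(\Sigma)$ separates (1)$\cup$(2) from (3). Within $\Ical(\Sigma)$, the dichotomy $\Phi_{v}\equiv0$ versus $\Phi_{v}\not\equiv0$ separates (1) from (2), since $0\in\Ical(\Sigma)$ trivially. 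Exactly one case therefore holds.

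\emph{Sub-dichotomy of (3) and the termination rule.} In case~(3), the set $Z=\Sigma\cap\{\Phi_{v}=0\}$ is a proper subset of $\Sigma$. Either $Z=\emptyset$ or $Z$ is nonempty and proper, so the sub-dichotomy is again exhaustive. If $Z=\emptyset$, Corollary~\ref{cor:consistency} (in the form of Proposition~\ref{prop:solv} applied to the bordered system) says that no point of $\Sigma$ admits a solution of \eqref{eq:eom}, so the system is inconsistent. If $Z\neq\emptyset$, the same corollary forces $\Phi_{v}=0$ as a new necessary condition, and this condition is not yet imposed. One must therefore append $\Phi_{v}$ to the constraint list and border again, in the sense of Definition~\ref{def:bordering}.

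This also justifies the termination rule. Termination means that no new necessary condition is produced, which by Corollary~\ref{cor:consistency} requires every frame element to give a contraction vanishing on $\Sigma$, that is, case~(1) or~(2). It is enough to check a frame because $\Phi_{v}$ is linear in $v$: for $v=\sum c^{i}v_{i}$ we have $\Phi_{v}=\sum c^{i}\Phi_{v_{i}}$, and $\Ical(\Sigma)$ is an ideal. I will note this explicitly.

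\emph{Conservativity.} By Proposition~\ref{prop:radical}, $\Phi_{v}\in\Ical$ implies $\Phi_{v}\in\Ical(\Sigma)$. So the ideal test never labels a case~(3) contraction as weak; the only possible error is a case~(2) contraction that fails the test and is sent to case~(3). The example $\Om=q^{2}$ shows that this error genuinely occurs.

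The main obstacle is not computational but definitional. The phrases ``$\equiv0$ on an open set'' and ``$|_{\Sigma}\not\equiv0$'' must be read consistently, in the same effective ring $\Rring$ (hypothesis H12) on a fixed neighbourhood. I also need to be careful that case~(3) with $Z$ nonempty is called a proper subset of $\Sigma$ rather than a regular submanifold, since regularity of the reduced surface is not guaranteed and should be left to the next round's hypotheses.
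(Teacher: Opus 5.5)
Your proposal is correct and matches the paper, which gives no separate proof of this theorem. The paper treats it as your case split of $\Phi_{v}$ against $\Ical(\Sigma)$, with the consistency condition handling case~(3) and the inclusion $\Ical\subseteq\Ical(\Sigma)$ of Proposition~\ref{prop:radical} giving conservativity; your explicit reading of ``$\equiv0$ on an open set'' in $\Rring$ is what makes the cases mutually exclusive. One minor overstatement: the $\Om=q^{2}$ example only shows the ring-theoretic gap $\Ical\subsetneq\Ical(\Sigma)$, not a null mode whose contraction falls in it, but the theorem only says the test \emph{may} misclassify, so nothing depends on this.
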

	
	\subsection{Real inconsistency}
	
	\begin{proposition}[Emptiness over $\R$ is not $1\in\Ical$]
		\label{prop:real}
		Over $\C$, $\Sigma=\emptyset\iff1\in\Ical$ by the weak Nullstellensatz. Over
		$\R$ the forward implication fails. For $L=q_{2}\dot q_{1}-(q_{1}^{2}+1)q_{3}$
		the generated constraint is $\Om=q_{1}^{2}+1$, which has no real zero, so the
		system is inconsistent; yet $1\notin\gen{q_{1}^{2}+1}$ in $\R[q]$.
	\end{proposition}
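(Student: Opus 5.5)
The statement to prove is Proposition~\ref{prop:real}. It has three parts, and I will take them in order.

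The first part is the complex equivalence $\Sigma_{\C}=\emptyset\iff 1\in\Ical$. I will cite the weak Nullstellensatz for $\C[\xi]$. The implication $1\in\Ical\Rightarrow\Sigma=\emptyset$ holds trivially over any field, since a common zero $\xi_{0}$ would give $1=\sum_{\gamma}h^{\gamma}(\xi_{0})\Om_{\gamma}(\xi_{0})=0$. The converse is exactly the Nullstellensatz over the algebraically closed field $\C$. I will note that this trivial direction also holds over $\R$. The only implication that can fail is therefore $\Sigma_{\R}=\emptyset\Rightarrow 1\in\Ical$, and I read the phrase ``forward implication fails'' accordingly.

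The second part is the counterexample, and I will first verify that the system produces $\Om=q_{1}^{2}+1$. With $L=q_{2}\dot q_{1}-(q_{1}^{2}+1)q_{3}$ we have $a=(q_{2},0,0)$, and the only nonzero entries of $f^{(0)}$ are $f_{12}=-1=-f_{21}$. Hence $\ker f^{(0)}=\spn\{e_{3}\}$. Corollary~\ref{cor:consistency} then gives the constraint $\Om=\pd{3}V=q_{1}^{2}+1$. This function is at least $1$ on $\R^{3}$, so $\Sigma=\emptyset$. By Corollary~\ref{cor:consistency} (equivalently, by Proposition~\ref{prop:solv}), the Euler--Lagrange equations then have no solution through any real point, and the system is inconsistent.

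The third part, which is the only genuinely algebraic step, is to show $1\notin\gen{q_{1}^{2}+1}$ in $\R[q_{1},q_{2},q_{3}]$. My first approach is a degree argument. If $1=(q_{1}^{2}+1)h$, then comparing total degrees forces $\deg h=-2$, which is impossible, because $\R[q]$ is an integral domain and degrees add.

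As an alternative I would use evaluation. Substituting $q_{1}=i$ gives a ring homomorphism $\R[q]\to\C[q_{2},q_{3}]$ under which $1\mapsto 1$ but $(q_{1}^{2}+1)h\mapsto 0$, a contradiction. This second route has the advantage of exhibiting the complex zero $q_{1}=\pm i$ that the real locus misses, which ties back to the first part.

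If $\Rring$ is taken to be a localization as in \eqref{eq:ring}, the claim still holds provided the inverted denominators are real polynomials with no factor $q_{1}^{2}+1$. In that case the evaluation at $q_{1}=i$ extends to the localization, and I will state this caveat explicitly. I do not expect a serious obstacle here. The only delicate point is being precise about which implication fails and in which ring the ideal lives.
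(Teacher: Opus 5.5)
Your proposal is correct and takes the same approach as the paper, which gives no separate proof and lets the example carry the argument: $\ker f^{(0)}=\spn\{e_{3}\}$ yields $\Om=q_{1}^{2}+1\geq1$ on $\R^{3}$, and $1\notin\gen{q_{1}^{2}+1}$ follows by degree (your evaluation at $q_{1}=i$ is an equally valid check). Your caveat about the localization in \eqref{eq:ring} is a correct precision that the paper leaves implicit: the claim survives exactly when no inverted denominator is divisible by $q_{1}^{2}+1$.
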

	
	\begin{remark}[Effective consequence]
		\label{rem:cad}
		The complete test over $\R$ is the emptiness of the real variety
		$\Sigma\subseteq\R^{n}$, which is decidable by cylindrical algebraic
		decomposition. The Gr\"obner condition $1\in\Ical$ remains a sufficient
		certificate of inconsistency and is far cheaper, so it is naturally used first,
		the doubly exponential real test being needed only when it is inconclusive. The exact algebraic characterization is the Real
		Nullstellensatz: $\Sigma=\emptyset$ if and only if $-1$ is a sum of squares
		modulo $\Ical$ \cite{BCR1998}.
	\end{remark}
	
	\subsection{Constant versus functional coefficients}
	
	\begin{proposition}[Module structure]
		\label{prop:syzygy}
		The set $\mathcal{G}=\{v\in\ker f^{(m)}:\Phi_{v}\in\Ical\}$ is a module over
		$\Rring$, whereas the constant-coefficient analysis of a fixed frame reaches
		only the $\R$-span $\mathcal{G}_{\R}$ of those combinations of that frame whose
		contractions lie in $\Ical$. The inclusion
		$\mathcal{G}_{\R}\subseteq\mathcal{G}$ can be strict: if a frame yields
		$\Phi_{v_{1}}=q_{1}$, $\Phi_{v_{2}}=q_{2}$ and $\Ical=\gen{q_{1}q_{2}}$, no
		nonzero constant combination lies in $\Ical$, while
		$\Phi_{q_{2}v_{1}}=q_{1}q_{2}\in\Ical$.
	\end{proposition}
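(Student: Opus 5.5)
The proposition has three parts: the module property of $\mathcal{G}$, the inclusion $\mathcal{G}_{\R}\subseteq\mathcal{G}$, and strictness via the stated example. I would treat them in that order, keeping in mind that $f^{(m)}$ has entries in $\Rring$ and that membership is always understood in the effective ring of hypothesis \textbf{H12}.

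\emph{Module structure.} Regard $\ker f^{(m)}$ as the set of $v\in\Rring^{n+k}$ with $f^{(m)}v=0$. This is an $\Rring$-submodule because $f^{(m)}(gv)=g\,f^{(m)}v$ for every $g\in\Rring$. By Lemma~\ref{lem:Vlambda}, the contraction $\Phi_{v}=u\cdot\nabla V$ depends only on the physical block $u$. The map $v\mapsto\Phi_{v}$ is therefore $\Rring$-linear, with $\Phi_{gv+hv'}=g\Phi_{v}+h\Phi_{v'}$. Then $\mathcal{G}$ is the preimage of the ideal $\Ical$ under an $\Rring$-linear map restricted to a submodule. Since $\Ical$ is an $\Rring$-submodule of $\Rring$, closure under sums and under multiplication by ring elements follows at once.

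\emph{Inclusion.} $\mathcal{G}_{\R}$ consists of constant combinations $\sum c^{i}v_{i}$ of a fixed frame whose contractions lie in $\Ical$. Each such combination lies in $\ker f^{(m)}$ and has contraction in $\Ical$, so it lies in $\mathcal{G}$. Moreover $\R\subseteq\Rring$, so the $\R$-span of such elements stays inside the $\Rring$-module $\mathcal{G}$.

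\emph{Strictness.} I would take the hypothetical frame data: $\Phi_{v_{1}}=q_{1}$, $\Phi_{v_{2}}=q_{2}$, and $\Ical=\gen{q_{1}q_{2}}$. For constants $c_{1},c_{2}$, the contraction is the linear form $c_{1}q_{1}+c_{2}q_{2}$. Every nonzero element of $\gen{q_{1}q_{2}}$ has total degree at least $2$, so this linear form lies in $\Ical$ only if $c_{1}=c_{2}=0$; hence $\mathcal{G}_{\R}=0$. On the other hand $q_{2}v_{1}\in\ker f^{(m)}$ by the module property, and $\Phi_{q_{2}v_{1}}=q_{2}q_{1}\in\Ical$, so $q_{2}v_{1}$ is a nonzero element of $\mathcal{G}$. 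For the localized ring I would use the same degree argument: a nonzero multiple of $q_{1}q_{2}$ by a unit cannot equal a nonzero linear form unless the denominators allow cancellation. This holds provided $q_{1},q_{2}$ are not inverted, which I would state as the standing assumption.

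\emph{Main obstacle.} The only delicate point is whether this frame is realized by an actual Lagrangian, and the proposition is phrased hypothetically ("if a frame yields"). So no realization is needed; I would note this explicitly rather than construct one.
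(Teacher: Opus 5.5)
Your proposal is correct and follows the paper's own argument. The paper's proof consists of exactly the observation that $hv\in\ker f^{(m)}$ and $\Phi_{hv}=h\Phi_{v}$ for $h\in\Rring$, and it leaves the strictness example to inspection; your degree argument for $c_{1}q_{1}+c_{2}q_{2}\notin\gen{q_{1}q_{2}}$ fills that step in.

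One wording point in the localized case: elements of the ideal are $q_{1}q_{2}r$ with $r$ arbitrary, not a unit, and total degree is not defined for fractions. The cleaner check is to clear the denominator in $s(c_{1}q_{1}+c_{2}q_{2})=q_{1}q_{2}a$ and reduce modulo $q_{1}$ and then modulo $q_{2}$. This forces $c_{1}=c_{2}=0$ precisely when no denominator is divisible by $q_{1}$ or $q_{2}$, which is your stated assumption that $q_{1}$ and $q_{2}$ are not inverted.
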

	
	\begin{proof}
		$hv\in\ker f^{(m)}$ and $\Phi_{hv}=h\Phi_{v}$ for $h\in\Rring$, so
		$\mathcal{G}$ is a module. Determining the functional coefficients requires the
		syzygy module of $(\Phi_{v_{1}},\dots,\Phi_{v_{s}})$ modulo $\Ical$.
	\end{proof}
	
	\section{Canonical charges and degree-of-freedom counting}
	\label{sec:charges}
	
	\begin{proposition}[The presymplectic form generates no charge]
		\label{prop:nocharge}
		If $v\in\ker f^{(m)}$ then $\iota_{v}f^{(m)}=0$, so an equation
		$\iota_{v}f^{(m)}=\dd G$ forces $\dd G=0$ and $G$ locally constant.
	\end{proposition}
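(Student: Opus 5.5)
The plan is to unwind the definitions, using the fact that the kernel of a $2$-form is the kernel of its interior product. The form $f^{(m)}=\dd a^{(m)}$ is the $2$-form on the extended space with coordinates $(\xi^{i},\lambda^{\alpha})$, and its matrix is \eqref{eq:fm}. Its interior product with a vector $v$ is the one-form $\iota_{v}f^{(m)}=f^{(m)}_{AB}v^{A}\,\dd x^{B}$. Up to the sign fixed by the convention of Remark~\ref{rem:sign}, the components of this one-form are exactly the entries of the column $f^{(m)}v$. So I will first show that $v\in\ker f^{(m)}$ means precisely $f^{(m)}v=0$ as a matrix identity, pointwise or as a vector field. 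The components of $\iota_{v}f^{(m)}$ then all vanish, and $\iota_{v}f^{(m)}=0$.

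The second step is immediate. If $G$ satisfies $\iota_{v}f^{(m)}=\dd G$, then $\dd G=0$, so every partial derivative of $G$ vanishes. On each connected component of the open set considered, $G$ is therefore constant, and this is what the statement means by ``locally constant''. I will note that this uses only that the domain is open in $\R^{n+k}$, since $\Mm$ is open and the multipliers range over $\R^{k}$. A constant $G$ generates the zero Hamiltonian vector field in any bracket, so it carries no charge.

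There is no real obstacle. The one point to state carefully is whether $v$ is a single kernel vector at one point or a vector field with $f^{(m)}v=0$ on an open set. The equation $\iota_{v}f^{(m)}=\dd G$ only makes sense for a vector field, because $\dd G$ is a form on an open set. So I will state the hypothesis as ``$v$ is a section of $\ker f^{(m)}$ on an open set $U$''. The conclusion is then $\dd G=0$ on $U$, with $G$ constant on each component of $U$.

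Finally, I will point out the interpretation this sets up. Any nontrivial conserved charge for a null mode must come from the potential term through $A_{v}$ in Lemma~\ref{lem:master}, not from the presymplectic form.
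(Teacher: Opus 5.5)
Your argument is correct and is the same definitional one the paper uses; the paper states the proposition without further proof. In both, $v\in\ker f^{(m)}$ gives $\iota_{v}f^{(m)}=0$, hence $\dd G=0$ and $G$ is constant on connected components, and you are right to require $v$ to be a section of $\ker f^{(m)}$ on an open set so that $\iota_{v}f^{(m)}=\dd G$ is meaningful. One small correction: the sign relating $\iota_{v}f^{(m)}$ to the column $f^{(m)}v$ comes from antisymmetry, $v^{A}f^{(m)}_{AB}=-(f^{(m)}v)_{B}$, not from the bordering convention of Remark~\ref{rem:sign}, though this does not affect the vanishing.
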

	
	Charges therefore require an auxiliary canonical structure, declared by the
	user. The charge constructed below is accordingly not an intrinsic output of
	the Faddeev--Jackiw formalism: it is a canonical representative obtained after
	such a structure has been declared, it depends on which pairs are declared, as
	Remark~\ref{rem:pairs} shows, and it does not modify the presymplectic
	statement of Proposition~\ref{prop:nocharge}. The data required are a set of
	Darboux pairs $q^{i}\leftrightarrow p_{i}$ with
	$\omega_{\mathrm{can}}=\sum_{i}\dd q^{i}\wedge\dd p_{i}$ and
	\begin{equation}
		\alpha:=\iota_{v}\omega_{\mathrm{can}}
		=\sum_{i}\bigl(v^{q^{i}}\dd p_{i}-v^{p_{i}}\dd q^{i}\bigr).
		\label{eq:alpha}
	\end{equation}
	
	\begin{lemma}[Poincar\'e homotopy]
		\label{lem:poincare}
		If $U\subseteq\R^{n}$ is star-shaped about the origin and $\alpha$ is a closed
		$C^{1}$ one-form on $U$, then
		$G(\xi)=\int_{0}^{1}\alpha_{A}(t\xi)\xi^{A}\,\dd t$ satisfies $\dd G=\alpha$.
	\end{lemma}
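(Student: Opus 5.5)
The plan is to verify $\dd G=\alpha$ directly by differentiating under the integral sign, and then to recognise the integrand as a total $t$-derivative. Star-shapedness about the origin guarantees that $t\xi\in U$ for all $t\in[0,1]$ and $\xi\in U$, so $G$ is well defined. Because $\alpha$ is $C^{1}$, the integrand $(t,\xi)\mapsto\alpha_{A}(t\xi)\xi^{A}$ is $C^{1}$ on $[0,1]\times U$. Differentiation under the integral sign is therefore legitimate on compact neighbourhoods, and $G$ is $C^{1}$.

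We first compute the partial derivatives. By the chain rule,
\[
\pd{B}G(\xi)=\int_{0}^{1}\Bigl(t\,(\pd{B}\alpha_{A})(t\xi)\,\xi^{A}+\alpha_{B}(t\xi)\Bigr)\dd t .
\]
We next use closedness: $\dd\alpha=0$ means $\pd{B}\alpha_{A}=\pd{A}\alpha_{B}$. Substituting this into the first term turns the integrand into
\[
t\,\xi^{A}(\pd{A}\alpha_{B})(t\xi)+\alpha_{B}(t\xi)=\frac{\dd}{\dd t}\bigl[t\,\alpha_{B}(t\xi)\bigr],
\]
since $\frac{\dd}{\dd t}\alpha_{B}(t\xi)=\xi^{A}(\pd{A}\alpha_{B})(t\xi)$. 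By the fundamental theorem of calculus the integral then equals $[t\,\alpha_{B}(t\xi)]_{0}^{1}=\alpha_{B}(\xi)$, so $\dd G=\alpha$.

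The only step that needs care is the justification of differentiating under the integral sign. It follows from the continuity of $\pd{B}$ of the integrand jointly in $(t,\xi)$ on $[0,1]\times K$ for compact $K\subset U$, which is exactly where the $C^{1}$ hypothesis is used; the rest is routine. The proof does not use the specific form \eqref{eq:alpha}: the lemma is stated for an arbitrary closed one-form, and applying it to $\alpha=\iota_{v}\omega_{\mathrm{can}}$ requires closedness of that $\alpha$, which must be checked separately.
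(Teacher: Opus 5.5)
Your proof is correct and matches the paper's argument: differentiate under the integral, use closedness $\pd{B}\alpha_{A}=\pd{A}\alpha_{B}$ to recognize the integrand as $\tfrac{\dd}{\dd t}[t\,\alpha_{B}(t\xi)]$, and apply the fundamental theorem of calculus. You also justify differentiating under the integral sign, using the $C^{1}$ hypothesis on compact neighbourhoods, which the paper leaves implicit.
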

	
	\begin{proof}
		Differentiating under the integral,
		$\pd{B}G=\int_{0}^{1}[\alpha_{B}(t\xi)+t\xi^{A}(\pd{B}\alpha_{A})(t\xi)]\dd t$;
		closedness turns $\pd{B}\alpha_{A}$ into $\pd{A}\alpha_{B}$, and the integrand
		becomes $\tfrac{\dd}{\dd t}[t\alpha_{B}(t\xi)]$.
	\end{proof}
	
	\begin{proposition}[Constant directions]
		If $v$ is constant then $\alpha$ is constant and
		$G=\sum_{i}(v^{q^{i}}p_{i}-v^{p_{i}}q^{i})$.
	\end{proposition}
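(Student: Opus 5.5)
The statement to prove is the last proposition: if $v$ is constant, then $\alpha$ is constant and $G=\sum_{i}(v^{q^{i}}p_{i}-v^{p_{i}}q^{i})$. The plan is to compute $\alpha$ directly from \eqref{eq:alpha}, observe that it is constant, and then either exhibit $G$ by direct differentiation or apply Lemma~\ref{lem:poincare}.

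\textbf{Step 1: $\alpha$ is constant.} By \eqref{eq:alpha}, $\alpha=\sum_{i}(v^{q^{i}}\dd p_{i}-v^{p_{i}}\dd q^{i})$. In the declared Darboux coordinates the basis one-forms $\dd q^{i},\dd p_{i}$ have constant coefficients. So if the components $v^{q^{i}},v^{p_{i}}$ are constants, every coefficient of $\alpha$ is constant. In particular $\alpha$ is closed, since the exterior derivative of a form with constant coefficients vanishes. Any remaining coordinates not in the declared pairs contribute nothing to $\alpha$, because $\omega_{\mathrm{can}}$ does not involve them.

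\textbf{Step 2: identify $G$.} The quickest route is direct differentiation. We have $\pd{p_{i}}G=v^{q^{i}}$, $\pd{q^{i}}G=-v^{p_{i}}$, and all other partial derivatives vanish. Hence $\dd G=\alpha$ on all of the domain, and no star-shapedness is needed.

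To connect with Lemma~\ref{lem:poincare}, I would also note that its formula gives the same function. Since $\alpha_{A}(t\xi)=\alpha_{A}$ is independent of $t$, the integral $\int_{0}^{1}\alpha_{A}\xi^{A}\,\dd t$ equals $\alpha_{A}\xi^{A}=\sum_{i}(v^{q^{i}}p_{i}-v^{p_{i}}q^{i})$. So the normalization $G(0)=0$ singles out exactly this representative.

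\textbf{Main obstacle.} There is essentially none. The only point needing care is the sign convention of $\omega_{\mathrm{can}}=\sum\dd q\wedge\dd p$. With $\iota_{v}(\dd q\wedge\dd p)=v^{q}\dd p-v^{p}\dd q$, this convention fixes the minus sign on the $q^{i}$ term, so I would check that step against \eqref{eq:alpha}.
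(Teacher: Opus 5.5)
Your proposal is correct and is essentially the paper's argument. The paper gives no separate proof; the proposition is read off from Lemma~\ref{lem:poincare}, where a constant $\alpha$ makes the homotopy integral collapse to $\alpha_{A}\xi^{A}=\sum_{i}(v^{q^{i}}p_{i}-v^{p_{i}}q^{i})$. Your direct check that $\dd G=\alpha$ adds something small: it shows the formula holds on any domain, without the star-shapedness hypothesis the lemma requires.
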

	
	\begin{remark}[Three obstructions]
		\label{rem:obstructions}
		The construction of $G$ is subject to: exactness (for point-dependent
		directions $\iota_{v}\omega_{\mathrm{can}}$ may fail to be closed); star-shaped
		domain (angular variables violate it, and the formula then yields only a local
		primitive); and topology (if $H^{1}\neq0$ a closed form may not be globally
		exact, and the charge exists only as a multivalued function). None of these
		invalidates the construction; all must be declared. The charge is an auxiliary
		object attached to a user-declared canonical structure, not an intrinsic output
		of the Faddeev--Jackiw formalism.
	\end{remark}
	
	\subsection{Trivial transformations and counting}
	
	\begin{definition}[Trivial gauge transformation]
		A transformation $\delta\xi^{A}=\mu^{AB}E_{B}$ with $\mu^{AB}=-\mu^{BA}$ is a
		symmetry of any action, since by Lemma~\ref{lem:firstvar}
		$\delta S=\int E_{A}\mu^{AB}E_{B}=0$. Such transformations vanish on shell and
		carry no physical content.
	\end{definition}
	
	\begin{proposition}[Conditional counting]
		\label{prop:counting}
		Assume \textbf{H1}, \textbf{H2}, \textbf{H4}, \textbf{H10} and that $\Sigma$ is
		the final constraint manifold in the sense of Remark~\ref{rem:gotay}. Then the
		reduced space $\Mm^{*}=\Sigma/\ker(\iota^{*}f^{(m)})$, where defined, is a
		symplectic manifold and
		\begin{equation}
			\dim\Mm^{*}=\dim\Sigma-\dim\ker\bigl(\iota^{*}f^{(m)}\bigr),
			\label{eq:count}
		\end{equation}
		necessarily even. Without those hypotheses \eqref{eq:count} is not a valid count
		of physical degrees of freedom, as Example~\ref{ex:counter} shows.
	\end{proposition}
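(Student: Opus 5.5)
The plan is to reduce Proposition~\ref{prop:counting} to the standard presymplectic reduction theorem applied to the closed two-form $\omega:=\iota^{*}f^{(m)}$ on $\Sigma$. I would proceed in four steps.

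First, $\omega$ is closed. By Proposition~\ref{prop:closed} $f^{(m)}$ is closed, and pullback commutes with $\dd$. Under \textbf{H1} and \textbf{H2} (locally constant ranks, $\Sigma$ a regular submanifold), together with \textbf{H4}, $\omega$ has locally constant rank on $\Sigma$. Its kernel $K=\ker\omega$ is then a subbundle of $T\Sigma$, and Theorem~\ref{thm:frobenius}, applied on $\Sigma$ rather than on $\Mm$, shows that $K$ is involutive and hence integrable. The leaves of $K$ foliate $\Sigma$.

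Second, I would interpret the phrase ``where defined'' as the assumption \textbf{H10}: the leaf space $\Mm^{*}=\Sigma/K$ carries a manifold structure for which the projection $p:\Sigma\to\Mm^{*}$ is a submersion. Because the fibres of $p$ are the leaves, $\dim\Mm^{*}=\dim\Sigma-\operatorname{rank}K$, which is \eqref{eq:count}.

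Third, the form descends to the quotient. Take a section $v$ of $K$. Then $\iota_{v}\omega=0$ by definition, and Cartan's formula gives $\mathcal{L}_{v}\omega=\dd\iota_{v}\omega+\iota_{v}\dd\omega=0$. So $\omega$ is basic, and there is a unique two-form $\omega^{*}$ on $\Mm^{*}$ with $p^{*}\omega^{*}=\omega$. It is closed because $p^{*}$ is injective on forms and $p^{*}\dd\omega^{*}=\dd\omega=0$. It is nondegenerate because any $X$ with $\iota_{X}\omega^{*}=0$ lifts to some $\tilde X\in K$, and then $X=p_{*}\tilde X=0$. A nondegenerate antisymmetric form forces even dimension, so $\dim\Mm^{*}$ is even.

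Fourth, I would explain why the final-constraint-manifold hypothesis is needed for a physical reading. Tangency of the dynamics to $\Sigma$ (Remark~\ref{rem:gotay}) ensures that solutions of \eqref{eq:eom} starting on $\Sigma$ stay on $\Sigma$. By Proposition~\ref{prop:solv}, their velocities are then determined modulo $\ker$ directions, which I would identify with $K$. Points of $\Mm^{*}$ are therefore exactly the dynamically inequivalent initial data. For the failure clause, I would cite Example~\ref{ex:counter} and Remark~\ref{rem:gotay}: there $\Sigma$ is not final for the original dynamics, since Remark~\ref{rem:borderimposes} says the bordered system imposes only $\dot\Om=0$. As a result, the count obtained is not a trustworthy count of degrees of freedom.

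I expect the main obstacle to be the identification of $K$ with the dynamical indeterminacy in the fourth step. The strictness of Lemma~\ref{lem:pullback} means $\ker\iota^{*}f^{(m)}$ may exceed $\ker f^{(m)}$, so I would have to argue, using finality, that the extra directions are genuinely redundant rather than spurious. The first approach I would try is to show that on a final $\Sigma$ the equation $\iota_{X}\omega=\dd(V|_{\Sigma})$ is solvable and that its solution set is exactly $X_{0}+K$.
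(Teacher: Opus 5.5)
Your Steps 1--3 are the standard presymplectic reduction argument, and that is all the statement rests on. The paper gives no separate proof; it imports the reduction theorem under the conditions collected in Remark~\ref{rem:gotay}.

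Two attributions in those steps need fixing. First, locally constant rank of $\omega=\iota^{*}f^{(m)}$ does not follow from \textbf{H1}, \textbf{H2}, \textbf{H4} read as properties of the ambient form and of $\Sigma$. The rank of a pullback to a regular submanifold can jump even when the ambient form has constant rank: $\dd q^{1}\wedge\dd p_{1}+\dd q^{2}\wedge\dd p_{2}$ pulled back to $\{p_{1}=0,\ p_{2}=\tfrac12(q^{1})^{2}\}$ is $-q^{1}\,\dd q^{1}\wedge\dd q^{2}$. This rank condition has to be assumed, and it is: Remark~\ref{rem:gotay} lists ``the kernel has locally constant rank'' as a separate condition, and the phrase ``in the sense of Remark~\ref{rem:gotay}'' brings it into the hypotheses. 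Second, \textbf{H10} is exhaustion of the iteration, not regularity of the leaf space. Regularity of the leaf space is exactly the ``where defined'' clause and should be assumed under that name.

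Step~4 is where the proposal goes wrong, and the route you sketch cannot be completed.
\begin{itemize}
\item By Proposition~\ref{prop:solv} applied to $f^{(m)}$, together with Remark~\ref{rem:tangency}, the instantaneous velocity freedom on $\Sigma$ is $\ker f^{(m)}$. By Lemma~\ref{lem:pullback} this is in general a proper subspace of $K$.
\item Regime~IV already shows the gap. There $\ker f^{(1)}$ is two-dimensional while $K=T\Sigma$ is three-dimensional, and \eqref{eq:count} still gives the correct value $0$.
\item The surplus is not spurious, but it is not velocity freedom either. It is the chain element $u_{0}=e_{1}$ of Theorem~\ref{thm:chain2}: $q_{1}$ is pure gauge by \eqref{eq:chainIV}, yet no element of $\ker f^{(1)}$ points along $e_{1}$ alone.
\item Your proposed equation $\iota_{X}\omega=\dd(V|_{\Sigma})$ tests $X$ only against $T\Sigma$. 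Its solution set $X_{0}+K$ is therefore strictly larger than what the equations of motion allow, and it describes no dynamics.
\end{itemize}
Identifying the leaves of $K$ with gauge orbits is a Dirac-conjecture-type claim that the paper neither proves nor needs. The formal content (symplectic quotient, \eqref{eq:count}, evenness) is complete after Step~3. The physical reading is asserted conditionally in Remark~\ref{rem:gotay}, not derived, so Step~4 should be dropped rather than repaired.

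For the failure clause, note that in Example~\ref{ex:counter} the formula actually returns $0$, because $\iota^{*}f^{(1)}$ vanishes identically on $\Sigma=\{q_{1}=q_{2}=0\}$. What fails is the justification, not the number. The direct reason $\Sigma$ is not final is that tangency of the original flow, $\dot q_{1}=q_{4}$ and $\dot q_{2}=-q_{3}$, forces $q_{3}=q_{4}=0$. The bordered flow does preserve $\Sigma$, which is the content of Remark~\ref{rem:borderimposes}.
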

	
	\section{Worked examples}
	\label{sec:examples}
	
	We use a single family to exhibit the four regimes, then the counterexample of
	Section~\ref{sec:weak}, then a rank-jump example, and finally a system in which
	the decisive locus lies in the space of parameters rather than in the space of
	states. Each is presented as a unit test with the same data: $f$, $\ker f$,
	$\Om$, $\Gamma$, $M$, $\ker f^{(m)}$, classification.
	
	\subsection{A family with four regimes: the elementary parametric stratification}
	\label{sec:family}
	
	Let $\xi=(q_{1},q_{2},q_{3})$, $a=(q_{2},0,0)$ and
	\begin{equation}
		V=W(q_{1},q_{2})+c\,q_{3}+\tfrac{m}{2}q_{3}^{2},
		\label{eq:family}
	\end{equation}
	so that
	$f^{(0)}=\bigl(\begin{smallmatrix}0&-1&0\\1&0&0\\0&0&0\end{smallmatrix}\bigr)$,
	$\ker f^{(0)}=\spn\{e_{3}\}$, $\Om=\pd{3}V=c+mq_{3}$, and
	$\Gamma=(m)$, $M=(0)$ since $k=1$.
	
	\begin{enumerate}[label=\textbf{(\Roman*)},leftmargin=2.6em,itemsep=0.35em]
		\item $m\neq0$: $\det\Gamma\neq0$, $\ker f^{(1)}=0$ by
		Corollary~\ref{cor:regularity}, $\det f^{(1)}=m^{2}$. Second class;
		the multiplier is determined.
		\item $m=0$, $c$ a nonzero constant: $\Om=c$ has no zero,
		$\Sigma=\emptyset$; inconsistent.
		\item $m=0$, $c\equiv0$, $W$ arbitrary: no constraint is generated at all;
		$e_{3}$ satisfies $f^{(0)}e_{3}=0$ and $e_{3}\cdot\nabla V=0$ identically, so
		by Theorem~\ref{thm:strong} $\delta q_{3}=\varepsilon(t)$ is an exact
		symmetry. Strong gauge direction.
		\item $W=0$, $c=q_{2}$, $m=0$, i.e. $L=q_{2}\dot q_{1}-q_{2}q_{3}$:
		developed below.
	\end{enumerate}
	
	\begin{remark}[Reading the family as a stratification of $\det\Gamma$]
		\label{rem:familystrat}
		Regimes I to III are the elementary parametric stratification of the
		criterion of Corollary~\ref{cor:regularity}. The critical locus is
		$\det\Gamma=0$, that is $m=0$, and the family shows that crossing it has no
		single universal consequence: what happens on the locus is decided by the
		remaining structure, here by the second parameter $c$. For $m\neq0$ the
		system is regular and the constraint is second class; for $m=0$ it is
		inconsistent if $c$ is a nonzero constant and carries a strong gauge
		direction if $c\equiv0$. The locus is therefore not a single phenomenon but
		a boundary between regimes whose identity depends on further data. This is
		the simplest instance of the point made throughout
		Section~\ref{sec:mechanical}, that parameters are part of the classification
		and not coefficients to be cleared once the equations have been formed.
	\end{remark}
	
	\subsection{Regime IV: a genuine two-step chain}
	\label{sec:regimeIV}
	
	Let $L=q_{2}\dot q_{1}-q_{2}q_{3}$. Then $\Om=\pd{3}V=q_{2}$,
	$B=\nabla\Om=e_{2}$,
	\[
	\Gamma=N^{\top}B=e_{3}^{\top}e_{2}=0,
	\qquad
	M=(0)\ \text{(since }k=1),
	\qquad
	\dim\ker f^{(1)}=1+1-0=2 .
	\]
	Explicitly, with the ordering $(q_{1},q_{2},q_{3}\,|\,\lambda)$,
	\[
	f^{(1)}=\begin{pmatrix}0&-1&0&0\\1&0&0&1\\0&0&0&0\\0&-1&0&0\end{pmatrix},
	\qquad
	\ker f^{(1)}=\spn\{(1,0,0\,|\,-1),\,(0,0,1\,|\,0)\}.
	\]
	By \eqref{eq:fccount} there is exactly one first-class combination, namely $\Om$
	itself. The second basis vector is multiplier-free
	(Proposition~\ref{prop:mfree}) with contraction
	$\Phi_{e_{3}}=\pd{3}V=q_{2}=\Om\in\Ical$: a weak null mode. The first has
	multiplier block $-e_{1}$ and physical block $u_{0}=(1,0,0)$, with
	\[
	f^{(0)}u_{0}=e_{2}=\nabla\Om,
	\qquad
	\Phi_{u_{0}}=u_{0}\cdot\nabla V=\pd{1}V=0 .
	\]
	Both chain equations hold exactly, so by Theorem~\ref{thm:chainkernel} the
	transformation
	\begin{equation}
		\delta q_{1}=\rho(t),
		\qquad
		\delta q_{2}=0,
		\qquad
		\delta q_{3}=\dot\rho(t),
		\label{eq:chainIV}
	\end{equation}
	is an exact gauge symmetry: indeed
	$\delta L=\delta(q_{2}\dot q_{1}-q_{2}q_{3})
	=q_{2}\dot\rho-q_{2}\dot\rho=0$ identically. The canonical charge with the pair
	$q_{1}\leftrightarrow q_{2}$ is $G=q_{2}=\Om$, the first-class constraint, as
	expected.
	
	Two qualifications are visible here. First, the cross-check
	$r_{1}=\rank\{v^{(a)}\}$ versus $r_{2}=s-\rank\{\dd\Phi_{v^{(a)}}\}$ gives
	$r_{1}=2$ but $r_{2}=1$, because $\dd\Phi_{e_{3}}=\dd q_{2}\neq0$; the mismatch
	is legitimate and must not be read as dependence of the generators. Second, the
	ambient kernel has dimension $2$ while $\ker(\iota^{*}f^{(1)})$ has dimension
	$3$ on $\Sigma=\{q_{2}=0\}$: the multiplier direction $\partial_{\lambda}$ lies
	in the restricted kernel but not in the ambient one, which is the familiar
	statement that multipliers conjugate to first-class constraints are arbitrary
	functions of time.
	
	\subsection{The counterexample as a unit test}
	
	Example~\ref{ex:counter} and Theorem~\ref{thm:counterchain} give
	\[
	\Gamma=0_{2\times2},\quad
	M=J^{-1}\ \text{invertible},\quad
	\dim\ker f^{(1)}=2,
	\]
	\[
	\text{first-class count}=0,\qquad
	\text{ChainSolvable}=\text{false},
	\]
	against $\text{DependentConstraints}=\text{true}$. This is the pair of outputs
	that distinguishes the present theory from the naive reading, and it is the
	minimal system on which the two disagree.
	
	\subsection{Rank jump}
	
	Example~\ref{ex:jump} is the test case for \textbf{H1}: the generic verdict is
	``regular'', the degeneracy locus is the zero set of $\gen{q_{1}}$, and on that
	stratum every direction is null.
	
	\subsection{A parameter-controlled gauge transition}
	\label{sec:paramgauge}
	
	The degeneracy loci met so far live in the space of states: the stratum
	$\Delta$ of Theorem~\ref{thm:strat}, the locus $q_{1}=0$ of
	Example~\ref{ex:jump}. The example of this subsection is of a different kind.
	Its critical locus depends on the parameters of the model alone, and crossing
	it changes not only the algebra but the general solution of the equations of
	motion. The system is already in first-order form, so no Legendre transform
	intervenes and every object of the theory can be read off directly.
	
	Let $\xi=(q_{1},q_{2},p_{1})\in\R^{3}$ and
	\begin{equation}
		L=p_{1}\dot q_{1}+\alpha\,q_{1}\dot q_{2}
		-\tfrac12 p_{1}^{2}-\tfrac{\beta}{2}\,(q_{1}-q_{2})^{2},
		\qquad \alpha,\beta\in\R,
		\label{eq:pgL}
	\end{equation}
	so that, in the notation of \eqref{eq:lag},
	\begin{equation}
		a=\bigl(p_{1},\;\alpha q_{1},\;0\bigr),
		\qquad
		V=\tfrac12 p_{1}^{2}+\tfrac{\beta}{2}(q_{1}-q_{2})^{2}.
		\label{eq:pgaV}
	\end{equation}
	
	\paragraph{Presymplectic matrix.} From $f^{(0)}_{AB}=\pd{A}a_{B}-\pd{B}a_{A}$,
	\begin{equation}
		f^{(0)}=\begin{pmatrix}0&\alpha&-1\\ -\alpha&0&0\\ 1&0&0\end{pmatrix},
		\qquad
		\rank f^{(0)}=2,
		\qquad
		\ker f^{(0)}=\spn\{N\},\quad N=(0,1,\alpha)^{\top},
		\label{eq:pgf0}
	\end{equation}
	for every $(\alpha,\beta)$. The rank is constant on the whole parameter plane,
	because the entry $f^{(0)}_{q_{1}p_{1}}=-1$ never vanishes. Whatever happens
	below is therefore \emph{not} a rank jump of $f^{(0)}$, and hypothesis
	\textbf{H1} is satisfied uniformly.
	
	\paragraph{Constraint and reduced pairing.} Contracting the null direction with
	$\nabla V$,
	\begin{equation}
		\Om=N\!\cdot\!\nabla V=\alpha p_{1}-\beta(q_{1}-q_{2}),
		\qquad
		B=\nabla\Om=(-\beta,\;\beta,\;\alpha)^{\top},
		\label{eq:pgOm}
	\end{equation}
	and therefore
	\begin{equation}
		\Gamma=N^{\top}B=\alpha^{2}+\beta .
		\label{eq:pgGamma}
	\end{equation}
	Here $\Gamma$ is a function of the parameters alone. Since $k=d=1$, the matrix
	$M$ is $1\times1$ antisymmetric and vanishes identically, so by
	Corollary~\ref{cor:dim}
	\begin{equation}
		\dim\ker f^{(1)}
		=\begin{cases}
			0, & \alpha^{2}+\beta\neq0,\\[0.2em]
			2, & \alpha^{2}+\beta=0 .
		\end{cases}
		\label{eq:pgdim}
	\end{equation}
	The locus $\Gamma=0$ is a smooth curve in the $(\alpha,\beta)$ plane, of
	codimension one, and it is the parabola $\beta=-\alpha^{2}$.
	
	\begin{remark}[Parameter locus, not a frame artifact]
		\label{rem:pgframe}
		A symbolic computation of $\ker f^{(0)}$ over the field of rational
		functions typically returns the normalized frame
		$N'=(0,\alpha^{-1},1)^{\top}$, for which
		$\Gamma'=(\alpha^{2}+\beta)/\alpha^{2}$ and
		$\det f^{(1)}=(\alpha^{2}+\beta)^{2}/\alpha^{2}$. The discrepancy between
		$\Gamma$ and $\Gamma'$ is exactly the factor $\det(N^{\top}N)$ of
		Proposition~\ref{prop:basis}, and the vanishing locus is the same in both
		frames. The pole $\alpha=0$ is the locus $P$ of Theorem~\ref{thm:strat}, an
		artifact of the rational frame with no intrinsic content; the frame
		\eqref{eq:pgf0} is regular there. The two must be kept apart:
		$\alpha^{2}+\beta=0$ is a stratification of the parameter space, whereas
		$\alpha=0$ is a defect of a representation, and neither is a stratification
		of the state space.
	\end{remark}
	
	\paragraph{The dynamical content.} The algebra locates the locus. To see what
	it means one integrates the equations of motion, which by Lemma~\ref{lem:eom}
	are $f^{(0)}\dot\xi=\nabla V$, that is
	\begin{equation}
		\alpha\dot q_{2}-\dot p_{1}=\beta(q_{1}-q_{2}),
		\qquad
		\alpha\dot q_{1}=\beta(q_{1}-q_{2}),
		\qquad
		\dot q_{1}=p_{1} .
		\label{eq:pgeom}
	\end{equation}
	
	\begin{proposition}[Transition of functional indeterminacy]
		\label{prop:pgtransition}
		Assume $\alpha\neq0$ and set $u=q_{1}-q_{2}$. Then \eqref{eq:pgeom} implies
		\begin{equation}
			\alpha p_{1}=\beta u
			\qquad\text{and}\qquad
			\bigl(\alpha^{2}+\beta\bigr)\,\dot u=0 .
			\label{eq:pgkey}
		\end{equation}
		Consequently the general solution of \eqref{eq:pgeom} contains
		\begin{equation}
			\dim\bigl(\text{functional indeterminacy}\bigr)
			=\begin{cases}
				0, & \alpha^{2}+\beta\neq0,\\[0.2em]
				1, & \alpha^{2}+\beta=0,
			\end{cases}
			\label{eq:pgindet}
		\end{equation}
		arbitrary functions of time. Explicitly, for $\alpha^{2}+\beta\neq0$ one has
		$u$ constant, $p_{1}=\beta u/\alpha$ constant and $q_{1}$ affine in $t$, so
		the evolution is fixed by the admissible initial data; for
		$\alpha^{2}+\beta=0$ the function $q_{2}(t)$ may be prescribed arbitrarily
		and $q_{1}$, $p_{1}$ follow.
	\end{proposition}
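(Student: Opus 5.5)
The proof will be a direct manipulation of the three scalar equations \eqref{eq:pgeom}, followed by a case split on $\alpha^{2}+\beta$. Write $u=q_{1}-q_{2}$ throughout.

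The first step is to obtain the constraint $\alpha p_{1}=\beta u$. The third equation gives $\dot q_{1}=p_{1}$. Substituting this into the second yields $\alpha p_{1}=\beta u$, which is exactly the generated constraint $\Om=0$ of \eqref{eq:pgOm}, consistent with Corollary~\ref{cor:consistency}.

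The second step is to derive $(\alpha^{2}+\beta)\dot u=0$, and I would do this by adding equations. Multiply the second equation of \eqref{eq:pgeom} by $\alpha$ and compare with the time derivative of the constraint, using $\alpha\neq0$. Concretely:
\begin{itemize}
\item Differentiating $\alpha p_{1}=\beta u$ gives $\alpha\dot p_{1}=\beta\dot u$.
\item The first equation, multiplied by $\alpha$, gives $\alpha^{2}\dot q_{2}-\alpha\dot p_{1}=\alpha\beta u$.
\item The second, multiplied by $\alpha$, gives $\alpha^{2}\dot q_{1}=\alpha\beta u$.
\end{itemize}
Subtracting the first of these multiplied equations from the second gives $\alpha^{2}\dot u+\alpha\dot p_{1}=0$. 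Substituting $\alpha\dot p_{1}=\beta\dot u$ then yields $(\alpha^{2}+\beta)\dot u=0$.

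The relation is only as good as the differentiation step, so I will check it by adding the first two equations of \eqref{eq:pgeom} directly. This gives $-\alpha\dot u-\dot p_{1}=2\beta u$. Comparing with $\alpha\dot q_{1}=\beta u$ and the constraint keeps the bookkeeping honest.

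The final step is the case analysis, which is where the indeterminacy count comes from.

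If $\alpha^{2}+\beta\neq0$, then $\dot u=0$, so $u$ is constant. The constraint then makes $p_{1}=\beta u/\alpha$ constant, and $\dot q_{1}=p_{1}$ makes $q_{1}$ affine in $t$. Finally $q_{2}=q_{1}-u$, and all three equations are verified. The solution is fixed by $(q_{1}(0),u(0))$ subject to the constraint, so there is no arbitrary function.

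If $\alpha^{2}+\beta=0$, then $\beta=-\alpha^{2}$. Prescribe $q_{2}(t)$ arbitrarily. The equation $\alpha\dot q_{1}=\beta(q_{1}-q_{2})$ becomes the linear ODE $\dot q_{1}=-\alpha(q_{1}-q_{2})$, which determines $q_{1}$ from $q_{2}$ and one constant. Then set $p_{1}=\dot q_{1}$. I expect the remaining first equation to hold identically by virtue of $(\alpha^{2}+\beta)\dot u=0$, and the main obstacle is to verify this.

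To check it, differentiate $p_{1}=-\alpha u$ to get $\dot p_{1}=-\alpha\dot u$. Then $\alpha\dot q_{2}-\dot p_{1}=\alpha\dot q_{2}+\alpha\dot q_{1}-\alpha\dot q_{2}=\alpha\dot q_{1}=\beta u$, as required. This exhibits exactly one arbitrary function. The count cannot exceed one, because $q_{1}$ and $p_{1}$ are then determined up to a constant.
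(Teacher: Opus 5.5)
Your argument is correct and follows the paper's proof step for step: the constraint from the second and third equations, elimination against the first equation to get $\alpha^{2}\dot u+\alpha\dot p_{1}=0$ (the paper writes $\dot p_{1}=-\alpha\dot u$), differentiation of the constraint, and the case split, where you are more explicit than the paper in checking that the constructed solutions satisfy the first equation and that the indeterminacy cannot exceed one function. Delete the ``check'' paragraph, however, because it is wrong: adding the first two equations gives $\alpha(\dot q_{1}+\dot q_{2})-\dot p_{1}=2\beta u$, whereas the combination with left side $-\alpha\dot u-\dot p_{1}$ comes from subtracting them and has right side $0$; your version, combined with $\dot p_{1}=-\alpha\dot u$, would impose the spurious condition $\beta u=0$.
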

	
	\begin{proof}
		The second and third equations of \eqref{eq:pgeom} give
		$\alpha p_{1}=\alpha\dot q_{1}=\beta u$, which is the first relation in
		\eqref{eq:pgkey} and is $\Om=0$ rewritten. Substituting
		$\dot q_{2}=\dot q_{1}-\dot u$ and $\alpha\dot q_{1}=\beta u$ into the first
		equation,
		\[
		\dot p_{1}=\alpha\dot q_{2}-\beta u
		=\alpha\dot q_{1}-\alpha\dot u-\beta u
		=\beta u-\alpha\dot u-\beta u
		=-\alpha\dot u .
		\]
		Differentiating $\alpha p_{1}=\beta u$ gives $\alpha\dot p_{1}=\beta\dot u$,
		and combining the two yields $-\alpha^{2}\dot u=\beta\dot u$, which is the
		second relation in \eqref{eq:pgkey}. If $\alpha^{2}+\beta\neq0$ it forces
		$\dot u=0$, and the stated form of the solution follows. If
		$\alpha^{2}+\beta=0$ the relation reads $0=0$ and imposes nothing: given any
		$q_{2}\in C^{1}$, set $u$ so that $\dot q_{1}=p_{1}=\beta u/\alpha$ is
		consistent with $u=q_{1}-q_{2}$, which is a first-order ordinary
		differential equation for $q_{1}$ with a solution for every admissible
		datum; the remaining equation is then satisfied identically.
	\end{proof}
	
	\paragraph{The same transition seen from the bordered kernel.} For
	$\Gamma\neq0$ the bordered matrix is regular, $\ker f^{(1)}=0$, and there is no
	null mode to discuss: the system is second class and the multiplier is
	determined. For $\Gamma=0$, that is $\beta=-\alpha^{2}$, the kernel is
	two-dimensional by \eqref{eq:pgdim}, with one multiplier-free direction and one
	candidate chain link. Solving the first chain equation
	$f^{(0)}u_{0}=\nabla\Om$ with $\beta=-\alpha^{2}$ gives the family
	$u_{0}=(\alpha,\,u_{2},\,\alpha u_{2}-\alpha^{2})$ with $u_{2}$ free, and the
	decisive contraction of Theorem~\ref{thm:chain-decisive} is
	\begin{equation}
		\Phi_{u_{0}}=u_{0}\!\cdot\!\nabla V
		=\alpha\,(u_{2}-\alpha)\,\bigl[\alpha(q_{1}-q_{2})+p_{1}\bigr].
		\label{eq:pgPhi}
	\end{equation}
	Neither factor is cancelled: the first records that the analysis is being
	carried out in the frame \eqref{eq:pgf0}, and the second is what allows the
	choice $u_{2}=\alpha$, for which $\Phi_{u_{0}}\equiv0$ and
	\begin{equation}
		u_{0}=(\alpha,\alpha,0),
		\qquad
		B^{\top}u_{0}=-\beta\alpha+\beta\alpha+0=0,
		\qquad
		(u_{0},-1)\in\ker f^{(1)} .
		\label{eq:pgu0}
	\end{equation}
	By Theorem~\ref{thm:chainkernel} the two-step chain closes and
	\begin{equation}
		\delta q_{1}=\delta q_{2}=\alpha\,\rho(t),
		\qquad
		\delta p_{1}=0,
		\qquad
		\delta\lambda=\dot\rho(t),
		\label{eq:pggauge}
	\end{equation}
	is an exact gauge symmetry. The gauge direction is the simultaneous translation
	of the two coordinates, and it is precisely the freedom that
	Proposition~\ref{prop:pgtransition} exhibits as the arbitrary function
	$q_{2}(t)$.
	
	\begin{remark}[What is gained by integrating]
		\label{rem:pgstrength}
		The locus $\Gamma=0$ could have been reported as an algebraic degeneracy:
		a determinant vanishes, a kernel appears. Proposition~\ref{prop:pgtransition}
		says more, and says it in the language of solutions rather than of matrices.
		Off the locus the equations of motion determine the evolution completely;
		on it the general solution acquires one arbitrary function of time. What the
		critical parameter value controls is the dimension of the functional
		indeterminacy of the dynamics, and that statement is established by
		integrating \eqref{eq:pgeom}, not by inspecting $\det f^{(1)}$ or $\Gamma$.
		The matrix identifies the locus; the integration establishes its meaning.
		In the terminology fixed in Remark~\ref{rem:parameters} below, we call this
		a parameter-controlled transition of dynamical indeterminacy. We do not call
		it a bifurcation: there are no equilibrium branches that collide, split or
		exchange stability here, and the reduced dynamics $\dot u=0$ has identically
		vanishing spectrum.
	\end{remark}
	
	\begin{remark}[Admissible domain]
		\label{rem:pgdomain}
		The locus requires $\beta=-\alpha^{2}\leq0$. If $\beta$ is read as an
		ordinary stiffness, then apart from the trivial point $\alpha=\beta=0$ the
		locus lies outside the domain of a passive elastic coupling, and reaching it
		requires an inverted or actively driven coupling. The statement of
		Proposition~\ref{prop:pgtransition} holds on the whole real parameter plane
		and is in that sense a mathematical transition with a dynamical
		characterization; it is not an assertion about the response of a passive
		spring. We record the three levels separately: the full real parameter plane,
		on which the proposition holds; the subdomain of ordinary positive
		couplings, which the locus meets only trivially; and the extended models
		with negative or active couplings, in which the locus is attained.
	\end{remark}
	
	\begin{remark}[Scope of the example]
		\label{rem:pgscope}
		Nothing here upgrades to a general theorem of the form
		$\Gamma=0\Rightarrow$ one arbitrary function of time. The general
		classification remains the one developed in
		Sections~\ref{sec:kernel} to \ref{sec:weak}: the bordered kernel, the
		first-class count, ideal membership and the closure of the chain. In the
		present system the count \eqref{eq:pgindet} comes out as it does because
		$k=d=1$, because $M$ vanishes for that reason alone, and because the chain
		closes; Section~\ref{sec:family} already shows that the same locus
		$\Gamma=0$ can instead produce inconsistency. The example is a verification
		of the theory on a case where the answer can be obtained independently, not
		an extension of it.
	\end{remark}
	
	\section{Mechanical realizations of the gauge criterion}
	\label{sec:mechanical}
	
	The systems examined so far were chosen for their algebraic transparency. We now
	turn to ordinary finite-dimensional mechanics, where the same chain (null
	direction, potential invariance, primary constraint, gauge generator) appears in
	systems built from masses, springs, rods and pulleys, and where its last link is
	again what carries the informative distinction.
	
	Brown provides representative mechanical realizations of singular Lagrangian
	systems and analyses them within the Dirac--Bergmann framework
	\cite{Brown2023}. Here we revisit those realizations from the first-order
	Faddeev--Jackiw perspective developed above, using them as concrete tests of the
	gauge-chain criterion. The division of labour is worth stating plainly. Brown
	supplies the mechanical constructions, the singular second-order Lagrangians
	they obey, the identification of the singularity, and the constrained
	Hamiltonian analysis with its constraints and gauge transformations. What is
	developed here is the identification of the degenerate kinetic structure as an
	object to be reorganized, the passage to a first-order Faddeev--Jackiw system,
	the analysis of its presymplectic matrix and null modes, the bordering, and the
	chain criterion of Theorem~\ref{thm:chain-decisive}. Where Brown's
	Dirac--Bergmann analysis reaches the same conclusion we record the agreement as
	external validation, not as the ground of the result. A Faddeev--Jackiw
	treatment of several of these systems has also been given by Paulin-Fuentes,
	Arellano and Cabrera \cite{PaulinFuentes2024}, with the emphasis on the
	generalized brackets and the equations of motion rather than on the gauge
	criterion at issue here.
	
	That treatment also supplies a convenient illustration of a point of method
	that will recur below. For the pendulum suspended from two springs, both the
	Dirac--Bergmann analysis of \cite{Brown2023} and the Faddeev--Jackiw analysis
	of \cite{PaulinFuentes2024} produce the constraint in the factorized form
	\begin{equation}
		2k\ell\,\bigl(x\cos\theta+y\sin\theta\bigr)=0,
		\label{eq:pendfactor}
	\end{equation}
	and the analysis proceeds with the reduced condition
	$x\cos\theta+y\sin\theta=0$. On the regular stratum $k\ell\neq0$ the two are
	equivalent and the reduction is exactly the right move. They are not
	equivalent on the whole parameter family: on the stratum $k\ell=0$ the
	left-hand side of \eqref{eq:pendfactor} vanishes identically, so the reduced
	condition asserts a restriction that the original expression does not impose.
	The cancellation is therefore harmless when read as a restriction to the
	regular stratum, and it is not a globally equivalent reformulation unless the
	vanishing stratum has been classified separately. Nothing in this observation
	bears on the correctness of the results obtained on the regular branch; it
	bears on the scope of the statement, and it is the reason for the discipline
	set out in Remark~\ref{rem:parameters}. Section~\ref{sec:square-control} then
	exhibits a system in which the analogous vanishing stratum is not empty of
	content.
	
	Two matrices must be kept apart throughout. The Hessian of the second-order
	Lagrangian with respect to the velocities,
	\begin{equation}
		K_{ij}=\frac{\partial^{2}L_{\mathrm{mech}}}{\partial\dot q^{i}\partial\dot q^{j}},
		\label{eq:Khessian}
	\end{equation}
	is the object whose degeneracy signals the singular character of the mechanical
	system in its Lagrangian formulation, and it is the form in which the
	singularity is naturally detected there. The presymplectic matrix
	$f^{(0)}_{AB}=\pd{A}a_{B}-\pd{B}a_{A}$ of Section~\ref{sec:conventions} is a
	different object, attached to a first-order Lagrangian
	$L=a_{A}(\xi)\dot\xi^{A}-V(\xi)$ that we construct from the mechanical system.
	The Faddeev--Jackiw route does not identify $K$ with $f^{(0)}$. It reorganizes
	the information carried by the degeneracy of $K$ into the presymplectic
	structure of an enlarged first-order system, in which $\ker K$ reappears first
	as a set of momentum constraints and then, after bordering, inside
	$\ker f^{(m)}$. Proposition~\ref{prop:template} makes that passage explicit and
	is used in every example below.
	
	Each subsection therefore follows the same route: singular second-order
	Lagrangian, degenerate $K$, first-order reformulation, presymplectic matrix
	$f^{(0)}$ and its null directions, bordering, consistency and the chain
	criterion, gauge generator. Every derivation is analytic and can be followed and
	checked by hand.
	
	\subsection{A template for degenerate mechanical systems}
	\label{sec:template}
	
	Almost every mechanical example of interest has the same shape, and it is
	economical to treat it once. The passage from the second-order mechanical
	system to the first-order Faddeev--Jackiw system is carried out explicitly
	first, because it is the step at which $K$ and $f^{(0)}$ are most easily
	confused.
	
	\begin{lemma}[From the second-order system to the first-order form]
		\label{lem:legendre}
		Let $L_{\mathrm{mech}}=\tfrac12\dot q^{\top}K\dot q-V(q)$ with
		$q\in\R^{n}$ and $K$ constant, symmetric and positive semidefinite of rank
		$n-c$ with $c\geq1$. Let $w_{1},\dots,w_{c}$ be a basis of $\ker K$ and let
		$K^{+}$ be the Moore--Penrose inverse of $K$, so that $KK^{+}K=K$ and
		$K^{+}K=P$ is the orthogonal projector onto $\im K=(\ker K)^{\perp}$. Then:
		\begin{enumerate}[label=(\roman*),leftmargin=2.4em,itemsep=0.25em]
			\item the Legendre map is $p_{i}=\partial L_{\mathrm{mech}}/\partial\dot
			q^{i}=(K\dot q)_{i}$, its image is $\im K$, and the relations
			\begin{equation}
				\Om_{a}:=w_{a}\!\cdot\!p=0,
				\qquad a=1,\dots,c,
				\label{eq:primaryK}
			\end{equation}
			are $c$ independent primary constraints and exhaust them;
			\item on the primary surface $\dot q=K^{+}p$ modulo $\ker K$, and the
			canonical Hamiltonian is
			\begin{equation}
				H_{C}=p\!\cdot\!\dot q-L_{\mathrm{mech}}
				=\tfrac12\,p^{\top}K^{+}p+V(q),
				\label{eq:HC}
			\end{equation}
			determined off that surface only modulo the ideal generated by the
			$\Om_{a}$;
			\item with $\lambda^{a}$ the multipliers of the primary constraints, the
			total Hamiltonian is $H_{T}=H_{C}+\lambda^{a}\Om_{a}$, and the
			first-order action $\int(p_{i}\dot q^{i}-H_{T})\,\dd t$ has as
			Euler--Lagrange equations the Hamilton equations of $H_{T}$ together with
			\eqref{eq:primaryK}, which are equivalent to the Euler--Lagrange
			equations of $L_{\mathrm{mech}}$;
			\item the ambiguity in (ii) changes $H_{T}$ by
			$\sum_{a}c^{a}(\xi)\,\Om_{a}$ and is therefore of the form covered by
			Proposition~\ref{prop:convention}: it does not affect any of the verdicts
			below.
		\end{enumerate}
	\end{lemma}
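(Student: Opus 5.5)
I will treat the four items in order, working throughout in the orthogonal splitting $\R^{n}=\im K\oplus\ker K$.

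\emph{Item (i).} Since $K$ is constant and symmetric, $p=\partial L_{\mathrm{mech}}/\partial\dot q=K\dot q$, so the image of the Legendre map is $\im K=(\ker K)^{\perp}$. A covector $p$ lies in that image if and only if $w_{a}\cdot p=0$ for every $a$. The functions $\Om_{a}=w_{a}\cdot p$ therefore cut out exactly the image. They are independent because their gradients $(0,w_{a})$ in $(q,p)$-space are linearly independent. They exhaust the primary constraints because the image has codimension exactly $c$: any other relation $\phi(q,p)=0$ that holds on the image is a function vanishing on the linear subspace $\{w_{a}\cdot p=0\}$. By Hadamard's lemma such a $\phi$ lies in the ideal $\gen{\Om_{a}}$.

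\emph{Item (ii).} On the image, $K\dot q=p$ is solved by $\dot q=K^{+}p$, and the general solution is $K^{+}p+\ker K$. This follows from $KK^{+}p=Pp=p$ for $p\in\im K$, together with the affine structure used in Proposition~\ref{prop:solv}. Substituting, a $\ker K$ component $v$ drops out of both terms: $p\cdot v=0$ and $\dot q^{\top}K\dot q$ is unchanged by $v$. Using $K^{+}KK^{+}=K^{+}$, this gives $p\cdot\dot q-\tfrac12\dot q^{\top}K\dot q=\tfrac12 p^{\top}K^{+}p$, which is \eqref{eq:HC}. The expression is defined only on $\im K$. Any two extensions off it differ by a function vanishing on the primary surface, hence again by an element of $\gen{\Om_{a}}$ by Hadamard's lemma.

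\emph{Item (iii).} This is the step I expect to carry the real content. I will vary $\int(p\dot q-H_{C}-\lambda^{a}\Om_{a})\,\dd t$ in $(q,p,\lambda)$, which gives three sets of equations:
\[
\dot q=K^{+}p+\lambda^{a}w_{a},\qquad
\dot p=-\nabla V,\qquad
\Om_{a}=0 .
\]
For the direction ``first-order $\Rightarrow$ mechanical'': applying $K$ to the first equation and using $Kw_{a}=0$ gives $K\dot q=KK^{+}p=Pp$. The constraints put $p\in\im K$, so $Pp=p$ and $K\dot q=p$. Then $\dot p=-\nabla V$ reads $\tfrac{\dd}{\dd t}(K\dot q)=-\nabla V$, which is the Euler--Lagrange equation $K\ddot q=-\nabla V$ of $L_{\mathrm{mech}}$. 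For the converse, given a solution $q(t)$, I set $p=K\dot q$, which lies in $\im K$. I then set $\lambda^{a}$ to be the components of $\dot q-K^{+}p=(I-P)\dot q\in\ker K$ in the basis $w_{a}$. All three equations then hold.

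The point to check carefully is that the multipliers are in one-to-one correspondence with the $\ker K$ components of $\dot q$. These components are undetermined on the mechanical side and free on the first-order side. The secondary conditions $w_{a}\cdot\nabla V=0$ arise consistently on both sides, as the $\ker K$ projection of $K\ddot q=-\nabla V$ and as $\dot\Om_{a}=0$ respectively, so they need no separate treatment. To cast the result in the form \eqref{eq:lag}, I take $\xi=(q,p,\lambda)$, $a=(p,0,0)$ and potential $H_{T}$, in agreement with Remark~\ref{rem:sign}.

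\emph{Item (iv).} Replacing $H_{C}$ by $H_{C}+\sum c^{a}\Om_{a}$ gives a potential of exactly the form $\widetilde V=V-\sum g^{\gamma}\Om_{\gamma}$. Proposition~\ref{prop:convention} then shows that the weak verdicts are unchanged. I will also observe that the shift can be absorbed into the redefinition $\lambda^{a}\mapsto\lambda^{a}+c^{a}$.
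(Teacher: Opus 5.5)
Your proof is correct and is essentially the paper's: (i) rests on $\im K=(\ker K)^{\perp}$ and (ii) on $\dot q=K^{+}p+\nu$ with $\nu\in\ker K$, $p\cdot\nu=0$ and $K^{+}KK^{+}=K^{+}$, while for (iii)--(iv), which the paper dispatches as ``the standard Dirac construction together with Proposition~\ref{prop:convention}'', your explicit two-way verification (matching $\lambda^{a}$ to the $\ker K$ components of $\dot q$) and your Hadamard-lemma arguments supply the omitted details. Two small remarks. First, your absorption $\lambda^{a}\mapsto\lambda^{a}+c^{a}(q,p)$ preserves $a=(p,0,0)$, so it supports the claim that no verdict changes more strongly than Proposition~\ref{prop:convention} does, since that proposition only protects weak verdicts. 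Second, the appeal to Remark~\ref{rem:sign} is misplaced: there the multipliers enter the one-form through bordering, whereas here they enter the potential, as in Proposition~\ref{prop:template}.
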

	
	\begin{proof}
		(i) $K$ is symmetric, so $\im K=(\ker K)^{\perp}$; hence $p=K\dot q$ holds
		for some $\dot q$ if and only if $w_{a}\!\cdot\!p=0$ for every $a$, and the
		$w_{a}$ being independent the $c$ relations are independent. Since
		$\rank K=n-c$, the fibres of the Legendre map are the cosets of $\ker K$ and
		no further relation among $q$ and $p$ follows.
		
		(ii) Write $\dot q=K^{+}p+\nu$ with $\nu\in\ker K$, which solves $K\dot q=p$
		on the primary surface because $KK^{+}p=Pp=p$ there. Then
		$p\!\cdot\!\dot q=p^{\top}K^{+}p$, since $p\perp\ker K$, and
		\[
		\tfrac12\dot q^{\top}K\dot q
		=\tfrac12\bigl(K^{+}p+\nu\bigr)^{\top}K\bigl(K^{+}p+\nu\bigr)
		=\tfrac12\,p^{\top}K^{+}KK^{+}p
		=\tfrac12\,p^{\top}K^{+}p ,
		\]
		using $K\nu=0$ and $K^{+}KK^{+}=K^{+}$. Subtracting gives \eqref{eq:HC}.
		Off the primary surface any two expressions of $H_{C}$ in the variables
		$(q,p)$ differ by a combination of the $\Om_{a}$, which is the standard
		non-uniqueness of the canonical Hamiltonian for a singular Lagrangian.
		
		(iii) and (iv) are the standard Dirac construction together with
		Proposition~\ref{prop:convention}.
	\end{proof}
	
	\begin{remark}[$K$ and $f^{(0)}$ have different sizes]
		\label{rem:KvsF}
		Lemma~\ref{lem:legendre} makes the distinction of
		Section~\ref{sec:mechanical} concrete. The velocity Hessian $K$ is an
		$n\times n$ matrix on the configuration space of the mechanical system,
		of rank $n-c$. The presymplectic matrix $f^{(0)}$ produced below is a
		$(2n+c)\times(2n+c)$ matrix on the extended space
		$\xi=(q,p,\lambda)$, and its rank is $2n$, independently of $\rank K$. They
		are not two matrix representations of one object, and neither is obtained
		from the other by a change of basis. What passes from one to the other is
		$\ker K$, which enters $f^{(0)}$ not as a kernel but through the primary
		constraints \eqref{eq:primaryK} and the multipliers conjugate to them.
	\end{remark}
	
	\begin{proposition}[Mechanical template]
		\label{prop:template}
		Let
		\begin{equation}
			L_{\mathrm{mech}}=\tfrac12\,\dot q^{\top}K\dot q-V(q),
			\qquad q\in\R^{n},
			\label{eq:Lmech}
		\end{equation}
		be a singular second-order Lagrangian, so that $K$ is its velocity Hessian
		\eqref{eq:Khessian}, taken constant, symmetric and positive semidefinite of rank
		$n-c$, and let $w_{1},\dots,w_{c}$ be a basis of $\ker K$. By
		Lemma~\ref{lem:legendre} the first-order Faddeev--Jackiw form of the system
		is carried by $\xi=(q^{i},p_{i},\lambda^{a})\in\R^{2n+c}$ with
		\begin{equation}
			L=p_{i}\dot q^{i}-V_{\mathrm{ext}},
			\qquad
			V_{\mathrm{ext}}=\tfrac12\,p^{\top}K^{+}p+V(q)+\lambda^{a}\,(w_{a}\!\cdot\!p).
			\label{eq:Lfo}
		\end{equation}
		Then:
		\begin{enumerate}[label=(\alph*),leftmargin=2.2em,itemsep=0.25em]
			\item the Faddeev--Jackiw one-form of \eqref{eq:Lfo} is
			$a=(p_{1},\dots,p_{n},0,\dots,0,0,\dots,0)$ in the ordering
			$(q,p,\lambda)$, and the associated presymplectic matrix is
			\begin{equation}
				f^{(0)}
				=\begin{pmatrix}0&-I_{n}&0\\ I_{n}&0&0\\ 0&0&0_{c}\end{pmatrix}
				\in\R^{(2n+c)\times(2n+c)},
				\qquad
				\rank f^{(0)}=2n,
				\label{eq:templatef0}
			\end{equation}
			so $d=c$ and
			$\ker f^{(0)}=\spn\{\partial_{\lambda^{1}},\dots,\partial_{\lambda^{c}}\}$;
			in particular $f^{(0)}$ is not $K$, and by Remark~\ref{rem:KvsF} it is
			not even a matrix of the same size;
			\item the generated constraints are the momentum constraints
			$\Om_{a}=w_{a}\!\cdot\!p$, and
			\begin{equation}
				\Gamma=0_{c\times c},
				\qquad
				M=0_{c\times c},
				\qquad
				\dim\ker f^{(1)}=2c ;
				\label{eq:templateGM}
			\end{equation}
			in particular every combination $\Om_{w}$ is first class, the count
			\eqref{eq:fccount} being $c-0=c$, and a basis of $\ker f^{(1)}$ is
			\begin{equation}
				\bigl\{(\partial_{\lambda^{a}},0)\bigr\}_{a=1}^{c}
				\;\cup\;
				\bigl\{(w_{b}^{i}\partial_{q^{i}},-e_{b})\bigr\}_{b=1}^{c},
				\label{eq:templatekerbasis}
			\end{equation}
			the first family being the multiplier-free directions of
			Proposition~\ref{prop:mfree} and the second the candidate chain links;
			\item for $u_{1}=\partial_{\lambda^{b}}$ the first chain equation
			$f^{(0)}u_{0}=\nabla\Om_{b}$ has the solution
			$u_{0}=w_{b}^{i}\partial_{q^{i}}$, unique modulo $\ker f^{(0)}$, and its
			contraction is
			\begin{equation}
				\Phi_{u_{0}}=u_{0}\!\cdot\!\nabla V_{\mathrm{ext}}
				=w_{b}\!\cdot\!\nabla V(q);
				\label{eq:templatePhi}
			\end{equation}
			\item the two-step chain closes, and
			\begin{equation}
				\delta q^{i}=\rho(t)\,w_{b}^{i},
				\qquad
				\delta p_{i}=0,
				\qquad
				\delta\lambda^{b}=\dot\rho(t),
				\label{eq:templatechain}
			\end{equation}
			is an exact gauge symmetry, \emph{if and only if}
			$w_{b}\!\cdot\!\nabla V\equiv0$. The associated canonical charge, for the
			Darboux pairs $q^{i}\leftrightarrow p_{i}$, is
			$G=w_{b}\!\cdot\!p=\Om_{b}$;
			\item if $w_{b}\!\cdot\!\nabla V\not\equiv0$ then the decisive class of
			Theorem~\ref{thm:chain-decisive} is nonzero, no two-step chain exists, and
			$w_{b}\!\cdot\!\nabla V=0$ is a new constraint on which the iteration
			continues.
		\end{enumerate}
	\end{proposition}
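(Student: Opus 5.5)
The plan is to verify each item by direct computation in the ordering $(q,p,\lambda)$, feeding the resulting blocks into the general results already proved: Corollary~\ref{cor:dim}, Theorem~\ref{thm:kernel}, Theorem~\ref{thm:chain2} and Theorem~\ref{thm:chain-decisive}.

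\emph{Items (a) and (b).} For (a) I will read off $a=(p,0,0)$ from \eqref{eq:Lfo}. Then $f^{(0)}_{AB}=\pd{A}a_{B}-\pd{B}a_{A}$ is nonzero only in the $q$--$p$ blocks, which gives \eqref{eq:templatef0}. The kernel is spanned by the $\partial_{\lambda^{a}}$ because the $(q,p)$ block is invertible. So $d=c$, and the orthonormal frame is $N=(0,0,I_{c})^{\top}$.

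For (b), contracting $N_{a}$ with $\nabla V_{\mathrm{ext}}$ gives $\Om_{a}=\pd{\lambda^{a}}V_{\mathrm{ext}}=w_{a}\cdot p$. Hence $B$ has nonzero entries only in the $p$-rows, namely $B=(0,W,0)^{\top}$ with $W=(w_{1}\cdots w_{c})$. Two consequences follow.
\begin{itemize}
\item $\Gamma=N^{\top}B=0$, since $B$ has no $\lambda$-components.
\item $M=B^{\top}f^{+}B=0$. Here $f^{+}$ is supported on the $(q,p)$ block and equals $\bigl(\begin{smallmatrix}0&I\\-I&0\end{smallmatrix}\bigr)$ there, whose $p$--$p$ block vanishes.
\end{itemize}
Then $\Pi=I$, and \eqref{eq:dim} gives $\dim\ker f^{(1)}=c+c-0=2c$. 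The first-class count $c$ follows from \eqref{eq:fccount}.

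To exhibit the basis \eqref{eq:templatekerbasis}, I will check the two equations $f^{(0)}u+Bw=0$ and $B^{\top}u=0$ directly.
\begin{itemize}
\item For $(\partial_{\lambda^{a}},0)$, both equations hold trivially.
\item For $(w_{b}\cdot\partial_{q},-e_{b})$, note that $f^{(0)}(x,y,z)=(-y,x,0)$. This gives $f^{(0)}u=(0,w_{b},0)=Be_{b}$, and $B^{\top}u=0$ because $u$ has no $p$-components.
\end{itemize}
The $2c$ vectors are independent, so they form a basis.

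\emph{Items (c) and (d).} The same formula for $f^{(0)}$ shows that $f^{(0)}u_{0}=\nabla\Om_{b}=(0,w_{b},0)$ forces $x=w_{b}$ and $y=0$, with $z$ free. This is uniqueness modulo $\ker f^{(0)}$, as in Theorem~\ref{thm:chain2}(b). Since $V_{\mathrm{ext}}$ depends on $q$ only through $V$, the contraction is \eqref{eq:templatePhi}.

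For (d) I will apply Theorem~\ref{thm:chain-decisive}. The chain closes iff the class $[w_{b}\cdot\nabla V(q)]$ vanishes in $\Rring/\gen{w_{1}\cdot p,\dots,w_{c}\cdot p}$. This is the step needing care. Every element of the ideal vanishes on $p=0$, while $w_{b}\cdot\nabla V$ is independent of $p$. So membership in the ideal forces $w_{b}\cdot\nabla V\equiv0$. The converse is immediate, because $0$ lies in the ideal.

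As an independent cross-check of closure, I will compute $\delta L$ directly under \eqref{eq:templatechain}:
\[
\delta L=\dot\rho\,(w_{b}\cdot p)-\rho\,(w_{b}\cdot\nabla V)-\dot\rho\,(w_{b}\cdot p)=-\rho\,(w_{b}\cdot\nabla V).
\]
This vanishes identically exactly under the stated condition. For the charge, \eqref{eq:alpha} with $v=w_{b}\cdot\partial_{q}$ gives the constant one-form $\alpha=w_{b}\cdot\dd p$, and the constant-direction proposition yields $G=w_{b}\cdot p=\Om_{b}$.

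\emph{Item (e) and the main obstacle.} The contrapositive of (d), with the ideal argument above, shows the decisive class is nonzero. The remaining task is to read $w_{b}\cdot\nabla V\approx0$ as a new constraint. I expect this to be the delicate point. I would justify it through Theorem~\ref{thm:trichotomy}: the contraction $\Phi_{(u_{0},-e_{b})}$ of this kernel element is not in $\Ical(\Sigma)$, because it is a nonzero function of $q$ alone on $\Sigma=\{W^{\top}p=0\}$, and that surface places no restriction on $q$. We are therefore in case (3), and the iteration continues.
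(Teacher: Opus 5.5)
Your proposal is correct and follows essentially the same route as the paper. You read off $f^{(0)}$ and $B$ in block form, obtain $\Gamma=0$ and $M=0$ from the vanishing $p$--$p$ block of $f^{+}$, and apply the dimension formula. You then solve the chain equation, use that $\Ical=\gen{w_{1}\cdot p,\dots,w_{c}\cdot p}$ cannot contain a nonzero function of $q$ alone, and invoke case~(3) of Theorem~\ref{thm:trichotomy}. The differences are minor: you verify the basis \eqref{eq:templatekerbasis} directly rather than through Theorem~\ref{thm:kernel}(c) and Proposition~\ref{prop:mfree}, argue by setting $p=0$ rather than by homogeneity in $p$, and add a direct $\delta L$ check. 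Your explicit check that $w_{b}\cdot\nabla V\notin\Ical(\Sigma)$, not merely $\notin\Ical$, is exactly what case~(3) requires, and the paper leaves it implicit.
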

	
	\begin{proof}
		(a) From $L=p_{i}\dot q^{i}-V_{\mathrm{ext}}$ the coefficient of $\dot\xi^{A}$
		is $a_{q^{i}}=p_{i}$, $a_{p_{i}}=0$, $a_{\lambda^{a}}=0$, whence
		$f^{(0)}_{q^{i}p_{j}}=\pd{q^{i}}a_{p_{j}}-\pd{p_{j}}a_{q^{i}}
		=0-\delta_{i}^{j}=-\delta_{i}^{j}$,
		$f^{(0)}_{p_{j}q^{i}}=\delta_{i}^{j}$, and every entry with a $\lambda$ index
		vanishes because no $a_{A}$ depends on $\lambda$ and $a_{\lambda^{a}}=0$.
		This is \eqref{eq:templatef0}; its rank is $2n$ and its kernel is spanned by
		the $\lambda$ directions.
		
		(b) $\Om_{a}=\partial_{\lambda^{a}}V_{\mathrm{ext}}=w_{a}\!\cdot\!p$, whose
		gradient has no $\lambda$ component, so $\Gamma_{ab}=\partial_{\lambda^{a}}
		(w_{b}\!\cdot\!p)=0$. Here the frame data are constant, so condition
		\eqref{eq:constframe} holds and the identities of Theorem~\ref{thm:dirac} are
		strong. For $M$, by Lemma~\ref{lem:fjdirac} and Theorem~\ref{thm:dirac},
		$M_{ab}=\{w_{a}\!\cdot\!p,\,w_{b}\!\cdot\!p\}_{\FJ}=0$ because momenta commute
		in the canonical block. Explicitly, $B$ has the single nonzero block
		$B_{p_{i}\,b}=w_{b}^{i}$, the invertible block of \eqref{eq:templatef0} is
		$J=\bigl(\begin{smallmatrix}0&-I_{n}\\ I_{n}&0\end{smallmatrix}\bigr)$ with
		$J^{-1}=\bigl(\begin{smallmatrix}0&I_{n}\\ -I_{n}&0\end{smallmatrix}\bigr)$,
		so $B_{u,b}=(0,w_{b})$ and
		$M_{ab}=(0,w_{a})^{\top}J^{-1}(0,w_{b})=(0,w_{a})\!\cdot\!(w_{b},0)=0$.
		Formula \eqref{eq:dim} then gives $\dim\ker f^{(1)}=c+c-0=2c$, and
		\eqref{eq:fccount} gives $c$ first-class combinations. For the basis
		\eqref{eq:templatekerbasis}: the multiplier-free elements are
		Proposition~\ref{prop:mfree} with $\Gamma^{\top}=0$; and for $w=-e_{b}$,
		Theorem~\ref{thm:kernel}(c) gives
		$x=-J^{-1}B_{u}w=J^{-1}(0,w_{b})=(w_{b},0)$, that is $u_{q}=w_{b}$,
		$u_{p}=0$, while $\Gamma^{\top}y=Mw=0$ allows $y=0$.
		
		(c) $\nabla\Om_{b}=(0,w_{b},0)$. Writing $u_{0}=(u_{q},u_{p},u_{\lambda})$, the
		equation $f^{(0)}u_{0}=\nabla\Om_{b}$ reads $-u_{p}=0$ and $u_{q}=w_{b}$;
		the $\lambda$ rows are vacuous, which is the stated indeterminacy modulo
		$\ker f^{(0)}$. Since neither $\tfrac12p^{\top}K^{+}p$ nor
		$\lambda^{a}(w_{a}\!\cdot\!p)$ depends on $q$, the contraction reduces to
		\eqref{eq:templatePhi}.
		
		(d) By Theorem~\ref{thm:chain2} the chain closes exactly when
		$\Phi_{u_{0}}\equiv0$, which by \eqref{eq:templatePhi} is
		$w_{b}\!\cdot\!\nabla V\equiv0$. The transformation \eqref{eq:templatechain} is
		$\delta\xi=\rho\,u_{0}+\dot\rho\,u_{1}$ written out. For the charge,
		$\iota_{u_{0}}\omega_{\mathrm{can}}=w_{b}^{i}\dd p_{i}$, which is closed and
		constant, so Proposition~\ref{prop:nocharge} and the constant-direction formula
		of Section~\ref{sec:charges} give $G=w_{b}^{i}p_{i}$.
		
		(e) The ideal is $\Ical=\gen{w_{1}\!\cdot\!p,\dots,w_{c}\!\cdot\!p}$, generated
		by elements linear and homogeneous in $p$, whereas
		$w_{b}\!\cdot\!\nabla V$ depends on $q$ alone; hence
		$w_{b}\!\cdot\!\nabla V\in\Ical$ forces $w_{b}\!\cdot\!\nabla V=0$. The class of
		Theorem~\ref{thm:chain-decisive} is therefore nonzero precisely when
		$w_{b}\!\cdot\!\nabla V\not\equiv0$, and then case (3) of
		Theorem~\ref{thm:trichotomy} applies.
	\end{proof}
	
	Proposition~\ref{prop:template} isolates the content of the whole theory in a
	single line: for this class of systems the algebraic data $\Gamma$ and $M$
	vanish identically, so the bordered kernel is as large as it can be and every
	generated constraint is first class, and yet gauge freedom holds or fails
	according to a condition, $w\!\cdot\!\nabla V\equiv0$, that neither $\Gamma$ nor
	$M$ sees. It is the chain criterion, and only the chain criterion, that decides.
	
	\begin{remark}[Parameter discipline]
		\label{rem:parameters}
		Every system below carries physical parameters, among them masses,
		stiffnesses, radii and equilibrium lengths. The discipline of carrying such
		factors through the computation rather than clearing them was introduced in
		the companion paper \cite{CLMCP2026}, where it was justified structurally:
		premature cancellation can erase a locus in parameter space on which a rank,
		a kernel or a constraint structure changes, and the erasure is irreversible
		because the locus no longer appears in the surviving expressions. We follow
		that discipline here, and the points of method below are its concrete form.
		They matter because the objects of the theory are rank-dependent.
		
		First, the hypotheses of Proposition~\ref{prop:template} are themselves
		parametric. The statement presupposes $\rank K=n-c$ with a fixed $c$, so
		every example records explicitly the locus in parameter space on which that
		rank is attained, and the analysis is asserted only off the complementary
		locus. Where the excluded locus carries different behaviour, it is described
		rather than discarded. This is hypothesis \textbf{H1} made concrete.
		
		Second, no parametric factor is cancelled from an expression that is being
		set to zero. The general form of the rule is this. Given a relation
		\begin{equation}
			A(\lambda)\,F(z)=0
			\label{eq:stratifyrule}
		\end{equation}
		in which $A$ depends on the parameters $\lambda$ and $F$ on the state
		variables $z$, the parameter space is first divided. On $A(\lambda)\neq0$ the
		relation is equivalent to $F(z)=0$; on $A(\lambda)=0$ it is an identity and
		imposes nothing. Passing from \eqref{eq:stratifyrule} to $F(z)=0$ before that
		division has been made replaces a statement about the whole family by a
		statement about one stratum, and the discarded stratum leaves no trace in the
		surviving expressions. The rule is not that parameters may never be
		cancelled. It is that the cancellation is a restriction to a regular
		stratum, and that the stratum on which it is invalid must be classified
		before it is discarded: stratify first, simplify second. Concretely, if a
		contraction takes the form $w\!\cdot\!\nabla V=\mu\,F(q)$, the branches
		$\mu\neq0$ and $\mu=0$ are treated separately, because they can differ in
		whether the chain closes. Section~\ref{sec:square-control} contains a case
		in which they do.
		
		Third, the rule is not confined to polynomial relations. A parameter can
		carry structural information while sitting in a denominator, and clearing
		denominators before classifying the locus destroys it just as effectively as
		cancelling a common factor. In Section~\ref{sec:paramgauge} the decisive
		combination $\alpha^{2}+\beta$ appears as the denominator of the inverse of
		the bordered matrix, and passing to a common denominator and retaining the
		numerator would have removed exactly the locus at which the gauge structure
		changes. We therefore adopt the following convention throughout:
		parameter-dependent factors are preserved not only in the constraints and in
		the contractions, but also in denominators, determinants, inverses,
		pseudoinverses and reduced operators, until their vanishing or their
		divergence has been classified.
		
		Finally, a word on terminology. The critical loci met below are of several
		kinds, and we name them by what is demonstrated rather than by what they
		resemble. A locus on which a rank changes is a rank-jump locus; one on which
		a frame ceases to be defined is a pole locus; one on which the class of the
		constraints changes is a change of constraint structure; one on which a null
		mode passes from obstructed to gauge is a gauge-character transition; one on
		which an effective stiffness vanishes is a loss of restoring force; and one
		on which the number of arbitrary functions in the general solution changes
		is a transition of dynamical indeterminacy. None of these is called a
		bifurcation. We reserve that word for the collision, creation or exchange of
		stability of equilibrium branches, and no locus examined in this paper meets
		that criterion.
		
		The absence of an equilibrium bifurcation in the examples considered here
		does not diminish the role of parameter preservation. What these examples
		show is that parameter strata may encode other qualitative transitions,
		among them changes in gauge character and changes in the functional
		indeterminacy of the dynamics, and that such transitions are invisible once
		the corresponding factors have been cleared. Equilibrium bifurcations remain
		a further and equally legitimate motivation for preserving the full
		parameter space; none is claimed here, and none is excluded for other
		systems or other parameter families.
	\end{remark}
	
	\subsection{Three pairs of pulleys}
	\label{sec:pulley3}
	
	\begin{figure}[htbp]
		\centering
		\begin{tikzpicture}[line cap=round, line join=round, >=latex, scale=2.25]
			\def\r{0.35}
			
			\coordinate (A) at (-2,-0.12);
			\coordinate (B) at (-2+4*\r,-0.12);
			
			\draw[thick,decorate,decoration={coil,aspect=0.4,amplitude=3mm,segment length=2mm}]
			(A) -- (-2,-0.8);
			\draw[thick,decorate,decoration={coil,aspect=0.4,amplitude=3mm,segment length=4mm}]
			(B) -- (-2+4*\r,-1.3);
			\draw[thick,decorate,decoration={coil,aspect=0.4,amplitude=3mm,segment length=3mm}]
			(-2+8*\r,-0.12) -- (-2+8*\r,-1.05);
			
			\filldraw[fill=black!30] (-3,-0.12) rectangle (2.5,0.12);
			\filldraw[fill=black!30] (-3,-4) rectangle (2.5,-3.76); 
			
			\foreach \x/\y in {
				-2/-1.6,       
				-2+2*\r/-3,    
				-2+4*\r/-2.1,  
				-2+6*\r/-3,    
				-2+8*\r/-1.85, 
				-2+10*\r/-3    
			} {
				\filldraw[fill=black!15] (\x,\y) circle (\r);
				\filldraw[fill=yellow!90] (\x,\y) circle (0.12);
				\filldraw[fill=black] (\x,\y) circle (0.025);
			}
			
			\foreach \x/\y in {0.35/-1.6, 5*\r/-2.1, 9*\r/-1.85} {
				\draw[thick] (-2+\x,\y) -- (-2+\x,-3);
			}
			\foreach \x/\y in {-0.35/-1.6, 3*\r/-2.1, 7*\r/-1.85, 11*\r/-1.85} {
				\draw[thick,->] (-2+\x,\y) -- (-2+\x,\y-0.7);
				\draw[thick]    (-2+\x,\y-0.7) -- (-2+\x,-3);
			}
			
			\foreach \x/\y in {-2/-0.9, -2+4*\r/-1.4, -2+8*\r/-1.15} {
				\draw[very thick, gray!40] (\x,\y) -- (\x,\y-0.7);
			}
			\foreach \x/\y in {-2+2*\r/-3, -2+6*\r/-3, -2+10*\r/-3} {
				\draw[very thick, gray!40] (\x,\y) -- (\x,\y-0.75);
			}
			
			\foreach \x/\y/\l/\color/\size in {
				-2/-0.9/1/red/4.25pt,
				-2+4*\r/-1.4/2/blue/4.25pt,
				-2+8*\r/-1.15/3/green/4.25pt
			} {
				\shade[ball color=\color, draw=black] (\x,\y) circle (\size);
			}
			
			\node at (-1.55,-0.9)        {\normalsize $m$};
			\node at (-2.05+4*\r +0.5,-1.4) {\normalsize $m$};
			\node at (-2.05+8*\r +0.5,-1.15){\normalsize $m$};
			
			\node at (-2+2*\r -0.55,-3.2)  {\normalsize $\alpha_{1}$};
			\node at (-2+6*\r -0.55,-3.2)  {\normalsize $\alpha_{2}$};
			\node at (-2+10*\r -0.55,-3.2) {\normalsize $\alpha_{3}$};
			
			\foreach \x in {2*\r,6*\r,10*\r} {
				\draw[->,thin] (-2+\x+0.25,-2.95) arc(10:230:0.25);
			}
			
		\end{tikzpicture}
		\caption{Brown's system of three pairs of massless pulleys. The lower pulley
			of each pair is fixed and carries the generalized coordinate $\alpha_{i}$;
			the upper pulley of each pair is attached to a mass $m$ and to a spring of
			stiffness $k$, and a single cord threads the three pairs and closes into a
			loop. The gauge direction $w=(1,1,1)^{\top}$ rotates the three fixed pulleys
			by equal amounts.}
		\label{fig:pulley3}
	\end{figure}
	
	The mechanical system is Brown's loop of cord weaving through three pairs of
	massless pulleys \cite{Brown2023}. In each pair the lower pulley is fixed and
	the upper pulley is attached to a mass and to a spring; the cord runs over and
	under the pulleys and its two ends are identified, so that it forms a single
	continuous loop (Figure~\ref{fig:pulley3}). The generalized coordinates are the
	angles $\alpha_{1},\alpha_{2},\alpha_{3}$ of the three lower, fixed pulleys, all
	of radius $R$, and the three suspended masses are equal. Since the pulleys are
	massless, the parameter $m$ is the suspended mass and not a moment of inertia:
	all the inertia of the system resides in the three masses.
	
	The height of each mass is fixed by a difference of two angles,
	$h_{1}=R(\alpha_{3}-\alpha_{1})/2+c$, $h_{2}=R(\alpha_{1}-\alpha_{2})/2+c$ and
	$h_{3}=R(\alpha_{2}-\alpha_{3})/2+c$ with $c$ constant, because raising one side
	of a pair lowers the other. The kinetic energy
	$\tfrac m2(\dot h_{1}^{2}+\dot h_{2}^{2}+\dot h_{3}^{2})$ and the spring energy
	therefore depend on the angles only through their differences, and the singular
	second-order Lagrangian of the system is
	\begin{equation}
		L=\frac{mR^{2}}{8}\Bigl[(\dot\alpha_{1}-\dot\alpha_{2})^{2}
		+(\dot\alpha_{2}-\dot\alpha_{3})^{2}
		+(\dot\alpha_{3}-\dot\alpha_{1})^{2}\Bigr]
		-\frac{kR^{2}}{8}\Bigl[(\alpha_{1}-\alpha_{2})^{2}
		+(\alpha_{2}-\alpha_{3})^{2}
		+(\alpha_{3}-\alpha_{1})^{2}\Bigr].
		\label{eq:pulley3}
	\end{equation}
	The gravitational term is a constant for this arrangement and has been dropped.
	Equation \eqref{eq:pulley3} is the starting point of the analysis, not yet the
	first-order Lagrangian of the Faddeev--Jackiw formulation; the latter is the
	system \eqref{eq:Lfo} produced by Proposition~\ref{prop:template}.
	
	\paragraph{Kinetic structure.} For $x\in\R^{3}$,
	$\sum_{i<j}(x_{i}-x_{j})^{2}=x^{\top}\Lambda x$ with
	$\Lambda=3I_{3}-\mathbf{1}\mathbf{1}^{\top}$. Hence, in the notation of
	\eqref{eq:Lmech},
	\begin{equation}
		K=\frac{mR^{2}}{4}\,\Lambda,
		\qquad
		V=\frac{kR^{2}}{8}\,\alpha^{\top}\Lambda\,\alpha .
		\label{eq:pulley3K}
	\end{equation}
	Here $K$ is the velocity Hessian \eqref{eq:Khessian} of \eqref{eq:pulley3}; it
	is the degeneracy of this matrix, and not yet that of any presymplectic form,
	that is being computed. Both $K$ and $V$ carry the parameters in the
	factorized form displayed, and they are kept so.
	
	\paragraph{Kernel and parameter locus.} Since
	$\Lambda=3\bigl(I_{3}-\tfrac13\mathbf{1}\mathbf{1}^{\top}\bigr)=3P$ with $P$ the
	orthogonal projector onto $\mathbf{1}^{\perp}$, the eigenvalues of $\Lambda$ are
	$0,3,3$ and $\Lambda\mathbf{1}=0$. Hence, \emph{provided} $mR^{2}\neq0$,
	\begin{equation}
		\rank K=2,
		\qquad
		\ker K=\spn\{w\},\qquad w=(1,1,1)^{\top},
		\qquad c=1 .
		\label{eq:pulley3ker}
	\end{equation}
	The proviso is not decorative. On the locus $mR^{2}=0$ one has $K\equiv0$,
	$\rank K=0$ and $c=3$: the rank drops and the hypotheses of
	Proposition~\ref{prop:template} hold with a different $c$, so that branch is a
	different problem and is set aside rather than absorbed. All that follows is
	asserted for $mR^{2}\neq0$, and $w$ is left unnormalized so that no parameter
	is introduced by the choice of basis.
	
	\paragraph{Pseudoinverse and first-order form.} From $K=\tfrac{3mR^{2}}{4}P$ and
	$P^{2}=P$ one gets, again for $mR^{2}\neq0$,
	\begin{equation}
		K^{+}=\frac{4}{3mR^{2}}\,P
		=\frac{4}{3mR^{2}}\Bigl(I_{3}-\tfrac13\mathbf{1}\mathbf{1}^{\top}\Bigr),
		\qquad
		\tfrac12\,p^{\top}K^{+}p
		=\frac{2}{3mR^{2}}\Bigl(|p|^{2}-\tfrac13(p_{1}+p_{2}+p_{3})^{2}\Bigr).
		\label{eq:pulley3Kplus}
	\end{equation}
	Lemma~\ref{lem:legendre} then gives the extended potential and the first-order
	Lagrangian on $\xi=(\alpha_{1},\alpha_{2},\alpha_{3},p_{1},p_{2},p_{3},\lambda)
	\in\R^{7}$:
	\begin{align}
		V_{\mathrm{ext}}&=\frac{2}{3mR^{2}}
		\Bigl(|p|^{2}-\tfrac13(p_{1}+p_{2}+p_{3})^{2}\Bigr)
		+\frac{kR^{2}}{8}\,\alpha^{\top}\Lambda\,\alpha
		+\lambda\,(p_{1}+p_{2}+p_{3}),
		\label{eq:pulley3Vext}\\[0.2em]
		L&=p_{1}\dot\alpha_{1}+p_{2}\dot\alpha_{2}+p_{3}\dot\alpha_{3}
		-V_{\mathrm{ext}} .
		\label{eq:pulley3LFJ}
	\end{align}
	
	\paragraph{Presymplectic matrix and its kernel.} The one-form of
	\eqref{eq:pulley3LFJ} is $a=(p_{1},p_{2},p_{3},0,0,0,0)$, so by
	\eqref{eq:templatef0}
	\begin{equation}
		f^{(0)}=\begin{pmatrix}0&-I_{3}&0\\ I_{3}&0&0\\ 0&0&0\end{pmatrix}
		\in\R^{7\times7},
		\qquad
		\rank f^{(0)}=6,
		\qquad
		\ker f^{(0)}=\spn\{\partial_{\lambda}\},
		\label{eq:pulley3f0}
	\end{equation}
	and $d=1$. Note that $\rank f^{(0)}=6$ while $\rank K=2$: the two matrices
	measure different things, as Remark~\ref{rem:KvsF} states.
	
	\paragraph{Potential invariance.} Since $\Lambda$ is symmetric and
	$\Lambda w=0$,
	\begin{equation}
		w\!\cdot\!\nabla V=\frac{kR^{2}}{4}\,w^{\top}\Lambda\,\alpha=0
		\qquad\text{identically.}
		\label{eq:pulley3inv}
	\end{equation}
	
	\paragraph{Constraint, bordering, and the reduced data.} Contracting the null
	direction $\partial_{\lambda}$ of \eqref{eq:pulley3f0} with
	$\nabla V_{\mathrm{ext}}$ gives the generated constraint
	\begin{equation}
		\Om=\phi=\partial_{\lambda}V_{\mathrm{ext}}=p_{1}+p_{2}+p_{3}\approx0,
		\label{eq:pulley3phi}
	\end{equation}
	whose gradient in the ordering $(\alpha,p,\lambda)$ is
	$B=(0,0,0,1,1,1,0)^{\top}$. Bordering with this single column produces the
	$8\times8$ matrix $f^{(1)}$, and the reduced data are
	\begin{equation}
		\Gamma=N^{\top}B=\partial_{\lambda}(p_{1}+p_{2}+p_{3})=0,
		\qquad
		M=(0),
		\qquad
		\dim\ker f^{(1)}=1+1-0=2,
		\label{eq:pulley3GM}
	\end{equation}
	with $M=0$ here for two independent reasons: it is $1\times1$ antisymmetric,
	and $\Om$ is linear in the momenta alone. By
	\eqref{eq:templatekerbasis} a basis of $\ker f^{(1)}$ is
	\begin{equation}
		\bigl(\partial_{\lambda},\,0\bigr),
		\qquad
		\bigl(u_{0},\,-1\bigr)
		\quad\text{with}\quad
		u_{0}=\partial_{\alpha_{1}}+\partial_{\alpha_{2}}+\partial_{\alpha_{3}} .
		\label{eq:pulley3ker1}
	\end{equation}
	
	\paragraph{Chain and generator.} The first chain equation is satisfied by
	construction, $f^{(0)}u_{0}=(0,w,0)=\nabla\Om$, and the decisive contraction is
	\eqref{eq:pulley3inv}, which vanishes identically and for every value of
	$k$ and $R$. The chain therefore closes, the gauge transformation is
	$\delta\alpha_{1}=\delta\alpha_{2}=\delta\alpha_{3}=\rho(t)$,
	$\delta p_{i}=0$, $\delta\lambda=\dot\rho(t)$, and the canonical charge for the
	pairs $\alpha_{i}\leftrightarrow p_{i}$ is $G=p_{1}+p_{2}+p_{3}=\phi$.
	
	Geometrically the null direction is the simultaneous rotation of the three fixed
	pulleys by equal amounts. This makes the cord circulate around the loop while
	every mass stays at the same height, so neither the springs, which respond only
	to angle differences, nor the kinetic term, which is built from the same
	differences, can detect it. The system has two physical degrees of freedom.
	
	Brown reaches the same primary constraint $\phi=p_{1}+p_{2}+p_{3}$ and the same
	gauge transformation $\delta\alpha_{i}=\epsilon$ by the Dirac--Bergmann route,
	finding $\phi$ first class because its consistency condition is satisfied
	identically \cite{Brown2023}. In the present formulation that conclusion is
	reached instead through the closure of the two-step chain, and the agreement is
	an external check on the criterion rather than an ingredient of it.
	
	\subsection{Two pulleys and one mass}
	\label{sec:pulley2}
	
	\begin{figure}[htbp]
		\centering
		\begin{tikzpicture}[scale=2.25, line cap=round, line join=round, >=latex]
			
			\def\H{4}           
			\def\Rp{0.6}        
			\def\rs{0.4}        
			\def\massSize{2mm}  
			\def\barHeight{1.0} 
			
			\filldraw[fill=black!30] (-1.5,\H-0.12) rectangle (3,\H+0.12);
			
			\coordinate (P1c) at (0, \H-\barHeight);
			\coordinate (P2c) at (\Rp+\rs, \H-2.5);
			
			\filldraw[fill=black!15] (P1c) circle (\Rp);
			\filldraw[fill=yellow!90] (P1c) circle (0.12);
			\filldraw[fill=black] (P1c) circle (0.025);
			
			\draw[very thick, black!30] (0,\H) -- (P1c);
			
			\filldraw[fill=black!15] (P2c) circle (\rs);
			\filldraw[fill=yellow!90] (P2c) circle (0.12);
			\filldraw[fill=black] (P2c) circle (0.025);
			
			\coordinate (P1right) at ($(P1c) + (\Rp,0)$);
			\coordinate (P1left)  at ($(P1c) + (-\Rp,0)$);
			\coordinate (P2left)  at ($(P2c) + (-\rs,0)$);
			\coordinate (P2right) at ($(P2c) + (\rs,0)$);
			
			\draw[thick] (P1right) -- (P2left);
			\coordinate (masscenter) at ($(P2right)+(0,-1.2)$);
			\coordinate (masstop) at ($(masscenter)+(0,\massSize)$);
			\draw[thick] (P2right) -- (masstop);
			
			\shade[ball color=red, opacity=0.9] (masscenter) circle (\massSize);
			\node[below right=10pt of masscenter] {\large $m$};
			
			\draw[->, thick] 
			($(P1c) + (135:0.6*\Rp)$) 
			arc[start angle=135, end angle=-135, radius=0.6*\Rp];
			\node at ($(P1c) + (1.25*\Rp,0.7*\Rp)$) {\large $b_1$};
			
			\draw[->, thick] 
			($(P2c) + (135:0.6*\rs)$) 
			arc[start angle=135, end angle=-130, radius=0.6*\rs];
			\node at ($(P2c) + (1.5*\rs,0.7*\rs)$) {\large $b_2$};
			
			\draw[dashed, gray] (P1c) -- ++(0,-2.5);
			\draw[dashed, gray] (P2c) -- ++(0,-1.5);
			
		\end{tikzpicture}
		\caption{Two massless, frictionless pulleys with a suspended mass. The axis
			of the upper pulley is fixed and the lower pulley is free to move
			vertically; $b_{1}$ and $b_{2}$ are the orientation angles and the two radii
			need not be equal. The gauge direction $w=(-R_{2},R_{1})^{\top}$ depends on
			the geometry of the apparatus.}
		\label{fig:pulley2}
	\end{figure}
	
	Brown poses this configuration as an exercise: two massless, frictionless
	pulleys, the axis of the upper one fixed and the lower one free to move
	vertically, with a mass $m$ also restricted to move vertically and with radii
	$R_{1}$ and $R_{2}$ that need not be equal \cite{Brown2023}. We take the
	orientation angles $b_{1},b_{2}$ as generalized coordinates
	(Figure~\ref{fig:pulley2}). The model that follows, and in particular the
	reduction to a single physical combination, is ours; Brown states the mechanical
	problem and does not develop it in the text.
	
	The winding of the cord makes the height of the mass depend on the two angles
	only through one combination, which we write as
	\begin{equation}
		h=-(R_{1}b_{1}+R_{2}b_{2})
		\label{eq:pulley2h}
	\end{equation}
	so that the singular second-order Lagrangian is
	\begin{equation}
		L=\frac m2\bigl(R_{1}\dot b_{1}+R_{2}\dot b_{2}\bigr)^{2}-mg\,h,
		\qquad
		K=m\begin{pmatrix}R_{1}^{2}&R_{1}R_{2}\\ R_{1}R_{2}&R_{2}^{2}\end{pmatrix}.
		\label{eq:pulley2}
	\end{equation}
	
	Here $K$ is again the velocity Hessian \eqref{eq:Khessian}, this time a
	$2\times2$ matrix, and it is written as it comes out of the differentiation,
	with the parameters in place.
	
	\paragraph{Kernel and parameter locus.} With $v=(R_{1},R_{2})^{\top}$ one has
	$K=m\,vv^{\top}$, a rank-one matrix whenever $m\neq0$ and $v\neq0$. Assuming
	therefore
	\begin{equation}
		m\neq0,
		\qquad
		|v|^{2}=R_{1}^{2}+R_{2}^{2}\neq0,
		\label{eq:pulley2locus}
	\end{equation}
	we obtain
	\begin{equation}
		\rank K=1,
		\qquad
		\ker K=\spn\{w\},\qquad w=\begin{pmatrix}-R_{2}\\ R_{1}\end{pmatrix},
		\qquad c=1 .
		\label{eq:pulley2ker}
	\end{equation}
	On the excluded locus $m=0$ the kinetic term vanishes and $c=2$; on
	$R_{1}=R_{2}=0$ the coordinates decouple from the mechanics altogether and
	again $c=2$. Both are genuinely different systems and neither is silently
	merged with \eqref{eq:pulley2ker}. Note also that $w$ is nonzero precisely on
	the locus \eqref{eq:pulley2locus}, so no normalization of $w$ is performed and
	no division by $|v|$ is introduced at this stage.
	
	\paragraph{Pseudoinverse and first-order form.} For a rank-one symmetric
	$K=m\,vv^{\top}$ the Moore--Penrose inverse is
	\begin{equation}
		K^{+}=\frac{vv^{\top}}{m\,|v|^{4}},
		\qquad
		\tfrac12\,p^{\top}K^{+}p=\frac{(v\!\cdot\!p)^{2}}{2m\,|v|^{4}},
		\label{eq:pulley2Kplus}
	\end{equation}
	as one checks from $KK^{+}K=K$. Both expressions are defined exactly on the
	locus \eqref{eq:pulley2locus}. Lemma~\ref{lem:legendre} then gives, on
	$\xi=(b_{1},b_{2},p_{b_{1}},p_{b_{2}},\lambda)\in\R^{5}$,
	\begin{align}
		V_{\mathrm{ext}}&=\frac{(R_{1}p_{b_{1}}+R_{2}p_{b_{2}})^{2}}
		{2m\,(R_{1}^{2}+R_{2}^{2})^{2}}
		-mg\,(R_{1}b_{1}+R_{2}b_{2})
		+\lambda\,\bigl(R_{1}p_{b_{2}}-R_{2}p_{b_{1}}\bigr),
		\label{eq:pulley2Vext}\\[0.2em]
		L&=p_{b_{1}}\dot b_{1}+p_{b_{2}}\dot b_{2}-V_{\mathrm{ext}},
		\label{eq:pulley2LFJ}
	\end{align}
	where $V=mg\,h=-mg\,(R_{1}b_{1}+R_{2}b_{2})$ has been substituted from
	\eqref{eq:pulley2h}.
	
	\paragraph{Presymplectic matrix and its kernel.} The one-form is
	$a=(p_{b_{1}},p_{b_{2}},0,0,0)$, so
	\begin{equation}
		f^{(0)}=\begin{pmatrix}0&-I_{2}&0\\ I_{2}&0&0\\ 0&0&0\end{pmatrix}
		\in\R^{5\times5},
		\qquad
		\rank f^{(0)}=4,
		\qquad
		\ker f^{(0)}=\spn\{\partial_{\lambda}\} .
		\label{eq:pulley2f0}
	\end{equation}
	
	\paragraph{Potential invariance.} From \eqref{eq:pulley2h},
	$\nabla h=-(R_{1},R_{2})^{\top}=-v$, so
	\begin{equation}
		w\!\cdot\!\nabla V=mg\,(w\!\cdot\!\nabla h)=-mg\,(w\!\cdot\!v)
		=-mg\,(-R_{2}R_{1}+R_{1}R_{2})=0 .
		\label{eq:pulley2inv}
	\end{equation}
	The vanishing is structural and not parametric: it follows from $w\perp v$,
	which holds identically, and not from any factor being set to zero. In
	particular the prefactor $mg$ is retained and no branch of parameter space is
	created by \eqref{eq:pulley2inv}.
	
	\paragraph{Constraint, bordering, and generator.} Contracting
	$\partial_{\lambda}$ with $\nabla V_{\mathrm{ext}}$ gives the generated
	constraint
	\begin{equation}
		\phi=\Om=w\!\cdot\!p=R_{1}p_{b_{2}}-R_{2}p_{b_{1}}\approx0,
		\label{eq:pulley2phi}
	\end{equation}
	with gradient $B=(0,0,-R_{2},R_{1},0)^{\top}$, whence
	$\Gamma=\partial_{\lambda}\Om=0$ and $M=(0)$, so
	$\dim\ker f^{(1)}=1+1-0=2$ with basis $(\partial_{\lambda},0)$ and
	$(u_{0},-1)$, $u_{0}=-R_{2}\partial_{b_{1}}+R_{1}\partial_{b_{2}}$. The first
	chain equation holds, $f^{(0)}u_{0}=(0,w,0)=\nabla\Om$, and the decisive
	contraction is \eqref{eq:pulley2inv}, which vanishes. The chain closes and
	the gauge transformation is
	\begin{equation}
		\delta b_{1}=-R_{2}\,\rho(t),
		\qquad
		\delta b_{2}=R_{1}\,\rho(t),
		\qquad
		\delta\lambda=\dot\rho(t),
		\label{eq:pulley2gauge}
	\end{equation}
	under which $\delta h=-(R_{1}\delta b_{1}+R_{2}\delta b_{2})=0$, and the charge
	is $G=\phi$.
	
	\paragraph{Interpretation.} Within the present formulation the origin of the
	degeneracy is transparent. Only the combination $R_{1}b_{1}+R_{2}b_{2}$
	determines the physical height, so the orthogonal direction $w$ changes the two
	pulley angles without changing that combination and therefore without any
	inertial response; this is what makes $K$ singular. The Faddeev--Jackiw
	reformulation records $w$ as a null direction of the presymplectic matrix of the
	associated first-order system, and the chain criterion then decides, through
	$w\!\cdot\!\nabla V=0$, whether that null direction actually produces gauge
	freedom. The reading in terms of $\ker K$ and the chain criterion belong to the
	present work; what is taken from \cite{Brown2023} is the mechanical
	configuration.
	
	The example is worth recording because the gauge direction is not a symmetry of
	the coordinate labels but a function of the geometric parameters: rotating the
	two pulleys in the ratio $-R_{2}:R_{1}$ leaves the cord length, and hence the
	physical configuration, unchanged. Unlike the permutation-type direction of
	Section~\ref{sec:pulley3}, it varies continuously with the design of the
	apparatus and would be missed by any analysis that looked only for symmetries of
	the index set. One physical degree of freedom survives, namely the height $h$.
	
	\subsection{Four masses on a square: the gauge case}
	\label{sec:square-gauge}
	
	\begin{figure}[ht]
		\centering
		\begin{tikzpicture}[
			scale=2.5,
			x={(-0.5cm,-0.3cm)},
			y={(1cm,0cm)},
			z={(0cm,1cm)},
			line cap=round,
			line join=round
			]
			
			\coordinate (B1) at (0,0,0);
			\coordinate (B2) at (3,0,0);
			\coordinate (B3) at (3,2,0);
			\coordinate (B4) at (0,2,0);
			
			\coordinate (T1) at (0,0,2);
			\coordinate (T2) at (3,0,2);
			\coordinate (T3) at (3,2,2);
			\coordinate (T4) at (0,2,2);
			
			\fill[gray!50,draw=black] (B1) -- (B2) -- (B3) -- (B4) -- cycle;
			\fill[gray!50,draw=black] (T1) -- (T2) -- (T3) -- (T4) -- cycle;
			
			\coordinate (Bcenter) at (1.5,1,0);
			\coordinate (Tcenter) at (1.5,1,2);
			
			\def\alpha{0.2}
			
			\coordinate (B1a) at ($(B1)! \alpha ! (Bcenter)$);
			\coordinate (B2a) at ($(B2)! \alpha ! (Bcenter)$);
			\coordinate (B3a) at ($(B3)! \alpha ! (Bcenter)$);
			\coordinate (B4a) at ($(B4)! \alpha ! (Bcenter)$);
			
			\coordinate (T1a) at ($(T1)! \alpha ! (Tcenter)$);
			\coordinate (T2a) at ($(T2)! \alpha ! (Tcenter)$);
			\coordinate (T3a) at ($(T3)! \alpha ! (Tcenter)$);
			\coordinate (T4a) at ($(T4)! \alpha ! (Tcenter)$);
			
			\draw[line width=2pt, line cap=round, gray!40,
			preaction={draw, line width=3pt, black}]
			(B1a) -- (T1a);
			\draw[line width=2pt, line cap=round, gray!40,
			preaction={draw, line width=3pt, black}]
			(B2a) -- (T2a);
			\draw[line width=2pt, line cap=round, gray!40,
			preaction={draw, line width=3pt, black}]
			(B3a) -- (T3a);
			\draw[line width=2pt, line cap=round, gray!40,
			preaction={draw, line width=3pt, black}]
			(B4a) -- (T4a);
			
			\pgfmathsetmacro{\fYone}{0.70} 
			\pgfmathsetmacro{\fYtwo}{0.60} 
			\pgfmathsetmacro{\fYthree}{0.65}
			\pgfmathsetmacro{\fYfour}{0.55}
			
			\coordinate (Y1) at ($(T1a)!\fYone!(B1a)$);
			\coordinate (Y2) at ($(T2a)!\fYtwo!(B2a)$);
			\coordinate (Y3) at ($(T3a)!\fYthree!(B3a)$);
			\coordinate (Y4) at ($(T4a)!\fYfour!(B4a)$);
			
			\draw[blue, decorate, decoration={coil, segment length=5pt, amplitude=4pt}] (T1a) -- (Y1);
			\draw[blue, decorate, decoration={coil, segment length=5pt, amplitude=4pt}] (T2a) -- (Y2);
			\draw[blue, decorate, decoration={coil, segment length=5pt, amplitude=4pt}] (T3a) -- (Y3);
			\draw[blue, decorate, decoration={coil, segment length=5pt, amplitude=4pt}] (T4a) -- (Y4);
			
			\fill (Y1) circle (0.8pt)
			node[above=3pt,left] {$y_1$};
			
			\fill (Y2) circle (0.8pt)
			node[left] {$y_2$};
			
			\fill (Y3) circle (0.8pt)
			node at ($(Y3)+(0.58,0.14,0.05)$) {$y_3$};
			
			\fill (Y4) circle (0.8pt)
			node at ($(Y4)+(-0.12,0.12,0)$) {$y_4$};
			
			
			\draw[thick, double=teal!50] (Y1) -- (Y2) -- (Y3) -- (Y4) -- cycle;
			
			\foreach \A/\B in {Y1/Y2, Y2/Y3, Y3/Y4, Y4/Y1} {
				\coordinate (Mid) at ($(\A)!0.5!(\B)$);
				\shade[ball color=red] (Mid) circle (3pt);
			}
			
		\end{tikzpicture}
		\caption{Four masses fixed at the midpoints of four massless, freely
			extensible rods, whose ends are connected at corners that slide on vertical
			posts. The corner heights are $y_{1},\dots,y_{4}$. In the version shown here
			the springs are attached to the masses, and the system carries the gauge
			symmetry generated by $w=(-1,1,-1,1)^{\top}$.}
		\label{fig:brown_4masses}
	\end{figure}
	
	Brown's four-mass system consists of four massless, freely extensible rods, each
	carrying a mass fixed at its midpoint \cite{Brown2023}. The rods are joined into
	a closed square at four corners which slide freely on vertical posts, and
	$y_{1},\dots,y_{4}$ are the heights of those corners
	(Figure~\ref{fig:brown_4masses}). A freely extensible rod is rigid transversally
	but has no fixed length, so the corners move independently and the mass on the
	rod joining corners $i$ and $j$ sits at the height $(y_{i}+y_{j})/2$. Springs of
	stiffness $k$ connect the assembly to the ceiling, and Brown distinguishes two
	placements for them; the placement is exactly what decides whether the system
	has gauge freedom. In the present subsection the springs are attached to the
	masses. The alternative, springs attached to the corners, is the control of
	Section~\ref{sec:square-control}.
	
	With the springs on the masses, the singular second-order Lagrangian is
	$L_{\mathrm{mech}}=T-V$ with
	\begin{align}
		T&=\frac m2\left[\Bigl(\frac{\dot y_{1}+\dot y_{2}}{2}\Bigr)^{2}
		+\Bigl(\frac{\dot y_{2}+\dot y_{3}}{2}\Bigr)^{2}
		+\Bigl(\frac{\dot y_{3}+\dot y_{4}}{2}\Bigr)^{2}
		+\Bigl(\frac{\dot y_{4}+\dot y_{1}}{2}\Bigr)^{2}\right],
		\label{eq:squareT}\\[0.3em]
		V&=mg\,(y_{1}+y_{2}+y_{3}+y_{4})
		+\frac{k}{8}\sum_{(i,j)\in E}\bigl(2a-y_{i}-y_{j}\bigr)^{2},
		\qquad
		E=\{(1,2),(2,3),(3,4),(4,1)\}.
		\label{eq:squareV}
	\end{align}
	
	\paragraph{Kinetic structure and kernel.} Let $S\in\R^{4\times4}$ be the
	edge--vertex matrix with rows $(1,1,0,0)$, $(0,1,1,0)$, $(0,0,1,1)$,
	$(1,0,0,1)$, so that $T=\tfrac m8\,\dot y^{\top}S^{\top}S\,\dot y$ and hence,
	by \eqref{eq:Khessian},
	\begin{equation}
		K=\frac m4\,S^{\top}S=\frac m4\bigl(2I_{4}+A\bigr),
		\qquad
		A=\begin{pmatrix}0&1&0&1\\ 1&0&1&0\\ 0&1&0&1\\ 1&0&1&0\end{pmatrix},
		\label{eq:squareK}
	\end{equation}
	$A$ being the adjacency matrix of the four-cycle: each index lies on exactly
	two edges, giving the diagonal $2$, and two indices share an edge exactly when
	they are cyclically adjacent. The eigenvalues of $A$ are $2,0,0,-2$, so those
	of $2I_{4}+A$ are $4,2,2,0$, with the null eigenvector of $2I_{4}+A$ the
	alternating mode. Equivalently, $y\in\ker S$ if and only if
	$y_{1}+y_{2}=y_{2}+y_{3}=y_{3}+y_{4}=y_{4}+y_{1}=0$, which forces
	$y_{2}=-y_{1}$, $y_{3}=y_{1}$, $y_{4}=-y_{1}$. Hence, \emph{provided}
	$m\neq0$,
	\begin{equation}
		\rank K=3,
		\qquad
		\ker K=\spn\{w\},\qquad w=(-1,1,-1,1)^{\top},
		\qquad c=1 .
		\label{eq:squareker}
	\end{equation}
	The null direction is the alternating, or checkerboard, mode. On the excluded
	locus $m=0$ the kinetic matrix vanishes identically and $c=4$; that branch is
	set aside, as in Remark~\ref{rem:parameters}.
	
	\paragraph{Pseudoinverse and first-order form.} Write
	$\mathbf{1}=(1,1,1,1)^{\top}$ and let
	\[
	P_{1}=\tfrac14\mathbf{1}\mathbf{1}^{\top},
	\qquad
	P_{w}=\tfrac14 ww^{\top},
	\qquad
	P_{2}=I_{4}-P_{1}-P_{w},
	\]
	be the orthogonal projectors onto $\spn\{\mathbf{1}\}$, $\spn\{w\}$ and their
	common orthogonal complement, which is spanned by $(1,0,-1,0)$ and
	$(0,1,0,-1)$. On these three blocks $2I_{4}+A$ acts as $4$, $0$ and $2$
	respectively, so that $K=m\,P_{1}+\tfrac m2\,P_{2}$ and, for $m\neq0$,
	\begin{equation}
		K^{+}=\frac1m\,P_{1}+\frac2m\,P_{2},
		\qquad
		\tfrac12\,p^{\top}K^{+}p
		=\frac{1}{2m}\,p^{\top}P_{1}p+\frac1m\,p^{\top}P_{2}p .
		\label{eq:squareKplus}
	\end{equation}
	Lemma~\ref{lem:legendre} then gives, on
	$\xi=(y_{1},\dots,y_{4},p_{1},\dots,p_{4},\lambda)\in\R^{9}$,
	\begin{equation}
		V_{\mathrm{ext}}=\frac{1}{2m}\,p^{\top}P_{1}p+\frac1m\,p^{\top}P_{2}p
		+V(y)+\lambda\,(w\!\cdot\!p),
		\qquad
		L=p_{i}\dot y^{i}-V_{\mathrm{ext}},
		\label{eq:squareLFJ}
	\end{equation}
	with $V$ as in \eqref{eq:squareV}. The one-form is $a=(p,0,0)$ and therefore
	\begin{equation}
		f^{(0)}=\begin{pmatrix}0&-I_{4}&0\\ I_{4}&0&0\\ 0&0&0\end{pmatrix}
		\in\R^{9\times9},
		\qquad
		\rank f^{(0)}=8,
		\qquad
		\ker f^{(0)}=\spn\{\partial_{\lambda}\} .
		\label{eq:squaref0}
	\end{equation}
	
	\paragraph{Potential invariance.} Under $\delta y=\epsilon\,w$, that is
	\begin{equation}
		\delta y_{1}=-\epsilon,\quad
		\delta y_{2}=\epsilon,\quad
		\delta y_{3}=-\epsilon,\quad
		\delta y_{4}=\epsilon,
		\label{eq:squaredelta}
	\end{equation}
	every edge sum is invariant,
	\begin{equation}
		\delta(y_{1}+y_{2})=\delta(y_{2}+y_{3})=\delta(y_{3}+y_{4})
		=\delta(y_{4}+y_{1})=0,
		\label{eq:squareedges}
	\end{equation}
	and the total height is invariant as well,
	$\delta(y_{1}+y_{2}+y_{3}+y_{4})=0$. Since $V$ in \eqref{eq:squareV} depends on
	$y$ only through the four edge sums, that is, through the heights of the four
	masses, and through the total, we conclude
	\begin{equation}
		w\!\cdot\!\nabla V=0 \qquad\text{identically.}
		\label{eq:squareinv}
	\end{equation}
	
	\paragraph{Constraint, bordering, and generator.} Proposition~\ref{prop:template}
	applies with $c=1$. Contracting $\partial_{\lambda}$ with
	$\nabla V_{\mathrm{ext}}$ gives the generated constraint
	\begin{equation}
		\Om=w\!\cdot\!p=-p_{1}+p_{2}-p_{3}+p_{4}\approx0,
		\label{eq:squarephi}
	\end{equation}
	equivalently $\phi=p_{1}-p_{2}+p_{3}-p_{4}=-\Om$, the two differing only by an
	overall sign and generating the same one-dimensional constraint. Its gradient
	is $B=(0,w,0)^{\top}$, so bordering with one column gives
	$\Gamma=\partial_{\lambda}\Om=0$, $M=(0)$ and
	$\dim\ker f^{(1)}=1+1-0=2$, with basis $(\partial_{\lambda},0)$ and
	$(u_{0},-1)$, $u_{0}=w^{i}\partial_{y^{i}}$. The first chain equation
	$f^{(0)}u_{0}=(0,w,0)=\nabla\Om$ holds, and the decisive contraction is
	\eqref{eq:squareinv}, which vanishes identically for every value of $m$, $g$,
	$k$ and $a$. The chain therefore closes, the gauge transformation is
	$\delta y=\rho(t)\,w$, $\delta p=0$,
	$\delta\lambda=\dot\rho(t)$, and the canonical charge computed with the full set
	of Darboux pairs $y_{i}\leftrightarrow p_{i}$ is
	\begin{equation}
		G=w\!\cdot\!p=-p_{1}+p_{2}-p_{3}+p_{4}=\Om .
		\label{eq:squareG}
	\end{equation}
	
	\begin{remark}[On declared canonical pairs]
		\label{rem:pairs}
		The charge produced by the construction of Section~\ref{sec:charges} is
		$\iota_{u_{0}}\omega_{\mathrm{can}}$ integrated over the pairs the user
		\emph{declares}, and it is therefore sensitive to that declaration. If only the
		three pairs $y_{1}\leftrightarrow p_{1}$, $y_{2}\leftrightarrow p_{2}$,
		$y_{3}\leftrightarrow p_{3}$ are declared, the same formula returns
		\begin{equation}
			G_{\mathrm{red}}=-p_{1}+p_{2}-p_{3},
			\label{eq:squareGred}
		\end{equation}
		which is the restriction of \eqref{eq:squareG} to the declared pairs and differs
		from the constraint \eqref{eq:squarephi} by the term $p_{4}$ attached to the
		undeclared pair. The two coincide only when all pairs are declared. This is not
		a defect of the construction but the content of hypothesis \textbf{H9}: the
		canonical structure is external data, and a partial declaration yields a partial
		charge. Expressions such as \eqref{eq:squareGred} should never be quoted as the
		first-class constraint of the system without the accompanying convention.
	\end{remark}
	
	\paragraph{Interpretation.} The full chain is visible here in one picture. The
	alternating mode is a null direction of the kinetic form because each mass sits
	at the mean of two adjacent corner heights and the mode raises one member of
	every adjacent pair exactly as much as it lowers the other, so that no mass
	moves and there is no inertial response. It is a symmetry of the potential for
	the same reason, the springs being attached to the masses and gravity coupling
	only to the total. It therefore generates the momentum constraint
	\eqref{eq:squarephi}, and the constraint is first class, so the chain closes and
	the mode is an honest gauge direction. The system has three physical degrees of
	freedom.
	
	Brown's Dirac--Bergmann treatment of this system yields the single primary
	constraint $p_{1}-p_{2}+p_{3}-p_{4}$, finds that its consistency condition is
	satisfied identically so that no secondary constraint arises, and concludes that
	the constraint is first class and generates $\delta y=\epsilon\,(1,-1,1,-1)$,
	$\delta p_{i}=0$ \cite{Brown2023}. Our null vector is minus his and the two
	constraints differ by the same overall sign, so the two descriptions agree; the
	gauge-invariant quantities he identifies, the mass positions
	$(y_{i}+y_{j})/2$, are precisely the functions annihilated by $w$. The two
	routes are not the same procedure, and the agreement of their conclusions is
	what makes this system a useful test of the chain criterion.
	
	\subsection{Four masses on a square: the non-gauge control}
	\label{sec:square-control}
	
	\begin{figure}[htbp]
		\centering
		\begin{tikzpicture}[
			scale=2.5,
			x={(-0.5cm,-0.3cm)},
			y={(1cm,0cm)},
			z={(0cm,1cm)},
			line cap=round,
			line join=round
			]
			
			\coordinate (B1) at (0,0,0);
			\coordinate (B2) at (3,0,0);
			\coordinate (B3) at (3,2,0);
			\coordinate (B4) at (0,2,0);
			
			\coordinate (T1) at (0,0,2);
			\coordinate (T2) at (3,0,2);
			\coordinate (T3) at (3,2,2);
			\coordinate (T4) at (0,2,2);
			
			\fill[gray!50,draw=black] (B1) -- (B2) -- (B3) -- (B4) -- cycle;
			\fill[gray!50,draw=black] (T1) -- (T2) -- (T3) -- (T4) -- cycle;
			
			\coordinate (Bcenter) at (1.5,1,0);
			\coordinate (Tcenter) at (1.5,1,2);
			
			\def\alpha{0.2}
			
			\coordinate (B1a) at ($(B1)! \alpha ! (Bcenter)$);
			\coordinate (B2a) at ($(B2)! \alpha ! (Bcenter)$);
			\coordinate (B3a) at ($(B3)! \alpha ! (Bcenter)$);
			\coordinate (B4a) at ($(B4)! \alpha ! (Bcenter)$);
			
			\coordinate (T1a) at ($(T1)! \alpha ! (Tcenter)$);
			\coordinate (T2a) at ($(T2)! \alpha ! (Tcenter)$);
			\coordinate (T3a) at ($(T3)! \alpha ! (Tcenter)$);
			\coordinate (T4a) at ($(T4)! \alpha ! (Tcenter)$);
			
			\draw[line width=2pt, line cap=round, gray!40,
			preaction={draw, line width=3pt, black}]
			(B1a) -- (T1a);
			\draw[line width=2pt, line cap=round, gray!40,
			preaction={draw, line width=3pt, black}]
			(B2a) -- (T2a);
			\draw[line width=2pt, line cap=round, gray!40,
			preaction={draw, line width=3pt, black}]
			(B3a) -- (T3a);
			\draw[line width=2pt, line cap=round, gray!40,
			preaction={draw, line width=3pt, black}]
			(B4a) -- (T4a);
			
			\pgfmathsetmacro{\fYone}{0.70} 
			\pgfmathsetmacro{\fYtwo}{0.60} 
			\pgfmathsetmacro{\fYthree}{0.65}
			\pgfmathsetmacro{\fYfour}{0.55}
			
			\coordinate (Y1) at ($(T1a)!\fYone!(B1a)$);
			\coordinate (Y2) at ($(T2a)!\fYtwo!(B2a)$);
			\coordinate (Y3) at ($(T3a)!\fYthree!(B3a)$);
			\coordinate (Y4) at ($(T4a)!\fYfour!(B4a)$);
			
			\draw[blue, decorate, decoration={coil, segment length=5pt, amplitude=4pt}] (T1a) -- (Y1);
			\draw[blue, decorate, decoration={coil, segment length=5pt, amplitude=4pt}] (T2a) -- (Y2);
			\draw[blue, decorate, decoration={coil, segment length=5pt, amplitude=4pt}] (T3a) -- (Y3);
			\draw[blue, decorate, decoration={coil, segment length=5pt, amplitude=4pt}] (T4a) -- (Y4);
			
			\fill (Y1) circle (0.8pt)
			node[above=3pt,left] {$y_1$};
			
			\fill (Y2) circle (0.8pt)
			node[left] {$y_2$};
			
			\fill (Y3) circle (0.8pt)
			node at ($(Y3)+(0.58,0.14,0.05)$) {$y_3$};
			
			\fill (Y4) circle (0.8pt)
			node at ($(Y4)+(-0.12,0.12,0)$) {$y_4$};
			
			
			\draw[thick, double=teal!50] (Y1) -- (Y2) -- (Y3) -- (Y4) -- cycle;
			
			\foreach \A/\B in {Y1/Y2, Y2/Y3, Y3/Y4, Y4/Y1} {
				\coordinate (Mid) at ($(\A)!0.5!(\B)$);
				\shade[ball color=red] (Mid) circle (3pt);
			}
		\end{tikzpicture}
		\caption{The same mechanical arrangement as in
			Figure~\ref{fig:brown_4masses}, with the springs moved from the masses to
			the corners. The singular kinetic structure is unchanged and the null
			direction is still $w=(-1,1,-1,1)^{\top}$, but the gauge symmetry is broken.}
		\label{fig:square-control}
	\end{figure}
	
	Keep the mechanical arrangement and the kinetic term \eqref{eq:squareT}
	unchanged, and move the springs from the masses to the corners. This is Brown's
	second version of the same system \cite{Brown2023}: each corner is now anchored
	to the ceiling by its own spring, with $a$ the height of the ceiling minus the
	relaxed length (Figure~\ref{fig:square-control}). The potential becomes
	\begin{equation}
		V=mg\,(y_{1}+y_{2}+y_{3}+y_{4})
		+\frac{k}{2}\Bigl[(a-y_{1})^{2}+(a-y_{2})^{2}
		+(a-y_{3})^{2}+(a-y_{4})^{2}\Bigr].
		\label{eq:controlV}
	\end{equation}
	The kinetic structure, and therefore $K$, $\ker K$, $w=(-1,1,-1,1)^{\top}$,
	$\Gamma=0$, $M=0$ and $\dim\ker f^{(1)}=2$, are \emph{identical} to those of
	Section~\ref{sec:square-gauge}. Only the potential has changed. The generated constraint is again
	$\Om=w\!\cdot\!p$, and it is again first class by \eqref{eq:fccount}. Every
	algebraic invariant computed so far coincides with the gauge case.
	
	\paragraph{The chain fails.} Here, however,
	\begin{equation}
		w\!\cdot\!\nabla V
		=\sum_{i}w_{i}\bigl(mg-k(a-y_{i})\bigr)
		=mg\sum_{i}w_{i}+k\sum_{i}w_{i}(y_{i}-a)
		=k\,(y_{2}+y_{4}-y_{1}-y_{3}),
		\label{eq:controlPhi}
	\end{equation}
	using $\sum_{i}w_{i}=0$, which also removes the constant $ka$ and the
	gravitational term. The parametric factor $k$ is retained, and the two branches
	it defines are genuinely different.
	
	\emph{Branch $k\neq0$.} The contraction \eqref{eq:controlPhi} does not vanish
	identically, and by Proposition~\ref{prop:template}(e) the decisive class of
	Theorem~\ref{thm:chain-decisive} is
	\begin{equation}
		\bigl[\Phi_{u_{0}}\bigr]
		=\bigl[k\,(y_{2}+y_{4}-y_{1}-y_{3})\bigr]\neq0
		\quad\text{in }\Rring/\gen{w\!\cdot\!p},
		\label{eq:controlclass}
	\end{equation}
	because the ideal is generated by an element homogeneous of degree one in $p$
	while the representative depends on $y$ alone. No two-step chain closes. What
	happens instead is the third case of Theorem~\ref{thm:trichotomy}: the function
	\eqref{eq:controlPhi} has zeros, so
	\begin{equation}
		k\,\bigl(y_{1}+y_{3}-y_{2}-y_{4}\bigr)=0,
		\qquad\text{that is}\qquad
		y_{1}+y_{3}=y_{2}+y_{4},
		\label{eq:controlnew}
	\end{equation}
	is a genuinely new constraint and the iteration proceeds with an enlarged
	bordering. The alternating mode is not a gauge direction: it is dynamically
	obstructed by the anchoring springs, which see the individual corner heights and
	not merely their pairwise means.
	
	\emph{Branch $k=0$.} Here \eqref{eq:controlPhi} vanishes identically, the
	potential reduces to $V=mg\sum_{i}y_{i}$, which is $w$-invariant because
	$\sum_{i}w_{i}=0$, and Proposition~\ref{prop:template}(d) applies: the chain
	closes and the alternating mode \emph{is} a gauge direction. This is the
	expected mechanical statement, since with no springs nothing couples to the
	individual corner heights. The branch is recorded rather than removed: had the
	factor $k$ been cancelled from \eqref{eq:controlPhi} in order to read off the
	constraint $y_{1}+y_{3}-y_{2}-y_{4}=0$, precisely the stratum of parameter
	space on which the gauge symmetry reappears would have been deleted, in the
	manner warned against in Remark~\ref{rem:parameters}. Everything below refers
	to the branch $k\neq0$, which is the one that contrasts with
	Section~\ref{sec:square-gauge}.
	
	\paragraph{What happens at the critical stiffness.} The stratum $k=0$ deserves
	a description in dynamical terms, both because it is the sharpest instance in
	this paper of a parameter controlling gauge content and because it is easy to
	over-read. Equilibria of $K\ddot y=-\nabla V$ require $\nabla V=0$, since
	$\ddot y=0$; with $\pd{i}V=mg+k(y_{i}-a)$ this gives, for $k\neq0$, the unique
	equilibrium
	\begin{equation}
		y_{i}^{*}=a-\frac{mg}{k},
		\qquad i=1,\dots,4,
		\label{eq:controleq}
	\end{equation}
	which satisfies the constraint \eqref{eq:controlnew}. The Hessian of the
	potential is $H=k\,I_{4}$, so with $K$ as in \eqref{eq:squareK} the generalized
	spectrum is
	\begin{equation}
		\det\bigl(H-\lambda K\bigr)
		=\tfrac{k}{4}\,(k-\lambda m)\,(2k-\lambda m)^{2},
		\qquad
		\lambda\in\Bigl\{\frac{k}{m},\;\frac{2k}{m}\ \text{(twice)}\Bigr\},
		\label{eq:controlspec}
	\end{equation}
	the fourth generalized eigenvalue being infinite because $K$ is singular. Every
	finite eigenvalue is proportional to $k$.
	
	As $k\to0^{\pm}$ the equilibrium \eqref{eq:controleq} escapes to infinity and
	the whole finite spectrum collapses to zero. At $k=0$ exactly there is no
	finite equilibrium at all, since $\nabla V=mg\,\mathbf{1}\neq0$ for $g\neq0$,
	and the system falls freely along the direction of gravity while the
	alternating mode becomes gauge. For $k<0$ the equilibrium \eqref{eq:controleq}
	returns to a finite position and the spectrum is negative, so the equilibrium
	is hyperbolic. The domains must be kept apart: $k>0$ and the limit $k=0$, which
	is the ordinary springless configuration, are physically standard, whereas
	$k<0$ requires negative stiffness and belongs to an extended model.
	
	No equilibrium branch collides with another, none splits, and the eigenvalues
	reach zero only as the equilibrium leaves every bounded region. The correct
	description of $k=0$ is therefore a gauge-character transition accompanied by a
	loss of restoring force and by the loss of the finite equilibrium, and we do
	not call it a bifurcation.
	
	\begin{remark}[Loss of restoring force and gauge transition are independent]
		\label{rem:c1c3}
		The coincidence just described is a property of this family and not a
		general implication, as a second singular system shows. In Brown's compound
		spring \cite{Brown2023}, two springs of stiffnesses $k_{1}$ and $k_{2}$ in
		series carry a mass, the velocity Hessian is
		$K=m\bigl(\begin{smallmatrix}1&1\\1&1\end{smallmatrix}\bigr)$ with
		$\ker K=\spn\{(1,-1)^{\top}\}$, and the contraction along the null direction
		is $w\!\cdot\!\nabla V=k_{1}(x_{1}-\ell_{1})-k_{2}(x_{2}-\ell_{2})$, a
		secondary constraint with no global parametric factor. The effective
		stiffness of the physical coordinate $x_{1}+x_{2}$ is the series combination
		$k_{\mathrm{eff}}=k_{1}k_{2}/(k_{1}+k_{2})$, so on the locus $k_{1}k_{2}=0$,
		which is physically standard and simply means removing one spring, the
		restoring force vanishes, the frequency goes to zero and the equilibrium
		escapes to infinity, exactly as at $k=0$ above. Yet the bracket of the two
		constraints is $k_{1}+k_{2}$, which does not vanish there, so the pair
		remains second class and there is no gauge transition. Loss of restoring
		force and gauge-character transition are therefore logically independent,
		and their coincidence in the four-mass system is a feature of that potential
		rather than a general mechanism.
		
		The complementary locus $k_{1}+k_{2}=0$ of the same system is worth a line
		for contrast. There the bracket vanishes, the two constraints become first
		class, the consistency condition no longer determines the multiplier and
		produces instead a further constraint, and $k_{\mathrm{eff}}$ diverges while
		the equilibrium position stays finite, so that the equilibrium reverses
		stability. This is a change of constraint structure accompanied by a
		stability reversal, and again not a bifurcation: the eigenvalue passes
		through infinity rather than through zero. It also requires
		$k_{2}=-k_{1}$, hence a negative stiffness, and belongs to the extended
		model rather than to a pair of ordinary passive springs.
	\end{remark}
	
	Relation \eqref{eq:controlnew} is, up to an overall factor, the secondary
	constraint that Brown obtains for this system from the consistency condition of
	the Dirac--Bergmann algorithm; he likewise finds the two constraints to be
	second class and the system to have no gauge freedom \cite{Brown2023}. He also
	gives the mechanical reading of the secondary constraint: the corners can be
	displaced in the alternating pattern without moving any mass, so the springs
	snap them into the configuration that minimizes the potential at fixed mass
	positions. In the present formulation the same obstruction appears as the
	non-vanishing of the class \eqref{eq:controlclass}.
	
	\begin{remark}[The two implications]
		\label{rem:twoimplications}
		The two systems of Figures~\ref{fig:brown_4masses} and
		\ref{fig:square-control} share the same mechanical arrangement and the same
		kinetic form, and therefore the same null mode, the same $\Gamma$, the same $M$
		and the same bordered kernel. For $k\neq0$ they differ only in where the
		springs are attached, and they differ completely in their gauge content. The pair therefore establishes, within one
		mechanical family:
		\[
		\text{singular kinetic structure}\;\nRightarrow\;\text{gauge symmetry},
		\]
		\[
		\text{null mode}\;+\;\text{potential invariance}
		\;\Longrightarrow\;\text{gauge generator},
		\]
		the second under the hypotheses of Proposition~\ref{prop:template}.
		This second implication is not a tautology: what
		Proposition~\ref{prop:template}(d) supplies is precisely that, for this class of
		systems, the condition $w\!\cdot\!\nabla V\equiv0$ is not merely necessary but
		also sufficient for the chain to close, so that here, unlike in the general
		situation of Section~\ref{sec:weak}, potential invariance can be used as a
		complete criterion. The reason is structural: $\Gamma$ and $M$ both vanish, so
		the only surviving obstruction is the one carried by the chain. Outside this
		structural class no such shortcut is available: membership of the
		contractions in $\Ical$ is only necessary
		(Corollary~\ref{cor:necsuf}), and the decisive statement remains
		Theorem~\ref{thm:chain-decisive}.
	\end{remark}
	
	\section{Discussion, scope and hypotheses}
	\label{sec:scope}
	
	\subsection{Catalogue of assumptions, scope conditions, and conventions}
	
	\begin{center}
		\renewcommand{\arraystretch}{1.2}
		\begin{tabular}{@{}p{0.05\textwidth}p{0.28\textwidth}p{0.27\textwidth}p{0.28\textwidth}@{}}
			\hline
			& Hypothesis & Used in & Status \\
			\hline
			H1 & locally constant rank of $f$ & Rem.~\ref{rem:darboux}, Thm.~\ref{thm:frobenius} & in theorem statements \\
			H2 & irreducible constraints, $\rank B=k$ & Prop.~\ref{prop:spurious} & in theorem statements \\
			H3 & locally constant kernel frame & \eqref{eq:hessian}, Prop.~\ref{prop:Tterm} & only for the Hessian form and symmetry of $\Gamma$ \\
			H4 & $\Sigma$ regular & Lem.~\ref{lem:pullback} & in theorem statements \\
			H5 & $\Ical$ radical & Thm.~\ref{thm:trichotomy}(2) & scope; conservative if it fails \\
			H6 & real consistency, $\Sigma\neq\emptyset$ & all dynamical readings & scope; see Prop.~\ref{prop:real} \\
			H7 & constant-coefficient combinations & Prop.~\ref{prop:syzygy} & scope; conservative if it fails \\
			H8 & linearity in velocities & \eqref{eq:lag}, \eqref{eq:f} & structural \\
			H9 & declared canonical structure & Sec.~\ref{sec:charges}, Rem.~\ref{rem:pairs} & scope; user input \\
			H10 & exhausted iteration & Thm.~\ref{thm:trichotomy} & scope \\
			H11 & initial data on $\Sigma$ & Rem.~\ref{rem:borderimposes} & scope \\
			H12 & effective ring $\Rring$ & all ideal statements & structural \\
			\hline
		\end{tabular}
	\end{center}
	
	Hypotheses H1, H2, H3, H4 and H8 appear in the statements of the theorems that
	use them. H5, H6, H7, H9, H10, H11 and H12 are conditions of interpretation or
	of effectivity and are collected here rather than inserted artificially into
	algebraic statements. In particular, no algebraic result of
	Sections~\ref{sec:kernel}--\ref{sec:firstclass} requires radicality,
	consistency or exhaustion: those are pure linear algebra over $\R$.
	
	\subsection{Relation to Dirac--Bergmann theory}
	
	Theorem~\ref{thm:dirac} places the bordered formalism inside the Dirac picture
	\cite{Dirac1964,HT1992,Sundermeyer1982,GT1990,RR1997}: the degenerate directions
	of $f^{(0)}$ correspond to primary constraints, the consistency conditions to
	secondary ones, and bordering computes the primary--secondary block $\Gamma$ of
	the reduced constraint matrix $C$ of Theorem~\ref{thm:dirac}. The economy of the Faddeev--Jackiw route is now precisely
	quantifiable: instead of the $(d+k)\times(d+k)$ matrix $C$ one handles the
	$d\times k$ block $\Gamma$, at the cost of losing direct access to the
	secondary--secondary block $M$, which, as Section~\ref{sec:weak} shows, must
	be recovered whenever the first-class count is at issue. The price of the
	reduction is recorded in Proposition~\ref{prop:invariance}: $M$ is defined only
	relative to a choice of second-class splitting, and only its compression
	$\Pi M\Pi$ to $\ker\Gamma$ is intrinsic.
	
	The correspondence should not be overstated. It is a dictionary between two
	constructions, not the assertion that the two algorithms are the same
	procedure. In particular $\Gamma$ is one block of the reduced constraint matrix
	$C$ and not the Dirac constraint matrix, contrary to what was asserted in
	\cite{CLMCP2026}, and the identification of first-class character requires the
	further condition $\Pi M\Pi w=0$ of Theorem~\ref{thm:firstclass}, which no
	computation of $\Gamma$ alone can supply.
	The mechanical systems of Section~\ref{sec:mechanical} illustrate the dictionary
	in both directions: Brown analyses them by the Dirac--Bergmann algorithm
	\cite{Brown2023}, we analyse them by bordering and the chain criterion, and the
	conclusions agree in every case examined. Agreement of conclusions on a family
	of examples is evidence that the dictionary is correctly drawn; it is not a
	proof that the two routes coincide in general, and we do not claim one.
	
	\subsection{Open points}
	
	\begin{enumerate}[label=(O\arabic*),leftmargin=2.6em,itemsep=0.35em]
		\item\label{open:chains} \emph{Chains of length $S\geq2$.} We have the
		general condition \eqref{eq:chaincond} and the complete analysis for $S=1$.
		A general solvability tower, with obstruction classes in $\Rring/\Ical$ at
		each level, is expected but not proved here.
		\item \emph{Functional coefficients.} The determination of $\mathcal{G}$
		rather than $\mathcal{G}_{\R}$ requires syzygy computations modulo $\Ical$;
		the algorithmic cost and the resulting classification are unexplored.
		\item \emph{Degeneracy strata.} We treat $\Delta$ as an object to be excluded.
		A theory of gauge structure \emph{on} the strata, where the kernel jumps,
		remains to be developed.
		\item \emph{Comparison with Gotay--Nester.} The precise relation between the
		halting surface of the bordered iteration and the final constraint manifold
		of \cite{GNH1978} is stated here only as a hypothesis; a theorem giving
		sufficient conditions for the two to coincide would close
		Remark~\ref{rem:gotay}.
		\item \emph{Field theory.} Everything is finite-dimensional. The extension to
		field-theoretic models requires functional-analytic care that we have not
		attempted.
	\end{enumerate}
	
	\section{Conclusions}
	\label{sec:conclusions}
	
	A null mode of the bordered Faddeev--Jackiw matrix is not yet a gauge generator.
	We have shown that the dimension of the residual kernel decomposes canonically
	into two contributions with different meanings, the null directions tangent to
	the constraint surface, which carry no gauge content, and the first-class
	combinations, which are the only candidates; that the reduced pairing $\Gamma$
	is the
	primary--secondary block of the reduced constraint matrix and not the constraint
	bracket matrix, the latter being $M=B^{\top}f^{+}B$, defined relative to a
	choice of second-class splitting and intrinsic only through its compression
	$\Pi M\Pi$; that first-class character requires
	both $\Gamma w=0$ and $\Pi M\Pi w=0$; and that a first-class combination yields
	an actual gauge symmetry precisely when a two-step chain closes, which happens
	if and only if an explicit class in $\Rring/\Ical$ vanishes. A four-variable
	system shows that the standard halting criterion of the iteration can be met by
	a system with no gauge freedom at all, and that the chain criterion separates
	the two cases correctly.
	
	The mechanical realizations make the same point with masses, springs, rods and
	pulleys. For a singular second-order Lagrangian with constant kinetic matrix
	$K$, the first-order reformulation has $\Gamma=0$ and $M=0$, so every generated
	constraint is first class and the bordered kernel is as large as it can be, and
	yet gauge freedom holds or fails according to a condition,
	$w\!\cdot\!\nabla V\equiv0$, that neither $\Gamma$ nor $M$ can see. Two systems
	with the same rods, the same masses and the same singular kinetic structure, and
	differing only in where the springs are attached, land on opposite sides of that
	condition.
	
	A further consequence emerged in the course of that analysis, and it concerns
	the parameters themselves. The companion paper \cite{CLMCP2026} introduced the
	discipline of preserving parameter-dependent factors and justified it
	structurally, as a safeguard: clearing a factor can erase a locus in parameter
	space on which a rank, a kernel or a constraint structure changes, and nothing
	in the surviving expressions records that the locus was ever there. That
	argument establishes why the strata must be kept. It does not, by itself, say
	what they contain. The present paper answers that question in two concrete
	cases, and the answer is that they can contain qualitative dynamical
	information.
	
	In the four-mass system with springs at the corners the contraction is
	$w\!\cdot\!\nabla V=k\,(y_{2}+y_{4}-y_{1}-y_{3})$. Dividing by $k$ would leave
	the constraint $y_{1}+y_{3}=y_{2}+y_{4}$ and remove the stratum $k=0$, which is
	precisely where the alternating null mode passes from dynamically obstructed to
	satisfying the exact gauge condition; it is also where the finite spectrum
	$\{k/m,\,2k/m\}$ collapses and the equilibrium $y_{i}^{*}=a-mg/k$ escapes to
	infinity. The stratum $k=0$ therefore carries a gauge-character transition. In
	the first-order system of Section~\ref{sec:paramgauge}, where
	$\Gamma=\alpha^{2}+\beta$ is a function of the parameters alone, the locus
	$\Gamma=0$ carries a transition of a different kind: the functional
	indeterminacy of the general solution changes from zero to one arbitrary
	function of time, a conclusion obtained from the integrated equations of motion
	rather than from the inspection of a determinant. The two phenomena should not
	be merged. The first concerns the gauge character of a null direction, the
	second the dimension of the indeterminacy of the solutions; what they share is
	that both live on strata that a premature cancellation would have removed.
	Neither statement is conditional or hypothetical: both are demonstrated on the
	systems as given. The factorized constraint \eqref{eq:pendfactor} of
	Section~\ref{sec:mechanical} shows the same structure in a case where the
	vanishing stratum has not been examined, and is what makes the distinction
	between a branchwise and a global reduction worth stating.
	
	These two transitions are not equilibrium bifurcations. Under the criterion
	adopted in Remark~\ref{rem:parameters}, no equilibrium branch collides with
	another or splits, and where a finite spectrum reaches zero it does so only as
	the equilibrium leaves every bounded region. The loci correspond instead to a
	change of gauge character and to a transition of dynamical indeterminacy.
	Equilibrium bifurcations remain a legitimate further motivation for preserving
	the full parameter space; none is claimed here, and the present analysis
	neither establishes nor excludes their occurrence in other mechanical systems
	or other parameter families. What the examples do establish is that critical
	parameter values can control features of a constrained system that no
	equilibrium analysis would detect. The continuity with \cite{CLMCP2026} is
	therefore methodological: the parameter-preservation discipline introduced
	there on structural grounds is what makes these distinct dynamical strata
	visible here. Parameter-dependent simplification is best understood as a
	stratified operation. A factor may be cancelled on a specified regular
	stratum, and doing so is legitimate and often necessary; but the corresponding
	vanishing stratum must be classified before it is discarded, since the
	examples above show that it need not be empty of dynamical content.
	
	Within the regime actually covered here, that is, for two-step candidates with
	constant coefficients and under the hypotheses listed in
	Section~\ref{sec:scope}, the outcome is a decision procedure that costs one
	linear solve and one normal-form reduction per candidate, and that turns a
	structural heuristic into a theorem. It is not a general solution of the
	Faddeev--Jackiw gauge problem, and we do not claim one. What remains open is
	listed in Section~\ref{sec:scope}: chains of length $S\geq2$, functional
	coefficients, the structure on degeneracy strata, the relation between the
	halting surface and the final constraint manifold of Gotay--Nester, and the
	extension to field theory.
	
	\subsection*{Supplementary material}
	
	A self-contained Wolfram Language notebook is provided as ancillary material
	with the arXiv submission. The notebook reconstructs the symbolic calculations
	underlying the mechanical examples of this paper, starting from the mechanical
	Lagrangians and proceeding through the kinetic matrix, its kernel and
	pseudoinverse, the extended potential, the first-order Faddeev--Jackiw
	Lagrangian, the presymplectic matrix, the primary constraints, the bordered
	quantities, the constraint matrix, the bordered kernel, the chain equations,
	and the Gauge criterion and generator. No external package or previous
	software implementation is required.
	
	The notebook follows the construction developed in
	Secs.~\ref{sec:conventions}--\ref{sec:chains} and applies it to the mechanical
	realizations of the gauge criterion in Sec.~\ref{sec:mechanical}. The four
	mechanical systems are treated individually in
	Secs.~\ref{sec:pulley3}--\ref{sec:square-control}. The parameter-stratification
	tests of the worked examples are also reproduced, including the
	parameter-controlled gauge transition of Sec.~\ref{sec:paramgauge}.
	
	Parameter-dependent loci are stratified before parameter-dependent
	simplifications are made. No parameter is cancelled from a relation before
	the corresponding vanishing stratum has been classified. The regular loci of
	the kinetic matrices are determined symbolically before their kernels or
	pseudoinverses are used, and the first-order character of the Lagrangians is
	verified by a total-degree test in the extended velocities.
	
	For each mechanical system, the notebook retains the intermediate symbolic
	objects and verifies the identities used in the manuscript. The resulting
	checks are compared directly with the theoretical structure developed in
	Secs.~\ref{sec:kernel}--\ref{sec:firstclass} and with the chain and generator
	criteria of Sec.~\ref{sec:chains}. The parameter-controlled system of
	Sec.~\ref{sec:paramgauge} is treated separately, including the singular point
	$\alpha=\beta=0$, where the bordering becomes reducible and the Gauge
	mechanism changes from a two-step chain on $\alpha\neq0$ to a strong Gauge
	direction at the singular point. The final verification collects the
	individual symbolic checks into a single audit.
	
	\subsection*{Data availability}
	
	No experimental or observational datasets were generated or analysed during
	this study. The symbolic verification notebook described in the Supplementary
	material is provided with the article as ancillary material. The notebook is
	self-contained, contains the complete Wolfram Language source used for the
	symbolic verification, and can be executed directly in Mathematica.
	
	\section*{Acknowledgements}
	E. Chan-L\'opez acknowledges support from SECIHTI through the
	``Estancias Posdoctorales por M\'exico'' program (CVU 422090).

	\section*{Ethics declarations}
	\subsection*{Conflict of interest}
	The author declares no conflict of interest.
	

\end{document}